\documentclass[11pt]{article}

\usepackage[numbers]{natbib}
\usepackage{amsmath}
\usepackage{amssymb}
\usepackage{amsthm}
\usepackage{mathtools}
\usepackage{xcolor}
\usepackage{array}
\usepackage{booktabs}
\usepackage{hyperref}
\usepackage{algorithm}
\usepackage{algpseudocode}
\usepackage{enumitem}\usepackage[margin=1in]{geometry}

\hypersetup{colorlinks=true,citecolor=blue,linkcolor=blue,hypertexnames=false}
\allowdisplaybreaks

\newcommand{\paperTitle}{A General Framework for Metropolis-Adjusted Dikin Walks:\\
Dimension-Square Mixing on Polytopes and\\Log-Det Walks on Spectrahedra%\thanks{All mixing results in this paper concern \emph{exact-metric} walks: the metric and the Metropolis ratio are evaluated exactly.  Approximate implementations require a separate kernel-level error analysis and are not claimed here.}
}

\newcommand{\paperAuthor}{Zhao Song\thanks{Independent researcher.
  Email: \texttt{magic.linuxkde@gmail.com}.}
\and
Lichen Zhang\thanks{Massachusetts Institute of Technology.
  Email: \texttt{lichenz@mit.edu}.}}

\DeclareMathOperator{\nnz}{nnz}

\DeclareMathOperator{\tr}{tr}
\DeclareMathOperator{\Diag}{Diag}
\DeclareMathOperator{\diag}{diag}
\DeclareMathOperator{\lev}{lev}
\DeclareMathOperator{\Lip}{Lip}
\DeclareMathOperator{\gap}{gap}
\DeclareMathOperator{\dist}{dist}

\newcommand{\R}{\mathbb{R}}
\newcommand{\E}{\mathbb{E}}
\newcommand{\Sd}{\mathbb{S}^{d}}
\newcommand{\Sn}{\mathbb{S}^{n}}
\newcommand{\norm}[1]{\lVert #1\rVert}
\newcommand{\opnorm}[1]{\lVert #1\rVert_{\mathrm{op}}}
\newcommand{\fnorm}[1]{\lVert #1\rVert_{F}}
\newcommand{\snorm}[1]{\lVert #1\rVert_{S_{1}}}
\newcommand{\dtv}{d_{\mathrm{TV}}}
\newcommand{\intK}{K^{\circ}}
\newcommand{\mA}{A}
\newcommand{\wb}{\bar{w}}
\newcommand{\creg}{\mathcal{C}}
\newcommand{\Ktube}{\mathcal{K}_{0}}
\newcommand{\Ktubep}{\mathcal{K}_{0}^{+}}
\newcommand{\Lb}{\bar{\Lambda}}
\newcommand{\Rr}{\mathcal{R}}
\newcommand{\Rres}{\mathsf{R}}% Part II resolvent (I+Delta)^{-1}
\newcommand{\Bh}{\widehat{B}}
\newcommand{\lsep}{\ell_{\mathrm{sep}}}
\newcommand{\eone}{\epsilon_{0}}
\newcommand{\Ot}{\widetilde{O}}

\newcommand{\ind}[1]{\mathbf{1}_{#1}}
\newcommand{\Ab}{\bar{A}}
\newcommand{\Psib}{\Psi}
\newcommand{\Eb}{\bar{E}}
\newcommand{\Nt}{\widetilde{N}}

\newcommand{\psistar}{\psi^{+}_{*}}
\newcommand{\nubar}{\bar{\nu}}
\newcommand{\Hld}{H_{\mathrm{ld}}}

\theoremstyle{plain}
\newtheorem{theorem}{Theorem}[section]
\newtheorem{lemma}[theorem]{Lemma}
\newtheorem{definition}[theorem]{Definition}

\newtheorem{corollary}[theorem]{Corollary}

\theoremstyle{definition}

\newtheorem{remark}[theorem]{Remark}

\begin{document}

\date{\today}
\title{\paperTitle}
\author{\paperAuthor}
\maketitle

\begin{abstract}
We analyze exact-metric, Metropolis-adjusted Dikin walks by keeping the
proposal determinant and reverse quadratic form together.  Their
leading uncentered terms cancel in the complete logarithmic acceptance
ratio, leaving centered fluctuations that can be controlled with
second-order tools.  For a polytope given by $n$ inequalities and a
convex $L$-Lipschitz potential, this yields warm-start mixing in
$\Ot((d^{2}+dL^{2}R^{2})\log(w/\delta))$ steps for the regularized
Lee--Sidford walk.  For a spectrahedron with $n\times n$ blocks, the
log-det walk mixes in
$\Ot((\psistar nd+dL^{2}R^{2})\log(w/\delta))$ steps, where
$\psistar$ measures matrix leverage.

The two analyses share an acceptance-to-mixing reduction.  A
proposal-comparison argument transfers the polytope bound to an appropriately
padded $O(1/d)$-accurate metric computed from high-precision Lewis weights.
For spectrahedra, given $\widehat\psi\ge\psistar$, a direct-or-two-seed
TensorSRHT construction gives an exact-arithmetic implementation with
$\psistar$ replaced by $\widehat\psi$ in the mixing bound.

\end{abstract}

\tableofcontents

%%% This file is the structure for main body content
%%% This file should only contain use \input{xxx}.
%%% TeX files for body contents should be named as:
%%% 01_xxxx.tex
%%% 02_xxxx.tex
%%% ...
%%% 49_xxxx.tex

\section{Introduction}
\label{sec:introduction}

Sampling from a log-concave density constrained to a convex body is a basic
task in sampling and optimization.  Dikin walks adapt a Gaussian proposal to
a local barrier metric, allowing the walk to exploit the geometry of the
feasible region.  For nearly universal barriers, this affine-invariant
approach can also avoid an artificial dependence on the number of
constraints.

The same adaptation creates the main analytic difficulty.  The Metropolis
filter compares a forward Gaussian defined at the current point with a reverse
Gaussian defined at the proposal.  It is therefore not enough to show that the
forward step remains in a local ellipsoid: one must also control the
determinant and the reverse local quadratic form at the random endpoint.

Can these two endpoint terms be controlled directly, without estimating them
separately?  We answer this question by keeping the complete logarithmic
Metropolis ratio intact.  The linear part of the log determinant cancels the
uncentered mean of the reverse quadratic term, leaving centered fluctuations
that can be controlled by Gaussian integration by parts and second-order
concentration.  This yields direct analyses for Lee--Sidford geometry on
polytopes and log-det geometry on spectrahedra.

\subsection{Our results}

\begin{theorem}[Polytopes; informal version of Theorem~\ref{I:thm:main-formal}]
\label{thm:intro-polytope}
Let $K\subseteq B_{2}(0,R)$ be a bounded full-dimensional polytope in $\R^{d}$
given by $n$ linear inequalities, let $f$ be convex and $L$-Lipschitz on $K$,
and let $\pi\propto e^{-f}\ind{K}$.  Let $H_{\mathrm{LS}}$ denote the
Lee--Sidford metric, let $r>0$ be a universal constant, and choose $\kappa$ as in
Eq.~\eqref{I:eq:rho-choice}.  Consider the $\tfrac12$-lazy
Metropolis walk with exact Lewis weights, exact acceptance ratio, and proposal
covariance
\[
  \frac{r^{2}}{d}(\kappa H_{\mathrm{LS}}(x)+L^{2}I)^{-1}.
\]
Then, for $w\ge1$ and $0<\delta<1$, from a $w$-warm start the walk reaches
total-variation distance $\delta$ from $\pi$ in
\[
  \Ot((d^{2}+dL^{2}R^{2})\log(w/\delta))
\]
steps.
\end{theorem}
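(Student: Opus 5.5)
The plan is to follow the standard conductance route for Metropolis-adjusted Dikin walks, and to organize it so that the only polytope-specific work is a one-step acceptance bound. Write $g(x)=\kappa H_{\mathrm{LS}}(x)+L^{2}I$ and $P_x=\mathcal N(x,\tfrac{r^{2}}{d}g(x)^{-1})$. The argument rests on three ingredients. First, an isoperimetric inequality for $\pi\propto e^{-f}\ind{K}$ in the local metric $g$. Second, a one-step overlap bound: $\dtv(T_x,T_y)\le 1-c$ whenever $\norm{x-y}_{g(x)}\le c'r/\sqrt d$, where $T_x$ is the lazy Metropolis kernel. Third, the Lovász--Simonovits warm-start lemma, which converts a conductance lower bound $\Phi$ into mixing in $O(\Phi^{-2}\log(w/\delta))$ steps.

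For the isoperimetry, the Lee--Sidford metric with $\kappa$ as in Eq.~\eqref{I:eq:rho-choice} is a $\bar\nu$-symmetric barrier metric with $\bar\nu=\Ot(d)$. The term $L^{2}I$ accounts for the potential: the cross-ratio distance gives isoperimetric constant $\Ot(1/\sqrt{\bar\nu})$, and the $L^{2}$ part combines with the $R$-diameter to give constant $\gtrsim 1/(LR)$. The step size $r/\sqrt d$ in the $g$-norm then yields $\Phi\gtrsim \bigl(\sqrt d\,(\sqrt d+LR)\bigr)^{-1}$ up to logarithms. Squaring this gives the claimed $\Ot(d^{2}+dL^{2}R^{2})$ bound.

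The overlap bound splits into two parts. The first is the proposal comparison $\dtv(P_x,P_y)\le 1/4$ for nearby $x,y$. It follows from Pinsker's inequality together with the strong self-concordance of the Lee--Sidford metric, namely $\fnorm{g(x)^{-1/2}(g(y)-g(x))g(x)^{-1/2}}\lesssim\norm{x-y}_{g(x)}$. The Frobenius (rather than operator) control is what makes the KL divergence $O(1)$ at scale $1/\sqrt d$. The second part, which is the main obstacle, is a constant lower bound on the expected acceptance probability, $\E_{z\sim P_x}\min\{1,\frac{\pi(z)p_z(x)}{\pi(x)p_x(z)}\}\ge c$. I would write the log ratio as
\[
 -(f(z)-f(x))+\tfrac12\log\det g(z)-\tfrac12\log\det g(x)-\tfrac{d}{2r^{2}}\bigl(\norm{x-z}_{g(z)}^{2}-\norm{x-z}_{g(x)}^{2}\bigr).
\]
The $f$ term is $O(r)$ with constant probability, because $L\norm{z-x}_2\le\norm{z-x}_{g(x)}\lesssim r/\sqrt d\cdot\sqrt d$. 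One factor of $L$ is absorbed by the $L^{2}I$ regularization, but this still needs care: I would use $\E|f(z)-f(x)|\le L\,\E\norm{z-x}_2\le r$.

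Following the introduction, the two metric terms should not be bounded separately. Set $z=x+\tfrac{r}{\sqrt d}g(x)^{-1/2}\xi$. The first-order part of the log-determinant is $\tfrac{r}{\sqrt d}\langle\nabla\log\det g(x),g^{-1/2}\xi\rangle$. The first-order part of the quadratic difference is $\tfrac{r}{2\sqrt d}Dg(x)[g^{-1/2}\xi](g^{-1/2}\xi,g^{-1/2}\xi)$. Its mean vanishes, and Gaussian integration by parts shows that its Hermite-degree-one component exactly cancels the log-det linear term. What remains are centered degree-three chaos terms plus second-order Taylor remainders. I would bound the degree-three chaos through its variance, $\lesssim \tfrac{r^{2}}{d}\fnorm{\cdot}^{2}$, using the third-derivative estimate of the Lee--Sidford barrier; this gives $O(r^{2})$. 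For the remainders, I would use second-order self-concordance on the event $\norm{z-x}_{g(x)}\lesssim r$ (probability $\ge 0.99$) together with Hanson--Wright concentration, again getting $O(r^{2})$. Boundary exits ($z\notin K$) have small probability because the Dikin ellipsoid of radius $O(r)$ lies in $K$ and $\norm{\xi}\le 2\sqrt d$ with high probability.

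With $r$ a small universal constant, the log ratio is then $\ge -1$ with probability $\ge 1/2$, which gives constant acceptance. Combining this with the proposal comparison yields the overlap bound. The conductance then follows from the standard partition argument, and the mixing time from the Lovász--Simonovits lemma.
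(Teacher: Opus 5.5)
Your outer architecture is the same as the paper's: Frobenius strong self-concordance plus Pinsker for proposal overlap, a kernel-overlap lemma, symmetry $\nu_G\le\bar\nu+4L^2R^2$ with cross-ratio isoperimetry, and Cheeger or Lov\'asz--Simonovits for the warm start. The gap is the acceptance lower tail, which is the real content of the theorem.

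\emph{The cancellation you describe does not hold.} Work in coordinates with $g(x)=I$ and $z=x+\eta\xi$, $\eta=r/\sqrt d$. The first-order log ratio is $\tfrac{\eta}{2}(\tr Dg[\xi]-Dg[\xi](\xi,\xi))$. The degree-one Hermite projection of the cubic is $\tr Dg[\xi]+2\sum_a(Dg[e_a]\xi)_a$, so the residual $-\eta\sum_a(Dg[e_a]\xi)_a$ survives. Already for $d=1$, $\tfrac{\eta}{2}t(\xi-\xi^{3})$ has degree-one part $-\eta t\xi$. This is harmless at first order, where the Lee--Sidford terms can be bounded directly at scale $r$. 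But it shows you have not located the cancellation that matters. What cancels exactly (Lemma~\ref{I:lem:ledger}) is the linear part $\tr[E]$ of the endpoint log-determinant against the Wick constant $\tr[Q\bar E(h)]$ of the reverse quadratic form, with the random endpoint matrix frozen. This leaves $Z(h)=\sum_{a,b}(h_ah_b-Q_{ab})\bar E_{ab}(h)$ plus a remainder quadratic in $E$, with no Taylor expansion at all.

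\emph{Your second-order step fails.} You bound the remainders by second-order self-concordance on $\norm{\xi}\lesssim\sqrt d$ plus Hanson--Wright.
\begin{itemize}
\item Pointwise, the quartic term $\tfrac{\eta^{2}}{4}D^{2}g[\xi,\xi](\xi,\xi)$ is then only $O(\eta^{2}\norm{\xi}^{4})=O(r^{2}d)$, and $\tfrac{\eta^{2}}{4}\tr D^{2}g[\xi,\xi]$ is only $O(r^{2}\sqrt d)$.
\item Hanson--Wright does not apply, because the matrix $D^{2}g[\xi,\xi]$ depends on $\xi$.
\item The means do not vanish. After the self-contracted Wick terms cancel between determinant and quadratic form, the mean of the second-order part is
\[
-\frac{\eta^{2}}{4}\sum_{a}\fnorm{Dg[e_a]}^{2}-\frac{\eta^{2}}{2}\sum_{a,b}\bigl(D^{2}g[e_a,e_b]\bigr)_{ab}.
\]
The last sum must be $\Ot(d)$, not the $d^{3/2}$ or worse that generic second-order bounds give.
\end{itemize}
For Lewis-weight metrics, that $\Ot(d)$ bound is the row-dependent Hessian contraction of Lemma~\ref{I:lem:hessian-contraction}. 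It needs the rowwise resolvent bounds and the aggregate $DN$ estimate, and it is reached through the twice-integrated Stein identity with a cutoff. The fluctuations also need a dimension-free Lipschitz bound for $Z$ on a Lewis-weight stability tube (Lemmas~\ref{I:lem:stability} and~\ref{I:lem:lipschitz}). That bound carries $\tau_1^{2}\sim\log(nd)$ and polynomial-in-$p$ factors, which is why $\kappa$ must be polylogarithmic; the terms are not $O(r^{2})$ with $r$ alone. This is the obstruction the paper identifies in earlier claimed dimension-square proofs, so your plan leaves the central estimate unproved.
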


\begin{theorem}[Spectrahedra; informal version of Theorem~\ref{II:thm:main-formal}]
\label{thm:intro-spectral}
Let
\[
 K:=\{x\in\R^{d}:A_{0}+\sum_{i=1}^{d}x_{i}A_{i}\succeq0\}
\]
be bounded,
strictly feasible after minimal-face reduction, and contained in $B_{2}(0,R)$,
with $n\times n$ matrix blocks.  Let $f$ be convex and $L$-Lipschitz on $K$,
and let $\pi\propto e^{-f}\ind{K}$.  For the log-det metric, let $\psistar$ be
the maximum of one and the supremum, over $x\in\intK$, of the spectral norm of
the whitened matrix leverage operator at $x$.  Choose $\kappa$ as in
Eq.~\eqref{II:eq:rho-choice}, so $\kappa=\Ot(\psistar)$.  Then, for $w\ge1$
and $0<\delta<1$, the
exact-metric $\tfrac12$-lazy Metropolis walk reaches total-variation distance
$\delta$ from a $w$-warm start in
\[
  \Ot((\psistar nd+dL^{2}R^{2})\log(w/\delta))
\]
steps.
\end{theorem}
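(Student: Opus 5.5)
We prove the spectrahedral bound through the same acceptance-to-mixing reduction used for polytopes, and we specialize each ingredient to the log-det metric. Write $g(x):=\kappa\,\Hld(x)+L^{2}I$, where $\Hld(x)_{ij}=\tr(S^{-1}A_iS^{-1}A_j)$ and $S=S(x)=A_0+\sum_i x_iA_i$. The proposal is $z=x+\tfrac{r}{\sqrt d}g(x)^{-1/2}\xi$ with $\xi\sim N(0,I_d)$. The reduction needs two inputs.

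\textbf{(a) One-step overlap.} If $\|x-y\|_{g(x)}\le c/\sqrt d$, then the transition kernels satisfy $\dtv(P_x,P_y)\le 1-\Omega(1)$.

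\textbf{(b) Isoperimetry.} We need an isoperimetric inequality for $\pi$ in the metric $g$, with constant $\sqrt{\nubar}$. The relevant parameter $\nubar$ is built from the self-concordance parameter of $\kappa\Hld$, namely $\kappa n=\Ot(\psistar n)$, together with the term $L^{2}R^{2}$ coming from $L^{2}I$ on $K\subseteq B_2(0,R)$. Lipschitzness of $f$ in the $L^{2}I$ part guarantees that the target density changes by $O(1)$ across a step.

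Combining (a) and (b) through the standard $s$-conductance argument for lazy chains gives conductance $\Omega\bigl(1/\sqrt{d\,\nubar}\bigr)$. Hence the mixing time is $\Ot\bigl(d\,\nubar\log(w/\delta)\bigr)=\Ot\bigl((\psistar nd+dL^{2}R^{2})\log(w/\delta)\bigr)$.

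For (b), we use the symmetric/self-concordant structure. The log-det barrier is $n$-self-concordant, so the Hilbert-metric comparison $d_K(u,v)\gtrsim \|u-v\|_{\Hld(u)}/\sqrt n$ holds. Scaling by $\kappa$ and adding $L^2I$ changes the constant to $\sqrt{\kappa n+L^2R^2}$, up to logarithms. We take this as in the general framework.

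For (a), we split $\dtv(P_x,P_y)$ into two pieces. The first is the Gaussian proposal distance $\dtv\bigl(N(x,\Sigma_x),N(y,\Sigma_y)\bigr)$. We bound it by the local stability $\|g(x)^{-1/2}g(y)g(x)^{-1/2}-I\|_F=O(1)$ for $\|x-y\|_{g(x)}\le c/\sqrt d$, which in turn follows from the third-derivative bound of $\Hld$ in the metric $\kappa\Hld$. The second piece is the rejection probability. The core claim is that the log acceptance ratio
\[
\tfrac12\log\det g(z)-\tfrac12\log\det g(x)-\tfrac d{2r^2}\bigl(\|z-x\|_{g(z)}^2-\|z-x\|_{g(x)}^2\bigr)-f(z)+f(x)
\]
is at least $-O(1)$ with constant probability.

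We treat the ratio without separating its terms. Taylor-expand $\log\det g(z)$ around $x$: the linear term is $\tfrac{r}{\sqrt d}\langle\nabla\log\det g(x),g^{-1/2}\xi\rangle$. The reverse quadratic form difference, $\tfrac{r^2}{d}\,\xi^\top g^{-1/2}\bigl(g(z)-g(x)\bigr)g^{-1/2}\xi\cdot \tfrac{d}{2r^{2}}$, has uncentered leading part $\tfrac{r}{2\sqrt d}\,D g(x)[g^{-1/2}\xi,g^{-1/2}\xi,g^{-1/2}\xi]$. Its conditional structure matches the determinant's linear term via Gaussian integration by parts, $\E[\xi\,\xi^\top M\xi]=2M\cdot$trace-contraction. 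What remains is a centered cubic Gaussian chaos plus second-order remainders.

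We then bound the variance of the centered cubic chaos by $\tfrac{r^2}{d}\sum \|D\Hld[\cdot]\|^2$ in the whitened frame. For the log-det metric, $D\Hld(x)[h]_{ij}=-2\tr(S^{-1}A_iS^{-1}HS^{-1}A_j)$ with $H=\sum h_kA_k$, so the Frobenius-type sum reduces to traces of the whitened leverage operator. This is exactly where $\psistar$ enters: the sum is $O(\psistar\, d/\kappa)$, and it becomes $O(1)$ once $\kappa=\Ot(\psistar)$ as in Eq.~\eqref{II:eq:rho-choice}. The quadratic remainders are controlled by Hanson--Wright applied to the second derivative of $\Hld$, which is again bounded by leverage. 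The $f$ term contributes at most $L\cdot\tfrac{r}{\sqrt d}\|g^{-1/2}\xi\|\le r$ with high probability.

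The main obstacle is the leverage computation for the cubic chaos and its remainders in matrix form. It requires bounding $\tr$ expressions of products of whitened blocks $S^{-1/2}A_iS^{-1/2}$ by $\psistar$ rather than by $n$. We attempt it by writing the whitened operator $\Phi(h)=S^{-1/2}HS^{-1/2}$, noting $\Phi^*\Phi=\Hld$, and using $\opnorm{\Phi g^{-1}\Phi^*}\le\psistar/\kappa$ to collapse each contraction.
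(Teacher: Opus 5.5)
Your mixing half matches the paper: overlap at local $g$-distance $O(r/\sqrt d)$, symmetry parameter $\kappa n+4L^{2}R^{2}$, cross-ratio isoperimetry, and Cheeger. Two small corrections: $s$-conductance is unnecessary because the acceptance bound is uniform in $x$, and the $\sqrt n$ comes from $\fnorm{\Delta}^{2}\le n\opnorm{\Delta}^{2}$ on $K\cap(2x-K)$, not from $n$-self-concordance alone.

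The genuine gap is the acceptance step, which you flag as the main obstacle, and the tool you propose cannot close it. Since $\Phi^{*}\Phi=H$ and $g\succeq\kappa H$, we have $\Phi g^{-1}\Phi^{*}\preceq\kappa^{-1}\Phi H^{-1}\Phi^{*}$, and $\Phi H^{-1}\Phi^{*}$ is the Frobenius-orthogonal projection of $\Sn$ onto $\mathrm{span}\{\Ab_a\}$. So $\opnorm{\Phi g^{-1}\Phi^{*}}\le1/\kappa$ always: your inequality holds with $\psistar$ replaced by $1$ and carries no leverage information. An argument whose only leverage input is this inequality would give uniform acceptance at $\kappa=\Ot(1)$, which the arrow spectrahedron (Lemma~\ref{II:lem:arrow}) rules out. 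The object that actually carries $\psi$ is the $n\times n$ matrix $\Psib=\sum_a\Ab_a^{2}\preceq\psi I$, used through one-sided contractions such as $\sum_a\Ab_aX\Ab_a\preceq\opnorm{X}\Psib$. Even your first-order variance needs it: $\sum_{a,b}\tr[\Ab_c\Ab_a\Ab_b]^{2}\le\sum_a\fnorm{\Ab_c\Ab_a}^{2}=\tr[\Ab_c^{2}\Psib]\le\psi$. The projection disposes of the index $b$, but the index $a$ requires $\Psib$. Likewise, your first-order cubic is mean zero, but $Z$ has a nonzero second-order mean. That mean is an $\eta^{2}$-weighted sum of four-letter traces with two copies each of $\Ab_a$ and $\Ab_b$, in both adjacent and alternating arrangements. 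It is $O(\eta^{2}\tr[\Psib^{2}])=O(\rho^{2}\psi)$ only after the alternating bound $|\tr[\Ab_a\Ab_b\Ab_a\Ab_b]|\le\tr[\Ab_a^{2}\Ab_b^{2}]$ of Lemma~\ref{II:lem:four-trace}. Hanson--Wright does not supply this: it controls deviations of a fixed-matrix quadratic form from its mean, whereas these terms are quartic in $h$.

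Second, and more basically, you have no high-probability control of the operator norm of the step block $\Delta=\eta\sum_ah_a\Ab_a$. Without it, neither your Taylor remainders nor a fluctuation bound closes. The trivial bound $\opnorm{\Delta}\le\fnorm{\Delta}=\eta\norm{h}\approx\rho$ keeps the resolvent defined. But third-order terms such as $\eta^{3}\tr[\Ab(h)^{5}]$ are then only bounded by $\eta^{3}\opnorm{\Ab(h)}^{3}\norm{h}^{2}\approx\rho^{3}d$. That forces $\kappa\gtrsim d^{2/3}$ even in the diagonal case $\psi\le1$. The missing ingredient is the matrix Gaussian series bound (Lemma~\ref{II:lem:matrix-gaussian}). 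After whitening by $Q^{1/2}$, its variance statistic $\sum_{a,b}Q_{ab}\Ab_a\Ab_b$ is dominated by $\Psib$, giving the spectral tube $\opnorm{\Ab(h)}\le\tau_0=\sqrt{2\psi\log(8n/\delta_0)}$ with high probability. On that tube the paper avoids Taylor expansion entirely. The exact resolvent formulas and the ledger identity (Lemma~\ref{II:lem:ledger}) reduce everything to a determinant remainder at most $9\rho^{2}\tau_1^{2}$ plus the full nonlinear $Z$. The gradient of $Z$ is bounded by the dimension-free $12\rho\tau_1M_0$, so a cutoff, McShane extension, and scalar Gaussian concentration control its fluctuation. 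Twice-integrated Stein plus the four-trace bound control its mean. Finally, the determinant--quadratic cancellation is exact algebra, $\tfrac12\tr[E]-\tfrac12h^{\top}\Eb h=-\tfrac12Z$, not a Gaussian integration-by-parts matching; integration by parts enters only for $\E[Z]$.
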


For the uniform target, an arrow spectrahedron shows that $\kappa=1$ can have
exponentially small acceptance at the center when $\psi(0)=\Theta(d)$ and the
proposal radius is a fixed universal constant.  This is an acceptance
obstruction, not a mixing lower bound.

\paragraph{Approximate metrics and arithmetic complexity.}
The appendix turns both exact-metric analyses into exact-arithmetic
implementations by comparing the implemented proposals with the exact
reference proposals.  For polytopes, deterministic high-precision Lewis
weights give a suitably padded $O(1/d)$-accurate metric.  For spectrahedra,
independent TensorSRHT sketches at the forward and reverse endpoints give an
exactly reversible two-seed chain.  In both settings the proposal comparison
transfers the corresponding mixing bound without differentiating the
approximate metric.  The formal arithmetic statements are
Corollaries~\ref{I:cor:arithmetic-runtime} and~\ref{II:thm:sketched-arithmetic}.

\begin{theorem}[Approximate metrics and exact-arithmetic complexity; informal]
\label{thm:intro_approximate_metrics}
In each setting, assume the hypotheses of the corresponding theorem above and
work in the exact-arithmetic model of Appendix~\ref{I:sec:approx}.  Let $\omega$
denote the exponent of square-matrix multiplication.  Starting from an exact
draw $X_0\sim\mu_0$, where $\mu_0$ is $w$-warm, the following implementations
reach total-variation error $\delta$.
\begin{enumerate}[label=(\roman*),leftmargin=2.4em,itemsep=2pt]
\item For polytopes, deterministic $O(1/d)$-relative Lewis-weight
approximation with one-sided metric padding gives an exactly reversible walk
using
$
 \Ot(
 (d^2+dL^2R^2)(nd^{\omega-1}+d^\omega)
 \log(e w/\delta))
$
arithmetic operations and
$\Ot((d^2+dL^2R^2)\log(e w/\delta))$ evaluations of $f$.

\item For spectrahedra, given a supplied upper bound
$\widehat\psi\ge\psistar$, the direct-or-two-seed-sketch construction gives
an exactly reversible walk using
$
 \Ot(
 (\widehat\psi nd+dL^2R^2)(n^\omega+n^2d^{3\omega-5})
 \log(e w/\delta))
$
arithmetic operations and
$\Ot((\widehat\psi nd+dL^2R^2)\log(e w/\delta))$ evaluations of $f$.
The unconditional choice $\widehat\psi=d$ replaces
$\widehat\psi nd$ by $nd^2$.
\end{enumerate}
These are arithmetic-operation bounds, not bit-complexity bounds: they exclude
finite-bit conditioning, rounded Gaussian sampling, inexact Metropolis
decisions, and construction of the warm start.
\end{theorem}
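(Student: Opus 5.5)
This is an informal summary statement, so the proof assembles three ingredients: the exact-metric mixing theorems (Theorems~\ref{thm:intro-polytope} and~\ref{thm:intro-spectral}, i.e.\ Theorems~\ref{I:thm:main-formal} and~\ref{II:thm:main-formal}), a proposal-comparison lemma, and a per-step cost count. The plan has three steps.

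\textbf{Step 1: the comparison principle.} Fix a reversible Metropolis kernel $P$ with the exact proposal $Q_x=N(x,\Sigma(x))$ and target $\pi$. Let $\widetilde P$ be the implemented kernel, built from the proposal $\widetilde Q_x$ and the matching Metropolis ratio. Then $\widetilde P$ is exactly $\pi$-reversible by construction, because its acceptance ratio uses the implemented proposal density itself. I would then show a conductance comparison:
\[
\widetilde P(x,\cdot)\ \ge\ c\cdot \min\Bigl(1,\tfrac{\pi(y)\widetilde q_y(x)}{\pi(x)\widetilde q_x(y)}\Bigr)\widetilde q_x(y)\,dy .
\]
This requires two inputs. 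First, $\dtv(Q_x,\widetilde Q_x)$ must be bounded by a small constant. Second, the log-ratio of $\widetilde q$ to $q$ must be $O(1)$ with constant probability along the accepted region.

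Given these, the one-step overlap bound $\dtv(\widetilde P_x,\widetilde P_y)\le 1-c'$ for locally close $x,y$ transfers from the exact analysis. This uses only the values of the metric, never its derivatives. The isoperimetry in the local norm transfers unchanged, because the implemented covariance is within constant factors of the exact one. The conductance argument from the shared acceptance-to-mixing reduction then gives the same step count, up to constants.

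\textbf{Step 2a: polytopes.} Compute Lewis weights to $O(1/d)$ relative accuracy, deterministically and with high precision. Then pad the metric one-sidedly, $\widetilde H\succeq H_{\mathrm{LS}}$ with $\widetilde H\preceq(1+O(1/d))H_{\mathrm{LS}}$. For Gaussians in $d$ dimensions with covariances within factor $1+O(1/d)$, the KL divergence is $O(d\cdot d^{-2})$, so $\dtv$ is small. The one-sided padding keeps the proposal inside the Dikin ellipsoid used in the exact analysis.

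For the cost count, each step performs one Lewis-weight computation, costing $\Ot(nd^{\omega-1})$ via $\Ot(1)$ leverage computations. It also needs a Cholesky factorization, a Gaussian draw, and the densities at both endpoints, costing $O(nd^{\omega-1}+d^\omega)$, plus two evaluations of $f$. Multiplying by the step count $\Ot((d^2+dL^2R^2)\log(ew/\delta))$ gives item~(i). The start is an exact draw from $\mu_0$, so warmness is unaffected.

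\textbf{Step 2b: spectrahedra.} When direct formation of the log-det Hessian is cheaper, take it exactly: this costs $n^\omega$ per $d^2$ block entries. Otherwise, use TensorSRHT sketches with independent seeds $\xi,\xi'$ at the forward and reverse endpoints. Place the seeds in the state (or, equivalently, resample them each step) so that the pair chain $(x,\xi)$ is exactly reversible with marginal $\pi$.

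The sketch dimension is chosen from $\widehat\psi\ge\psistar$ so that the sketched Hessian is a spectral $(1\pm O(1/\sqrt{d\,\widehat\psi}))$-type approximation with constant probability. This makes $\dtv$ small as in Step 2a. Counting sketch-application and factorization costs gives $n^\omega+n^2d^{3\omega-5}$ per step; multiplying by the step count gives item~(ii). For the unconditional case, $\psistar\le d$, so $\widehat\psi=d$ is valid.

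\textbf{Main obstacle.} The hardest part is the sketched spectrahedral case. The failure probability of the sketch is only constant, and the reverse sketch is independent. The comparison must therefore be made on an averaged kernel and not pointwise. I would first try to show the averaged implemented kernel still has a uniform constant overlap with the exact one on a high-probability set, and then absorb the constant loss into the conductance.
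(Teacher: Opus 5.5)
Your polytope argument follows the paper's route. It uses deterministic fixed-schedule Lewis weights at relative accuracy $O(1/d)$ and one-sided padding, so that $G\preceq\bar G\preceq e^{\theta}G$ with $\theta=O(1/d)$. Total variation and the endpoint density ratio are then transferred on the exact walk's acceptance event, where $\norm{y-x}_{G(x)}$ and, by self-concordant distortion, $\norm{y-x}_{G(y)}$ are $O(r)$. Conductance is measured in the reference geometry.

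The spectrahedral step has a genuine gap in the accuracy scale. You take $\theta\asymp1/\sqrt{d\widehat\psi}$ and check only that $\dtv(q_x,\bar q_{x,\omega_{\mathrm{s}}})=O(\sqrt d\,\theta)$ is small. Your own second requirement from Step~1 then fails. Each conditional Gaussian log density contains $\tfrac12\log\det(\bar G G^{-1})$, which collects a contribution from all $d$ directions and can be as large as $d\theta/2$. The sandwich alone therefore controls each endpoint density only to $O(d\theta)$. At the reverse endpoint $\bar q_{y,\omega'_{\mathrm{s}}}(x)$, nothing forces the quadratic term to offset the determinant unless you differentiate the sketch error in $y$. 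With your scale this bound is $\sqrt{d/\widehat\psi}$, which is unbounded when $\widehat\psi=O(1)$, so acceptance does not transfer.

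The correct choice is $\varepsilon_s=\Theta(1/d)$, independent of $\widehat\psi$. That is exactly what gives $s=\Ot(d^3)$ and hence the $d\,\mathcal T_{\mathrm{mat}}(n,n,s)=\Ot(n^2d^{3\omega-5})$ term; you quoted this cost without deriving it. The parameter $\widehat\psi$ enters elsewhere. Because $\psistar$ is not computable, $\widehat\psi$ sets $\kappa$ in Eq.~\eqref{II:eq:rho-choice}. The exact reference walk then has its acceptance lower tail, and $\nu_G=\kappa n+4L^2R^2$ produces the $\widehat\psi nd$ factor.

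Second, putting the seeds in the state and resampling them each step are not equivalent. The argument needs fresh independent seeds at both endpoints at every nonlazy step. Then the $x$-marginal kernel is itself $\pi$-reversible, because the seed-averaged accepted flow $\E_{\omega_{\mathrm{s}},\omega'_{\mathrm{s}}}[\min\{e^{-f(x)}\bar q_{x,\omega_{\mathrm{s}}}(y),e^{-f(y)}\bar q_{y,\omega'_{\mathrm{s}}}(x)\}]$ is symmetric under exchanging $(x,\omega_{\mathrm{s}})$ and $(y,\omega'_{\mathrm{s}})$. In addition, the good-seed probability $1-\beta_s$ holds pointwise at the current state.

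If you instead carry the reverse seed into the state, the next forward seed is conditioned on acceptance and the pointwise guarantee is lost. You then have a lifted chain, which can for instance stick at a state whose seed is bad, and its conductance is not covered by the $x$-chain argument. With fresh seeds, your ``main obstacle'' is handled by a good-seed subdensity argument:
\begin{enumerate}
\item restrict to good forward seeds, losing $2\tau+\beta_s$ in $L^1$ against $q_x$;
\item restrict to the two-endpoint reference event;
\item pay a factor $1-\beta_s$ for the independent reverse seed.
\end{enumerate}

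A minor point: the direct branch should build the $d$ congruenced directions and one Frobenius Gram matrix, at cost $O(dn^\omega+n^2d^{\omega-1}+d^\omega)$, and switch branches at $n<s$. Your ``$n^\omega$ per $d^2$ block entries'' would exceed the claimed per-step bound.
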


\subsection{Prior work and scope}

Kannan and Narayanan introduced the logarithmic-barrier Dikin
walk~\cite{kn12}.  Lewis-weight metrics later reduced the
dependence on the number of inequalities and led to the conjecture that
$\Ot(d^{2})$ iterations should suffice from a warm start~\cite{cdwy18}.
Laddha, Lee, and Vempala developed strong self-concordance and symmetry as a
framework for this question~\cite{llv20}, while Kook and Vempala isolated
average self-concordance as the condition controlling the reverse local norm
at a random proposal~\cite{kv24}.  The original dimension-square claim of
Laddha, Lee, and Vempala leaves open this reverse-ellipsoid step; a direct
repair gives a dimension-cube bound~\cite{k26}.

Gu et al.~\cite{gkmsz24} claimed
$\Ot(d^{2}+dL^{2}R^{2})$ mixing for a robust Lee--Sidford walk.  Kook
subsequently identified a gap in the proof of their Lemma~D.5~\cite{k26}:
closeness of the endpoint Lewis weights gives only $O(r^{2})$ control of the
squared-local-norm change, whereas the average-self-concordance argument needs
$O(r^{2}/d)$.  Moreover, the endpoint weights depend on the Gaussian proposal
direction, so the fixed-coefficient polynomial argument for the log barrier
does not transfer directly.  This is a proof gap, not a counterexample to the
dimension-square theorem.  Kook proved the best previously complete
$\Ot(d^{9/4})$ warm-start bound for exponential targets by scaling the
Lee--Sidford metric by $d^{1/4}$.

Our polytope proof addresses the same Metropolis-filter obstruction without
proving average self-concordance: it controls the complete acceptance ratio
after determinant--quadratic cancellation.  The main theorem is first proved
for the exact metric.  Appendix~\ref{I:sec:approx} then transfers the
acceptance and mixing bounds to a padded metric constructed from
high-precision approximate Lewis weights.

The same 2024 paper separately states
$\Ot((nd+dL^{2}R^{2})\log(w/\delta))$ mixing for spectrahedra
\cite{gkmsz24}.  Kook's 2026 criticism is specific to the polytope
Lee--Sidford argument and does not establish a gap in the spectrahedral proof.
Theorem~\ref{thm:intro-spectral} neither corrects nor improves that prior
statement uniformly: when $\psistar>1$, its displayed rate is weaker.  We give
a complete exact-metric proof with explicit leverage dependence and make no
priority claim over the earlier spectrahedral theorem.

Table~\ref{tab:result_history} places the two results in context.  The target
class is listed explicitly because the uniform, exponential, and general
Lipschitz settings are not interchangeable.  A dagger marks a bound whose
published proof is known to contain a gap.  The daggered entries identify the buggy results discussed above.

\begin{table}[ht]
\caption{Selected warm-start bounds for Dikin-type walks.  A $\dagger$ marks a result whose published proof contains a known gap.  Here $n$ denotes
the number of inequalities for a polytope or the matrix size for a
spectrahedron, and $\tau=\log(w/\delta)$.  The final column suppresses $O$ and
$\widetilde O$.}
\label{tab:result_history}
\centering
\small
\setlength{\tabcolsep}{4pt}

\textbf{(a) Polytopes}\par
\smallskip
\begin{tabular}{@{}>{\raggedright\arraybackslash}p{0.07\textwidth}>{\raggedright\arraybackslash}p{0.18\textwidth}>{\raggedright\arraybackslash}p{0.40\textwidth}>{\raggedright\arraybackslash}p{0.27\textwidth}@{}}
\toprule
Year & Reference & Setting/metric & Warm-start iterations \\
\midrule
2012 & \cite{kn12}
& Uniform; log-barrier Dikin walk
& $nd\tau$\\
2018 & \cite{cdwy18}
& Uniform; John walk
& $d^{5/2}\log^{4}(n/d)\tau$\\
2020 & \cite{llv20}\textsuperscript{$\dagger$}
& Uniform; Lee--Sidford Dikin walk
& $d^{2}\tau$\\
2024 & \cite{gkmsz24}\textsuperscript{$\dagger$}
& $L$-Lipschitz; robust Lee--Sidford walk
& $d(d+L^{2}R^{2})\tau$\\
2026 & \cite{k26}
& Exponential targets; scaled Lee--Sidford metric
& $d^{9/4}\tau$\\
2026 & Theorem~\ref{thm:intro-polytope}
& $L$-Lipschitz; exact regularized Lee--Sidford metric
& $d(d+L^{2}R^{2})\tau$\\
\bottomrule
\end{tabular}

\medskip
\textbf{(b) Spectrahedra}\par
\smallskip
\begin{tabular}{@{}>{\raggedright\arraybackslash}p{0.07\textwidth}>{\raggedright\arraybackslash}p{0.18\textwidth}>{\raggedright\arraybackslash}p{0.40\textwidth}>{\raggedright\arraybackslash}p{0.27\textwidth}@{}}
\toprule
Year & Reference & Setting/metric & Warm-start iterations \\
\midrule
2024 & \cite{gkmsz24}\textsuperscript{$\dagger$}
& $L$-Lipschitz; log-det metric
& $d(n+L^{2}R^{2})\tau$\\
2026 & Theorem~\ref{thm:intro-spectral}
& $L$-Lipschitz; exact regularized log-det metric
& $d(\psistar n+L^{2}R^{2})\tau$\\
\bottomrule
\end{tabular}
\end{table}

\paragraph{Organization.}
Section~\ref{sec:technical_overview} explains the shared cancellation and the transfer to implementable kernels.
Section~\ref{I:sec:prelim} collects the common tools.  Section~\ref{I:sec:polytope} proves the polytope result via Lewis-weight stability, Stein concentration, and conductance.
Section~\ref{II:sec:spectral} proves the spectrahedral result via the noncommutative ledger, a leverage-controlled spectral tube, and conductance,
and concludes with the arrow spectrahedron obstruction.  Appendix~\ref{I:sec:approx} gives the formal approximate-metric kernels
and their exact-arithmetic complexity bounds.

\section{Technical overview}
\label{sec:technical_overview}

This section separates the two proof mechanisms.  Section~\ref{subsec:tech_polytope}
explains the Lewis-weight argument for Theorem~\ref{thm:intro-polytope}, while
Section~\ref{subsec:tech_spectrahedron} explains the noncommutative log-det
argument for Theorem~\ref{thm:intro-spectral}.
Section~\ref{subsec:tech_implementation} then explains how approximate metrics
give implementable versions of both walks.

\subsection{Proof overview for Theorem~\ref{thm:intro-polytope}}
\label{subsec:tech_polytope}

\paragraph{Exact cancellation.}
Write $H(x)=A_x^\top W_x^{1-2/p}A_x$ and
$G(x)=\kappa H(x)+L^2I$.  At a fixed current point, an affine change of
coordinates makes $H(x)=I$ and writes
$G(x)=\kappa(I+\creg)$, where $\creg\succeq0$.  Thus a proposal has the
form $y=x+\eta h$, with $h\sim N(0,Q)$,
$Q=(I+\creg)^{-1}\preceq I$, and
$\eta=r/\sqrt{\kappa d}$.  If $U=W_x^{1/2-1/p}A_x$, then
$U^\top U=I$.  The endpoint metric is represented exactly as
$H(y)=I+U^\top\Diag(b)U$, where each coordinate of $b$ records the combined
change in the corresponding slack and Lewis weight.  With
$c_i=(u_i^\top h)^2$, $\bar c_i=\E[c_i]$, and
$E=Q^{1/2}(H(y)-I)Q^{1/2}$, Lemma~\ref{I:lem:ledger} gives
\[
 \log\frac{q_y(x)}{q_x(y)}
 =-\frac12(c-\bar c)^\top b
  +\frac12(\log\det(I+E)-\tr[E]).
\]
This identity is the central cancellation: the linear determinant term
$\tr[E]=\bar c^\top b$ is exactly the Gaussian mean of the uncentered
quadratic term and disappears before any tail bound is applied.  What remains
is a centered contraction and two genuinely second-order remainders.
Figure~\ref{fig:tech-polytope} summarizes the argument.

\begin{figure}[!ht]
\centering
\includegraphics[width=\textwidth]{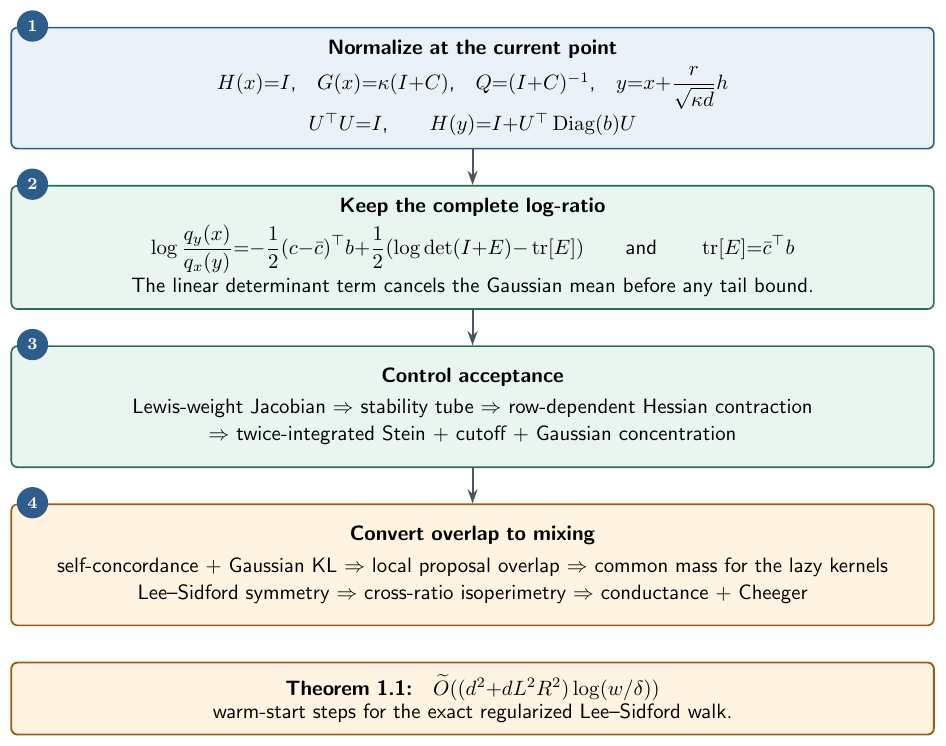}
\caption{Main steps for Theorem~\ref{thm:intro-polytope}: normalization, exact determinant--quadratic cancellation, Lewis-weight stability and acceptance, and conductance and mixing.}
\label{fig:tech-polytope}
\end{figure}

\paragraph{Lewis-weight stability and acceptance.}
The weights are implicit, so differentiating their fixed-point equation
directly would obscure the cancellation.  Instead
Lemma~\ref{I:lem:lewis-derivative} specializes the row-rescaling Jacobian of
Lee--Sidford~\cite[Lemma~24]{ls19} to spatial slack changes.  The resulting
explicit identity yields a high-probability tube on which all slacks and
weights remain comparable to their values at $x$; see
Lemmas~\ref{I:lem:tube-probability} and~\ref{I:lem:stability}.  Rowwise
resolvent estimates then give the row-dependent Hessian contraction in
Lemma~\ref{I:lem:hessian-contraction}.  This stronger, leverage-sensitive
bound is what prevents a factor depending on the number of inequalities from
entering the fluctuation estimate.

Split the centered term as its linear part
$Z=(c-\bar c)^\top a$ plus the nonlinear remainder generated by
$b-2a$.  Applying Gaussian integration by parts twice controls $\E[Z]$
through the Hessian contraction, while Lemmas~\ref{I:lem:remainder-quad}
and~\ref{I:lem:remainder-det} control the quadratic and determinant
remainders.  On the stability tube, $Z$ has a dimension-free Lipschitz
constant.  A cutoff, McShane extension, and Gaussian concentration
\cite{l01} therefore give the required lower tail without conditioning the
Gaussian law.  Feasibility follows from the Dikin containment of the unit
$H(x)$-ellipsoid, and the potential contribution is bounded because
$G(x)\succeq L^2I$ and $f$ is $L$-Lipschitz.  Choosing
$\rho=r/\sqrt\kappa$ as in Eq.~\eqref{I:eq:rho-choice} makes every
remainder universally small, proving the uniform acceptance statement
Lemma~\ref{I:lem:acceptance}.

\paragraph{From overlap to mixing.}
Strong self-concordance of the regularized metric controls nearby covariance
matrices.  Substituting this distortion into the Gaussian KL formula and
applying Pinsker gives proposal overlap whenever
$\norm{x-z}_{G(x)}=O(r/\sqrt d)$.  Combining this with the acceptance event
gives a fixed common mass for the two lazy Metropolis kernels.  The scaled
Lee--Sidford metric is $\bar\nu_{\mathrm{LS}}$-symmetric with
$\bar\nu_{\mathrm{LS}}=\Ot(d)$, and regularization changes the symmetry
parameter only to
$\nu_G\le\bar\nu_{\mathrm{LS}}+4L^2R^2$.  Local-norm separation therefore
implies chord cross-ratio separation.  Lov\'asz--Vempala isoperimetry
\cite{lv07} converts that separation into
$\Phi\ge \frac{c r}{\sqrt{d(\bar\nu_{\mathrm{LS}}+L^2R^2)}}$.
Finally, the reversible Cheeger inequality~\cite{ls88} and the direct
$L^2(\pi)$-to-total-variation calculation for a $w$-warm start yield
$\Ot((d^2+dL^2R^2)\log(w/\delta))$ steps, which is
Theorem~\ref{thm:intro-polytope}.

\subsection{Proof overview for Theorem~\ref{thm:intro-spectral}}
\label{subsec:tech_spectrahedron}

\paragraph{Exact cancellation.}
For a spectrahedron, slack congruence at the current point makes $S(x)=I$.
A further linear change of variables gives a Frobenius-orthonormal frame
$\{\Ab_a\}_{a=1}^d$ for the whitened constraint directions and
$H(x)=I_d$.  As above, write $G(x)=\kappa(I+\creg)$,
$Q=(I+\creg)^{-1}$, and $y=x+\eta h$ with
$h\sim N(0,Q)$ and $\eta=r/\sqrt{\kappa d}$.  Put
$\Delta=\eta\sum_a h_a\Ab_a$ and
$\Rres=(I+\Delta)^{-1}$.  Slack congruence and trace cyclicity give exact
endpoint formulas for $H(y)-I=:\Eb$ and all of its first two derivatives.
For $E=Q^{1/2}\Eb Q^{1/2}$, Lemma~\ref{II:lem:ledger} becomes
\[
 \log\frac{q_y(x)}{q_x(y)}
 =\frac12(\log\det(I+E)-\tr[E])-\frac12 Z(h),
 \qquad Z(h)=h^\top\Eb h-\tr[Q\Eb].
\]
Thus the same determinant--quadratic cancellation is present, but the
centered fluctuation is now a noncommutative trace polynomial.

\paragraph{The leverage-controlled spectral tube.}
The matrix leverage operator
$\Psib(x)=\sum_a\Ab_a^2$ has
$\psi(x)=\opnorm{\Psib(x)}$ and
$\psistar=\max(1,\sup_{x\in\intK}\psi(x))$.  Tropp's
self-adjoint Gaussian-series inequality~\cite[Theorem~4.1.1]{t15}, applied
to $\sum_a h_a\Ab_a$, gives a high-probability spectral tube on which
$\opnorm\Delta$ is small.  The resolvent is then stable on a slightly larger
collar, so the determinant remainder is quadratic in $E$.  The key
noncommutative estimate is the paired four-trace bound
Lemma~\ref{II:lem:four-trace}; it contracts the second derivatives arising
from twice-integrated Stein to the single parameter $\psi(x)$ rather than a
larger ambient-dimensional quantity.  Consequently the mean of $Z$ is
$O(\rho^2\psi)$.

The gradient formula for $Z$ and the same trace contractions give a
dimension-free Lipschitz bound on the collar.  After a cutoff and McShane
extension, Gaussian concentration controls $Z-\E[Z]$ globally; intersecting
this event with the spectral tube also guarantees feasibility because
$I+t\Delta\succ0$ for $0\le t\le1$.  The potential term is handled by
$G(x)\succeq L^2I$ exactly as in the polytope proof.  Choosing $\rho$ as in
Eq.~\eqref{II:eq:rho-choice} gives $\kappa=\Ot(\psistar)$ and makes the
determinant remainder, the Stein
mean, and the fluctuation universally bounded.  This proves the uniform
acceptance lower tail in Lemma~\ref{II:lem:acceptance}.
Figure~\ref{fig:tech-spectrahedron} summarizes the argument.

\begin{figure}[!ht]
\centering
\includegraphics[width=\textwidth]{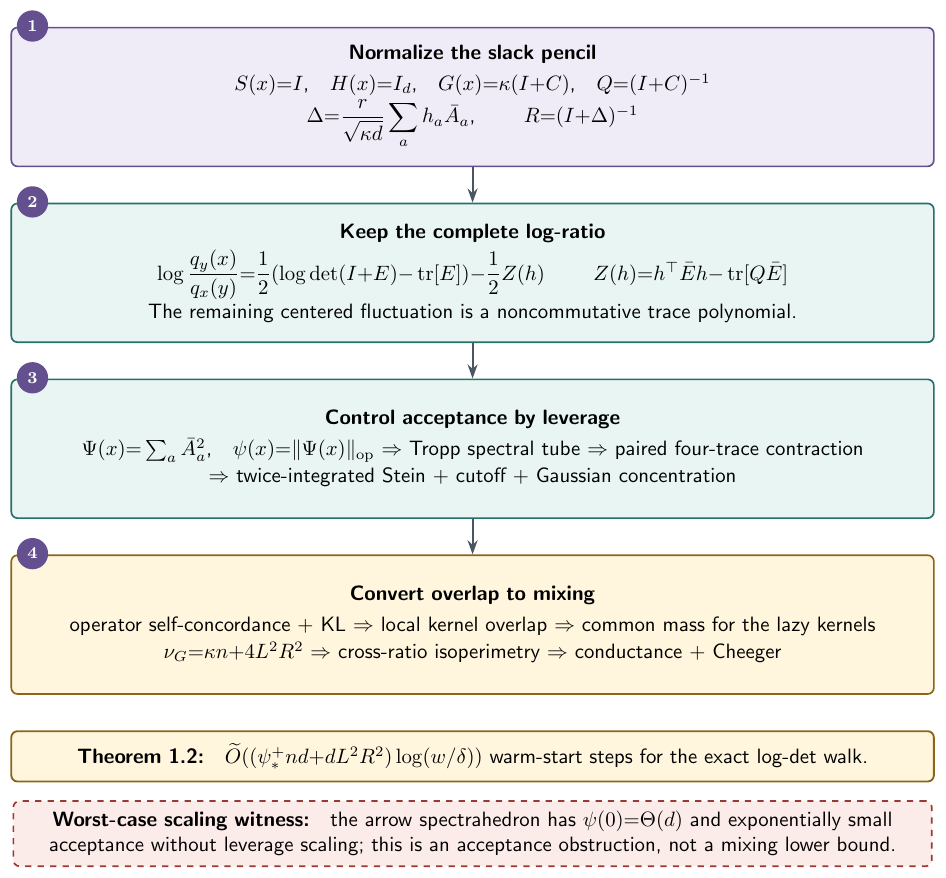}
\caption{Main steps for Theorem~\ref{thm:intro-spectral}: normalization, exact determinant--quadratic cancellation, leverage-controlled acceptance, and conductance and mixing.}
\label{fig:tech-spectrahedron}
\end{figure}

\paragraph{From overlap to mixing.}
Operator self-concordance of the log-det metric controls nearby Gaussian
covariances.  The Gaussian KL identity and Pinsker again give proposal
overlap at local distance $O(r/\sqrt d)$, and the acceptance lower tail
transfers this to common mass for the lazy kernels.  In normalized
coordinates, define the block displacement
$\Delta_{z-x}:=\sum_a(z_a-x_a)\Ab_a$.  Then
$z\in K\cap(2x-K)$ is equivalent to $\opnorm{\Delta_{z-x}}\le1$, whereas
$\norm{z-x}_{H(x)}=\fnorm{\Delta_{z-x}}$.  Hence
$\fnorm{\Delta_{z-x}}^2\le n\opnorm{\Delta_{z-x}}^2$, and the regularized metric has
symmetry parameter
$\nu_G=\kappa n+4L^2R^2$.  Cross-ratio isoperimetry and the same
deep-set conductance argument yield
$\Phi\ge\frac{c r}{\sqrt{d(\kappa n+L^2R^2)}}$.
Cheeger and warm-start contraction therefore give
$\Ot((\psistar nd+dL^2R^2)\log(w/\delta))$ steps, proving
Theorem~\ref{thm:intro-spectral}.

\paragraph{Why leverage scaling is necessary.}
For the uniform target, the arrow spectrahedron in
Lemma~\ref{II:lem:arrow} has $\psi(0)=\Theta(d)$ and exponentially small
acceptance at the center when $\kappa=1$ and the proposal radius is a fixed
universal constant.  It explains why leverage scaling is needed for uniform
acceptance, while making no mixing lower-bound claim.

\subsection{From exact metrics to implementable kernels}
\label{subsec:tech_implementation}

\paragraph{A common transition.}
The two exact analyses use the metric $G$ as a reference geometry.  An
implementation instead constructs a
positive-definite metric $\bar G(x;\omega_{\mathrm{s}})$, where the seed is
suppressed when the construction is deterministic, and proposes
\[
 Y\sim\bar q_{x,\omega_{\mathrm{s}}}
 :=N\!\left(x,\frac{r^{2}}{d}
       \bar G(x;\omega_{\mathrm{s}})^{-1}\right).
\]
If $Y$ is feasible, the algorithm constructs the reverse metric
$\bar G(Y;\omega_{\mathrm{s}}')$ and accepts with probability
\[
 1\wedge
 \frac{e^{-f(Y)}\bar q_{Y,\omega_{\mathrm{s}}'}(x)}
      {e^{-f(x)}\bar q_{x,\omega_{\mathrm{s}}}(Y)}.
\]
Thus the implemented chain does not estimate the acceptance probability of
the exact walk; it applies the exact Metropolis filter to the metrics that
were actually constructed.

\paragraph{The two implementations.}
For polytopes, a deterministic fixed-schedule iteration computes the Lewis
weights to coordinatewise relative accuracy $O(1/d)$.  Inflating the result
by its worst possible downward error gives a padded metric $\bar G(x)$ with
$\bar G(x)\succeq G(x)$.  Since this metric is a deterministic measurable
function of the state, the usual Metropolis--Hastings identity gives exact
reversibility.  For spectrahedra, we use the direct log-det construction when
$n<s$ and a TensorSRHT approximation when $n\ge s$.  In the sketched branch,
the forward and reverse endpoint sketches are fresh and independent.  After
averaging over the two seeds, the accepted flow is
\[
 \E_{\omega_{\mathrm{s}},\omega_{\mathrm{s}}'}
 \!\left[
 \min\!\left\{
 e^{-f(x)}\bar q_{x,\omega_{\mathrm{s}}}(y),
 e^{-f(y)}\bar q_{y,\omega_{\mathrm{s}}'}(x)
 \right\}\right],
\]
which is symmetric under interchanging the endpoints and their seeds.
Hence the marginal two-seed kernel is exactly reversible.

\paragraph{Transfer from the reference walk.}
In both settings, either deterministically or on a pointwise sketch-success
event,
\[
 G(x)\preceq\bar G(x;\omega_{\mathrm{s}})
 \preceq e^{\theta}G(x),
 \qquad \theta=O(1/d).
\]
The centered Gaussian KL formula gives
$\dtv(q_x,\bar q_{x,\omega_{\mathrm{s}}})=O(\sqrt d\,\theta)$, while on the
two-endpoint event from the exact acceptance proof, each conditional Gaussian
log density changes by only $O(d\theta)=O(1)$.  Therefore the acceptance lower
tail and nearby-proposal overlap survive with different universal constants.
The conductance and warm-start arguments depend only on this common mass and
the reference geometry, so the mixing bounds transfer without differentiating
the approximate metric.  In the randomized case, failed seeds remain part of
the transition kernel and the sketch guarantee is only pointwise in the
current state; no union bound over the mixing horizon is needed.

Appendix~\ref{I:sec:approx} gives the formal kernels and perturbation
arguments.  It combines the mixing times with the per-step costs
$\Ot(nd^{\omega-1}+d^\omega)$ for polytopes and
$\Ot(n^\omega+n^2d^{3\omega-5})$ for spectrahedra to obtain the
exact-arithmetic bounds in Corollaries~\ref{I:cor:arithmetic-runtime}
and~\ref{II:thm:sketched-arithmetic}.
 
\section{Preliminaries}\label{I:sec:prelim}\label{II:sec:prelim}

\subsection{Notation}

Throughout,
\[
p:=\max\{4,\;2(1+\log n)\},\qquad
\beta:=\tfrac12-\tfrac1p,\qquad
c_{p}:=1-\tfrac2p=2\beta .
\]
Note $\beta\in[\tfrac14,\tfrac12)$ and $n^{2/p}\le e$. The symbols
$C,c$ denote universal constants and $C_{p}$ a quantity bounded by a
fixed polynomial in $p$; all may change from line to line. Polynomial
factors in $p$ and logarithmic factors in $n,d$ are hidden in
$\Ot(\cdot)$. For a symmetric matrix $M$, $\opnorm{M}$ and $\fnorm{M}$
are the spectral and Frobenius norms; $\Sd$ is the space of symmetric
$d\times d$ matrices with the Frobenius inner product. For a positive
definite $M$ we write $\norm{v}_{M}^{2}=v^{\top}Mv$, and for a positive
vector $w$ we write $\norm{v}_{W}^{2}=\sum_{i}w_{i}v_{i}^{2}$ with
$W=\Diag(w)$. For vectors, $\circ$ denotes the entrywise product and
$v^{\circ2}=v\circ v$; for matrices, $\circ$ is the Hadamard product.

\subsection{Polytope, slacks, and Lewis weights}

Let $\mA\in\R^{n\times d}$ have full column rank and no zero rows, let
$b\in\R^{n}$, and set
\[
K:=\{z\in\R^{d}:\mA z\ge b\},\qquad
\intK:=\{z\in\R^{d}:\mA z> b\}.
\]
We assume $K$ is bounded and full-dimensional and, after a translation,
$K\subseteq B_{2}(0,R)$. For $x\in\intK$ define the positive slacks and
the slack-rescaled constraint matrix
\begin{equation*}
s_{x}:=\mA x-b,\qquad S_{x}:=\Diag(s_{x}),\qquad
A_{x}:=S_{x}^{-1}\mA .
\end{equation*}
This fixes the sign convention: feasible slacks are $\mA x-b>0$.

For a full-column-rank matrix $M$ let
$\lev(M):=\diag(M(M^{\top}M)^{-1}M^{\top})$ denote its
leverage scores. The \emph{$\ell_{p}$-Lewis weights} of $A_{x}$ are the
unique positive vector $w_{x}$, with $W_{x}:=\Diag(w_{x})$, satisfying
the fixed-point identity
\begin{equation}
w_{x}=\lev(W_{x}^{1/2-1/p}A_{x}).
\label{I:eq:lewis-fixedpoint}
\end{equation}
Existence and uniqueness for $p\ge2$ go back to
Lewis~\cite{l78}; see also Cohen--Peng~\cite{cp15}.
We use one further external Lewis-weight fact: under a smooth positive
diagonal rescaling of the rows, the unique weights vary smoothly and
their row-rescaling Jacobian is the one quoted in
Lemma~\ref{I:lem:lewis-derivative} from~\cite{ls19}. This is the
only imported Lewis-calculus identity. Since the associated
leverage-score matrix is a rank-$d$ orthogonal projection,
\begin{equation}
0<w_{x,i}\le1,\qquad \sum_{i}w_{x,i}=d .
\label{I:eq:lewis-basic}
\end{equation}

\subsection{First-derivative Lewis calculus}\label{I:sec:lewis-calculus}

The following objects appear throughout. At a point $z\in\intK$, set
\[
B_{z}=W_{z}^{1/2}A_{z},\qquad
V_{z}=W_{z}^{\beta}A_{z}H(z)^{-1/2},\qquad
\Bh_{z}=B_{z}H(z)^{-1/2}=W_{z}^{1/p}V_{z} .
\]
Then $V_{z}^{\top}V_{z}=I$, and by Eq.~\eqref{I:eq:lewis-fixedpoint} the
squared norm of the $i$-th row $v_{z,i}$ of $V_{z}$ is $w_{z,i}$.
Define the row map $\Phi_{z}:\Sd\to\R^{n}$ and its associated
operators by
\begin{equation}
(\Phi_{z}X)_{i}=\frac{v_{z,i}^{\top}Xv_{z,i}}{\sqrt{w_{z,i}}},
\qquad
\Lb_{z}:=I-\Phi_{z}\Phi_{z}^{*},
\qquad
\Rr_{z}:=(I-c_{p}\Lb_{z})^{-1},
\qquad
N_{z}:=2\Lb_{z}\,\Rr_{z} .
\label{I:eq:frame-objects}
\end{equation}
Here $\Phi_{z}\Phi_{z}^{*}$ is an $n\times n$ matrix depending only on
the Gram matrix of the rows $v_{z,i}$, hence independent of the choice
of square root $H(z)^{-1/2}$. Since $2\beta=c_{p}$,
\begin{equation}
\Rr_{z}=I+\beta N_{z}.
\label{I:eq:resolvent-identity}
\end{equation}

Two elementary norm facts will be used repeatedly. First, applying
Cauchy--Schwarz in each row and using $V_z^\top V_z=I$ gives the rowwise
contraction
\begin{equation}
\sum_{i}\frac{(v_{z,i}^{\top}Xv_{z,i})^{2}}{w_{z,i}}
\le\sum_{i}v_{z,i}^{\top}X^{2}v_{z,i}=\fnorm{X}^{2}
\qquad(X\in\Sd),
\label{I:eq:phi-contraction}
\end{equation}
and hence
$\opnorm{\Phi_{z}}\le1$, and the $i$-th component functional has norm
$\fnorm{(\Phi_{z})_{i}}=\sqrt{w_{z,i}}$. Second,
$A_{z}^{\top}W_{z}A_{z}\preceq H(z)$ (because $w\le w^{1-2/p}$ for
$w\le1$) gives $\opnorm{\Bh_{z}}\le1$, and the rows satisfy
$\norm{(\Bh_{z})_{i}}=w_{z,i}^{1/2+1/p}$.

\begin{lemma}[{Lewis derivative formula; \cite[Lemma 24]{ls19}}]
\label{I:lem:lewis-derivative}
The map $z\mapsto\log w_{z}$ is smooth on $\intK$ with
\begin{equation}
D\log w_{z}[v]=-W_{z}^{-1/2}N_{z}B_{z}v .
\label{I:eq:lewis-derivative}
\end{equation}
Moreover $0\preceq\Lb_{z}\preceq I$, and consequently
$\opnorm{\Rr_{z}}\le p/2$ and $0\preceq N_{z}\preceq pI$.
\end{lemma}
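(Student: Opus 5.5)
The plan is to derive Eq.~\eqref{I:eq:lewis-derivative} from the row-rescaling Jacobian of~\cite[Lemma~24]{ls19} via the chain rule. Then I will deduce the operator bounds from the structure of $\Phi_z\Phi_z^*$.

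\textbf{Smoothness and chain rule.} Write $A_z=S_z^{-1}\mA$. Moving $z$ in direction $v$ rescales the rows of the fixed matrix $\mA$ by the positive diagonal $S_z^{-1}$, and $D\log s_{z}[v]=S_z^{-1}\mA v=A_zv$. The Lee--Sidford lemma states that the Lewis weights of a diagonally rescaled matrix depend smoothly on the log-scaling vector. Their Jacobian with respect to that vector is
\[
-2\Lb\,(I-c_p\Lb)^{-1}=-N,
\]
with $\Lb=I-\Phi\Phi^*$ written in weight-normalized coordinates. I will check this normalization on the entries: $(\Phi_z\Phi_z^*)_{ij}=(v_{z,i}^\top v_{z,j})^2/\sqrt{w_{z,i}w_{z,j}}$, which is the $W^{-1/2}P^{\circ2}W^{-1/2}$ form (with $P$ the projection) appearing in~\cite{ls19}. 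The composition then reads $D\log w_z[v]=-W_z^{-1/2}N_zW_z^{1/2}A_zv=-W_z^{-1/2}N_zB_zv$. The main obstacle is bookkeeping the conjugation by $W^{1/2}$, because~\cite{ls19} states the Jacobian for $\log w$ in terms of the unsymmetrized matrix $W^{-1}P^{\circ2}$. I would first verify that $W^{-1/2}(\text{sym.\ form})W^{1/2}$ reproduces their expression, and then confirm that the sign matches increasing slacks.

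\textbf{Operator bounds.} The bound $\Phi_z\Phi_z^*\succeq0$ is immediate. Eq.~\eqref{I:eq:phi-contraction} gives $\opnorm{\Phi_z}\le1$, so $\Phi_z\Phi_z^*\preceq I$, and therefore $0\preceq\Lb_z\preceq I$. Since $c_p=1-2/p\in[0,1)$, the eigenvalues of $I-c_p\Lb_z$ lie in $[2/p,1]$. Hence $\Rr_z$ is symmetric with $\opnorm{\Rr_z}\le p/2$. Finally, $\Lb_z$ and $\Rr_z$ commute, so $N_z=2\Lb_z\Rr_z$ has eigenvalues $2\lambda/(1-c_p\lambda)$ for $\lambda\in[0,1]$. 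This function is increasing in $\lambda$, so the eigenvalues lie in $[0,2/(1-c_p)]=[0,p]$, which gives $0\preceq N_z\preceq pI$.
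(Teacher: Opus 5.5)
Your proposal is correct and follows the paper's proof essentially step for step: the chain rule through the Lee--Sidford row-rescaling Jacobian with the conjugation $2(W-c_p\Lambda)^{-1}\Lambda=W^{-1/2}N W^{1/2}$, then the spectral bounds from $0\preceq\Phi_z\Phi_z^*\preceq I$. One phrasing fix: with $\upsilon=s_z^{-1}$ the actual row-scaling vector, the Jacobian of $\log w$ with respect to $\log\upsilon$ is $+W^{-1/2}NW^{1/2}$, and the minus sign enters only through $D\log\upsilon[v]=-A_zv$. So your ``$-N$'' is the Jacobian with respect to $\log s_z$, not with respect to the log-scaling vector as you wrote, and the sign check you deferred comes out correctly.
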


\begin{proof}
The only imported step is the following row-rescaling Jacobian; we
spell out its spatial specialization because that fixes both the sign
and the factor two. Let $\upsilon=s_z^{-1}$ be the vector that rescales
the rows of the fixed constraint matrix and let $\Upsilon=\Diag(\upsilon)$.
In the notation of \cite[Lemma 24]{ls19}, the
Jacobian of the Lewis weights under this rescaling is
\[
 J_w(\upsilon)=2W(W-c_p\Lambda)^{-1}\Lambda \Upsilon^{-1},
 \qquad
 \Lambda:=W^{1/2}\Lb W^{1/2}.
\]
Along a spatial direction $h$, differentiation of
$\upsilon_i=s_{z,i}^{-1}$ gives $D\log\upsilon[h]=-A_zh$. Therefore the chain
rule, together with $D\upsilon=\Upsilon D\log\upsilon$ and the displayed
definition of $\Lambda$, gives
\begin{align*}
D\log w_z[h]
&=W^{-1}J_w(\upsilon)D\upsilon[h]\\
&=2(W-c_p\Lambda)^{-1}\Lambda D\log\upsilon[h]\\
&=-2W^{-1/2}(I-c_p\Lb)^{-1}\Lb W^{1/2}A_zh\\
&=-W^{-1/2}N B_zh.
\end{align*}
In the last step, $\Lb$ commutes with
$(I-c_p\Lb)^{-1}$ because the latter is a function of $\Lb$.

By Eq.~\eqref{I:eq:phi-contraction},
$0\preceq\Phi\Phi^{*}\preceq I$, and therefore
$0\preceq\Lb=I-\Phi\Phi^{*}\preceq I$. Hence the eigenvalues of
$\Rr=(I-c_p\Lb)^{-1}$ lie in $[1,(1-c_p)^{-1}]=[1,p/2]$, and those
of $N=2\Lb\Rr$ lie in $[0,p]$.
\end{proof}

\subsection{Analytic, metric, and geometric inputs}\label{I:sec:imports}

The following is the only concentration theorem used in the polytope
proof. We treat it as a non-elementary black box and state the exact
form needed below~\cite{l01}.
\begin{lemma}[Gaussian Lipschitz concentration; black box]
\label{I:lem:gaussian-concentration}
Let $\xi\sim N(0,I_m)$ and let $F:\R^m\to\R$ be
$L_F$-Lipschitz with $L_F>0$. For every $t\ge0$,
\begin{align*}
\Pr[F(\xi)-\E[F(\xi)]\ge t]
&\le \exp(-\frac{t^2}{2L_F^2}),\\
\Pr[|F(\xi)-\E[F(\xi)]|\ge t]
&\le 2\exp(-\frac{t^2}{2L_F^2}).
\end{align*}
In particular, if $Z$ is centered Gaussian with
$\operatorname{Var}(Z)\le\sigma^2$, then
$\Pr[|Z|\ge t]\le2e^{-t^2/(2\sigma^2)}$.
\end{lemma}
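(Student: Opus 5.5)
The plan is the Herbst argument built on the Gaussian logarithmic Sobolev inequality. This is the standard route to the sharp constant $1/(2L_F^2)$ in the exponent, which is the constant the displayed bounds require. First I reduce to smooth $F$. Convolve $F$ with a small Gaussian mollifier, $F_\varepsilon=F*\varphi_\varepsilon$. Then $F_\varepsilon$ is smooth and $L_F$-Lipschitz, so $\norm{\nabla F_\varepsilon}\le L_F$ pointwise. Moreover $F_\varepsilon\to F$ uniformly, since $|F_\varepsilon-F|\le L_F\E\norm{\varepsilon g}$ with $g\sim N(0,I_m)$. A Lipschitz function has at most linear growth, so it is integrable against the Gaussian, and the means also converge. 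Hence a tail bound for each $F_\varepsilon$ at level $t$ yields the bound for $F$ at every level $t'<t$. Letting $t'\uparrow t$ and using right-continuity of the tail gives the bound for $F$ at level $t$ itself.

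For smooth $F$ with $\norm{\nabla F}\le L_F$, I set $\Lambda(\lambda)=\log\E[e^{\lambda(F-\E F)}]$; this is finite by the linear growth of $F$. I would invoke the Gaussian log-Sobolev inequality $\mathrm{Ent}(g^2)\le 2\E\norm{\nabla g}^2$, which I take as the one imported ingredient. Applied with $g=e^{\lambda F/2}$, it gives $\mathrm{Ent}(e^{\lambda F})\le \tfrac{\lambda^2L_F^2}{2}\E[e^{\lambda F}]$. Rewritten in terms of $\Lambda$, this is the differential inequality $\tfrac{d}{d\lambda}\bigl(\Lambda(\lambda)/\lambda\bigr)\le L_F^2/2$. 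Since $\Lambda(\lambda)/\lambda\to 0$ as $\lambda\downarrow0$, integrating gives $\Lambda(\lambda)\le\lambda^2L_F^2/2$. Markov's inequality applied to $e^{\lambda(F-\E F)}$, with the choice $\lambda=t/L_F^2$, then gives the one-sided bound $\exp(-t^2/(2L_F^2))$. Applying the same argument to $-F$, which has the same Lipschitz constant, and taking a union bound gives the two-sided bound with the factor $2$.

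For the final sentence, write $Z=\sigma_Z\xi_1$ with $\sigma_Z^2=\operatorname{Var}(Z)\le\sigma^2$. If $\sigma_Z=0$ the claim is trivial. Otherwise $F(\xi)=\sigma_Z\xi_1$ is $\sigma_Z$-Lipschitz with mean zero, so the two-sided bound gives $2e^{-t^2/(2\sigma_Z^2)}\le 2e^{-t^2/(2\sigma^2)}$. A direct Chernoff computation with the Gaussian moment generating function gives the same result.

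I expect the main obstacle to be the log-Sobolev inequality itself, if it must be proved rather than cited. For that I would use tensorization of entropy to reduce to dimension one. In dimension one I would either take a central-limit limit of the two-point (Bonami--Gross) inequality, or use the Ornstein--Uhlenbeck semigroup interpolation $\mathrm{Ent}(f)=\int_0^\infty \E[\norm{\nabla P_sf}^2/P_sf]\,ds$ together with the commutation $\nabla P_s=e^{-s}P_s\nabla$. The smoothing step and the Herbst integration are routine by comparison.
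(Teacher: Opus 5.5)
Your proof is correct. It differs from the paper only in that the paper does not prove this lemma at all: it imports it as a non-elementary black box from Ledoux~\cite{l01}, and its twin, Lemma~\ref{II:lem:gaussian-concentration}, is described as the standard Gaussian isoperimetric concentration theorem.

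Your log-Sobolev/Herbst route is the standard derivation of exactly the displayed form: deviation from the mean, exponent $t^2/(2L_F^2)$, prefactor $1$. The isoperimetric route instead controls deviations from a median $M_F$ of $F(\xi)$, namely $\Pr[F(\xi)\ge M_F+t]\le\Pr[\xi_1\ge t/L_F]$, and the elementary passage from median to mean degrades the constants. So for the inequality as stated your route is the natural one. The price is that the black box shifts to the log-Sobolev inequality. Your Ornstein--Uhlenbeck sketch (entropy as $\int_0^\infty\E[\norm{\nabla P_sf}^2/P_sf]\,ds$, then $\nabla P_s=e^{-s}P_s\nabla$ and Cauchy--Schwarz) proves that inequality directly in $\R^m$, so tensorization is not needed on that branch.

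Two small points of hygiene:
\begin{itemize}
\item In the smoothing step the direction of the levels is garbled. The clean statement is $\{F-\E F\ge t\}\subseteq\{F_\varepsilon-\E F_\varepsilon\ge t-2L_F\varepsilon\sqrt m\}$. Letting $\varepsilon\to0$ then uses only continuity of $t\mapsto e^{-t^2/(2L_F^2)}$, not any continuity of the tail function.
\item Applying the log-Sobolev inequality to the unbounded $g=e^{\lambda F/2}$, and differentiating $\Lambda$ under the expectation, needs the routine truncation and dominated-convergence justification. This is available because Lipschitz functions have finite Gaussian exponential moments.
\end{itemize}
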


\begin{definition}[Strong self-concordance;~\cite{llv20}]
A matrix function $M:\intK\to\Sd_{\succ0}$ is \emph{strongly
self-concordant} (SSC) if for all $x\in\intK$ and $h\in\R^{d}$,
\[
\fnorm{M(x)^{-1/2}\,DM(x)[h]\,M(x)^{-1/2}}\le2\norm{h}_{M(x)} .
\]
\end{definition}

\begin{lemma}[SSC of the scaled LS metric]
\label{I:lem:ssc-ls}
The metric $2(1+p^{2})H$ is strongly self-concordant. Consequently,
$\kappa H$ is strongly self-concordant for every
$\kappa\ge2(1+p^{2})$.
\end{lemma}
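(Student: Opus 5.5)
The plan is a direct computation of the normalized derivative of $H(x)=A_x^{\top}W_x^{c_p}A_x$, where $c_p=1-2/p$, followed by a rowwise Frobenius bound. The key observation is scale invariance: for $M=\kappa H$, the normalized derivative $M^{-1/2}DM[h]M^{-1/2}$ equals $H^{-1/2}DH[h]H^{-1/2}$, while $\norm{h}_{M}=\sqrt{\kappa}\,\norm{h}_{H}$. So it suffices to prove
\[
\fnorm{H^{-1/2}DH[h]H^{-1/2}}\le C_0\norm{h}_{H}
\]
for an explicit $C_0$ (I expect $C_0=p$), and then check that $C_0\le 2\sqrt{\kappa}$ whenever $\kappa\ge2(1+p^2)$. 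The "consequently" clause is automatic, since the required inequality only becomes weaker as $\kappa$ grows.

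\textbf{Step 1: the derivative.} Differentiating $s_x=\mA x-b$ gives $DA_x[h]=-\Diag(A_xh)A_x$. The Lewis-weight part is $D(W^{c_p})[h]=c_pW^{c_p}\Diag(D\log w[h])$. Hence
\[
H^{-1/2}DH[h]H^{-1/2}=V^{\top}\Diag(g)V,\qquad g:=-2A_xh+c_p\,D\log w_x[h],
\]
with $V=V_x=W^{\beta}A_xH^{-1/2}$. Since $V^{\top}V=I$, the matrix $P=VV^{\top}$ is an orthogonal projection with $P_{ii}=w_i$.

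\textbf{Step 2: Frobenius bound.} Expanding,
\[
\fnorm{V^{\top}\Diag(g)V}^{2}=\tr(\Diag(g)P\Diag(g)P)=g^{\top}(P\circ P)g .
\]
I will use the standard fact $P\circ P\preceq\Diag(P_{ii})$ for a projection $P$. One way to see it: $P\circ P$ is the Gram matrix of the vectors $v_i\otimes v_i$, and the row sums satisfy $\sum_j P_{ij}^2=P_{ii}$, so Schur's test (diagonal dominance) applies. This gives $\fnorm{\cdot}^{2}\le\sum_i w_ig_i^{2}=\norm{g}_{W}^{2}$.

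\textbf{Step 3: bounding $\norm{g}_W$ term by term.} For the slack term,
\[
\norm{A_xh}_{W}^{2}=h^{\top}A_x^{\top}WA_xh\le\norm{h}_{H}^{2},
\]
using $w\le w^{c_p}$ for $w\le1$. For the Lewis term, Lemma~\ref{I:lem:lewis-derivative} gives
\[
\norm{D\log w[h]}_{W}=\norm{N B_xh}_{2}\le p\,\norm{B_xh}_{2}=p\,\norm{\Bh_x H^{1/2}h}_2\le p\,\norm{h}_{H},
\]
using $\opnorm{N}\le p$ and $\opnorm{\Bh_x}\le1$. Combining, $\norm{g}_W\le(2+c_pp)\norm{h}_H=p\,\norm{h}_H$. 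Thus SSC of $\kappa H$ needs only $p\le2\sqrt{\kappa}$, i.e.\ $\kappa\ge p^2/4$, which is implied by $\kappa\ge2(1+p^2)$.

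\textbf{Main obstacle.} The only delicate points are the sign and factor conventions in $D\log w$, which Lemma~\ref{I:lem:lewis-derivative} already fixes, and the Hadamard-square inequality $P\circ P\preceq\Diag(P_{ii})$. If that inequality caused trouble, I would fall back on the cruder bound $\sum_{i,j}g_ig_jP_{ij}^2\le\sum_{i,j}\tfrac12(g_i^2+g_j^2)P_{ij}^2=\sum_i g_i^2w_i$. This is Cauchy--Schwarz and gives the same conclusion.
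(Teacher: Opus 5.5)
Your proof is correct and follows the paper's argument. It uses the same derivative identity $H^{-1/2}DH[h]H^{-1/2}=V^{\top}\Diag(g)V$, the same reduction to $\norm{g}_{W}$, and the same bounds on the slack and Lewis terms via $W\preceq W^{c_p}$ and $\opnorm{N}\le p$. The differences are minor: the paper reaches $\sum_i w_ig_i^2$ in one line from $\opnorm{V^{\top}}=1$ rather than via $P\circ P\preceq\Diag(P_{ii})$, and it splits with $(u+v)^2\le2u^2+2v^2$, whereas your triangle inequality gives the sharper constant $2+c_pp=p$, so $\kappa\ge p^2/4$ already suffices.
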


\begin{proof}
Fix $z\in\intK$ and a direction $h$, and abbreviate
$A=A_z$, $W=W_z$, and $H=H(z)$. Put
\[
\zeta:=Ah,\qquad
\dot\ell:=D\log w_z[h],\qquad
\alpha:=-2\zeta+c_p\dot\ell .
\]
Since $DA_z[h]=-\Diag(\zeta)A_z$ and
$D(W^{c_p})[h]=c_pW^{c_p}\Diag(\dot\ell)$, differentiation of
$H=A^{\top}W^{c_p}A$ gives
\[
H^{-1/2}DH[h]H^{-1/2}=V_z^{\top}\Diag(\alpha)V_z .
\]
The columns of $V_z$ are orthonormal and its $i$-th row has squared
norm $w_{z,i}$, so $\opnorm{V_z^{\top}}=1$. Expanding the Frobenius norm
row by row and using $\alpha=-2\zeta+c_p\dot\ell$, $c_p\le1$, and
$(u+v)^2\le2u^2+2v^2$ gives
\begin{equation*}
\fnorm{H^{-1/2}DH[h]H^{-1/2}}^{2}
\le \fnorm{\Diag(\alpha)V_z}^{2}
=\sum_iw_{z,i}\alpha_i^2
\le8(\norm{\zeta}_{W}^{2}+\norm{\dot\ell}_{W}^{2}).
\end{equation*}
For the first term, $\zeta=Ah$ and $W\preceq W^{c_p}$ imply
\[
\norm{\zeta}_{W}^{2}
=h^{\top}A^{\top}WAh\le\norm{h}_{H}^{2},
\]
while Eq.~\eqref{I:eq:lewis-derivative}, $\opnorm{N_z}\le p$, and
$W\preceq W^{c_p}$ give
\[
\norm{\dot\ell}_{W}^{2}=\norm{N_zB_zh}^{2}
\le p^{2}\norm{B_zh}^{2}
\le p^{2}\norm{h}_{H}^{2}.
\]

Thus the square of the left side in the SSC inequality for $H$ is at
most $8(1+p^2)\norm{h}_{H}^{2}$. Scaling $H$ by a constant leaves that
left side unchanged, whereas
$4\norm{h}_{2(1+p^2)H}^{2}=8(1+p^2)\norm{h}_{H}^{2}$.
This proves the first claim; any larger scaling only increases the
right side of the SSC inequality.
\end{proof}

\begin{lemma}[Self-concordant distortion]
\label{I:lem:distortion}
Let $M$ be a $C^1$ matrix function on $\intK$ satisfying
\[
\opnorm{M(x)^{-1/2}DM(x)[h]M(x)^{-1/2}}\le 2\norm{h}_{M(x)}.
\]
Then for
$x,z\in\intK$ with $\Delta:=\norm{x-z}_{M(x)}<1$,
\[
\begin{aligned}(1-\Delta)^2M(x) &\preceq M(z), \\M(z) &\preceq (1-\Delta)^{-2}M(x).\end{aligned}
\]
\end{lemma}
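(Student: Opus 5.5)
The plan is the standard argument that integrates the self-concordance bound along the segment from $x$ to $z$.

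\emph{Setup.} Set $x_t:=x+t(z-x)$ for $t\in[0,1]$; since $\intK$ is convex, $x_t\in\intK$. Fix a vector $u\neq0$ and write $\phi(t):=u^\top M(x_t)u>0$ and $\delta(t):=\norm{z-x}_{M(x_t)}$, so $\delta(0)=\Delta$.

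\emph{Step 1: differential inequalities.} By the hypothesis applied at $x_t$ in direction $z-x$,
\[
 |\phi'(t)|
 =|u^\top DM(x_t)[z-x]u|
 \le 2\,\delta(t)\,\phi(t).
\]
Taking $u=z-x$ gives $\phi=\delta^2$, hence $|(\delta^2)'|\le2\delta^3$, i.e.\ $|\delta'|\le\delta^2$ wherever $\delta>0$. (If $z=x$ there is nothing to prove.) Then $t\mapsto1/\delta(t)$ has derivative of absolute value at most $1$, so $1/\delta(t)\ge1/\Delta-t$. Because $\Delta<1$, the right side is positive on $[0,1]$, giving
\[
 \delta(t)\le\frac{\Delta}{1-t\Delta}.
\]
This comparison does not blow up before $t=1$, and I expect it to be the only delicate point. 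A continuity/bootstrap argument justifies it: let $t^*$ be the supremum of the times up to which the bound holds. The ODE comparison applies up to $t^*$, and continuity of $\delta$ extends the bound past $t^*$, so $t^*=1$.

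\emph{Step 2: integrate for general $u$.} For arbitrary $u$ we have $|(\log\phi)'|\le2\delta(t)$, so
\[
 |\log\phi(1)-\log\phi(0)|
 \le\int_0^1\frac{2\Delta}{1-t\Delta}\,dt
 =-2\log(1-\Delta).
\]
Therefore
\[
 (1-\Delta)^2\,u^\top M(x)u\;\le\;u^\top M(z)u\;\le\;(1-\Delta)^{-2}\,u^\top M(x)u.
\]
Since $u$ was arbitrary, both Loewner inequalities follow.

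The only ingredients are the operator-norm hypothesis, which yields $|u^\top DM[h]u|\le 2\norm{h}_M\,u^\top Mu$ for every $u$, and the $C^1$ assumption, which makes $\phi$ and $\delta$ differentiable.
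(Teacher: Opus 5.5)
Your proposal is correct and follows the paper's proof essentially step for step. You bound $\delta(t)=\norm{z-x}_{M(x_t)}$ via $|(1/\delta)'|\le1$ to get $\delta(t)\le\Delta/(1-t\Delta)$, then integrate $|(\log\phi)'|\le2\delta$ to reach $-2\log(1-\Delta)$. The bootstrap you flag as delicate is unnecessary: since $z\ne x$ and $M\succ0$, $\delta(t)>0$ on all of $[0,1]$, so $1/\delta$ is $1$-Lipschitz on the whole interval and direct integration already gives the comparison, exactly as in the paper.
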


\begin{proof}
Put $u=z-x$, $\gamma(t)=x+tu$, and
$\phi(t)=\norm{u}_{M(\gamma(t))}$. The claim is trivial if $u=0$.
Otherwise, differentiating $\phi(t)^2=u^{\top}M(\gamma(t))u$ and applying
the derivative hypothesis gives
\[
|\phi'(t)|
=\frac{|u^{\top}DM(\gamma(t))[u]u|}{2\phi(t)}
\le\phi(t)^2,
\]
and hence $(\phi^{-1})'(t)\ge-1$. Since $\phi(0)=\Delta$,
\begin{equation}
\phi(t)\le\frac{\Delta}{1-t\Delta}
\qquad(0\le t\le1).
\label{I:eq:distortion-ode}
\end{equation}
For a fixed $v$, let $\psi(t)=v^{\top}M(\gamma(t))v$. The same derivative
hypothesis and Eq.~\eqref{I:eq:distortion-ode} give
\[
|\frac{d}{dt}\log\psi(t)|
\le2\phi(t)\le\frac{2\Delta}{1-t\Delta}.
\]

Integration from $0$ to $1$ yields
$|\log(\psi(1)/\psi(0))|\le-2\log(1-\Delta)$. Exponentiating and using
that $v$ was arbitrary proves both Loewner inequalities.
\end{proof}

\begin{lemma}[{Cross-ratio isoperimetry; \cite[Theorem 2.5]{lv07}}]
\label{I:lem:isoperimetry-import}
Let $\pi$ be a log-concave probability measure on a convex body $K$ and
let $K=S_{1}\,\dot\cup\,S_{2}\,\dot\cup\,S_{3}$ be a measurable
partition such that $\sigma(x,z)\ge\sigma_{0}$ for all $x\in S_{1}$,
$z\in S_{2}$, where $\sigma$ is the chord cross ratio of
Definition~\ref{I:def:crossratio}. Then
\[
\pi(S_{3})\ \ge\ c_{\mathrm{iso}}\,\min\{\sigma_{0},1\}\,
\pi(S_{1})\,\pi(S_{2})
\]
for a universal $c_{\mathrm{iso}}>0$.
\end{lemma}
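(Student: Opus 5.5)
This statement is imported from Lov\'asz--Vempala, so the plan is a reduction to \cite[Theorem~2.5]{lv07} rather than a new proof. The only real work is matching conventions: the definition of the cross ratio, the normalization of the measure, and the role of the truncation $\min\{\sigma_0,1\}$.

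\textbf{Recall the imported statement.} \cite[Theorem~2.5]{lv07} is stated for a log-concave probability measure $\pi$ supported on a convex body $K$ and a measurable partition $K=S_1\,\dot\cup\,S_2\,\dot\cup\,S_3$. It asserts
\[
\pi(S_3)\ \ge\ d_K(S_1,S_2)\,\pi(S_1)\,\pi(S_2),
\]
where $d_K(S_1,S_2)=\inf_{x\in S_1,z\in S_2}d_K(x,z)$. Here, for $x\neq z$ in $K$ with chord endpoints $p,q$ (ordered $p,x,z,q$ on the line through $x,z$),
\[
d_K(x,z)=\frac{\norm{x-z}\,\norm{p-q}}{\norm{p-x}\,\norm{z-q}}.
\]

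\textbf{Match the definitions.} I would check that the chord cross ratio $\sigma$ of Definition~\ref{I:def:crossratio} is exactly this $d_K$, or at least bounded below by a universal multiple of it. If it is exactly $d_K$, the hypothesis $\sigma(x,z)\ge\sigma_0$ on $S_1\times S_2$ gives $d_K(S_1,S_2)\ge\sigma_0$, and hence $\pi(S_3)\ge\sigma_0\pi(S_1)\pi(S_2)\ge\min\{\sigma_0,1\}\pi(S_1)\pi(S_2)$ with $c_{\mathrm{iso}}=1$.

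If the paper instead uses a normalized variant, such as the Hilbert-type quantity $\norm{x-z}\norm{p-q}/(\norm{p-x}\norm{z-q})$ capped at one, or a version with $\min(\norm{p-x},\norm{z-q})$ in the denominator, I would relate it to $d_K$. Writing the chord in affine coordinates with $|p-q|=1$ reduces this to a one-variable inequality. The truncation $\min\{\sigma_0,1\}$ absorbs the regime where the two quantities are not comparable, namely when $x$ and $z$ are far apart along the chord. In that regime the cap at one, together with $d_K\ge\norm{x-z}/\min(\norm{p-x},\norm{z-q})\cdot$const, still gives a universal constant, and this yields $c_{\mathrm{iso}}$.

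\textbf{Routine checks.} A few technical conditions need confirming:
\begin{itemize}[leftmargin=2em,itemsep=1pt]
\item Lov\'asz--Vempala allow an arbitrary measurable partition, so points of $S_1$ or $S_2$ on $\partial K$ cause no trouble. There the cross ratio is $+\infty$ or the chord is degenerate, and the hypothesis is only strengthened.
\item The measure $\pi\propto e^{-f}\ind{K}$ is log-concave and a probability measure, as required.
\item The case $\pi(S_1)\pi(S_2)=0$ is trivial.
\end{itemize}

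I expect the main obstacle to be convention bookkeeping: confirming that Definition~\ref{I:def:crossratio} orders the chord endpoints as Lov\'asz--Vempala do, and deriving the universal comparison constant if the paper's $\sigma$ is a capped or min-normalized variant. The isoperimetric content itself is entirely supplied by \cite{lv07}.
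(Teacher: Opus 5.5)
Your reduction is correct and matches the paper. The paper also gives no proof: it imports \cite[Theorem~2.5]{lv07} as a black box after checking that $K$ is convex and that $e^{-f}\ind{K}$ is log-concave. Moreover, since $\alpha+\beta_{c}+\gamma=\norm{p_{\partial}-q_{\partial}}_{2}$, the $\sigma$ of Definition~\ref{I:def:crossratio} is exactly the Lov\'asz--Vempala $d_K$. So your first branch applies with $c_{\mathrm{iso}}=1$, and the paragraph on normalized variants can be dropped.

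One point should be tightened. Definition~\ref{I:def:crossratio} only covers interior pairs, and a pair of boundary points on a common face of $K$ can have finite cross ratio under the Lov\'asz--Vempala convention. So your claim that boundary points are harmless is not justified as stated. Instead, use $\pi(\partial K)=0$, which holds because $\pi$ has a density on a full-dimensional body, and move $\partial K$ into $S_{3}$; this is what the paper does in the proof of Lemma~\ref{I:lem:conductance}.
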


This localization/isoperimetry theorem is a genuinely non-elementary
input.  We apply it only after checking that $K$ is convex and that
$e^{-f}\ind K$ is log-concave because $f$ is convex.  Its role is
exactly to turn chord cross-ratio separation between two sets into
stationary mass in the intervening set.

We henceforth replace $c_{\mathrm{iso}}$ by
$\min\{c_{\mathrm{iso}},1\}$, so that $0<c_{\mathrm{iso}}\le1$.

\begin{lemma}[{Cheeger's inequality for lazy reversible chains; \cite[Theorem 2.1]{ls88}}]
\label{I:lem:cheeger-import}
A $\tfrac12$-lazy reversible chain with conductance $\Phi$ has spectral
gap $\gap(P)\ge c_{\mathrm{Ch}}\Phi^{2}$ for a universal
$c_{\mathrm{Ch}}>0$.
\end{lemma}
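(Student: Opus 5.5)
Since this is the imported Lov\'asz--Simonovits result, the plan is to reproduce the standard discrete Cheeger (Jerrum--Sinclair) argument in a measure-theoretic form. Write $P$ for the lazy reversible kernel with stationary law $\pi$. Define the Dirichlet form
\[
\mathcal{E}(g,g)=\tfrac12\iint (g(x)-g(y))^2\,P(x,dy)\,\pi(dx).
\]
Take the spectral gap in the variational form
\[
\gap(P)=\inf\{\mathcal{E}(g,g)/\operatorname{Var}_\pi(g)\},
\]
the infimum running over nonconstant $g\in L^2(\pi)$. Take conductance in the form
\[
\Phi=\inf_{0<\pi(S)\le 1/2}\frac{\int_S P(x,S^c)\,\pi(dx)}{\pi(S)}.
\]
Laziness is used only to ensure that the spectrum lies in $[0,1]$, so that the gap defined through $\mathcal{E}$ really controls $L^2$ contraction. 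The inequality $\mathcal{E}\ge c\Phi^2\operatorname{Var}$ itself does not need laziness.

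\textbf{Step 1 (reduction to one-signed functions).} Fix $g$ and let $m$ be a median of $g$ under $\pi$. Put $g_+=(g-m)_+$ and $g_-=(g-m)_-$. Each is supported on a set of $\pi$-mass at most $\tfrac12$. Since $(a-b)^2\ge (a_+-b_+)^2+(a_--b_-)^2$ pointwise, we get $\mathcal{E}(g,g)\ge \mathcal{E}(g_+,g_+)+\mathcal{E}(g_-,g_-)$. Also
\[
\operatorname{Var}_\pi(g)\le \int (g-m)^2\,d\pi=\int g_+^2\,d\pi+\int g_-^2\,d\pi .
\]
It therefore suffices to show $\mathcal{E}(u,u)\ge c\Phi^2\int u^2\,d\pi$ for every $u\ge0$ with $\pi(\{u>0\})\le\tfrac12$.

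\textbf{Step 2 (co-area bound).} For such $u$, write $|u(x)^2-u(y)^2|=\int_0^\infty |\ind{\{u^2>t\}}(x)-\ind{\{u^2>t\}}(y)|\,dt$ and apply Fubini. Each level set $S_t=\{u^2>t\}$ has $\pi(S_t)\le\tfrac12$. Reversibility makes the ergodic flow symmetric, so
\[
\iint |u^2(x)-u^2(y)|\,P(x,dy)\,\pi(dx)=2\int_0^\infty\int_{S_t}P(x,S_t^c)\,\pi(dx)\,dt\ge 2\Phi\int_0^\infty\pi(S_t)\,dt=2\Phi\int u^2\,d\pi .
\]

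\textbf{Step 3 (Cauchy--Schwarz).} Factor $|u^2(x)-u^2(y)|=|u(x)-u(y)|\,(u(x)+u(y))$. Cauchy--Schwarz against the reversible measure $\pi(dx)P(x,dy)$ bounds the left side of Step 2 by $\sqrt{2\mathcal{E}(u,u)}\cdot\sqrt{\iint (u(x)+u(y))^2}$. Using stationarity, the second factor is at most $\sqrt{4\int u^2\,d\pi}$. Combining with Step 2 gives $\mathcal{E}(u,u)\ge \tfrac12\Phi^2\int u^2\,d\pi$, and Step 1 then yields $\gap(P)\ge \tfrac12\Phi^2$.

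\textbf{Main obstacle.} The one delicate point is the general-state-space bookkeeping. One must check that the symmetric-flow identity $\int_S P(x,S^c)\,\pi(dx)=\int_{S^c}P(x,S)\,\pi(dx)$ holds for measurable $S$; this follows from reversibility by integrating $\ind{S}(x)\ind{S^c}(y)$ against the symmetric measure $\pi(dx)P(x,dy)$. One must also check that Fubini applies in the co-area step. The paper's conductance is defined for the lazy kernel including its atom at $x$. The diagonal part contributes zero to both $\mathcal{E}$ and the flow, so no adjustment is needed.
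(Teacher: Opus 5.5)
Your proof is correct, and it takes a genuinely different route from the paper. The paper does not prove the inequality. It imports Lawler--Sokal~\cite[Theorem~2.1]{ls88}, which is stated for the normalized flow $k=\inf_A\mathcal Q(A,A^c)/[\pi(A)\pi(A^c)]$. It then transfers that bound to the paper's small-side conductance via $\Phi\le k\le2\Phi$, using reversibility to make the ratio symmetric under $A\leftrightarrow A^c$. Only the direction $k\ge\Phi$ is needed, and the universal constant is inherited from the source.

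You instead reprove the Jerrum--Sinclair/Lawler--Sokal argument directly for the paper's $\Phi$:
\begin{itemize}
\item The median split plays the role of the paper's ``smaller side'' step.
\item The layer-cake identity over the level sets $\{u^2>t\}$, each of mass at most $\tfrac12$, together with the symmetric stationary flow, gives $\iint|u^2(x)-u^2(y)|\,P(x,dy)\,\pi(dx)\ge2\Phi\int u^2\,d\pi$.
\item Cauchy--Schwarz plus stationarity then gives $\mathcal E(u,u)\ge\tfrac12\Phi^2\int u^2\,d\pi$.
\end{itemize}
The individual steps check out. The cross term in $(a-b)^2\ge(a_+-b_+)^2+(a_--b_-)^2$ equals $2(a_+b_-+b_+a_-)\ge0$. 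Tonelli applies because every integrand is nonnegative. Level sets of zero mass contribute trivially.

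What your route buys:
\begin{itemize}
\item It is self-contained.
\item It gives the explicit constant $c_{\mathrm{Ch}}=\tfrac12$ with no change of convention.
\item It separates the roles of the hypotheses cleanly. Reversibility enters only through the symmetric flow and the self-adjointness of $P$ on $L^2(\pi)$. Laziness enters only afterwards, to put the spectrum in $[0,1]$, so that the variational gap controls $\norm{P^tg}_{L^2(\pi)}$ as the mixing proofs require.
\end{itemize}
The paper's citation is shorter and leaves the general-state-space measure theory to the source, at the cost of an unspecified constant and the extra $k$-to-$\Phi$ conversion.
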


This is the other deep mixing input.  Exact Metropolis detailed balance
gives reversibility, and the explicit half-lazy step puts the spectrum
in $[0,1]$.  The lemma converts the conductance bound proved later into
an $L^{2}(\pi)$ spectral gap; the ensuing warm-start contraction and
$L^{2}$-to-total-variation step are derived directly in
Section~\ref{I:sec:mixing}.
To match conventions precisely, Lawler--Sokal use
\[
k:=\inf_A\frac{\mathcal Q(A,A^c)}{\pi(A)\pi(A^c)},
\qquad
\mathcal Q(A,A^{c}):=\int_{A}P_{x}(A^{c})\,\pi(dx)
\]
(the stationary flow),
whereas Lemma~\ref{I:lem:conductance} uses
$\Phi:=\inf_{\pi(A)\le1/2}\mathcal Q(A,A^c)/\pi(A)$.
Reversibility lets us replace $A$ by its smaller side; then
$\Phi\le k\le2\Phi$ because
$1\le1/\pi(A^c)\le2$.  Thus their bound in terms of $k$ gives the
displayed bound in terms of $\Phi$ up to a universal constant.

\renewcommand{\Hld}{H}

\subsection{Setting and the walk}

Let $A_{0},A_{1},\dots,A_{d}\in\Sn$ and
\[
K:=\{x\in\R^{d}:S(x)\succeq0\},\qquad
S(x):=A_{0}+\sum_{i=1}^{d}x_{i}A_{i},
\]
assumed bounded and full-dimensional with, after a translation,
$K\subseteq B_{2}(0,R)$. We also assume Slater strict feasibility:
$S(x_\circ)\succ0$ for some $x_\circ\in K$; equivalently, first
remove any common kernel and work with the positive-definite block on
the minimal face. Write $\intK$ for the interior, on which
$S(x)\succ0$. The maps $x\mapsto S(x)^{-1}$ and everything built from
it are smooth on $\intK$. The target is $\pi\propto e^{-f}\ind{K}$
with $f$ convex and $L$-Lipschitz on $K$.

The \emph{log-det metric} is the Hessian of the barrier
$\phi(x)=-\log\det S(x)$:
\begin{equation}
\Hld(x)_{ij}
=\partial_{i}\partial_{j}\phi(x)
=\tr[S(x)^{-1}A_{i}\,S(x)^{-1}A_{j}]\ \succ0 .
\label{II:eq:metric-def}
\end{equation}
Indeed, for $v\in\R^{d}$,
$v^{\top}\Hld(x)v=\fnorm{S(x)^{-1/2}(\sum_i v_iA_i)S(x)^{-1/2}}^{2}$.
If this vanished for nonzero $v$, then $S(x+tv)=S(x)$ for every $t$,
contradicting boundedness of $K$; hence $\Hld(x)\succ0$.
For a scale $\kappa\ge1$ (fixed in Eq.~\eqref{II:eq:rho-choice} below;
$\kappa=\Ot(\psistar)$ in the notation of
Definition~\ref{II:def:leverage}) set
\begin{equation*}
G(x):=\kappa\,\Hld(x)+L^{2}I .
\end{equation*}

\begin{definition}[Exact-metric log-det walk]\label{II:def:walk}
At $x\in\intK$, propose
$Y\sim q_{x}:=N(x,\tfrac{r^{2}}{d}G(x)^{-1})$ with $r>0$ a
universal constant, and accept with the exact Metropolis--Hastings
probability
$1\wedge R_{x}(Y)$,
$R_{x}(y):=e^{-f(y)}q_{y}(x)/e^{-f(x)}q_{x}(y)$. Proposals outside
$K$ are rejected, and the chain is made $\tfrac12$-lazy.
\end{definition}

The chain is reversible with respect to $\pi$. An initial law
$\mu_{0}$ is \emph{$w$-warm} if $d\mu_{0}/d\pi\le w$.

\subsection{The matrix leverage operator}

\begin{definition}[Leverage operator]\label{II:def:leverage}
At $x\in\intK$, let $\Ab_{1}(x),\ldots,\Ab_{d}(x)\in\Sn$ be the
slack-congruenced constraint directions
$S(x)^{-1/2}A_{i}S(x)^{-1/2}$ after orthonormalization in the
Frobenius inner product (equivalently, in the metric
$\Hld(x)$). Define
\[
\Psib(x):=\sum_{a=1}^{d}\Ab_{a}(x)^{2},\qquad
\psi(x):=\opnorm{\Psib(x)},\qquad
\psistar:=\max(1,\ \sup_{x\in\intK}\psi(x)).
\]
\end{definition}

\begin{lemma}[Basic properties]\label{II:lem:leverage-basic}
$\Psib(x)$ is well-defined (independent of the choice of orthonormal
frame), $\tr[\Psib(x)]=d$, and $d/n\le\psi(x)\le d$. If all $A_{i}$ are
diagonal (the polytope case), $\Psib$ is the diagonal matrix of
leverage scores of the slack-rescaled constraint matrix, so
$\psi\le1$.
\end{lemma}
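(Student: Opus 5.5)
The plan is to prove the four claims in order: frame independence, the trace identity, the two-sided bound on $\psi(x)$, and the diagonal case.

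\emph{Frame independence.} Fix $x\in\intK$ and write $\tilde A_i:=S(x)^{-1/2}A_iS(x)^{-1/2}$. Any Frobenius-orthonormal basis of $\mathcal V:=\operatorname{span}\{\tilde A_i\}$ has the form $\Ab'_a=\sum_b O_{ab}\Ab_b$ for some orthogonal $O\in\R^{d\times d}$. The span has dimension $d$, since $\Hld(x)\succ0$ means the Gram matrix of the $\tilde A_i$ is nonsingular. Expanding the square gives
\[
\sum_a(\Ab'_a)^2=\sum_{b,c}\Bigl(\sum_a O_{ab}O_{ac}\Bigr)\Ab_b\Ab_c=\sum_b\Ab_b^2,
\]
because $O^\top O=I$. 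So $\Psib(x)$ does not depend on the frame. Equivalently, $\Psib(x)=\sum_a\Ab_a^2$ is the image of the orthogonal projector onto $\mathcal V$ under the map $X\otimes Y\mapsto XY$. The paper's remark about orthonormalizing in the metric $\Hld(x)$ amounts to taking $\Ab_a=\sum_i (\Hld(x)^{-1/2})_{ai}\tilde A_i$, which is one such frame.

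\emph{Trace and the bounds on $\psi$.} Trace is linear and the frame is orthonormal, so $\tr[\Psib]=\sum_a\tr[\Ab_a^2]=\sum_a\fnorm{\Ab_a}^2=d$. Each $\Ab_a^2\succeq0$, hence $\Psib\succeq0$ is an $n\times n$ PSD matrix with trace $d$. This gives $\psi=\opnorm{\Psib}\ge\tr[\Psib]/n=d/n$ and $\psi\le\tr[\Psib]=d$.

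\emph{Diagonal case.} When every $A_i$ is diagonal, $S(x)=\Diag(s)$ with $s_j>0$ on $\intK$. Then $\tilde A_i=\Diag(a_{\cdot i}/s)$, the $i$-th column of $A_x=S_x^{-1}\mA$ placed on the diagonal. Take the frame $\Ab_a=\Diag(U_{\cdot a})$ with $U=A_x(A_x^\top A_x)^{-1/2}$. These matrices are Frobenius-orthonormal because $U^\top U=I$. Then $\Psib=\Diag\bigl(\sum_a U_{ja}^2\bigr)_j=\Diag(\lev(A_x))$. Leverage scores are the diagonal entries of an orthogonal projection, so they lie in $[0,1]$, and therefore $\psi\le1$.

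\emph{Main obstacle.} The only point needing care is checking that the span has full dimension $d$. This follows from $\Hld(x)\succ0$, which the paper already derived from boundedness of $K$. Everything else is routine bookkeeping.
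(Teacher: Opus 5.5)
Your proof is correct and follows the paper's argument essentially step for step. Frame independence comes from the orthogonal change-of-frame identity, $\tr[\Psib(x)]=d$ from orthonormality, the lower bound from dividing the trace by $n$, and the polytope case from an explicit orthonormalized frame that exhibits $\Psib$ as the diagonal matrix of leverage scores. The only difference is cosmetic: for $\psi\le d$ you use $\opnorm{\Psib}\le\tr[\Psib]$ for the positive semidefinite $\Psib$, where the paper uses the triangle inequality $\psi\le\sum_a\opnorm{\Ab_a}^2\le\sum_a\fnorm{\Ab_a}^2$.
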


\begin{proof}
If $\{\Ab_{a}\}$ and $\{\Ab_{a}'\}$ are two Frobenius-orthonormal
bases of the same subspace of $\Sn$, then
$\Ab_{a}'=\sum_{b}O_{ab}\Ab_{b}$ for an orthogonal $O$, and
$\sum_{a}\Ab_{a}'\otimes\Ab_{a}'=\sum_{a}\Ab_{a}\otimes\Ab_{a}$;
applying this bilinear identity to the product map gives
$\sum_{a}(\Ab_{a}')^{2}=\sum_{a}\Ab_{a}^{2}$. Next
$\tr[\Psib]=\sum_{a}\fnorm{\Ab_{a}}^{2}=d$ by orthonormality, whence
$\psi\ge\tr[\Psib]/n=d/n$; and
$\psi\le\sum_{a}\opnorm{\Ab_{a}^{2}}
=\sum_{a}\opnorm{\Ab_{a}}^{2}\le\sum_{a}\fnorm{\Ab_{a}}^{2}=d$. In the
diagonal case, writing $\bar a_{k}\in\R^{d}$ for the vector of $k$-th
diagonal entries of the $\Ab_{a}$, orthonormality makes
$(\Psib)_{kk}=\norm{\bar a_{k}}^{2}$ the $k$-th leverage score of the
orthonormalized constraint matrix, and leverage scores are at most
$1$.
\end{proof}

\subsection{Toolbox}\label{II:sec:toolbox}

We collect the elementary and imported facts used repeatedly. Recall
$\snorm{\cdot}$ is the trace norm; $|\tr[XY]|\le\fnorm{X}\fnorm{Y}$
and $\snorm{XY}\le\fnorm{X}\fnorm{Y}$.

\begin{lemma}[PSD trace facts]\label{II:lem:psd-facts}
Let $X,Y,M,T\in\Sn$ with $X,Y,M\succeq0$ and $T\succeq0$.
\begin{enumerate}[label=(\roman*),leftmargin=2.4em,itemsep=1pt]
\item $\tr[XY]\ge0$; consequently $0\preceq X\preceq Y$ implies
$\tr[X^{2}]\le\tr[XY]\le\tr[Y^{2}]$.
\item $\tr[TM]\le\opnorm{T}\tr[M]$ and, for any $B\in\Sn$,
$BTB\preceq\opnorm{T}\,B^{2}$.
\item $\fnorm{T^{1/2}MT^{1/2}}\le\opnorm{T}\fnorm{M}$.
\end{enumerate}
\end{lemma}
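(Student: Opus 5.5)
All three items follow from the spectral decomposition of one of the positive semidefinite factors, combined with cyclicity of the trace. No earlier result of the paper is needed.

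For (i), I will write $X=X^{1/2}X^{1/2}$ and use cyclicity:
\[
\tr[XY]=\tr[X^{1/2}YX^{1/2}]\ge0,
\]
since $X^{1/2}YX^{1/2}\succeq0$. For the chain of inequalities, I will apply this nonnegativity twice. First, $\tr[XY]-\tr[X^2]=\tr[X(Y-X)]\ge0$, because $X\succeq0$ and $Y-X\succeq0$. Second, $\tr[Y^2]-\tr[XY]=\tr[Y(Y-X)]\ge0$, because $Y\succeq0$ and $Y-X\succeq0$.

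For (ii), the first claim uses $M\succeq0$ and cyclicity: $\tr[TM]=\tr[M^{1/2}TM^{1/2}]$. Since $T\preceq\opnorm{T}I$, we have $M^{1/2}TM^{1/2}\preceq\opnorm{T}M$, and taking traces gives the bound. For the second claim, I will use that congruence by the symmetric matrix $B$ preserves Loewner order. For any $v$, set $u=Bv$; then
\[
v^\top BTBv=u^\top Tu\le\opnorm{T}\,\norm{u}^2=\opnorm{T}\,v^\top B^2v.
\]

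For (iii), I will square and use cyclicity:
\[
\fnorm{T^{1/2}MT^{1/2}}^2=\tr[T^{1/2}MTMT^{1/2}]=\tr[TMTM].
\]
Applying the first claim of (ii) with the PSD matrix $MTM$ gives $\tr[T\cdot MTM]\le\opnorm{T}\tr[MTM]$. Next, $\tr[MTM]=\tr[TM^2]\le\opnorm{T}\tr[M^2]$, again by the first claim of (ii), now with $M^2\succeq0$. Together these give $\fnorm{T^{1/2}MT^{1/2}}^2\le\opnorm{T}^2\fnorm{M}^2$, and taking square roots finishes the proof.

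The only delicate point is checking the positivity hypotheses. In (iii), $M$ is PSD, so $MTM=M^\top TM\succeq0$ and $M^2\succeq0$; each application of (ii) therefore invokes its first claim legitimately. I expect no real obstacle.
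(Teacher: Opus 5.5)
Your proof is correct and follows the paper's argument for (i) and (ii) essentially verbatim; the paper writes $\tr[XY]=\fnorm{X^{1/2}Y^{1/2}}^{2}$ rather than $\tr[X^{1/2}YX^{1/2}]$, which makes no difference. For (iii), the paper invokes two-sided Frobenius submultiplicativity directly, $\fnorm{T^{1/2}MT^{1/2}}\le\opnorm{T^{1/2}}^{2}\fnorm{M}$, whereas you square, use cyclicity, and apply the first claim of (ii) twice; both routes are valid, and yours is just a few lines longer.
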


\begin{proof}
(i) $\tr[XY]=\fnorm{X^{1/2}Y^{1/2}}^{2}\ge0$; then
$\tr[X(Y-X)]\ge0$ and $\tr[(Y-X)Y]\ge0$ give the chain. (ii)
$\tr[TM]=\tr[M^{1/2}TM^{1/2}]\le\opnorm{T}\tr[M]$; and
$u^{\top}BTBu\le\opnorm{T}\norm{Bu}^{2}$ for every $u$. (iii)
$\fnorm{T^{1/2}MT^{1/2}}\le\opnorm{T^{1/2}}^{2}\fnorm{M}$.
\end{proof}

\begin{lemma}[Twice-integrated Gaussian Stein identity]\label{II:lem:stein}
Let $h\sim N(0,Q)$ on $\R^{d}$ with $0\preceq Q\preceq I$, and let
$F:\R^{d}\to\R$ be $W^{2,\infty}$ with compact support. Then for all
indices $a,b$,
\begin{equation}
\E[(h_{a}h_{b}-Q_{ab})F(h)]
=\sum_{c,e}Q_{ac}Q_{be}\,\E[\partial_{c}\partial_{e}F(h)].
\label{II:eq:stein-identity}
\end{equation}
\end{lemma}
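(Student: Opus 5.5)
The plan is to apply the standard one-variable Gaussian integration by parts twice.

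\textbf{Reduction to $Q\succ0$.} First assume $Q\succ 0$, and write $h=Q^{1/2}g$ with $g\sim N(0,I)$. Alternatively, work directly with the density $\varphi_Q(h)\propto e^{-h^\top Q^{-1}h/2}$, which satisfies $\nabla\varphi_Q(h)=-Q^{-1}h\,\varphi_Q(h)$. The first-order Stein identity we need is
\[
\E[h_a G(h)]=\sum_c Q_{ac}\E[\partial_c G(h)]
\]
for compactly supported Lipschitz $G$. To prove it, note $h_a\varphi_Q=-\sum_c Q_{ac}\partial_c\varphi_Q$ and integrate by parts. There are no boundary terms because $G$ has compact support, and $W^{1,\infty}$ suffices for the integration by parts.

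\textbf{Applying the identity twice.} Take $G=h_bF$, which is compactly supported and $W^{2,\infty}$. This gives
\[
\E[h_ah_bF]=\sum_c Q_{ac}\E[\delta_{cb}F+h_b\partial_cF]=Q_{ab}\E[F]+\sum_c Q_{ac}\E[h_b\partial_cF].
\]
Next apply the first-order identity again with index $b$ and the function $\partial_cF$. This function is compactly supported and Lipschitz because $F\in W^{2,\infty}$. The result is $\E[h_b\partial_cF]=\sum_e Q_{be}\E[\partial_e\partial_cF]$. Subtracting $Q_{ab}\E[F]$ from both sides yields \eqref{II:eq:stein-identity}. The second derivatives exist a.e. and are bounded, so all expectations are finite.

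\textbf{Degenerate $Q$.} The main technical point is that $Q$ may be singular, so $h$ has no density. Here I would use the representation $h=Q^{1/2}g$ with $g\sim N(0,I_d)$ and apply the identity-covariance version to $\tilde F(g)=F(Q^{1/2}g)$. The identity-covariance case is the computation above with $Q=I$. By the chain rule,
\[
\partial_{g_k}\tilde F=\sum_c (Q^{1/2})_{ck}(\partial_cF)(Q^{1/2}g).
\]
Also $h_ah_b=\sum_{k,l}(Q^{1/2})_{ak}(Q^{1/2})_{bl}g_kg_l$, and $\sum_{k,l}(Q^{1/2})_{ak}(Q^{1/2})_{bl}\delta_{kl}=Q_{ab}$. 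Contracting the identity $\E[(g_kg_l-\delta_{kl})\tilde F]=\E[\partial_k\partial_l\tilde F]$ then gives $\sum_{c,e}Q_{ac}Q_{be}\E[\partial_c\partial_eF(h)]$.

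One caveat is that $\tilde F$ need not have compact support in $g$ when $Q$ is singular: it is constant along $\ker Q$. It is still bounded with bounded derivatives, and the Gaussian tails justify the integration by parts. A cleaner alternative is a limiting argument with $Q+\epsilon I$, using dominated convergence. Since Lipschitz functions of $W^{2,\infty}$ type are only defined a.e., I would also mollify $F$, prove the identity for smooth $F$, and pass to the limit with bounded second derivatives.
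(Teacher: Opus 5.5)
Your proposal is correct and is essentially the paper's argument. The paper writes $h=Q^{1/2}\xi$, integrates by parts twice against the standard Gaussian for $G=F\circ Q^{1/2}$, and notes, as you do, that this $G$ has bounded derivatives even though it need not be compactly supported when $Q$ is singular. It then contracts via the chain rule and mollifies to pass from smooth $F$ to $W^{2,\infty}$.

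Your separate density-based treatment of $Q\succ0$ is a valid special case that needs no mollification, and it is the only case the paper actually uses, since $Q=(I+\creg)^{-1}\succ0$. For genuinely singular $Q$, however, both your argument and the paper's leave implicit how the a.e.-defined Hessian is evaluated on $\operatorname{range}Q$, which carries the law of $h$ but is Lebesgue-null.
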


\begin{proof}
Write $h=Q^{1/2}\xi$ with $\xi$ standard. For any smooth $G$ whose
value and first two derivatives are bounded, integrating against the
Gaussian density by parts twice gives
$\E[(\xi_{\alpha}\xi_{\beta}-\delta_{\alpha\beta})G(\xi)]
=\E[\partial_{\alpha}\partial_{\beta}G(\xi)]$. Apply this to
$G=F\circ Q^{1/2}$ and contract.  This $G$ has bounded derivatives
even when $Q$ is singular, which is all the integration-by-parts
argument requires:
\[
\begin{aligned}
h_{a}h_{b}-Q_{ab}
&=\sum_{\alpha,\beta}
(Q^{1/2})_{a\alpha}(Q^{1/2})_{b\beta}
(\xi_{\alpha}\xi_{\beta}-\delta_{\alpha\beta}),\\
\partial_{\alpha}\partial_{\beta}(F\circ Q^{1/2})
&=\sum_{c,e}(Q^{1/2})_{c\alpha}(Q^{1/2})_{e\beta}
(\partial_{c}\partial_{e}F)\circ Q^{1/2}.
\end{aligned}
\]
Contracting these two identities gives
Eq.~\eqref{II:eq:stein-identity}. For $F\in W^{2,\infty}$ with compact
support, mollify and pass to the limit by dominated convergence.
\end{proof}

\begin{lemma}[Dimension-free cutoff]\label{II:lem:cutoff}
Let $\mathcal{K}\subset\R^{d}$ be closed and convex. There is
$\chi:\R^{d}\to[0,1]$ with $\chi\equiv1$ on $\mathcal{K}$, $\chi$
supported in the closed Euclidean $1$-neighborhood
$\mathcal{K}^{+}$, and
$\norm{\nabla\chi}\le12$, $\opnorm{D^{2}\chi}\le C_{0}$ a.e., for a
universal $C_{0}$.
\end{lemma}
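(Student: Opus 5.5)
The cutoff will come from mollifying the truncated distance function to $\mathcal{K}$. Let $\delta(y):=\dist(y,\mathcal{K})$. Because $\mathcal{K}$ is closed and convex, $\delta$ is convex and $1$-Lipschitz. It is $C^{1}$ off $\mathcal{K}$ with $\nabla\delta(y)=(y-P_{\mathcal{K}}y)/\delta(y)$, where $P_{\mathcal{K}}$ is the nearest-point projection. The first candidate is a one-dimensional profile composed with $\delta$, namely $\chi=g\circ\delta$. Here $g:\R\to[0,1]$ is smooth, $g\equiv1$ on $(-\infty,1/4]$, $g\equiv0$ on $[3/4,\infty)$, $|g'|\le 4$, and $|g''|\le C$. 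This gives $\chi\equiv1$ on $\mathcal{K}$, support inside $\mathcal{K}^{+}$, and $\norm{\nabla\chi}\le4\le12$, all independent of $d$.

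The Hessian bound is the real issue. Where $g''\ne0$ or $g'\ne0$ we have $\delta\ge1/4$, and there
\[
D^{2}\chi=g''(\delta)\nabla\delta\nabla\delta^{\top}+g'(\delta)D^{2}\delta .
\]
The first term has operator norm at most $C$. For the second term I will use that on $\{\delta\ge1/4\}$ the Hessian of the distance function exists a.e. and satisfies $0\preceq D^{2}\delta\preceq \delta^{-1}(I-\nabla\delta\nabla\delta^{\top})\preceq 4I$. To prove this, write $\delta^{2}/2=\tfrac12\norm{y}^{2}-\sigma(y)$ with $\sigma$ convex; equivalently, $\nabla(\delta^{2}/2)=y-P_{\mathcal{K}}y$ with $P_{\mathcal{K}}$ firmly nonexpansive. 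Then $0\preceq D^{2}(\delta^{2}/2)\preceq I$ a.e. (Alexandrov/Rademacher applied to the Lipschitz map $y-P_{\mathcal{K}}y$), and
\[
D^{2}\delta=\delta^{-1}\bigl(D^{2}(\delta^{2}/2)-\nabla\delta\nabla\delta^{\top}\bigr)\preceq \delta^{-1}I .
\]
The lower bound $D^{2}\delta\succeq0$ is convexity. Hence $\opnorm{D^{2}\chi}\le C+4\cdot4=:C_{0}$ a.e. This step, the a.e. second-order control of $\delta$ uniformly in $d$, is where I expect most of the care to go. In particular, $\chi$ must be globally $W^{2,\infty}$, so that no singular part of the distributional Hessian appears across the sets where $\nabla\delta$ is only Lipschitz. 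This holds because $\nabla\chi=g'(\delta)\nabla\delta$ is Lipschitz: $\nabla\delta$ is Lipschitz with constant $4$ on $\{\delta\ge1/4\}$, $g'(\delta)$ vanishes elsewhere, and a Lipschitz gradient has an $L^{\infty}$ distributional Hessian equal to the a.e. Hessian.

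If a fully smooth $\chi$ is wanted (e.g. for the mollification step in Lemma~\ref{II:lem:stein}), I would convolve $g\circ\delta$ with a radial mollifier of radius $\le1/8$, first shrinking the profile's transition window to $[1/4+1/8,3/4-1/8]$ so that the support and the $\equiv1$ region survive. Convolution preserves the bounds on $\norm{\nabla\chi}$ and $\opnorm{D^{2}\chi}$, since the sup of a norm of an average is bounded by the average. The constant $12$ leaves room for this shrinking: a profile with transition window of length $1/4$ needs $|g'|$ only slightly above $4$, and a cruder choice still fits under $12$.
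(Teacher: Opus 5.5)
Your proof is correct. It follows the same template as the paper's: compose a fixed one-dimensional profile with a function of the distance to $\mathcal{K}$, and get dimension-free bounds from the fact that $I-P_{\mathcal{K}}$ is $1$-Lipschitz. The difference is the inner function.

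The paper uses $\phi=\dist(\cdot,\mathcal{K})^{2}$, which is globally $C^{1,1}$ with $0\preceq D^{2}\phi\preceq 2I$ a.e. Its Hessian bound is therefore global and involves no division by the distance. $W^{2,\infty}$ regularity is immediate, since $\nabla\chi=\theta'(\phi)\nabla\phi$ is a product of bounded Lipschitz maps. The cost is the loose gradient constant $12=9\cdot\tfrac43$.

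You compose with $\delta$ itself. This needs the extra identity $D^{2}\delta=\delta^{-1}(D^{2}(\delta^{2}/2)-\nabla\delta\nabla\delta^{\top})$ and the observation that $g'$ vanishes on $\{\delta\le 1/4\}$. Together these make the $\delta^{-1}$ blow-up and the kink of $\delta$ at $\partial\mathcal{K}$ irrelevant. In exchange you get the gradient bound $4$.

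One assertion you leave unproved is that $\nabla\delta$ is $4$-Lipschitz on $\{\delta\ge 1/4\}$. That set is not convex, so this does not follow from the a.e. Hessian bound. It is nevertheless true: on that set $\tfrac14\nabla\delta=(I-P_{\mathcal{K}})\circ P_{\mathcal{K}_{1/4}}$, where $\mathcal{K}_{1/4}$ is the closed $1/4$-neighborhood, and both maps are nonexpansive. Alternatively, the quotient rule applied to $(y-P_{\mathcal{K}}y)/\delta(y)$ gives constant $8$. Either suffices, because only finiteness is needed to exclude a singular part of $D^{2}\chi$.

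Your mollification paragraph is not needed for the lemma as stated; the paper instead mollifies inside the Stein identity.
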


\begin{proof}
$\phi(h)=\dist(h,\mathcal{K})^{2}$ is convex and $C^{1,1}$ with
$\nabla\phi=2(h-\Pi_{\mathcal{K}}h)$ ($2$-Lipschitz), so
$0\preceq D^{2}\phi\preceq2I$ a.e. Take
$\chi=\theta\circ\phi$ for a fixed smooth decreasing
$\theta:[0,\infty)\to[0,1]$ with $\theta\equiv1$ on $[0,\tfrac19]$,
$\theta\equiv0$ on $[\tfrac49,\infty)$,
$|\theta'|\le9$, and $|\theta''|\le C_{\theta}$ for a fixed universal
constant $C_{\theta}$. On the support of $\theta'\circ\phi$ we
have $\norm{\nabla\phi}\le\tfrac43$, so
$\norm{\nabla\chi}\le9\cdot\tfrac43=12$ and
$\opnorm{D^{2}\chi}\le|\theta''|\,\norm{\nabla\phi}^{2}
+|\theta'|\,\opnorm{D^{2}\phi}\le C_{0}$, all dimension-free.
\end{proof}

\begin{lemma}[McShane extension]\label{II:lem:mcshane}
A function $Z$ that is $\Lambda$-Lipschitz on a nonempty set
$\mathcal{K}\subseteq\R^{d}$ extends to a $\Lambda$-Lipschitz
function $\widetilde Z$ on $\R^{d}$, e.g.\
$\widetilde Z(h)=\inf_{g\in\mathcal{K}}\{Z(g)+\Lambda\norm{h-g}\}$.
\end{lemma}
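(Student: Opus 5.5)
We verify directly that the displayed infimum formula $\widetilde Z(h)=\inf_{g\in\mathcal{K}}\{Z(g)+\Lambda\norm{h-g}\}$ works. The argument has three steps, carried out in order: first show that $\widetilde Z$ is finite everywhere, then that it agrees with $Z$ on $\mathcal{K}$, and finally that it is $\Lambda$-Lipschitz on all of $\R^{d}$. No earlier result of the paper is needed; only the triangle inequality and the Lipschitz hypothesis on $\mathcal{K}$ are used.

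\emph{Finiteness.} Because $\mathcal{K}$ is nonempty, we may fix some $g_{0}\in\mathcal{K}$. Taking $g=g_{0}$ in the infimum gives $\widetilde Z(h)\le Z(g_{0})+\Lambda\norm{h-g_{0}}<\infty$. For a lower bound, take any $g\in\mathcal{K}$. The Lipschitz property gives $Z(g)\ge Z(g_{0})-\Lambda\norm{g-g_{0}}$. The triangle inequality gives $\norm{g-g_{0}}\le\norm{g-h}+\norm{h-g_{0}}$. Combining these,
\[
Z(g)+\Lambda\norm{h-g}\ \ge\ Z(g_{0})-\Lambda\norm{h-g_{0}}.
\]
The right side does not depend on $g$, so $\widetilde Z(h)\ge Z(g_{0})-\Lambda\norm{h-g_{0}}>-\infty$.

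\emph{Extension property.} Take $h\in\mathcal{K}$. Choosing $g=h$ in the infimum gives $\widetilde Z(h)\le Z(h)$. Conversely, for every $g\in\mathcal{K}$ the Lipschitz property gives $Z(g)+\Lambda\norm{h-g}\ge Z(h)$. Taking the infimum over $g$ yields $\widetilde Z(h)\ge Z(h)$, so $\widetilde Z=Z$ on $\mathcal{K}$.

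\emph{Lipschitz bound.} Take $h,h'\in\R^{d}$ and any $g\in\mathcal{K}$. By the triangle inequality,
\[
Z(g)+\Lambda\norm{h-g}\ \le\ Z(g)+\Lambda\norm{h'-g}+\Lambda\norm{h-h'}.
\]
Taking the infimum over $g$ on both sides gives $\widetilde Z(h)\le\widetilde Z(h')+\Lambda\norm{h-h'}$. This uses the finiteness from the first step, so that subtracting $\widetilde Z(h')$ is legitimate. Exchanging the roles of $h$ and $h'$ gives $|\widetilde Z(h)-\widetilde Z(h')|\le\Lambda\norm{h-h'}$.

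The only point needing care is the finiteness step. Without it the Lipschitz argument would manipulate $\pm\infty$, and the lower bound there is exactly where the Lipschitz hypothesis on $\mathcal{K}$ enters. Everything else is routine.
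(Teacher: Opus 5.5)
Your proposal is correct and follows the paper's proof essentially step for step: extension via the choice $g=h$ plus the Lipschitz hypothesis, finiteness via a fixed $g_0\in\mathcal K$ and the triangle inequality, and the global $\Lambda$-Lipschitz bound by taking the infimum in $Z(g)+\Lambda\norm{h-g}\le Z(g)+\Lambda\norm{h'-g}+\Lambda\norm{h-h'}$ and exchanging $h,h'$. The only difference is that you establish finiteness first. That ordering is slightly cleaner, since finiteness is exactly what justifies subtracting $\widetilde Z(h')$ in the last step.
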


\begin{proof}
For $h\in\mathcal K$, the choice $g=h$ gives
$\widetilde Z(h)\le Z(h)$, while Lipschitzness gives
$Z(g)+\Lambda\norm{h-g}\ge Z(h)$ for every $g\in\mathcal K$; hence
$\widetilde Z=Z$ on $\mathcal K$.  The infimum is finite: after fixing
$g_{0}\in\mathcal K$, the same Lipschitz inequality and the triangle
inequality give
$Z(g)+\Lambda\norm{h-g}\ge
Z(g_{0})-\Lambda\norm{h-g_{0}}$.  Finally, for any $h,h'$ and every
$g\in\mathcal K$,
\[
Z(g)+\Lambda\norm{h-g}
\le Z(g)+\Lambda\norm{h'-g}+\Lambda\norm{h-h'}.
\]
Taking the infimum over $g$ gives
$\widetilde Z(h)\le\widetilde Z(h')+\Lambda\norm{h-h'}$; exchanging
$h,h'$ proves the claimed Lipschitz bound.
\end{proof}

\begin{lemma}[Gaussian Lipschitz concentration: non-elementary black box]
\label{II:lem:gaussian-concentration}
Let $\xi\sim N(0,I_{N})$ and let $F:\R^{N}\to\R$ be
$\Lambda$-Lipschitz.  Then, for every $t\ge0$,
\[
\Pr[F(\xi)-\E[F(\xi)]\ge t]\le
e^{-t^{2}/(2\Lambda^{2})},
\qquad
\Pr[|F(\xi)-\E[F(\xi)]|\ge t]\le
2e^{-t^{2}/(2\Lambda^{2})}.
\]
The assertion for $\Lambda=0$ has the evident interpretation.
\end{lemma}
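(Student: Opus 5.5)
The statement is Lemma~\ref{II:lem:gaussian-concentration}, the standard Gaussian concentration inequality for Lipschitz functions.  The paper explicitly treats this as a black box (it restates Lemma~\ref{I:lem:gaussian-concentration} with $m=N$ and $L_F=\Lambda$).  So my plan is to reduce to the earlier statement in the case $\Lambda>0$ and to dispose of the degenerate case $\Lambda=0$ directly.

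\emph{Case $\Lambda>0$.}  Here both displayed inequalities are literally Lemma~\ref{I:lem:gaussian-concentration} with $m=N$ and $L_F=\Lambda$.  I will note that $\E|F(\xi)|<\infty$, since $|F(\xi)|\le|F(0)|+\Lambda\norm{\xi}$ and $\norm{\xi}$ is integrable, so the centering is well defined.  If a self-contained argument were wanted instead of the citation to \cite{l01}, I would take the Maurey--Pisier/Herbst route.  First mollify $F$ so that it is smooth with $\norm{\nabla F}\le\Lambda$.  Then use the Gaussian log-Sobolev inequality to get $\E e^{\lambda(F-\E F)}\le e^{\lambda^{2}\Lambda^{2}/2}$, apply Chernoff with $\lambda=t/\Lambda^{2}$, and pass to the limit in the mollification by dominated convergence.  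The two-sided bound follows by applying the one-sided bound to both $F$ and $-F$ and taking a union bound.

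\emph{Case $\Lambda=0$.}  Now $F$ is constant, so $F(\xi)-\E F(\xi)=0$ almost surely.  The ``evident interpretation'' is that the probability of a deviation $\ge t$ is $0$ for $t>0$, which agrees with $e^{-t^{2}/0}=0$.  At $t=0$ the bounds read $1$ and $2$ and hold trivially.

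The only delicate point is the smoothing and limit step in the self-contained route, and I would simply cite \cite{l01} for it.  So I expect no real obstacle beyond checking that this bookkeeping matches the earlier lemma.
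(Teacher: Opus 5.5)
Your proposal is correct and matches the paper, which gives no proof and simply imports this as a black box from \cite{l01} (it is literally Lemma~\ref{I:lem:gaussian-concentration} with $m=N$ and $L_F=\Lambda$); your reading of the case $\Lambda=0$ as a constant function is the intended ``evident interpretation.'' One small caveat on your optional self-contained route: the log-Sobolev/Herbst argument is what gives the stated exponent $t^{2}/(2\Lambda^{2})$, whereas the Maurey--Pisier interpolation argument only gives $2t^{2}/(\pi^{2}\Lambda^{2})$, so you should drop that label.
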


This is the standard Gaussian isoperimetric concentration theorem.  We
use it as an imported, non-elementary scalar black box; all Lipschitz
constants and all changes of covariance in its applications are
verified explicitly below.

\begin{lemma}[{Gaussian matrix series: the noncommutative concentration
black box; \cite[Theorem 4.1.1]{t15}}]
\label{II:lem:matrix-gaussian}
Let $B_{1},\dots,B_{N}\in\Sn$ be fixed and $\xi$ standard Gaussian in
$\R^{N}$. Then, with
$\sigma^{2}:=\opnorm{\sum_{i}B_{i}^{2}}$,
\[
\Pr[\lVert\textstyle\sum_{i}\xi_{i}B_{i}\rVert_{\mathrm{op}}
\ge t]\ \le\ 2n\,e^{-t^{2}/2\sigma^{2}}
\qquad(t\ge0).
\]
\end{lemma}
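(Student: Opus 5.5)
This lemma is Tropp's Gaussian matrix series inequality, so the plan is the standard matrix Laplace-transform argument.

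\textbf{Symmetrization.} Put $Y:=\sum_i\xi_iB_i$. Since $Y$ and $-Y$ have the same law, $\opnorm{Y}=\max(\lambda_{\max}(Y),\lambda_{\max}(-Y))$, and
\[
\Pr[\opnorm{Y}\ge t]\le2\Pr[\lambda_{\max}(Y)\ge t].
\]
It therefore suffices to show $\Pr[\lambda_{\max}(Y)\ge t]\le n\,e^{-t^{2}/(2\sigma^{2})}$.

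\textbf{Laplace transform.} For $\theta>0$, Markov's inequality and $e^{\theta\lambda_{\max}(Y)}=\lambda_{\max}(e^{\theta Y})\le\tr e^{\theta Y}$ give
\[
\Pr[\lambda_{\max}(Y)\ge t]\le e^{-\theta t}\,\E\tr e^{\theta Y}.
\]

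\textbf{Bounding the trace mgf.} This is the key step. Use Lieb's concavity theorem: $A\mapsto\tr\exp(H+\log A)$ is concave on positive-definite matrices. Combined with Jensen's inequality and independence of the $\xi_i$, peeling off one summand at a time gives the subadditivity of matrix cumulant generating functions:
\[
\E\tr\exp\Bigl(\sum_i\theta\xi_iB_i\Bigr)\le\tr\exp\Bigl(\sum_i\log\E e^{\theta\xi_iB_i}\Bigr).
\]
For a scalar standard Gaussian, diagonalizing $B_i$ shows $\E e^{\theta\xi_iB_i}=e^{\theta^{2}B_i^{2}/2}$ exactly, so $\log\E e^{\theta\xi_iB_i}=\theta^{2}B_i^{2}/2$. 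Hence
\[
\E\tr e^{\theta Y}\le\tr\exp\Bigl(\tfrac{\theta^{2}}{2}\sum_iB_i^{2}\Bigr)\le n\,e^{\theta^{2}\sigma^{2}/2},
\]
where the last step bounds the trace of an $n\times n$ positive-definite matrix by $n$ times its top eigenvalue.

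\textbf{Optimizing $\theta$.} Choose $\theta=t/\sigma^{2}$. This gives $\Pr[\lambda_{\max}(Y)\ge t]\le n\,e^{-t^{2}/(2\sigma^{2})}$, and the symmetrization step doubles this to $2n\,e^{-t^{2}/(2\sigma^{2})}$. If $\sigma=0$, then every $B_i=0$, so $Y=0$ and the claim is trivial.

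The only non-elementary ingredient is Lieb's theorem. Following the paper's convention for its other black boxes, I would cite it (or simply cite \cite[Theorem 4.1.1]{t15} directly) rather than reprove it.
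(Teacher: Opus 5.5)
Your argument is correct, and it is the standard proof of the cited result. Tropp derives the Hermitian Gaussian-series bound the same way: the matrix Laplace transform, subadditivity of matrix cumulant generating functions from Lieb's concavity theorem, the exact Gaussian matrix mgf $\E e^{\theta\xi_iB_i}=e^{\theta^{2}B_i^{2}/2}$, and the choice $\theta=t/\sigma^{2}$. Your union bound over $\pm Y$ gives the same factor $2n$ that the Hermitian dilation gives in \cite[Theorem 4.1.1]{t15}. The paper gives no proof and imports the whole inequality as a non-elementary black box, so the only difference is that you push the black box down one level, to Lieb's theorem.
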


This is the only noncommutative concentration theorem used in
the log-det analysis. Every application first writes the
random matrix as a self-adjoint series $\sum_i\xi_iB_i$ and then bounds
the deterministic variance operator $\sum_iB_i^2$ in Loewner order.
All remaining noncommutative steps are finite-dimensional trace,
congruence, and resolvent calculations proved in the text.

\begin{lemma}[Metric distortion]
\label{II:lem:distortion}
Let $M:\intK\to\Sd_{\succ0}$ be continuously differentiable and satisfy
$\opnorm{M(x)^{-1/2}DM(x)[u]M(x)^{-1/2}}\le2\norm{u}_{M(x)}$ for all
$x,u$. Then for $x,z\in\intK$ with
$\Delta:=\norm{x-z}_{M(x)}<1$,
$(1-\Delta)^{2}M(x)\preceq M(z)\preceq(1-\Delta)^{-2}M(x)$.
\end{lemma}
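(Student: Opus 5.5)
This is the same statement as Lemma~\ref{I:lem:distortion}, now with the operator norm in the hypothesis, and I would repeat that argument along the segment. The operator-norm hypothesis is in fact what the earlier proof used: it bounds the quadratic form $v^{\top}M^{-1/2}DM[u]M^{-1/2}v$ for a single vector $v$. The plan has three steps: control the local length of $u$ along the segment, then the local length of a fixed test vector $v$, then integrate.

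\textbf{Setup.} Fix $x,z\in\intK$ and put $u=z-x$. By convexity of $\intK$, the path $\gamma(t)=x+tu$ stays in $\intK$ for $t\in[0,1]$, so $M$ is $C^1$ along it. If $u=0$ there is nothing to prove.

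\textbf{Step 1: local length of the displacement.} Set $\phi(t)=\norm{u}_{M(\gamma(t))}>0$. Write $u^{\top}DM[u]u$ as $w^{\top}(M^{-1/2}DM[u]M^{-1/2})w$ with $w=M^{1/2}u$, so $\norm{w}=\phi$. The hypothesis then gives
\[
\left|\frac{d}{dt}\phi^2\right|\le 2\phi\cdot\phi^2,
\]
hence $|\phi'|\le\phi^2$ and $(1/\phi)'\ge-1$. Integrating from $\phi(0)=\Delta<1$ yields
\[
\phi(t)\le\frac{\Delta}{1-t\Delta}\qquad(0\le t\le1).
\]
The denominator never vanishes because $t\Delta\le\Delta<1$.

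\textbf{Step 2: local length of a test vector.} For a fixed nonzero $v$, let $\psi(t)=v^{\top}M(\gamma(t))v>0$. The same congruence trick gives
\[
|\psi'(t)|\le 2\phi(t)\,\psi(t),
\]
so
\[
\left|\frac{d}{dt}\log\psi(t)\right|\le\frac{2\Delta}{1-t\Delta}.
\]

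\textbf{Step 3: integrate.} Since $\int_0^1\frac{2\Delta}{1-t\Delta}\,dt=-2\log(1-\Delta)$, integrating Step~2 over $[0,1]$ gives
\[
(1-\Delta)^2\,\psi(0)\le\psi(1)\le(1-\Delta)^{-2}\,\psi(0).
\]
Because $v$ was arbitrary, both Loewner inequalities follow.

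The only delicate point is justifying the ODE comparison. Since $\phi>0$ and $C^1$, $1/\phi$ is differentiable, and the comparison is just the fundamental theorem of calculus applied to $1/\phi$. There is no substantive obstacle.
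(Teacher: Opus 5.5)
Your proof is correct and takes the paper's route. You bound $\phi(t)=\norm{z-x}_{M(\gamma(t))}$ by $\Delta/(1-t\Delta)$ via $(1/\phi)'\ge-1$, then control $\frac{d}{dt}\log\bigl(v^{\top}M(\gamma(t))v\bigr)$ by $2\Delta/(1-t\Delta)$ and integrate, using convexity of $\intK$ to keep the segment in the domain; the paper's proof does exactly this.
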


\begin{proof}
The case $x=z$ is immediate.  Otherwise put $v=z-x$,
$x_t=x+tv$, $M_t=M(x_t)$, and
$a(t)=\norm{v}_{M_t}>0$.  Convexity of $\intK$ keeps the whole segment
in the domain. Differentiating $a(t)^2=v^{\top}M_tv$ and testing the
assumed differential inequality on $v$ gives
\[
|a'(t)|
=\frac{|v^{\top}DM_t[v]v|}{2a(t)}\le a(t)^{2}.
\]

Since $a(0)=\Delta$, integration of
$(1/a)'=-a'/a^{2}\ge-1$ yields
$a(t)\le\Delta/(1-t\Delta)$ for $0\le t\le1$.

Now fix $w\ne0$ and write $b(t)=w^{\top}M_tw$. The differential
inequality and the preceding bound on $a(t)$ give
\[
|\frac{d}{dt}\log b(t)|
=\frac{|w^{\top}DM_t[v]w|}{w^{\top}M_tw}
\le2a(t)\le\frac{2\Delta}{1-t\Delta}.
\]

Thus
$|\log(b(1)/b(0))|\le-2\log(1-\Delta)$.  Exponentiating and then
allowing arbitrary $w$ gives exactly
$(1-\Delta)^2M(x)\preceq M(z)\preceq(1-\Delta)^{-2}M(x)$.
\end{proof}

\begin{lemma}[{Cross-ratio isoperimetry; \cite[Theorem 2.5]{lv07}}]
\label{II:lem:isoperimetry}
Let $\pi$ be log-concave on a convex body $K$ and
$K=S_{1}\dot\cup S_{2}\dot\cup S_{3}$ measurable with
$\sigma(x,z)\ge\sigma_{0}$ for all $x\in S_{1},z\in S_{2}$, where
$\sigma$ is the chord cross ratio (Definition~\ref{II:def:crossratio}). Then
$\pi(S_{3})\ge c_{\mathrm{iso}}\min\{\sigma_{0},1\}\,
\pi(S_{1})\pi(S_{2})$ for a universal $c_{\mathrm{iso}}>0$.
\end{lemma}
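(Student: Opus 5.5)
This statement is the Lov\'asz--Vempala cross-ratio isoperimetry theorem, restated for the spectrahedral part; the plan is to import it rather than reprove it. \emph{Step one:} check that the hypotheses of \cite[Theorem 2.5]{lv07} are met. $K=\{x:S(x)\succeq0\}$ is an intersection of the closed half-spaces $\{x:u^{\top}S(x)u\ge0\}$ over $u\in\R^{n}$, so it is closed and convex. It is bounded and full-dimensional by the standing assumptions, hence a convex body. The target $e^{-f}\ind K$ is log-concave because $f$ is convex.

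\emph{Step two:} match the notion of separation. The chord cross ratio $\sigma(x,z)$ of Definition~\ref{II:def:crossratio} should agree with the polytope one in Definition~\ref{I:def:crossratio}: for the chord $[p,q]$ through $x,z$, with order $p,x,z,q$,
\[
\sigma(x,z)=\frac{\norm{x-z}\,\norm{p-q}}{\norm{p-x}\,\norm{z-q}}.
\]
This definition uses only the convex body and not its representation. The statement is therefore literally Lemma~\ref{I:lem:isoperimetry-import} applied to a different convex body, and the universal constant $c_{\mathrm{iso}}$ can be taken to be the same one.

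\emph{Step three:} handle measurability and the clipping. The sets $S_{1},S_{2},S_{3}$ are assumed measurable. The factor $\min\{\sigma_{0},1\}$ follows from the original statement $\pi(S_{3})\ge c\,\sigma_{0}\,\pi(S_{1})\pi(S_{2})$ together with the trivial bound $\pi(S_{3})\le1$. If the imported version is stated only for $\sigma_{0}\le1$, apply it with $\sigma_{0}$ replaced by $\min\{\sigma_{0},1\}$; this is allowed because the separation hypothesis is monotone in $\sigma_{0}$.

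The only real obstacle is checking that the conventions match: that the paper's cross-ratio definition coincides with the one in \cite{lv07} up to a universal constant factor, and that the measure is a probability measure. Any constant factor is absorbed into $c_{\mathrm{iso}}$, and normalization holds because $\pi$ is a probability measure by construction.
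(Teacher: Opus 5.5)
Your proposal is correct and takes the same route as the paper, which does not reprove this result but imports \cite[Theorem 2.5]{lv07} after checking only that $K$ is convex and that $e^{-f}\ind K$ is log-concave; your extra check that Definition~\ref{II:def:crossratio} is exactly the Lov\'asz--Vempala cross ratio is a reasonable addition. One small point: the clipped form needs no appeal to $\pi(S_3)\le1$, since it follows directly from $\min\{\sigma_0,1\}\le\sigma_0$.
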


This imported localization/isoperimetry theorem is used only after two
hypotheses have been checked: $K$ is convex and the density
$e^{-f}\ind K$ is log-concave because $f$ is convex.  Its role is to
turn a lower bound on chord cross-ratio separation between two sets
into stationary mass in the intervening set.

We henceforth replace $c_{\mathrm{iso}}$ by
$\min\{c_{\mathrm{iso}},1\}$, so that $0<c_{\mathrm{iso}}\le1$.

\begin{lemma}[{Cheeger for lazy reversible chains; \cite[Theorem 2.1]{ls88}}]
\label{II:lem:cheeger}
A $\tfrac12$-lazy reversible chain with conductance $\Phi$ has
spectral gap at least $c_{\mathrm{Ch}}\Phi^{2}$, $c_{\mathrm{Ch}}>0$
universal.
\end{lemma}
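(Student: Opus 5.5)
This statement is the imported Lawler--Sokal Cheeger inequality, \cite[Theorem 2.1]{ls88}, restated for the log-det chain.  I will treat it as a black box, exactly as in Lemma~\ref{I:lem:cheeger-import}, and only check that the hypotheses and the conductance conventions match.

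\textbf{Hypotheses.}  The chain of Definition~\ref{II:def:walk} is reversible with respect to $\pi$.  This is the usual Metropolis--Hastings detailed balance: the accepted flow from $x$ to $y$ has density
\[
e^{-f(x)}q_x(y)\bigl(1\wedge R_x(y)\bigr)=\min\{e^{-f(x)}q_x(y),\,e^{-f(y)}q_y(x)\},
\]
which is symmetric in $(x,y)$.  Proposals outside $K$ produce only self-loops, and the lazy holding mass is also a self-loop, so neither affects reversibility.  Because the chain is $\tfrac12$-lazy, we can write $P=\tfrac12 I+\tfrac12 P'$ with $P'$ a self-adjoint Markov operator on $L^2(\pi)$.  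Hence the spectrum of $P$ lies in $[0,1]$, and the spectral gap $\gap(P)=1-\sup\{\lambda\in\mathrm{spec}(P|_{1^\perp})\}$ is controlled entirely from the top, with no contribution from the bottom of the spectrum.

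\textbf{Conductance conventions.}  Lawler--Sokal bound the gap below by $k^2/8$, where
\[
k=\inf_A\frac{\mathcal Q(A,A^c)}{\pi(A)\pi(A^c)}.
\]
Our $\Phi$ is instead defined as $\inf_{\pi(A)\le1/2}\mathcal Q(A,A^c)/\pi(A)$.  By reversibility, $\mathcal Q(A,A^c)=\mathcal Q(A^c,A)$, so in computing $k$ we may always take $A$ to be the side of smaller measure.  For that side, $\pi(A^c)\in[1/2,1]$, which gives $\Phi\le k\le2\Phi$.  Therefore $\gap(P)\ge k^2/8\ge\Phi^2/8$, and we may take $c_{\mathrm{Ch}}=1/8$, or whatever universal constant their normalization of $\mathcal Q$ produces.

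\textbf{Main obstacle.}  The only real issue is convention bookkeeping: whether \cite{ls88} uses the stationary flow or the kernel probability in $\mathcal Q$, and whether laziness is already built into their statement.  Both choices change only the universal constant, so the conclusion holds as stated.
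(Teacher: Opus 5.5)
Your proposal is correct and follows the paper's own treatment. The paper also imports the lemma as a black box from Lawler--Sokal, checks reversibility of the Metropolis kernel and $\tfrac12$-laziness as hypotheses, and reconciles the conventions by passing to the smaller side so that $\Phi\le k\le 2\Phi$, with laziness removing negative spectrum.
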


This is the second imported mixing input.  It applies because the exact
Metropolis--Hastings kernel is reversible with stationary law $\pi$ and
is made $\tfrac12$-lazy.  Its sole role is to convert the conductance
bound proved in Lemma~\ref{II:lem:conductance} into an $L^{2}(\pi)$ spectral
gap; the subsequent warm-start estimate is written out explicitly.
The source uses
$k=\inf_A\mathcal Q(A,A^c)/[\pi(A)\pi(A^c)]$, while this paper uses
$\Phi=\inf_{\pi(A)\le1/2}\mathcal Q(A,A^c)/\pi(A)$.
Reversibility permits choosing the smaller side, where
$\Phi\le k\le2\Phi$ because $1\le1/\pi(A^c)\le2$.  Hence the cited
theorem yields the displayed $c_{\mathrm{Ch}}\Phi^2$ form.  Laziness is
then what removes negative spectrum in the later $L^2$ contraction.

\renewcommand{\nubar}{\bar{\nu}_{\mathrm{LS}}}

\section{Proof of Theorem~\ref{thm:intro-polytope}}\label{I:sec:polytope}

\subsection{The metric, the walk, and the main theorem}

The unscaled \emph{Lee--Sidford metric} is
\begin{equation*}
H(x)=A_{x}^{\top}W_{x}^{1-2/p}A_{x}\succ0 .
\end{equation*}
Let $\kappa$ be a sufficiently large fixed polynomial in $p$ and
$\log(nd)$---the precise requirements are Eq.~\eqref{I:eq:bootstrap-smallness},
Eq.~\eqref{I:eq:rho-choice}, and $\kappa\ge2(1+p^{2})$ from
Lemma~\ref{I:lem:ssc-ls}---and set
\begin{equation*}
G(x)=\kappa H(x)+L^{2}I .
\end{equation*}

\begin{definition}[Exact-metric LS walk]\label{I:def:walk}
At $x\in\intK$, propose
\begin{equation*}
Y\sim q_{x}:=N(x,\;\frac{r^{2}}{d}\,G(x)^{-1}),
\end{equation*}
where $r>0$ is a universal constant, and accept with the exact
Metropolis--Hastings probability
\begin{equation*}
\alpha_{x}(y)=1\wedge R_{x}(y),\qquad
R_{x}(y):=\frac{e^{-f(y)}\,q_{y}(x)}{e^{-f(x)}\,q_{x}(y)} .
\end{equation*}
Proposals outside $K$ are rejected, and the chain is made
$\tfrac12$-lazy.
\end{definition}

The walk is reversible with respect to $\pi\propto e^{-f}\ind{K}$: on
$\intK\times\intK$ the accepted-move density satisfies
\[
e^{-f(x)}q_x(y)(1\wedge R_x(y))
=\min\{e^{-f(x)}q_x(y),\,e^{-f(y)}q_y(x)\},
\]
which is symmetric in $x,y$; rejection and laziness only add diagonal
mass. An initial law $\mu_{0}$ is \emph{$w$-warm} if
$d\mu_{0}/d\pi\le w$.

\begin{theorem}[Polytopes; formal version of Theorem~\ref{thm:intro-polytope}]\label{I:thm:main-formal}
Let $K\subseteq B_{2}(0,R)$ be a bounded full-dimensional polytope in $\R^{d}$
given by $n$ linear inequalities, let $f$ be convex and $L$-Lipschitz on $K$,
and let $\pi\propto e^{-f}\ind{K}$. Let $H_{\mathrm{LS}}$ denote the
Lee--Sidford metric, let $r>0$ be a universal constant, and choose $\kappa$ as in
Eq.~\eqref{I:eq:rho-choice}. Consider the $\tfrac12$-lazy
Metropolis walk with exact Lewis weights, exact acceptance ratio, and proposal
covariance
\[
  \frac{r^{2}}{d}(\kappa H_{\mathrm{LS}}(x)+L^{2}I)^{-1}.
\]
Then, for $w\ge1$ and $0<\delta<1$, from a $w$-warm start the walk reaches
total-variation distance $\delta$ from $\pi$ in
\[
  \Ot((d^{2}+dL^{2}R^{2})\log(w/\delta))
\]
steps.
\end{theorem}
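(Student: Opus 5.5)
The proof is the standard conductance argument, built on two pillars: a uniform acceptance lower bound and a proposal-overlap bound. The plan has four steps.

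\emph{Step 1: reversibility and laziness.} By the displayed symmetric accepted-move density, the walk is reversible with respect to $\pi$. It is also $\tfrac12$-lazy, so Lemma~\ref{I:lem:cheeger-import} applies once a conductance bound is available. It therefore suffices to show
\[
\Phi\ge \frac{c\,r}{\sqrt{d(\nubar+L^2R^2)}}, \qquad \nubar=\Ot(d).
\]
Given this, Cheeger yields a spectral gap of order $\Omega(1/(d(d+L^2R^2)))$ up to polylogarithmic factors. The $w$-warm start gives $\norm{d\mu_0/d\pi-1}_{L^2(\pi)}\le\sqrt{w}$. Contraction by $(1-\gap)^t$ on the spectrum in $[0,1]$ and Cauchy--Schwarz for total variation then give the step count $\Ot((d^2+dL^2R^2)\log(w/\delta))$.

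\emph{Step 2: uniform acceptance, which is the main obstacle.} I would prove that for every $x\in\intK$,
\[
\Pr_{Y\sim q_x}\bigl[Y\in K,\ R_x(Y)\ge e^{-1}\bigr]\ge c.
\]
Normalize at $x$ so that $H(x)=I$, and write $y=x+\eta h$ with $h\sim N(0,Q)$. Decompose $\log R_x$ into three parts: the potential term, which is $O(r)$ since $G\succeq L^2I$ and $f$ is $L$-Lipschitz; the ledger identity $-\tfrac12(c-\bar c)^\top b+\tfrac12(\log\det(I+E)-\tr E)$; and feasibility, which follows from Dikin containment.

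On a high-probability tube where slacks and Lewis weights stay within constant factors, Lemma~\ref{I:lem:lewis-derivative} gives a first-order expansion $b=2a+(\text{quadratic})$. The determinant remainder is then $O(\fnorm{E}^2)$. The linear centered term $Z=(c-\bar c)^\top a$ has its mean controlled by the twice-integrated Stein identity together with the rowwise contraction Eq.~\eqref{I:eq:phi-contraction} and $\opnorm{N}\le p$, giving a bound of order $\rho^2\cdot\mathrm{poly}(p)$. Its fluctuation is controlled by a dimension-free Lipschitz bound on the tube, a cutoff, McShane extension, and Lemma~\ref{I:lem:gaussian-concentration}.

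The delicate point is getting $O(r^2/d)$-type control in each coordinate rather than $O(r^2)$. This is exactly the gap in prior work. I would attack it by keeping $c-\bar c$ centered and never bounding $b$ and $c$ separately. Choosing $\kappa$ as a large polynomial in $p$ and $\log(nd)$ then makes every remainder a small universal constant.

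\emph{Step 3: overlap.} Lemma~\ref{I:lem:ssc-ls} gives strong self-concordance of $\kappa H$. Adding $L^2I$ preserves the hypothesis of Lemma~\ref{I:lem:distortion}. Hence if $\norm{x-z}_{G(x)}\le r/\sqrt d$, then $G(z)$ and $G(x)$ agree up to a factor $1\pm O(r/\sqrt d)$. Since $\opnorm{\cdot}\le\fnorm{\cdot}$ and SSC gives a Frobenius bound, the log-det discrepancy in the Gaussian KL is $O(r^2)$. Pinsker then gives $\dtv(q_x,q_z)\le\tfrac12$ for $r$ small. Combined with Step 2, this yields
\[
\dtv(P_x,P_z)\le 1-c'.
\]

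\emph{Step 4: conductance.} Take a set $A$ with $\pi(A)\le1/2$. Define $S_1$ as the points of $A$ with $P_x(A^c)<c'/2$, $S_2$ analogously in $A^c$, and $S_3$ as the rest. By Step 3, any $x\in S_1$ and $z\in S_2$ satisfy $\norm{x-z}_{G(x)}\ge r/\sqrt d$. Symmetry of the LS metric, with $\nubar=\Ot(d)$ (\cite{llv20}), adjusted by $4L^2R^2$ because $\norm{\cdot}_{L^2I}\le 2LR\,\cdot$ chord length on $K\subseteq B_2(0,R)$, converts this into the cross-ratio bound
\[
\sigma(x,z)\ge \frac{r}{\sqrt{d\,\nu_G}}.
\]
Lemma~\ref{I:lem:isoperimetry-import} then bounds $\pi(S_3)$ below. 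The usual case split, according to whether $S_1$ or $S_2$ carries less than half the mass of its side, gives the conductance bound in Step 1.
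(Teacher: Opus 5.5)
\textbf{Verdict: genuine gap.} Your architecture matches the paper's. It uses the ledger cancellation, a stability tube with cutoff, McShane extension and Gaussian concentration, Dikin feasibility and the Lipschitz potential bound. It then uses SSC distortion with KL and Pinsker, kernel overlap, $\nu_G\le\bar\nu+4L^2R^2$, cross-ratio isoperimetry with the deep-set case split, and Cheeger with the warm-start $L^2$ bound.

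\textbf{The gap: the mean of $Z=(c-\bar c)^\top a$ in Step 2.} After the cutoff, the twice-integrated Stein identity gives $\E[Z\chi]=\sum_i\E[(Qu_i)^\top D^2(a_i\chi)(Qu_i)]$. By Eq.~\eqref{I:eq:a-hessian}, the main term is $\eta^2\sum_i(\beta\,D^2\log w_{y,i}[Qu_i,Qu_i]+(A_yQu_i)_i^2)$. These are second derivatives of the Lewis weights at the random endpoint.

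The tools you cite, Eq.~\eqref{I:eq:phi-contraction} and $\opnorm{N}\le p$, are first-order facts and do not bound this. By Lemma~\ref{I:lem:second-derivative}, the Hessian of $\log w$ contains the term $-W^{-1/2}DN[r]Br$. This involves the derivative of the resolvent kernel $N=2\Lb(I-c_p\Lb)^{-1}$, which no bound on $N$ itself controls. Moreover, every term carries a $W^{-1/2}$ prefactor. It can only be cancelled row by row, using that the Stein directions satisfy $\norm{Qu_i}\le\sqrt{w_i}$.

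\textbf{The missing idea} is the row-dependent Hessian contraction, Lemma~\ref{I:lem:hessian-contraction}: $\sum_i|D^2\ell_i[r_i,r_i]|\le C_p\mu^2d$ whenever $\norm{r_i}\le\mu\sqrt{w_i}$. The paper proves it from three ingredients:
\begin{itemize}
\item the exact Hessian identity;
\item the horizontal-frame derivative $DU[r]=P^\perp\Diag(z_r)U$, which yields $\sum_a\fnorm{DN[e_a]B}^2\le C_pd$;
\item the rowwise resolvent bounds of Lemma~\ref{I:lem:rowwise}, applied after transporting the Stein directions to $y$-normalized coordinates.
\end{itemize}
Without it, the natural estimates give $C_pnd$ or $C_pd^{3/2}$ in place of $C_pd$. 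Then $\E[Z]$ is not $O(\rho^2)$ for polylogarithmic $\kappa$. Keeping $c-\bar c$ centered removes the uncentered term $\bar c^\top b$, but it leaves exactly this second-order mean, which is where new Lewis-weight calculus is required.

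\textbf{Secondary gap: the tube.} The same row-dependence already matters for your tube, which you assert rather than derive. Integrating the Jacobian $D_ha[v]=-\eta W_y^{-1/2}(I+\beta N_y)B_yv$ with only $\opnorm{\Rr}\le p/2$ gives $|\partial_t a_i|\lesssim p\,\eta\,w_{y,i}^{-1/2}\norm{g}$. That is useless for rows with small weight. The paper instead uses the rowwise bound $|e_i^\top(I+\beta N)Bv|\le L_p\sqrt{w_i}\,\norm{v}_{H}$ inside a Dini bootstrap (Lemma~\ref{I:lem:stability}) to get $\norm{a}_\infty\le 1/32$.

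\textbf{Two smaller quantitative points.}
\begin{itemize}
\item Step 2 must deliver probability $1-\epsilon$ with $\epsilon$ small, not merely some constant $c$. The kernel-overlap step needs $\delta_q+2\epsilon<1$.
\item In Step 3 the closeness radius should carry a small constant $c_{\mathrm{loc}}$. The mean-shift term $\frac{d}{r^2}\norm{x-z}_{G(z)}^2$ in the KL does not shrink as $r\to0$, so shrinking $r$ alone does not lower the Pinsker bound below roughly $1/2$.
\end{itemize}
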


\begin{proof}
Put $f_{0}=d\mu_{0}/d\pi$. Since $\mu_{0}$ is $w$-warm,
$0\le f_{0}\le w$ and $\int f_{0}\,d\pi=1$. Expanding the square and
using $f_0^2\le wf_0$ gives
\begin{equation*}
\norm{f_{0}-1}_{L^{2}(\pi)}^{2}=\int f_{0}^{2}\,d\pi-1\le w-1 .
\end{equation*}

Write the lazy kernel as $P=(I+P_{0})/2$, where $P_{0}$ is the
underlying non-lazy Metropolis kernel. Since $P_{0}$ is a self-adjoint
contraction on $L^2(\pi)$, its spectrum lies in $[-1,1]$, and hence the spectrum of
$P$ lies in $[0,1]$. Therefore Lemma~\ref{I:lem:cheeger-import} gives
$\gap(P)\ge c_{\mathrm{Ch}}\Phi^{2}$, and the spectral theorem gives
$\norm{P^tg}_2\le(1-\gap(P))^t\norm g_2$ for every mean-zero $g$. By
Cauchy--Schwarz, the preceding warm-start bound, and $1-u\le e^{-u}$,
\begin{equation*}
\dtv(\mu_{0}P^{t},\pi)
\le\frac12\norm{P^{t}(f_{0}-1)}_{L^{2}(\pi)}
\le\frac12(1-\gap(P))^{t}\sqrt{w-1}
\le\frac12\,e^{-c_{\mathrm{Ch}}t\Phi^{2}}\sqrt{w-1} .
\end{equation*}

Solving for total-variation error $\delta$ and inserting
Eq.~\eqref{I:eq:conductance},
\[
t_{\mathrm{mix}}(\delta)\ \le\
C\,\frac{d\,(\nubar+L^{2}R^{2})}{r^{2}}\,\log(w/\delta).
\]
Since $r$ is universal and $\nubar=\Ot(d)$ by Eq.~\eqref{I:eq:nubar}, this is
$\Ot((d^{2}+dL^{2}R^{2})\log(w/\delta))$. (The same
conductance-to-mixing pipeline appears in
\cite[Lemma B.1 and Theorem 3.1]{kv24}.)
\end{proof}

\subsection{The complete proposal ratio and the exact cancellation}
\label{I:sec:cancellation}

This section contains no estimates: it is exact algebra. We normalize
coordinates at the current point, diagonalize the endpoint metric
exactly, and exhibit the cancellation inside the complete Gaussian
proposal ratio.

Fix a deterministic current point $x\in\intK$. Let
$F_{x}=H_{\mathrm{orig}}(x)^{-1/2}$, apply the linear change of
variables $z_{\mathrm{orig}}=F_{x}z$ to the entire inequality
representation, and transform $x$, $y$, $h$, $\mA$, and $A_{x}$
accordingly. Lewis weights are invariant under this substitution
(leverage scores, and hence the fixed point
Eq.~\eqref{I:eq:lewis-fixedpoint}, are unchanged by $A\mapsto AT$ for
invertible $T$), while $H$ and $G$ transform as metrics. Throughout
Sections~\ref{I:sec:cancellation}, \ref{I:sec:stability}, \ref{I:sec:hessian} and \ref{I:sec:stein} we suppress
tildes on the transformed quantities; every slack identity below uses
the transformed $\mA$ in the transformed coordinates. Write
\[
W:=W_{x},\qquad w:=w_{x}.
\]
The baseline metric is then normalized:
\begin{equation*}
H(x)=I,\qquad
U:=W^{\beta}A_{x},\qquad
U^{\top}U=I .
\end{equation*}
Write $u_{i}^{\top}$ for row $i$ of $U$. The fixed point
Eq.~\eqref{I:eq:lewis-fixedpoint} gives
\begin{equation*}
\norm{u_{i}}^{2}=w_{i},\qquad
P:=UU^{\top},\qquad
P_{ii}=w_{i},\qquad
\sum_{i}w_{i}=d .
\end{equation*}
After the same affine change, factor the (now constant) regularizer as
\begin{equation*}
G(x)=\kappa(I+\creg),\qquad
\creg=\frac{L^{2}}{\kappa}F_{x}^{\top}F_{x}\succeq0,\qquad
Q:=(I+\creg)^{-1},\qquad
0\preceq Q\preceq I .
\end{equation*}
Only $\creg\succeq0$ is used below; in particular all estimates are
uniform in $L$ and in the orientation of the regularizer. A proposal
then has the representation
\begin{equation}
y=x+\eta h,\qquad
h\sim N(0,Q),\qquad
\eta=\frac{r}{\sqrt{\kappa d}}=\frac{\rho}{\sqrt{d}},\qquad
\rho:=\frac{r}{\sqrt{\kappa}} .
\label{I:eq:step}
\end{equation}
The role of $\kappa$ is only to make $\rho$ smaller than a sufficiently
small inverse polylogarithm; it changes the LS symmetry parameter by a
polylogarithmic factor only.

\begin{lemma}[Endpoint identities]\label{I:lem:endpoint}
Let $h\in\R^{d}$ be such that the segment $[x,y]$, $y=x+\eta h$, lies
in $\intK$, and put
\begin{equation}
\zeta:=A_{x}h,\qquad
a_{i}(h):=\beta\log\frac{w_{y,i}}{w_{i}}-\log(1+\eta\zeta_{i}),\qquad
b_{i}:=e^{2a_{i}}-1 .
\label{I:eq:a-def}
\end{equation}
Then
\begin{equation}
W_{y}^{\beta}A_{y}=\Diag(e^{a})\,U,
\qquad
H(y)=U^{\top}\Diag(e^{2a})\,U=I+U^{\top}\Diag(b)\,U .
\label{I:eq:endpoint-diag}
\end{equation}
Moreover, with
\begin{equation*}
c_{i}:=(u_{i}^{\top}h)^{2},
\qquad
\wb_{i}:=\E[c_{i}]=u_{i}^{\top}Qu_{i},
\end{equation*}
one has the exact quadratic identity
\begin{equation*}
h^{\top}(H(y)-I)h=c^{\top}b .
\end{equation*}
\end{lemma}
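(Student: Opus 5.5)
The plan is to verify the three identities directly from the definitions, using only the slack transformation along the segment and the normalization $U=W^{\beta}A_x$ with $U^\top U=I$.

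\textbf{Step 1: the endpoint slacks.} Since $y=x+\eta h$ and slacks are affine,
\[
s_y=\mA y-b=s_x+\eta\mA h.
\]
Dividing row $i$ by $s_{x,i}>0$ gives
\[
s_{y,i}=s_{x,i}(1+\eta\zeta_i),\qquad \zeta=A_xh.
\]
Because the segment lies in $\intK$, every factor $1+\eta\zeta_i$ is positive, so the logarithm in Eq.~\eqref{I:eq:a-def} is defined. Hence
\[
A_y=S_y^{-1}\mA=\Diag(1+\eta\zeta)^{-1}A_x.
\]
The Lewis weights $w_y$ exist and are positive by Eq.~\eqref{I:eq:lewis-fixedpoint}, so $\log(w_{y,i}/w_i)$ is defined. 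Everything here is stated in the transformed coordinates, where the slack identities hold verbatim.

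\textbf{Step 2: the row factorization.} Row $i$ of $W_y^{\beta}A_y$ equals
\[
w_{y,i}^{\beta}(1+\eta\zeta_i)^{-1}(A_x)_i=\Bigl(\frac{w_{y,i}}{w_i}\Bigr)^{\beta}(1+\eta\zeta_i)^{-1}\,w_i^{\beta}(A_x)_i=e^{a_i}u_i^\top.
\]
This is the first identity in Eq.~\eqref{I:eq:endpoint-diag}.

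Next, $1-2/p=2\beta$, so $H(y)=A_y^\top W_y^{2\beta}A_y=(W_y^\beta A_y)^\top(W_y^\beta A_y)$. Substituting the factorization gives
\[
H(y)=U^\top\Diag(e^{2a})U.
\]
Writing $e^{2a}=1+b$ and using $U^\top U=I$ yields
\[
H(y)=I+U^\top\Diag(b)U.
\]

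\textbf{Step 3: the quadratic identity.} Expanding $h^\top U^\top\Diag(b)Uh=\sum_i b_i(u_i^\top h)^2$ gives $h^\top(H(y)-I)h=c^\top b$. The claim $\wb_i=\E[c_i]=u_i^\top Qu_i$ is the second moment of $h\sim N(0,Q)$.

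No step is a genuine obstacle; the only care needed is bookkeeping. One must check the exponent convention $1-2/p=2\beta$, the sign convention $s=\mA x-b$ so that $s_y=s_x\circ(1+\eta\zeta)$, and positivity of $1+\eta\zeta_i$ from the segment hypothesis.
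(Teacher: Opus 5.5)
Your proof is correct and is essentially the paper's argument: factor the endpoint slacks as $s_y=S_x(\mathbf{1}+\eta\zeta)$ to get $A_y=\Diag(\mathbf{1}+\eta\zeta)^{-1}A_x$, rescale rows to obtain $W_y^{\beta}A_y=\Diag(e^{a})U$, form the Gram matrix using $1-2/p=2\beta$ and $U^\top U=I$, and expand the quadratic form rowwise. Your extra checks, namely positivity of $1+\eta\zeta_i$ from $y\in\intK$ and the exponent convention, are ones the paper leaves implicit.
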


\begin{proof}
The relations $y=x+\eta h$, $\mA=S_xA_x$, and $\zeta=A_xh$ give
\[
s_{y}=s_{x}+\eta\mA h
=S_{x}(\mathbf{1}+\eta A_{x}h)=S_{x}(\mathbf{1}+\eta\zeta),
\]
Hence $A_{y}=\Diag(\mathbf{1}+\eta\zeta)^{-1}A_{x}$. Multiplying by
$W_{y}^{\beta}=\Diag((w_{y,i}/w_{i})^{\beta})W^{\beta}$ and
comparing with Eq.~\eqref{I:eq:a-def} gives the first identity in
Eq.~\eqref{I:eq:endpoint-diag}; the second follows from
$H(y)=(W_{y}^{\beta}A_{y})^{\top}(W_{y}^{\beta}A_{y})$ and
$U^{\top}U=I$. Finally
$h^{\top}U^{\top}\Diag(b)Uh=\sum_{i}b_{i}(u_{i}^{\top}h)^{2}=c^{\top}b$.
\end{proof}

No Taylor expansion of the endpoint metric has been made: the sign in
the endpoint rescaling is explicit, and Eq.~\eqref{I:eq:endpoint-diag} is an
identity.

\begin{lemma}[Ledger identity]\label{I:lem:ledger}
In the setting of Lemma~\ref{I:lem:endpoint}, let
\begin{equation}
E:=Q^{1/2}(H(y)-I)Q^{1/2} .
\label{I:eq:E-def}
\end{equation}
Then the complete Gaussian proposal log-ratio satisfies, exactly,
\begin{equation}
\begin{aligned}
\Lambda_{q}(x,y)
&:=\frac{1}{2}\log\frac{\det G(y)}{\det G(x)}
-\frac{d}{2r^{2}}\,(y-x)^{\top}(G(y)-G(x))(y-x) \\
&=-\frac{1}{2}(c-\wb)^{\top}b
+\frac{1}{2}\{\log\det(I+E)-\tr[E]\}.
\end{aligned}
\label{I:eq:ledger}
\end{equation}
Equivalently,
$\log\tfrac{q_{y}(x)}{q_{x}(y)}=\Lambda_{q}(x,y)$.
\end{lemma}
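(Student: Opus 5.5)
The proof is pure algebra in the normalized coordinates of Section~\ref{I:sec:cancellation}; no estimate is needed.

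First I write out the Gaussian densities. Since $q_x=N(x,\tfrac{r^2}{d}G(x)^{-1})$,
\[
\log q_x(y)=\tfrac12\log\det G(x)-\tfrac{d}{2r^2}(y-x)^\top G(x)(y-x)+\mathrm{const},
\]
and the same formula holds for $q_y(x)$ with $G(y)$, using $(x-y)^\top G(y)(x-y)=(y-x)^\top G(y)(y-x)$. Subtracting the two expressions gives the first line of Eq.~\eqref{I:eq:ledger}, which proves the ``equivalently'' clause.

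Next I substitute the normalization. The regularizer $L^2I$ becomes a constant matrix after the linear change of variables, so $G(y)-G(x)=\kappa(H(y)-H(x))=\kappa(H(y)-I)$ and $G(y)=\kappa(I+\creg+H(y)-I)$. The constant Jacobian of the coordinate change cancels in the ratio $\det G(y)/\det G(x)$.

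\emph{Determinant term.} I factor
\[
G(y)=\kappa Q^{-1/2}\bigl(I+Q^{1/2}(H(y)-I)Q^{1/2}\bigr)Q^{-1/2}=\kappa Q^{-1/2}(I+E)Q^{-1/2},
\]
while $G(x)=\kappa Q^{-1}$. Hence $\log\det G(y)-\log\det G(x)=\log\det(I+E)$.

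\emph{Quadratic term.} Using $y-x=\eta h$ and $\eta^2=r^2/(\kappa d)$,
\[
\tfrac{d}{2r^2}\,\eta^2\kappa\, h^\top(H(y)-I)h=\tfrac12 h^\top(H(y)-I)h=\tfrac12 c^\top b
\]
by Lemma~\ref{I:lem:endpoint}.

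It remains to add and subtract $\tfrac12\tr[E]$. By cyclicity and Eq.~\eqref{I:eq:endpoint-diag},
\[
\tr[E]=\tr[Q\,U^\top\Diag(b)U]=\sum_i b_i\,u_i^\top Qu_i=\wb^\top b .
\]
Then $-\tfrac12c^\top b+\tfrac12\log\det(I+E)=-\tfrac12(c-\wb)^\top b+\tfrac12\{\log\det(I+E)-\tr[E]\}$, which is the second line of Eq.~\eqref{I:eq:ledger}.

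The only point requiring care, and the one I expect to be the main obstacle, is checking that everything is well defined. The factor $I+E=Q^{1/2}(Q^{-1}+H(y)-I)Q^{1/2}$ is positive definite because $H(y)\succ0$ on $\intK$ by Eq.~\eqref{I:eq:endpoint-diag}, so $\log\det(I+E)$ makes sense. One must also confirm that the coordinate change transforms both covariances consistently, so that the density ratio, which is coordinate-invariant up to the cancelled Jacobian, is unchanged.
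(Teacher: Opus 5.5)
Your proof is correct and follows the paper's argument essentially step for step. You use the normalized coordinates to get $G(y)-G(x)=\kappa(H(y)-I)$, extract $\log\det(I+E)$ by congruence with $Q^{1/2}$, reduce the quadratic term to $c^\top b$ via $d\eta^{2}\kappa/r^{2}=1$, identify $\tr[E]=\wb^\top b$ by cyclicity, and add and subtract $\tfrac12\tr[E]$; the only cosmetic differences are that you derive the density-ratio form first rather than last and explicitly check $I+E\succ0$, which the paper leaves implicit.
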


\begin{proof}
In the normalized coordinates $G(z)=\kappa H(z)+\kappa\creg$, so the
regularizer part is constant in $z$ and
$G(y)-G(x)=\kappa(H(y)-I)$. We verify the determinant, trace, and
quadratic terms separately.

\emph{(i) Determinant part.} Since $Q=(I+\creg)^{-1}$,
cancelling the common factor $\kappa$, applying congruence with
$(I+\creg)^{-1/2}$ inside the determinant, and using
Eq.~\eqref{I:eq:E-def} give
\begin{equation}
\log\frac{\det G(y)}{\det G(x)}
=\log\frac{\det(H(y)+\creg)}{\det(I+\creg)}
=\log\det(I+Q^{1/2}(H(y)-I)Q^{1/2})
=\log\det(I+E) .
\label{I:eq:det-part}
\end{equation}

\emph{(ii) Trace of $E$.} By Eqs.~\eqref{I:eq:E-def}
and~\eqref{I:eq:endpoint-diag}, cyclicity of the trace, expansion over the
rows of $U$, and the definition of $\wb$ give
\begin{equation}
\tr[E]=\tr[Q\,U^{\top}\Diag(b)\,U]
=\sum_{i}b_{i}\,u_{i}^{\top}Qu_{i}=\wb^{\top}b .
\label{I:eq:trE}
\end{equation}

\emph{(iii) Quadratic part.} With $y-x=\eta h$ and
$d\eta^{2}\kappa/r^{2}=1$ from Eq.~\eqref{I:eq:step}, the identity
$G(y)-G(x)=\kappa(H(y)-I)$ and the quadratic identity in
Lemma~\ref{I:lem:endpoint} give
\begin{equation}
\frac{d}{r^{2}}\,(y-x)^{\top}(G(y)-G(x))(y-x)
=\frac{d\eta^{2}\kappa}{r^{2}}\,h^{\top}(H(y)-I)h
=c^{\top}b .
\label{I:eq:quad-part}
\end{equation}

Combining Eq.~\eqref{I:eq:det-part}, \eqref{I:eq:trE}, and
\eqref{I:eq:quad-part}, and adding and subtracting $\tfrac12\tr[E]$, gives
\[
\Lambda_{q}
=\frac12\log\det(I+E)-\frac12\,c^{\top}b
=\frac12\{\log\det(I+E)-\tr[E]\}
+\frac12(\wb^{\top}b-c^{\top}b),
\]
which is Eq.~\eqref{I:eq:ledger}.

For the last claim, the Gaussian densities satisfy
\[
q_{z}(\cdot)\propto\det G(z)^{1/2}
\exp(-\tfrac{d}{2r^{2}}\norm{\cdot-z}_{G(z)}^{2}).
\]
Consequently, taking the logarithm of the density ratio gives
\[
\begin{aligned}
\log\frac{q_{y}(x)}{q_{x}(y)}
&=\frac12\log\frac{\det G(y)}{\det G(x)}
-\frac{d}{2r^{2}}
(\norm{x-y}_{G(y)}^{2}-\norm{y-x}_{G(x)}^{2})
=\Lambda_{q}(x,y).
\end{aligned}
\]
The normalizing constants and any Jacobians of the affine substitution
cancel in the ratio.
\end{proof}

\begin{remark}[Why the complete ratio]\label{I:rem:why-complete}
Eq.~\eqref{I:eq:ledger} is the key cancellation. The potentially difficult
uncentered term $\wb^{\top}b$ is \emph{simultaneously} the linear part
of the determinant and the mean of the random quadratic term. An
analysis that certifies a lower tail for the determinant ratio and
concentration for the quadratic form separately---in particular, any
route through (regularized) average self-concordance---must estimate
$\wb^{\top}b$ on its own and therefore requires strictly stronger
Lewis-weight inputs than are available at second order. After the cancellation, the first term of
Eq.~\eqref{I:eq:ledger} is centered to leading order (its further reduction to
the exactly centered $Z=(c-\wb)^{\top}a$ is
Lemma~\ref{I:lem:remainder-quad}), and the bracket is a finite determinant
remainder (Lemma~\ref{I:lem:remainder-det}). The estimates of
Sections~\ref{I:sec:stability}, \ref{I:sec:hessian} and \ref{I:sec:stein} bound both at constant
scale.
\end{remark}

\subsection{Uniform local stability}\label{I:sec:stability}

The identities of Section~\ref{I:sec:cancellation} are exact but involve the
endpoint weights $w_{y}$. This section constructs a convex
high-probability tube of steps $h$ on which all endpoint
quantities---weights, slacks, metric---are uniformly comparable to
their base-point values, with polylogarithmic constants, and then
proves the two finite remainders that reduce the ledger identity
Eq.~\eqref{I:eq:ledger} to the centered fluctuation $Z=(c-\wb)^{\top}a$.
Throughout we work in the normalized coordinates of
Section~\ref{I:sec:cancellation}.

Let
\[
k_{i}:=\frac{u_{i}}{\sqrt{w_{i}}},\qquad \norm{k_{i}}=1 .
\]
For a fixed small failure probability $\eone$, choose the auxiliary
tail level
\begin{equation*}
\delta_{0}:=\eone^{2}d^{-3},
\end{equation*}
define the deterministic radii
\begin{equation*}
R_{0}:=\sqrt{d}+\sqrt{2\log(2/\delta_{0})},
\qquad
\tau_{0}:=\sqrt{2\log(4n/\delta_{0})},
\end{equation*}
and the convex good set
\begin{equation*}
\Ktube:=\{g\in\R^{d}:\ \norm{g}\le R_{0},\ \
\max_{i}|k_{i}^{\top}g|\le\tau_{0}\}.
\end{equation*}

\begin{lemma}\label{I:lem:tube-probability}
$\Pr[h\in\Ktube]\ge1-\delta_{0}$ for $h\sim N(0,Q)$.
\end{lemma}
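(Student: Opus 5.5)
The plan is a union bound over the two defining constraints of $\Ktube$, each costing at most $\delta_{0}/2$. Write $h=Q^{1/2}\xi$ with $\xi\sim N(0,I_d)$. Since $0\preceq Q\preceq I$, we have $\opnorm{Q^{1/2}}\le1$. Every constraint is then a statement about a $1$-Lipschitz function of $\xi$, and Lemma~\ref{I:lem:gaussian-concentration} applies directly without using the structure of $Q$.

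\emph{Norm constraint.} The map $F(\xi)=\norm{Q^{1/2}\xi}$ is $1$-Lipschitz. By Jensen, $\E F\le(\E\norm{Q^{1/2}\xi}^{2})^{1/2}=\sqrt{\tr Q}\le\sqrt d$. The one-sided bound of Lemma~\ref{I:lem:gaussian-concentration} with $t=\sqrt{2\log(2/\delta_{0})}$ then gives
\[
\Pr[\norm h>R_{0}]\le\Pr[F-\E F\ge t]\le e^{-t^{2}/2}=\delta_{0}/2 .
\]

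\emph{Coordinate constraints.} For each $i$, the scalar $k_i^{\top}h$ is a centered Gaussian with variance $k_i^{\top}Qk_i\le\norm{k_i}^{2}=1$. The Gaussian clause of Lemma~\ref{I:lem:gaussian-concentration} gives
\[
\Pr[|k_i^{\top}h|>\tau_{0}]\le2e^{-\tau_{0}^{2}/2}=2\cdot\frac{\delta_{0}}{4n}=\frac{\delta_{0}}{2n}.
\]
If a row has $w_i$ such that the variance is $0$, the event is empty and the bound is trivial. A union bound over the $n$ rows gives total failure probability at most $\delta_{0}/2$.

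Adding the two failure probabilities gives $\Pr[h\notin\Ktube]\le\delta_{0}$. No step is a genuine obstacle. The only care needed is the Jensen step $\E\norm h\le\sqrt{\tr Q}\le\sqrt d$, which uses $Q\preceq I$. One should also check that the radii $R_{0},\tau_{0}$ were defined with exactly the constants $2/\delta_{0}$ and $4n/\delta_{0}$ that the two tail bounds produce, and they were. Positivity $w_i>0$ from Eq.~\eqref{I:eq:lewis-basic} guarantees that $k_i$ is well defined with unit norm.
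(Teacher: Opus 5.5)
Your proof is correct and is essentially the paper's argument: Gaussian Lipschitz concentration for $\xi\mapsto\norm{Q^{1/2}\xi}$ with the Jensen bound $\E\norm{h}\le\sqrt{\tr Q}\le\sqrt d$, plus a union bound over the $n$ scalar tails $k_i^{\top}h$ (variance at most $1$). Each of the two failure probabilities is $\delta_0/2$. The zero-variance caveat is unnecessary, since $Q=(I+\creg)^{-1}\succ0$ and the Gaussian tail bound in Lemma~\ref{I:lem:gaussian-concentration} already covers any variance at most $1$.
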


\begin{proof}
Write $h=Q^{1/2}\xi$ with $\xi$ standard. Since $Q\preceq I$, each
$k_{i}^{\top}h$ is Gaussian with variance $k_{i}^{\top}Qk_{i}\le1$, and
$\E[\norm{h}]\le(\E[\norm{h}^{2}])^{1/2}=(\tr[Q])^{1/2}\le\sqrt{d}$. The map
$\xi\mapsto\norm{Q^{1/2}\xi}$ is $1$-Lipschitz, so Gaussian
Lipschitz concentration (Lemma~\ref{I:lem:gaussian-concentration}) gives
$\Pr[\norm{h}>R_{0}]\le e^{-\log(2/\delta_{0})}=\delta_{0}/2$, and a
union bound using the centered-Gaussian tail in the same lemma gives
$\Pr[\max_{i}|k_{i}^{\top}h|>\tau_{0}]
\le2n\,e^{-\tau_{0}^{2}/2}=\delta_{0}/2$.
\end{proof}

All estimates will be proved on the closed Euclidean one-neighborhood
$\Ktubep$ of $\Ktube$ (this slack is what the cutoff of
Section~\ref{I:sec:stein} needs). Set
\begin{equation*}
R_{1}:=R_{0}+1,\qquad
\tau_{1}:=\tau_{0}+1,\qquad
M_{0}:=\frac{R_{1}}{\sqrt{d}} .
\end{equation*}
Every $g\in\Ktubep$ obeys $\norm{g}\le R_{1}$ and
$\max_{i}|k_{i}^{\top}g|\le\tau_{1}$ (since $\norm{k_{i}}=1$). For
$\delta_{0}=\eone^{2}d^{-3}$,
\begin{equation}
M_{0}=O_{\eone}(1),\qquad
\tau_{1}^{2}=O(1+\log(nd/\eone)),\qquad
M_{0}\ge1 .
\label{I:eq:M0-size}
\end{equation}

The bootstrap requires a first-derivative bound with the
\emph{row-dependent} scaling $\sqrt{w_{z,i}}$, which we now record. It
sharpens the aggregate bound $\opnorm{\Rr_{z}}\le p/2$ of
Lemma~\ref{I:lem:lewis-derivative} row by row; part (ii) is stated for a
general coefficient because Section~\ref{I:sec:hessian} needs it for
$I+\tfrac12N_{z}$ as well as for $\Rr_{z}=I+\beta N_{z}$.

\begin{lemma}[Rowwise resolvent bound]\label{I:lem:rowwise}
Let $z\in\intK$, recall
$\Bh_{z}=B_{z}H(z)^{-1/2}$, and set $L_{p}:=\tfrac{p}{2}(1+\tfrac{p}{2})$.
Then for every row index $i$:
\begin{enumerate}[label=(\roman*),leftmargin=2.4em]
\item $\norm{e_{i}^{\top}\Rr_{z}\Bh_{z}}\le L_{p}\sqrt{w_{z,i}}$;
\item for every $\gamma\in[0,1]$,
$\norm{e_{i}^{\top}(I+\gamma N_{z})\Bh_{z}}\le L_{p}^{+}\sqrt{w_{z,i}}$
with $L_{p}^{+}:=1+2L_{p}+p$.
\end{enumerate}
Consequently, in arbitrary coordinates, for all $v\in\R^{d}$,
\begin{equation}
|e_{i}^{\top}(I+\beta N_{z})B_{z}v|
\le L_{p}\sqrt{w_{z,i}}\;\norm{v}_{H(z)} .
\label{I:eq:rowwise-metric}
\end{equation}
\end{lemma}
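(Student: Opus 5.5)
The plan is to avoid any spectral argument on $\Rr_z$ acting on a single row. Instead, rewrite the resolvent through the row map $\Phi_z$, whose $i$-th component functional has norm exactly $\sqrt{w_{z,i}}$, and then use only the operator bounds $\opnorm{\Phi_z}\le1$, $\opnorm{\Rr_z}\le p/2$ and $\opnorm{\Bh_z}\le1$ from Lemma~\ref{I:lem:lewis-derivative} and the remarks after Eq.~\eqref{I:eq:phi-contraction}.

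Set $M:=\Phi_z\Phi_z^{*}$, so that $\Lb_z=I-M$. Then
\[
I-c_p\Lb_z=(1-c_p)I+c_pM=\tfrac{2}{p}I+c_pM .
\]
Multiplying $\Rr_z(\tfrac2pI+c_pM)=I$ through gives the exact identity
\[
\Rr_z=\tfrac p2\,(I-c_p M\Rr_z).
\]

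For (i), take the $i$-th row of $\Rr_z\Bh_z$ using this identity. There are two contributions.
\begin{itemize}
\item The first is $\tfrac p2 e_i^{\top}\Bh_z$. Its norm is $\tfrac p2 w_{z,i}^{1/2+1/p}\le\tfrac p2\sqrt{w_{z,i}}$, because $w_{z,i}\le1$.
\item The second is $\tfrac p2 c_p\, e_i^{\top}\Phi_z\Phi_z^{*}\Rr_z\Bh_z$. I would test it against a unit vector $u$. Then $e_i^{\top}\Phi_z\bigl(\Phi_z^{*}\Rr_z\Bh_z u\bigr)$ is the $i$-th component functional of $\Phi_z$, which has Frobenius norm $\sqrt{w_{z,i}}$, applied to a symmetric matrix of Frobenius norm at most $\opnorm{\Phi_z^{*}}\opnorm{\Rr_z}\opnorm{\Bh_z}\le p/2$.
\end{itemize}
Since $c_p\le1$, adding the two bounds gives $\tfrac p2(1+\tfrac p2)\sqrt{w_{z,i}}=L_p\sqrt{w_{z,i}}$.

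For (ii), write $N_z=2\Lb_z\Rr_z=2\Rr_z-2M\Rr_z$, using that $\Lb_z$ commutes with $\Rr_z$. This splits $e_i^{\top}(I+\gamma N_z)\Bh_z$ into three pieces, each bounded by the devices from (i):
\begin{itemize}
\item $e_i^{\top}\Bh_z$ has norm at most $\sqrt{w_{z,i}}$;
\item $2\gamma e_i^{\top}\Rr_z\Bh_z$ has norm at most $2L_p\sqrt{w_{z,i}}$, by part (i);
\item $2\gamma e_i^{\top}M\Rr_z\Bh_z$ has norm at most $2\cdot\tfrac p2\sqrt{w_{z,i}}$, by the component-functional argument.
\end{itemize}
Using $\gamma\le1$, the total is at most $(1+2L_p+p)\sqrt{w_{z,i}}=L_p^{+}\sqrt{w_{z,i}}$.

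Finally, Eq.~\eqref{I:eq:rowwise-metric} follows from Eq.~\eqref{I:eq:resolvent-identity}, $I+\beta N_z=\Rr_z$, together with $B_zv=\Bh_z\,(H(z)^{1/2}v)$ and $\norm{H(z)^{1/2}v}=\norm{v}_{H(z)}$; apply (i) to the vector $H(z)^{1/2}v$.

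The one step needing care is the exact identity $\Rr_z=\tfrac p2(I-c_pM\Rr_z)$. Without it, only the aggregate bound $\opnorm{\Rr_z}\le p/2$ is available, and that loses the row factor $\sqrt{w_{z,i}}$. Everything else is bookkeeping with the norms already recorded after Eq.~\eqref{I:eq:phi-contraction}.
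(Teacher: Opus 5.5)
Your proposal is correct and follows the paper's proof essentially line by line. It uses the same resolvent identity $(1-c_p)\Rr_z=I-c_p\Phi_z\Phi_z^{*}\Rr_z$, the same two-term split of the $i$-th row with the component functional of norm $\sqrt{w_{z,i}}$ applied to a matrix of Frobenius norm at most $p/2$, the same decomposition $N_z=2\Rr_z-2\Phi_z\Phi_z^{*}\Rr_z$ for (ii), and the same substitution $B_zv=\Bh_z H(z)^{1/2}v$ for Eq.~\eqref{I:eq:rowwise-metric}.
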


\begin{proof}
(i) Since $\Rr_{z}$ is a function of $\Lb_{z}$, it commutes with
$\Phi_{z}\Phi_{z}^{*}=I-\Lb_{z}$, and from
$\Rr_{z}(I-c_{p}\Lb_{z})=I$ we get the resolvent identity
\[
(1-c_{p})\,\Rr_{z}=I-c_{p}\,\Phi_{z}\Phi_{z}^{*}\Rr_{z} .
\]
Multiply by $\Bh_{z}$ on the right, take the $i$-th row, and use
$(1-c_{p})^{-1}=p/2$, together with
$\fnorm{(\Phi_z)_i}=\sqrt{w_{z,i}}$, $\opnorm{\Phi_z}\le1$,
$\opnorm{\Rr_z}\le p/2$, and $\opnorm{\Bh_z}\le1$, to obtain
\[
\norm{e_{i}^{\top}\Rr_{z}\Bh_{z}}
\le\frac{p}{2}(\norm{(\Bh_{z})_{i}}
+c_{p}\,\norm{e_{i}^{\top}\Phi_{z}}\;
\opnorm{\Phi_{z}^{*}\Rr_{z}\Bh_{z}})
\le\frac{p}{2}(w_{z,i}^{1/2+1/p}
+c_{p}\sqrt{w_{z,i}}\cdot\frac{p}{2})
\le L_{p}\sqrt{w_{z,i}},
\]
where the last inequality also uses $c_p\le1$, $w_{z,i}^{1/p}\le1$,
and the definition of $L_p$.

(ii) Since $\Lb_{z}$ commutes with $\Rr_{z}$,
$N_{z}=2\Lb_{z}\Rr_{z}=2(I-\Phi_{z}\Phi_{z}^{*})\Rr_{z}
=2\Rr_{z}-2\Phi_{z}\Phi_{z}^{*}\Rr_{z}$. Part~(i) and the same
operator-norm bounds give
\[
\norm{e_{i}^{\top}N_{z}\Bh_{z}}
\le2\norm{e_{i}^{\top}\Rr_{z}\Bh_{z}}
+2\norm{e_{i}^{\top}\Phi_{z}}\,\opnorm{\Phi_{z}^{*}\Rr_{z}\Bh_{z}}
\le(2L_{p}+p)\sqrt{w_{z,i}} .
\]

Consequently,
$\norm{e_{i}^{\top}(I+\gamma N_{z})\Bh_{z}}
\le\norm{(\Bh_{z})_{i}}+\gamma\,\norm{e_{i}^{\top}N_{z}\Bh_{z}}
\le(1+2L_{p}+p)\sqrt{w_{z,i}}$.
Finally, Eq.~\eqref{I:eq:rowwise-metric} follows from (i) via
$(I+\beta N_{z})B_{z}v=\Rr_{z}\Bh_{z}\,H(z)^{1/2}v$ and
Eq.~\eqref{I:eq:resolvent-identity}.
\end{proof}

Differentiating Eq.~\eqref{I:eq:a-def} in $h$, including the factor $\eta$
from $y=x+\eta h$, gives the exact Jacobian: for $y=x+\eta g$ feasible,
\begin{equation}
D_{h}a(g)[v]=-\eta\,W_{y}^{-1/2}\,(I+\beta N_{y})\,B_{y}\,v .
\label{I:eq:jacobian}
\end{equation}
(The two contributions are $\beta D\log w_{y}[\eta v]$ from
Lemma~\ref{I:lem:lewis-derivative} and
$-\eta(A_{y}v)_{i}=-\eta(W_{y}^{-1/2}B_{y}v)_{i}$ from the slack
factor; they combine into Eq.~\eqref{I:eq:jacobian} by
Eq.~\eqref{I:eq:resolvent-identity}.)

We make the bootstrap quantitative. Enlarge $\kappa$, if necessary, so
that
\begin{equation}
\vartheta:=L_{p}M_{0}\rho\le\frac{1}{64},
\qquad
\eta\,\tau_{1}\le\frac{1}{64} .
\label{I:eq:bootstrap-smallness}
\end{equation}
This still requires only a polylogarithmic $\kappa$, by
Eq.~\eqref{I:eq:M0-size}.

\begin{lemma}[Stability on the tube]\label{I:lem:stability}
Let $g\in\Ktubep$ and $0\le t\le1$. Then the entire segment
$y_{t}=x+t\eta g$ lies in $\intK$, and:
\begin{align}
\norm{a(tg)}_{\infty}&\le2L_{p}M_{0}\rho\le\tfrac{1}{32},
&
\norm{D_{h}a(tg)}_{2\to\infty}&\le e^{1/32}L_{p}\,\eta,
\label{I:eq:stability-inf}\\
\norm{a(tg)}_{W}&\le e^{5/32}\,\tfrac{p}{2}\,M_{0}\rho,
&
\opnorm{W^{1/2}D_{h}a(tg)}&\le e^{5/32}\,\tfrac{p}{2}\,\eta .
\label{I:eq:stability-W}
\end{align}
Moreover
\begin{equation}
e^{-1/4}\,W\preceq W_{y_{t}}\preceq e^{1/4}\,W,
\qquad
e^{-2\norm{a}_{\infty}}I\preceq H(y_{t})\preceq
e^{2\norm{a}_{\infty}}I,
\label{I:eq:weight-metric-comparison}
\end{equation}
and the slack factors satisfy
\begin{equation}
|\zeta_{i}|=w_{i}^{1/p}\,|k_{i}^{\top}g|\le\tau_{1},
\qquad
|\eta\zeta_{i}|\le\eta\tau_{1}\le\tfrac{1}{64} .
\label{I:eq:slack-bound}
\end{equation}
\end{lemma}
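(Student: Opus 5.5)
This is a continuity (bootstrap) argument along the segment. Fix $g\in\Ktubep$ and let $T\subseteq[0,1]$ be the set of $t$ for which $y_s\in\intK$ for all $s\le t$ and the weak a priori bound $\norm{a(sg)}_\infty\le 1/16$ holds for all $s\le t$. At $t=0$ we have $a=0$ and $y_0=x$, so $T$ is nonempty. It is closed relative to the feasible range because $a$ is continuous, by the smoothness in Lemma~\ref{I:lem:lewis-derivative}. We will show that on $T$ the stronger bound $\norm{a}_\infty\le1/32$ holds. Since slacks stay positive, $T$ is then also open, hence $T=[0,1]$.

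\textbf{Slack facts.} These need no bootstrap. Since $\zeta=A_xh$ and $U=W^\beta A_x$, we have $\zeta_i=w_i^{-\beta}u_i^\top g=w_i^{1/2-\beta}k_i^\top g=w_i^{1/p}k_i^\top g$. With $w_i\le1$ and $g\in\Ktubep$ this gives Eq.~\eqref{I:eq:slack-bound}. Hence $s_{y_t}=S_x(\mathbf 1+t\eta\zeta)$ has every entry at least $(63/64)s_x>0$, so the whole segment lies in $\intK$ unconditionally. This settles feasibility and removes that clause from the bootstrap.

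\textbf{Comparisons on $T$.} For $s\in T$, the a priori bound makes $W_{y_s}$ comparable to $W$. Indeed $\beta\log(w_{y,i}/w_i)=a_i+\log(1+\eta\zeta_i)$, so
\[
|\log(w_{y,i}/w_i)|\le 4(1/16+1/63)\le 1/4 .
\]
This gives the first half of Eq.~\eqref{I:eq:weight-metric-comparison}. The metric comparison follows directly from Eq.~\eqref{I:eq:endpoint-diag}, $H(y)=U^\top\Diag(e^{2a})U$, together with $U^\top U=I$.

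\textbf{Derivative bounds.} By Eq.~\eqref{I:eq:jacobian} and Eq.~\eqref{I:eq:rowwise-metric} applied at $y_s$,
\[
|D_ha(sg)[v]_i|\le \eta L_p\norm{v}_{H(y_s)}.
\]
Combining this with $\norm{v}_{H(y_s)}\le e^{\norm a_\infty}\norm v\le e^{1/32}\norm v$ gives the $2\to\infty$ bound. For the $W$-weighted operator bound, use $W\preceq e^{1/4}W_{y_s}$. Then
\[
\norm{W^{1/2}D_ha[v]}\le e^{1/8}\eta\,\norm{\Rr_{y_s}\Bh_{y_s}H(y_s)^{1/2}v}\le e^{1/8}\eta\,\tfrac p2\, e^{1/32}\norm v,
\]
using $\opnorm{\Rr}\le p/2$ from Lemma~\ref{I:lem:lewis-derivative} and $\opnorm{\Bh}\le1$; the constant is $e^{5/32}$.

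\textbf{Integrating.} Since $a(0)=0$, we have $a(tg)=\int_0^t D_ha(sg)[g]\,ds$. Using $\norm g\le R_1$ and $\eta R_1=\rho M_0$, this gives
\[
\norm{a(tg)}_\infty\le e^{1/32}L_pM_0\rho\le 2\vartheta\le 1/32
\]
by Eq.~\eqref{I:eq:bootstrap-smallness}, and likewise $\norm{a(tg)}_W\le e^{5/32}\tfrac p2M_0\rho$. The conclusion $\norm a_\infty\le1/32<1/16$ is strictly stronger than the a priori bound. By continuity this closes the bootstrap, so all bounds hold on $[0,1]$.

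\textbf{Main obstacle.} The delicate point is the circularity: the derivative bounds need the endpoint weight comparison, which itself needs the bound on $a$. This is resolved only because the rowwise estimate Eq.~\eqref{I:eq:rowwise-metric} carries the factor $\sqrt{w_{y,i}}$, which cancels exactly against the $W_y^{-1/2}$ in the Jacobian. Without it, an $n$-dependence would enter the bound.
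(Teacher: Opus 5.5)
Your route is the paper's: unconditional feasibility from $s_{y_t}=S_x(\mathbf 1+t\eta\zeta)$, then the Jacobian \eqref{I:eq:jacobian} combined with the rowwise bound \eqref{I:eq:rowwise-metric} so that $\sqrt{w_{y,i}}$ cancels $W_{y}^{-1/2}$, then the weight and metric comparisons and integration in the $W$-norm. The only structural difference is how you close the circularity. You run a continuity argument with an a priori threshold $\norm{a}_\infty\le 1/16$. The paper instead integrates the differential inequality for $a_\infty(t)=\norm{a(tg)}_\infty$ directly: from $D^{+}a_\infty\le\vartheta e^{a_\infty}$ it gets $(e^{-a_\infty})'\ge-\vartheta$. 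That needs no threshold and yields $a_\infty(t)\le-\log(1-\vartheta t)$.

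Both devices work, but your bookkeeping inside $T$ is wrong in two places.
\begin{itemize}
\item The claimed inequality fails: $4(1/16+1/63)=1/4+4/63>1/4$. So the a priori bound does not give $e^{\pm1/4}$ weight comparability on $T$.
\item On $T$ you only know $\norm{a}_\infty\le1/16$, so using $\norm{v}_{H(y_s)}\le e^{1/32}\norm{v}$ there assumes the conclusion. Your weighted operator bound, which uses $W\preceq e^{1/4}W_{y_s}$, also inherits the first error.
\end{itemize}

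Both slips are fixed by reordering.
\begin{itemize}
\item Inside the bootstrap, use only $|\tfrac{d}{ds}a_i(sg)|\le\eta L_p e^{1/16}\norm{g}$, which needs no weight comparison. Integrating gives $\norm{a(tg)}_\infty\le e^{1/16}\vartheta\le2\vartheta\le\tfrac1{32}<\tfrac1{16}$. This closes the continuity argument and already gives the first bound in \eqref{I:eq:stability-inf}.
\item Only after $T=[0,1]$ is established should you derive the remaining bounds. First the weight comparison, now valid since $4(1/32+1/63)<1/4$, and the metric comparison. Then the $e^{1/32}$ and $e^{5/32}$ derivative constants and the $W$-norm integration, exactly as you wrote them.
\end{itemize}
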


\begin{proof}
\emph{Feasibility first.} At the base point,
$(A_{x}g)_{i}=w_{i}^{-\beta}u_{i}^{\top}g
=w_{i}^{1/p}\,k_{i}^{\top}g$, so
$|(A_{x}g)_{i}|\le\tau_{1}$, which is Eq.~\eqref{I:eq:slack-bound}. Hence
$1+t\eta(A_{x}g)_{i}\ge1-\eta\tau_{1}>0$ for all $t\in[0,1]$, so the
entire segment is feasible \emph{before} any endpoint weight comparison
is used, and $a(tg)$ is well-defined and smooth in $t$.

\emph{Dini bootstrap.} Let $a_{\infty}(t):=\norm{a(tg)}_{\infty}$. By
Eq.~\eqref{I:eq:endpoint-diag},
$H(y_{t})\preceq e^{2a_{\infty}(t)}I$. Combine
Eq.~\eqref{I:eq:jacobian} with Eq.~\eqref{I:eq:rowwise-metric} at $y_t$ to get
\[
|\frac{d}{dt}a_{i}(tg)|
=\eta\,w_{y_{t},i}^{-1/2}\,
|e_{i}^{\top}(I+\beta N_{y_{t}})B_{y_{t}}g|
\le L_{p}\eta\,\norm{g}_{H(y_{t})}
\le L_{p}\eta\,e^{a_{\infty}(t)}\norm{g}.
\]
Hence the upper Dini derivative satisfies
$D^{+}a_{\infty}(t)\le L_{p}\eta\,e^{a_{\infty}(t)}\norm{g}$. Scalar
comparison here is elementary: $a_\infty$, the maximum of finitely many
absolute values of smooth functions, is absolutely continuous, and at
almost every $t$ the displayed bound implies
$(e^{-a_\infty(t)})'\ge-\vartheta$. Since $a_{\infty}(0)=0$ and
$L_{p}\eta\norm{g}\le L_{p}\eta R_{1}=L_{p}M_{0}\rho=\vartheta$,
integration gives $e^{-a_\infty(t)}\ge1-\vartheta t$. Taking negative
logarithms and using $t\le1$ and $\vartheta\le1/64$ yields
\begin{equation}
a_{\infty}(t)\le-\log(1-\vartheta t)
\le-\log(1-\vartheta)\le2\vartheta\le\tfrac{1}{32},
\label{I:eq:ainf-bound}
\end{equation}
which proves the first bound in Eq.~\eqref{I:eq:stability-inf}; the second follows
from the same rowwise display and $e^{a_{\infty}}\le e^{1/32}$.

\emph{Weight comparison.} Since $\beta\ge\tfrac14$, Eq.~\eqref{I:eq:a-def}
and $|\log(1+u)|\le2|u|$ for $|u|\le1/64$, together with
Eq.~\eqref{I:eq:ainf-bound} and $\eta\tau_1\le1/64$, give
\[
|\log\frac{w_{y_{t},i}}{w_{i}}|
\le\frac{1}{\beta}(a_{\infty}(t)
+|\log(1+t\eta(A_{x}g)_{i})|)
\le4(\frac{1}{32}+2\eta\tau_{1})\le\frac14,
\]
which is the first
half of Eq.~\eqref{I:eq:weight-metric-comparison}; the second half is
immediate from Eq.~\eqref{I:eq:endpoint-diag}.

\emph{Base-norm bounds.} We now integrate the Jacobian in the base
$W$-norm.  Eq.~\eqref{I:eq:jacobian} gives
\[
 \opnorm{I+\beta N_{y_t}}=\opnorm{\Rr_{y_t}}\le p/2,
 \qquad B_{y_t}^{\top}B_{y_t}\preceq H(y_t).
\]
The weight comparison gives
$\opnorm{W^{1/2}W_{y_t}^{-1/2}}\le e^{1/8}$.  Together with
Eq.~\eqref{I:eq:ainf-bound}, these estimates imply
\[
\norm{W^{1/2}D_{h}a(tg)[v]}
\le e^{1/8}\,\frac{p}{2}\,\eta\,\norm{v}_{H(y_{t})}
\le e^{1/8+1/32}\,\frac{p}{2}\,\eta\,\norm{v}
=e^{5/32}\,\frac{p}{2}\,\eta\,\norm{v}.
\]
This is the operator bound in Eq.~\eqref{I:eq:stability-W}.  Integrating along
$t\mapsto tg$ from $a(0)=0$ gives
$\norm{a(tg)}_{W}\le e^{5/32}\tfrac{p}{2}\,\eta\norm{g}t
\le e^{5/32}\tfrac{p}{2}M_{0}\rho$.
\end{proof}

It is the scaled quantity $\eta\tau_{1}$, rather than the unscaled
threshold $\tau_{1}$, that controls the slack ratio in
Eq.~\eqref{I:eq:slack-bound}; this is what makes all constants in
Lemma~\ref{I:lem:stability} dimension-free.

On the tube, the ledger identity Eq.~\eqref{I:eq:ledger} reduces to the
centered fluctuation
\begin{equation}
Z(h):=(c-\wb)^{\top}a
\label{I:eq:Z-def}
\end{equation}
up to two finite error terms, which we now bound. Recall from
Lemma~\ref{I:lem:stability} that on $\Ktubep$ we have
$c_{i}\le\tau_{1}^{2}w_{i}$ (since
$c_{i}=w_{i}(k_{i}^{\top}h)^{2}$), $\wb_{i}\le\norm{u_{i}}^{2}=w_{i}$,
and $\norm{a}_{\infty}\le\tfrac{1}{32}$.

\begin{lemma}[Quadratic remainder]\label{I:lem:remainder-quad}
On $\Ktubep$,
\begin{equation}
|(c-\wb)^{\top}(b-2a)|
\le C\,(\tau_{1}^{2}+1)\,\norm{a}_{W}^{2}
\le C_{p}\,M_{0}^{2}\rho^{2}\,(\tau_{1}^{2}+1) .
\label{I:eq:remainder-quad}
\end{equation}
\end{lemma}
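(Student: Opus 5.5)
The plan is a coordinatewise Taylor bound followed by Cauchy--Schwarz in the $W$-weighted norm. By definition $b_i=e^{2a_i}-1$, so $b_i-2a_i=e^{2a_i}-1-2a_i$. Taylor's theorem with the Lagrange remainder gives $|e^{u}-1-u|\le \tfrac12 u^{2}e^{|u|}$. Lemma~\ref{I:lem:stability} gives $\norm{a}_\infty\le\tfrac1{32}$ on $\Ktubep$, so $|2a_i|\le\tfrac1{16}$, and hence
\[
|b_i-2a_i|\le 2e^{1/16}a_i^{2}\le 3a_i^{2}
\]
holds uniformly in $i$. This is the only place where the endpoint weights enter, and it enters only through the already-established sup-norm control.

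Next, bound the weights $|c_i-\wb_i|$. On $\Ktubep$ we have $c_i=w_i(k_i^\top h)^2\le\tau_1^2w_i$ and $0\le\wb_i=u_i^\top Qu_i\le\norm{u_i}^2=w_i$, using $Q\preceq I$. Therefore $|c_i-\wb_i|\le(\tau_1^2+1)w_i$, and summing gives
\[
|(c-\wb)^\top(b-2a)|\le\sum_i(\tau_1^2+1)w_i\cdot 3a_i^2=3(\tau_1^2+1)\norm{a}_W^2 .
\]
This is the first inequality of Eq.~\eqref{I:eq:remainder-quad} with $C=3$.

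For the second inequality, insert the base-norm bound $\norm{a(tg)}_W\le e^{5/32}\tfrac p2M_0\rho$ from Eq.~\eqref{I:eq:stability-W} at $t=1$. This gives $\norm{a}_W^2\le e^{5/16}\tfrac{p^2}{4}M_0^2\rho^2$, and $e^{5/16}p^2\cdot\tfrac34$ is a fixed polynomial in $p$, so it can be absorbed into $C_p$.

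I expect no serious obstacle. The one point to check is that the sup-norm bound and the $W$-norm bound are used at the same endpoint $y=x+\eta h$ with $h\in\Ktubep$, which is the $t=1$ case of Lemma~\ref{I:lem:stability}. The segment is feasible there, so $a$ and $b$ are well defined and Lemma~\ref{I:lem:endpoint} applies. The estimate uses the weighted norm $\norm{a}_W$ rather than $\norm{a}_\infty^2\sum_i w_i=d\norm{a}_\infty^2$, and that is exactly what keeps it free of $d$.
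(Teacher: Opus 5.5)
Your proof is correct and is essentially the paper's argument: the paper bounds $|c_i-\wb_i|\le(\tau_1^2+1)w_i$ on the collar, uses a second-order Taylor bound $|e^{2a_i}-1-2a_i|\le Ca_i^2$ valid for $|a_i|\le\tfrac1{32}$, sums to obtain $\norm{a}_{W}^{2}$, and then applies Eq.~\eqref{I:eq:stability-W}. Your version differs only in making the constants explicit ($C=3$ and $C_p=\tfrac34e^{5/16}p^2$).
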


\begin{proof}
$|c_{i}-\wb_{i}|\le(\tau_{1}^{2}+1)w_{i}$ and
$|b_{i}-2a_{i}|=|e^{2a_{i}}-1-2a_{i}|\le Ca_{i}^{2}$ for
$|a_{i}|\le\tfrac1{32}$. Summing over $i$ and applying
Eq.~\eqref{I:eq:stability-W} proves the claim.
\end{proof}

\begin{lemma}[Hadamard--Frobenius bound]\label{I:lem:frobenius}
Let $P_{Q}:=UQU^{\top}$. Then, on $\Ktubep$,
\begin{equation*}
\fnorm{E}^{2}=b^{\top}(P_{Q}\circ P_{Q})b
\le b^{\top}Wb\le C\norm{a}_{W}^{2}\le C_{p}M_{0}^{2}\rho^{2}.
\end{equation*}
\end{lemma}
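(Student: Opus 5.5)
The proof is a direct computation of $\fnorm{E}^{2}$ in three steps: an exact identity, a Schur-product comparison, and the stability bounds.

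\emph{Step 1 (identity).} By Eq.~\eqref{I:eq:endpoint-diag}, $E=Q^{1/2}U^{\top}\Diag(b)UQ^{1/2}=\sum_i b_i\,(Q^{1/2}u_i)(Q^{1/2}u_i)^{\top}$. Expanding $\tr[E^{2}]$ gives
\[
\fnorm{E}^{2}=\sum_{i,j}b_ib_j\,(u_i^{\top}Qu_j)^{2}=\sum_{i,j}b_ib_j\,(P_Q)_{ij}^{2}=b^{\top}(P_Q\circ P_Q)b .
\]

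\emph{Step 2 (Hadamard comparison).} This is the only step needing an idea. Since $Q\preceq I$, we have $0\preceq P_Q=UQU^{\top}\preceq UU^{\top}=P\preceq I$. By the Schur product theorem, $X\succeq0$ and $Y\succeq Y'\succeq0$ imply $X\circ Y\succeq X\circ Y'$. Apply this twice:
\[
P_Q\circ P_Q\preceq P_Q\circ I\preceq I\circ I,
\]
except that the last step should be replaced by using the diagonal directly. The sharper route is $P_Q\circ P_Q\preceq P_Q\circ I=\Diag((P_Q)_{ii})$, which follows from $I-P_Q\succeq0$ and $P_Q\succeq0$. Then $(P_Q)_{ii}=u_i^{\top}Qu_i=\wb_i\le\norm{u_i}^{2}=w_i$. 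Together these give $P_Q\circ P_Q\preceq W$, hence $\fnorm{E}^{2}\le b^{\top}Wb$.

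\emph{Step 3 (stability).} On $\Ktubep$, Lemma~\ref{I:lem:stability} gives $\norm{a}_\infty\le 1/32$. For $|a_i|\le 1/32$ we have $|b_i|=|e^{2a_i}-1|\le C|a_i|$, so
\[
b^{\top}Wb=\sum_i w_ib_i^{2}\le C\sum_i w_ia_i^{2}=C\norm{a}_W^{2}.
\]
Finally, Eq.~\eqref{I:eq:stability-W} gives $\norm{a}_W\le e^{5/32}\tfrac p2 M_0\rho$. Squaring yields $C\norm{a}_W^{2}\le C_pM_0^{2}\rho^{2}$ with $C_p=O(p^{2})$.

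The only nontrivial point is the Schur-product comparison in Step 2. Everything else is bookkeeping with the already-established stability bounds.
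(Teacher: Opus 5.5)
Your proof is correct. It differs from the paper's only in the Hadamard comparison step (your Step~2).

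\textbf{The paper's route.} The paper goes through the $Q$-free projection $P=UU^{\top}$ in two moves:
\begin{itemize}
\item It first shows $P_Q\circ P_Q\preceq P\circ P$, by writing the difference as $P\circ(P-P_Q)+(P-P_Q)\circ P_Q$.
\item It then proves $P\circ P\preceq W$ by a weighted Cauchy--Schwarz argument. That argument is really the rowwise contraction Eq.~\eqref{I:eq:phi-contraction}: in normalized coordinates, $P\circ P\preceq W$ is exactly $\Phi\Phi^{*}\preceq I$, i.e.\ $\Lb\succeq0$.
\end{itemize}

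\textbf{Your route.} You use $0\preceq P_Q\preceq I$ once, via $P_Q\circ(I-P_Q)\succeq0$. This gives $P_Q\circ P_Q\preceq\Diag(\wb)\preceq W$ in a single step.

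\textbf{Comparison.} Your route is shorter and does not need the $\Phi$-contraction. It also gives the slightly sharper intermediate bound $\fnorm{E}^{2}\le\sum_i\wb_i b_i^{2}$. Taking $Q=I$, it even reproves $P\circ P\preceq W$ (equivalently $\Lb\succeq0$) without Cauchy--Schwarz. The paper's route costs an extra step but ties the estimate to the contraction it uses throughout the Lewis calculus.

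\textbf{One presentational fix.} In a write-up, delete the abandoned chain ending in $I\circ I$. It is true but useless. State the comparison $P_Q\circ P_Q\preceq P_Q\circ I=\Diag(\wb)$ directly.
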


\begin{proof}
From Eqs.~\eqref{I:eq:E-def} and~\eqref{I:eq:endpoint-diag}, cyclicity of
the trace and expansion over the rows of $U$ give
\[
\fnorm{E}^{2}=\tr[QU^{\top}\Diag(b)UQU^{\top}\Diag(b)U]=\sum_{i,j}b_{i}b_{j}(u_{i}^{\top}Qu_{j})^{2}=b^{\top}(P_{Q}\circ P_{Q})b .
\]
Since $0\preceq P_{Q}\preceq P$, we use the following one-line proof of
the Schur product fact. If positive semidefinite matrices $C,D$ have
Gram representations $C_{ij}=\langle c_i,c_j\rangle$ and
$D_{ij}=\langle d_i,d_j\rangle$, then
$(C\circ D)_{ij}=\langle c_i\otimes d_i,c_j\otimes d_j\rangle$, so
$C\circ D\succeq0$. Applying this fact to the pairs
$P,P-P_Q$ and $P-P_Q,P_Q$ gives
\[
P\circ P-P_{Q}\circ P_{Q}=P\circ(P-P_{Q})+(P-P_{Q})\circ P_{Q}\succeq0,\qquad\text{so}\qquad P_{Q}\circ P_{Q}\preceq P\circ P .
\]
To prove $P\circ P\preceq\Diag(P)=W$, fix any $x\in\R^{n}$ and set
$M:=\sum_{i}x_{i}u_{i}u_{i}^{\top}$. Weighted Cauchy--Schwarz and
Eq.~\eqref{I:eq:phi-contraction} yield
\[
x^{\top}(P\circ P)x=\fnorm{M}^{2}=\sum_{i}x_{i}\,u_{i}^{\top}Mu_{i}\le\norm{x}_{W}(\sum_{i}\tfrac{(u_{i}^{\top}Mu_{i})^{2}}{w_{i}})^{1/2}\le\norm{x}_{W}\fnorm{M},
\]
Hence $\fnorm{M}\le\norm{x}_{W}$ and
$x^{\top}(P\circ P)x\le x^{\top}Wx$.

Finally,
$|b_{i}|\le C|a_{i}|$ for $|a_{i}|\le\tfrac1{32}$, so
$b^{\top}Wb\le C\norm{a}_{W}^{2}$, and Eq.~\eqref{I:eq:stability-W} concludes.
\end{proof}

\begin{lemma}[Determinant remainder]\label{I:lem:remainder-det}
On $\Ktubep$, every eigenvalue of $I+E$ belongs to an interval satisfying
\[
[e^{-2\norm{a}_{\infty}},\,e^{2\norm{a}_{\infty}}]
\subseteq[e^{-1/16},e^{1/16}].
\]
Moreover,
\begin{equation}
|\log\det(I+E)-\tr[E]|
\le C\,\fnorm{E}^{2}\le C_{p}\,M_{0}^{2}\rho^{2}.
\label{I:eq:remainder-det}
\end{equation}
\end{lemma}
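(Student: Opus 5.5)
The plan has three steps: locate the spectrum of $I+E$, apply a scalar Taylor bound eigenvalue by eigenvalue, and then invoke Lemma~\ref{I:lem:frobenius}.

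\emph{Step 1: the spectrum of $I+E$.} By Eq.~\eqref{I:eq:E-def},
\[
I+E=Q^{1/2}H(y)Q^{1/2}+(I-Q).
\]
Lemma~\ref{I:lem:stability}, Eq.~\eqref{I:eq:weight-metric-comparison}, gives $e^{-2\norm{a}_\infty}I\preceq H(y)\preceq e^{2\norm{a}_\infty}I$ on $\Ktubep$. Since congruence preserves Loewner order, this yields
\[
e^{-2\norm a_\infty}Q\preceq Q^{1/2}H(y)Q^{1/2}\preceq e^{2\norm a_\infty}Q .
\]
We have $0\preceq Q\preceq I$, so $I-Q\succeq0$, and $Q$ and $I-Q$ commute. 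Adding $I-Q$ therefore gives
\[
e^{-2\norm a_\infty}Q+(I-Q)\preceq I+E\preceq e^{2\norm a_\infty}Q+(I-Q).
\]
Each eigenvalue of the two outer matrices is a convex combination of $1$ and $e^{\pm2\norm a_\infty}$. Hence both outer matrices lie between $e^{-2\norm a_\infty}I$ and $e^{2\norm a_\infty}I$. Combined with $\norm{a}_\infty\le 1/32$ from Eq.~\eqref{I:eq:stability-inf}, this places every eigenvalue of $I+E$ in $[e^{-1/16},e^{1/16}]$.

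\emph{Step 2: the determinant remainder.} $E$ is symmetric, so let $\lambda_j$ denote its eigenvalues. Step 1 gives $|\lambda_j|\le e^{1/16}-1<1/8$. Then
\[
\log\det(I+E)-\tr[E]=\sum_j\bigl(\log(1+\lambda_j)-\lambda_j\bigr).
\]
The scalar bound $|\log(1+u)-u|\le u^2$ holds for $|u|\le1/2$. Summing it over $j$ gives the bound $C\sum_j\lambda_j^2=C\fnorm{E}^2$.

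\emph{Step 3: conclusion.} Lemma~\ref{I:lem:frobenius} gives $\fnorm{E}^2\le C_pM_0^2\rho^2$ on $\Ktubep$, which yields Eq.~\eqref{I:eq:remainder-det}.

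Every step is routine. The only care point is Step 1: we must not assume that $Q$ commutes with $H(y)$. The argument avoids this by using congruence for the $Q^{1/2}H(y)Q^{1/2}$ term and by using only the commuting pair $Q$, $I-Q$ when adding the identity part.
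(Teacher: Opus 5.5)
Your proof is correct and is essentially the paper's argument. The paper adds $e^{-2\norm{a}_{\infty}}\creg\preceq\creg\preceq e^{2\norm{a}_{\infty}}\creg$ to the two-sided bound on $H(y)$ and then conjugates by $Q^{1/2}=(I+\creg)^{-1/2}$, whereas you conjugate first and then add $I-Q=Q^{1/2}\creg Q^{1/2}$; the two orderings are the same computation, and your eigenvalue-wise scalar bound and appeal to Lemma~\ref{I:lem:frobenius} also match the paper.
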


\begin{proof}
Put $a_{0}=e^{-2\norm{a}_{\infty}}$ and
$b_{0}=e^{2\norm{a}_{\infty}}$. By
Eq.~\eqref{I:eq:weight-metric-comparison},
$a_{0}I\preceq H(y)\preceq b_{0}I$. Since
$a_{0}\le1\le b_{0}$ and $\creg\succeq0$,
adding $a_0\creg\preceq\creg$ and $\creg\preceq b_0\creg$ gives
\[
a_{0}(I+\creg)\preceq H(y)+\creg
\preceq b_{0}(I+\creg).
\]
Congruence by $Q^{1/2}=(I+\creg)^{-1/2}$ gives
$a_{0}I\preceq I+E\preceq b_{0}I$ without requiring $H(y)$ and
$\creg$ to commute. On that interval the eigenvalues $\lambda$ of $E$
obey $|\lambda|<1/8$. The fundamental theorem of calculus then gives
\[
|\log(1+\lambda)-\lambda|
=|\int_0^\lambda\frac{-t}{1+t}\,dt|
\le\frac{\lambda^2}{2(1-|\lambda|)}\le C\lambda^2.
\]
Summing gives
$|\log\det(I+E)-\tr[E]|\le C\fnorm{E}^{2}$, and Lemma~\ref{I:lem:frobenius}
concludes.
\end{proof}

One finite quadratic remainder thus controls \emph{all} orders of the
determinant: no expansion of $\log\det$ beyond second order, and no
convexity of $\log\det G$, is ever used.

\subsection{A row-dependent contraction of Lewis Hessians}\label{I:sec:hessian}

The Stein argument of Section~\ref{I:sec:stein} converts the mean of the
centered fluctuation $Z=(c-\wb)^{\top}a$ into a sum of second
derivatives of $\log w$ evaluated at \emph{row-dependent} directions of
magnitude $\sqrt{w_{i}}$. The following deterministic lemma---the only
genuinely new piece of Lewis-weight calculus in this paper---is the
estimate that bounds this sum by $C_{p}d$ rather than the naive
$C_{p}nd$ or $C_{p}d^{3/2}$.

This section takes place at an arbitrary fixed point of $\intK$,
normalized as in Section~\ref{I:sec:cancellation} so that $U^{\top}U=I$; all
derivatives are with respect to the spatial point in these coordinates.
(In Section~\ref{I:sec:stein} the lemma is applied at the random endpoint $y$,
in $y$-normalized coordinates.) We abbreviate
$\ell:=\log w$, and write $u_{i}^{\top}$ for the rows of $U$,
$B:=W^{1/p}U=W^{1/2}A$, and $\Phi,\Lb,\Rr,N$ for the objects of
Eq.~\eqref{I:eq:frame-objects} at this point, so that
$\opnorm{\Phi}\le1$, $\opnorm{B}\le1$, $\opnorm{\Rr}\le p/2$,
$\opnorm{N}\le p$, and rows satisfy $\norm{u_{i}}^{2}=w_{i}$,
$\norm{B_{i}}=w_{i}^{1/2+1/p}$.

The contraction proved in this subsection rests on three ingredients: an exact second-derivative
identity (Lemma~\ref{I:lem:second-derivative}), a frame-derivative computation
(Lemma~\ref{I:lem:horizontal-frame}), and an aggregate bound on the derivative
of the resolvent kernel (Lemma~\ref{I:lem:DN-aggregate}).

For a fixed direction $r$, write
\begin{equation}
\ell_{r}:=D\ell[r]=-W^{-1/2}NBr,
\qquad
s_{r}:=Ar=W^{-1/2}Br,
\qquad
t_{r}:=\tfrac12\ell_{r}-s_{r} .
\label{I:eq:first-derivative-objects}
\end{equation}

\begin{lemma}[Exact Hessian identity]\label{I:lem:second-derivative}
\begin{equation}
D^{2}\ell[r,r]
=-\tfrac12\,\ell_{r}^{\circ2}
-W^{-1/2}\,DN[r]\,Br
-W^{-1/2}\,N\,\Diag(t_{r})\,Br .
\label{I:eq:second-derivative}
\end{equation}
\end{lemma}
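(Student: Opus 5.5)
The plan is to differentiate the first-derivative formula of Lemma~\ref{I:lem:lewis-derivative} once more along the same constant direction $r$, using only the product rule. Since $r$ is fixed, $D^{2}\ell[r,r]=D(\ell_{r})[r]$, where $z\mapsto\ell_{r}(z)=-W_{z}^{-1/2}N_{z}B_{z}r$ is viewed as a function of the spatial point. Eq.~\eqref{I:eq:lewis-derivative} holds at every point of $\intK$ in any fixed linear coordinates. The objects $W_{z}$, $N_{z}$, and $B_{z}=W_{z}^{1/2}A_{z}$ are defined without reference to the normalization $U^{\top}U=I$: $N_{z}$ depends only on the Gram matrix $V_{z}V_{z}^{\top}=W_{z}^{\beta}A_{z}H(z)^{-1}A_{z}^{\top}W_{z}^{\beta}$. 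So I will differentiate in the fixed coordinates in which the base point is normalized, and evaluate at that point only at the end.

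First I will check that every factor is $C^{1}$. The weights $w_{z}$ are smooth by Lemma~\ref{I:lem:lewis-derivative}, and $A_{z}=S_{z}^{-1}\mA$ is smooth on $\intK$. Hence $H(z)^{-1}$ and the Gram matrix above are smooth, and so is $\Phi_{z}\Phi_{z}^{*}$, whose entries are $(v_{z,i}^{\top}v_{z,j})^{2}/\sqrt{w_{z,i}w_{z,j}}$. It follows that $\Lb_{z}$ is smooth. The resolvent $\Rr_{z}=(I-c_{p}\Lb_{z})^{-1}$ is smooth because its eigenvalues stay in $[1,p/2]$, so $N_{z}=2\Lb_{z}\Rr_{z}$ is smooth. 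I will not need a formula for $DN[r]$, since it remains as a named term in the identity.

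Next I will compute the two elementary derivatives. From $s_{z}=\mA z-b$ we get $D s_{z}[r]=\mA r=S_{z}A_{z}r$. Hence
\[
DA_{z}[r]=-\Diag(A_{z}r)A_{z}=-\Diag(s_{r})A_{z},
\]
which is the same sign convention used in Lemma~\ref{I:lem:ssc-ls}. From $D\log w[r]=\ell_{r}$ we get $D(W^{\gamma})[r]=\gamma W^{\gamma}\Diag(\ell_{r})$ for every real $\gamma$. Combining these,
\[
D(Br)[r]=\tfrac12\Diag(\ell_{r})W^{1/2}A r-W^{1/2}\Diag(s_{r})Ar=\Diag(t_{r})Br,
\]
using that diagonal matrices commute and the definition of $t_{r}$ in Eq.~\eqref{I:eq:first-derivative-objects}.

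Finally, the product rule applied to $-W^{-1/2}\cdot N\cdot Br$ gives three terms. The first is $\tfrac12\Diag(\ell_{r})W^{-1/2}NBr$, which equals $-\tfrac12\Diag(\ell_{r})\ell_{r}=-\tfrac12\ell_{r}^{\circ2}$ by the definition of $\ell_{r}$. The second is $-W^{-1/2}DN[r]Br$. The third is $-W^{-1/2}N\Diag(t_{r})Br$. Together these give Eq.~\eqref{I:eq:second-derivative}.

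The only point requiring care is the smoothness of $N_{z}$, settled above, and keeping the signs of the slack derivative consistent with Lemma~\ref{I:lem:lewis-derivative}. I expect no real obstacle: the lemma is an exact bookkeeping identity, and its difficulty is deferred to bounding $DN[r]$ in Lemma~\ref{I:lem:DN-aggregate}.
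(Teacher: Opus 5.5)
Your proposal is correct and follows essentially the same route as the paper: apply the product rule to $-W^{-1/2}\,N\,Br$ at fixed $r$, use $D(W^{-1/2})[r]=-\tfrac12\Diag(\ell_{r})W^{-1/2}$ and $DB[r]=\Diag(t_{r})B$, and recognize the first term as $-\tfrac12\ell_{r}^{\circ2}$. Your extra smoothness check for $N_{z}$ and the explicit sign verification of $DA_{z}[r]$ are harmless additions that the paper leaves implicit.
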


\begin{proof}
Differentiate $\ell_{v}=-W^{-1/2}NBv$ at fixed $v=r$, using the
product rule on the three point-dependent factors. The identities
$DW[r]=W\Diag(\ell_{r})$ and $DA[r]=-\Diag(s_{r})A$ give
\[
D(W^{-1/2})[r]
=-\tfrac12\Diag(\ell_{r})W^{-1/2}
\]
and, by the product rule,
\[
\begin{aligned}
DB[r]=D(W^{1/2}A)[r]
&=\Diag(\tfrac12\ell_{r}-s_{r})B \\
&=\Diag(t_{r})B.
\end{aligned}
\]
Combining these identities gives
\[
D^{2}\ell[r,r]
=\tfrac12\Diag(\ell_{r})\,W^{-1/2}NBr
-W^{-1/2}DN[r]Br
-W^{-1/2}N\Diag(t_{r})Br,
\]
and the first term equals
$\tfrac12\Diag(\ell_{r})(-\ell_{r})=-\tfrac12\ell_{r}^{\circ2}$.
\end{proof}

The operators $\Phi,\Lb,N$ depend on the point only through the Gram
matrix of the rows of the orthonormalized frame
$V_{z}=W_{z}^{\beta}A_{z}H(z)^{-1/2}$, so they are independent of the
choice of square root; but to differentiate them we fix a smooth
choice. The computation is cleanest in the \emph{horizontal} gauge.

\begin{lemma}[Horizontal frame derivative]\label{I:lem:horizontal-frame}
Fix the symmetric square root used to define the frame $U(z)$, where
\[
U(z):=W_{z}^{\beta}A_{z}H(z)^{-1/2}.
\]
At the normalized base point, where $H=I$, we have
\begin{equation*}
DU[r]=P^{\perp}\,\Diag(z_{r})\,U,
\qquad
P^{\perp}:=I-UU^{\top},
\qquad
W^{1/2}z_{r}:=-(I+\beta N)Br .
\end{equation*}
Consequently
$\fnorm{DU[r]}\le\norm{z_{r}}_{W}=\norm{(I+\beta N)Br}
\le\tfrac{p}{2}\norm{r}$.
\end{lemma}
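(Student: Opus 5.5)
The plan is to differentiate the three factors of $U(z)=W_z^{\beta}A_zH(z)^{-1/2}$ by the product rule at the normalized base point, where $H=I$ and $U^{\top}U=I$. After that, the horizontal part is identified by exploiting the symmetry of the square-root derivative.

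\emph{The first two factors.} From Lemma~\ref{I:lem:lewis-derivative} and $DA[r]=-\Diag(s_r)A$, we get
\[
D(W^{\beta}A)[r]=\Diag(\beta\ell_r-s_r)\,W^{\beta}A=\Diag(\beta\ell_r-s_r)\,U .
\]
Using $\ell_r=-W^{-1/2}NBr$ and $s_r=W^{-1/2}Br$, the coefficient vector is
\[
\beta\ell_r-s_r=-W^{-1/2}(I+\beta N)Br=z_r .
\]
This is exactly the row-vector form of the Jacobian Eq.~\eqref{I:eq:jacobian}, with $\eta$ removed.

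\emph{The square-root factor.} Put $Y:=D(H^{-1/2})[r]$, which is symmetric because the symmetric square root is used. Since $H=U_0^{\top}U_0$ with $U_0=W^{\beta}A$, we have
\[
DH[r]=U^{\top}\Diag(2z_r)\,U
\]
at the base point. Differentiating $H^{-1/2}HH^{-1/2}=I$ at $H=I$ gives $2Y+DH[r]=0$, that is,
\[
Y=-U^{\top}\Diag(z_r)\,U .
\]
Combining the two pieces,
\[
DU[r]=\Diag(z_r)U+UY=\Diag(z_r)U-UU^{\top}\Diag(z_r)U=P^{\perp}\Diag(z_r)\,U,
\]
as claimed.

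\emph{The norm bound.} The projection $P^{\perp}$ is a contraction, so
\[
\fnorm{DU[r]}\le\fnorm{\Diag(z_r)U}=\Bigl(\sum_i z_{r,i}^{2}\norm{u_i}^{2}\Bigr)^{1/2}=\norm{z_r}_W ,
\]
using $\norm{u_i}^2=w_i$. By the definition of $z_r$, $\norm{z_r}_W=\norm{(I+\beta N)Br}$. Lemma~\ref{I:lem:lewis-derivative} gives $\opnorm{I+\beta N}=\opnorm{\Rr}\le p/2$, and $\opnorm{B}\le1$, so $\norm{(I+\beta N)Br}\le\tfrac p2\norm{r}$.

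\emph{Main obstacle.} The only delicate point is the square-root derivative. In general it solves a Sylvester equation, but at $H=I$ it reduces to $-\tfrac12DH$. This requires the symmetric-root convention and evaluating the derivative at the normalized point, where the identity is exact rather than approximate.
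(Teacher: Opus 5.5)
Your proposal is correct and follows the paper's proof step for step: the same product-rule split, the same coefficient $z_r=\beta\ell_r-s_r$, the same square-root derivative $D(H^{-1/2})[r]=-\tfrac12 DH[r]=-U^{\top}\Diag(z_r)U$ at $H=I$, and the same contraction-plus-resolvent norm bound. The only cosmetic difference is how you get the square-root derivative: you differentiate $H^{-1/2}HH^{-1/2}=I$, while the paper differentiates $(H^{-1/2})^{2}=H^{-1}$, and at $H=I$ both yield the same identity.
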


\begin{proof}
Set $z_{r}:=\beta\ell_{r}-s_{r}$; then
$W^{1/2}z_{r}=-\beta NBr-Br=-(I+\beta N)Br$ by
Eq.~\eqref{I:eq:first-derivative-objects}. As in
Lemma~\ref{I:lem:second-derivative},
$D(W^{\beta}A)[r]=\Diag(z_{r})\,W^{\beta}A$, which at the base point
equals $\Diag(z_{r})U$. The product rule at $H=I$ therefore gives
\[
DU[r]=\Diag(z_{r})\,U+U\cdot H^{1/2}D(H^{-1/2})[r]
|_{H=I}
=\Diag(z_{r})\,U+U\,D(H^{-1/2})[r] .
\]

At $H=I$, differentiating
$H=(W^{\beta}A)^{\top}(W^{\beta}A)$ gives
$DH[r]=2\,U^{\top}\Diag(z_{r})U$, and for the symmetric square root
$D(H^{-1/2})[r]=-\tfrac12DH[r]=-U^{\top}\Diag(z_{r})U$ (the derivative
$X$ of $H^{-1/2}$ solves $XH^{-1/2}+H^{-1/2}X=D(H^{-1})[r]=-DH[r]$ at
$H=I$). Substituting this derivative and using $P^\perp=I-UU^\top$ yields
\[
DU[r]=\Diag(z_{r})U-UU^{\top}\Diag(z_{r})U
=P^{\perp}\Diag(z_{r})U .
\]

For the norm bound,
$\fnorm{DU[r]}\le\fnorm{\Diag(z_{r})U}
=(\sum_{i}z_{r,i}^{2}\norm{u_{i}}^{2})^{1/2}
=\norm{z_{r}}_{W}$, and
$\norm{(I+\beta N)Br}\le\opnorm{\Rr}\,\opnorm{B}\,\norm{r}
\le\tfrac{p}{2}\norm{r}$ by Eq.~\eqref{I:eq:resolvent-identity}.
\end{proof}

\begin{lemma}[Aggregate resolvent-derivative bound]\label{I:lem:DN-aggregate}
For every $r$, $\fnorm{DN[r]\,B}\le C_{p}\norm{r}$, and hence, summing
over an orthonormal basis $(e_{a})_{a=1}^{d}$,
\begin{equation}
\sum_{a=1}^{d}\fnorm{DN[e_{a}]\,B}^{2}\ \le\ C_{p}\,d .
\label{I:eq:DN-aggregate}
\end{equation}
\end{lemma}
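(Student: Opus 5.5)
The plan is to differentiate $N=2\Lb\Rr$ through its dependence on $\Phi\Phi^{*}$, and then use the operator bounds of Lemma~\ref{I:lem:lewis-derivative} together with the horizontal frame derivative of Lemma~\ref{I:lem:horizontal-frame}.

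\emph{Reduction to $\Phi\Phi^{*}$.} Write $K:=\Phi\Phi^{*}$, so that $\Lb=I-K$ and $\Rr=(I-c_p\Lb)^{-1}$. The resolvent formula gives
\[
D\Rr[r]=-c_p\Rr\,DK[r]\,\Rr .
\]
Hence
\[
DN[r]=2D\Lb[r]\,\Rr+2\Lb\,D\Rr[r]=-2DK[r]\,\Rr-2c_p\Lb\Rr\,DK[r]\,\Rr .
\]
Since $\opnorm{\Lb}\le1$ and $\opnorm{\Rr}\le p/2$, we get
\[
\fnorm{DN[r]B}\le C_p\fnorm{DK[r]\,\Rr B}.
\]
It therefore suffices to bound $\fnorm{DK[r]\,M}$ for a fixed matrix $M:=\Rr B$ with $\opnorm{M}\le p/2$. (Because $N$ does not depend on the choice of square root, we may compute in the horizontal gauge.)

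\emph{Computing $DK$.} The entries of $K$ are
\[
K_{ij}=\frac{(u_i^{\top}u_j)^2}{\sqrt{w_iw_j}},
\]
which is a function of the Gram matrix $UU^{\top}$ and of $w$. Differentiating gives two kinds of terms.

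The first comes from the weights: $D\log w[r]=\ell_r$ produces $-\tfrac12\bigl(\Diag(\ell_r)K+K\Diag(\ell_r)\bigr)$.

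The second comes from the Gram matrix. By Lemma~\ref{I:lem:horizontal-frame},
\[
D(UU^{\top})[r]=P^{\perp}\Diag(z_r)P+P\Diag(z_r)P^{\perp},
\]
which produces $2\bigl(P\circ D(UU^{\top})[r]\bigr)$ scaled by $W^{-1/2}$ on both sides.

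\emph{Bounding the terms.} For the weight terms, I would avoid the operator norm of $\Diag(\ell_r)$, since it may be large, and work rowwise. We have $\norm{\ell_r}_W=\norm{NBr}\le p\norm r$. The factor $W^{1/2}$ is available because $K=\Phi\Phi^{*}$ with $\fnorm{(\Phi)_i}=\sqrt{w_i}$, so
\[
\fnorm{\Diag(\ell_r)\Phi\Phi^{*}M}^2\le\sum_i\ell_{r,i}^2w_i\,\opnorm{\Phi^{*}M}^2\le C_p\norm r^2 .
\]
The term $K\Diag(\ell_r)M$ needs the row bound $\norm{e_i^{\top}M}\le L_p\sqrt{w_i}$ from Lemma~\ref{I:lem:rowwise}(i). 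This gives
\[
\fnorm{\Diag(\ell_r)M}\le L_p\norm{\ell_r}_W,
\]
and then $\opnorm K\le1$ finishes the bound.

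For the Gram term, write the Hadamard product as $W^{-1/2}\bigl(P\circ X\bigr)W^{-1/2}$, where $X=P^{\perp}\Diag(z_r)P+{}$transpose. Its action on a vector is
\[
v\mapsto \Bigl(\tfrac{1}{\sqrt{w_i}}\,u_i^{\top}\bigl(\textstyle\sum_j\tfrac{v_j}{\sqrt{w_j}}u_j(\cdots)\bigr)\Bigr)_i .
\]
I would express it through $\Phi$ and $\Phi^{*}$ applied to matrices of the form $U^{\top}\Diag(\cdot)DU[r]$, and use the contraction \eqref{I:eq:phi-contraction}. Combined with $\fnorm{DU[r]}\le\norm{z_r}_W\le\tfrac p2\norm r$ and the row bounds on $M$, this should give a bound $C_p\norm r$.

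\emph{Main obstacle.} The hard step is the Gram (frame-derivative) term. A naive bound there loses a factor $\sqrt n$ or $\sqrt d$, and avoiding it requires pairing the $1/\sqrt{w_i}$ normalizations with the row bounds $\sqrt{w_i}$ from Lemma~\ref{I:lem:rowwise}. I would first try to write the Gram term as $\Phi\circ\mathcal{T}+\mathcal{T}^{*}\circ\Phi^{*}$, where $\mathcal{T}:X\mapsto\bigl(2u_i^{\top}X\,DU[r]^{\top}e_i/\sqrt{w_i}\bigr)_i$. Cauchy--Schwarz row by row then gives $\opnorm{\mathcal T}\le 2\fnorm{DU[r]}$ up to constants.

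\emph{Aggregate bound.} Once $\fnorm{DN[r]B}\le C_p\norm r$ holds for every $r$, \eqref{I:eq:DN-aggregate} follows immediately by summing the squares over the $d$ orthonormal basis directions.
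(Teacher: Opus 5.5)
Your formula for $DN[r]$ (which simplifies to $-2\Rr\,DK[r]\,\Rr$, since $I+c_p\Lb\Rr=\Rr$) is correct. So are the chain-rule split of $DK[r]$ into a weight part and a Gram part, and your bounds on the two weight terms. The gap is in the Gram term, which carries the lemma and which you leave at ``should give.''

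The bound you propose there, $\opnorm{\mathcal T}\le2\fnorm{DU[r]}$, does not close the argument. What must be controlled is the Frobenius norm of the $n\times d$ matrix $(\mathcal T\Phi^{*}+\Phi\mathcal T^{*})M$. (Note the order of composition: $\Phi\circ\mathcal T$ does not type-check.) With only an operator-norm bound on $\mathcal T$, you are left paying $\fnorm{\Phi^{*}M}$ or $\fnorm{M}$. Since $\Rr\succeq I$, $\fnorm{M}\ge\fnorm{B}$, and $\fnorm{B}^{2}=\sum_iw_i^{1+2/p}$ is of order $d$. The rowwise bounds of Lemma~\ref{I:lem:rowwise} give only $\fnorm{M}^{2}\le L_p^{2}d$ and cannot do better. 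You would get $\fnorm{DN[r]B}\lesssim C_p\sqrt d\,\norm r$ and an aggregate bound of order $d^{2}$, not $d$. So your diagnosis, that the $1/\sqrt{w_i}$ factors must be paired against the rows of $M$, points the wrong way.

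The missing step is a Hilbert--Schmidt (Frobenius) bound on $\mathcal T$ as a map $\Sd\to\R^{n}$. Your row-by-row Cauchy--Schwarz gives it if you apply it to the component functionals rather than to a fixed $X$. With $\dot u_i:=DU[r]^{\top}e_i$, the $i$-th component of $\mathcal T$ is represented by the symmetric matrix $(u_i\dot u_i^{\top}+\dot u_iu_i^{\top})/\sqrt{w_i}$. Because $\norm{u_i}=\sqrt{w_i}$, its Frobenius norm is at most $2\norm{\dot u_i}$, so $\fnorm{\mathcal T}^{2}\le4\fnorm{DU[r]}^{2}\le p^{2}\norm r^{2}$. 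Then $\fnorm{\mathcal T\Phi^{*}M}+\fnorm{\Phi\mathcal T^{*}M}\le2\fnorm{\mathcal T}\opnorm{\Phi}\opnorm{M}\le p^{2}\norm r$. This uses only $\opnorm{M}\le p/2$ and no rowwise resolvent bounds.

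This is essentially the paper's argument, which is shorter because it never separates out the weights. Since $\Phi_i=u_iu_i^{\top}/\norm{u_i}$ depends only on the row $u_i$ of the horizontal frame, the weight derivative is just the third summand $-\tfrac12\ell_{r,i}\Phi_i$ of $D(uu^{\top}/\norm u)[\dot u_i]$. That gives $\fnorm{D\Phi[r]}\le3\fnorm{DU[r]}$ in one line, and $\fnorm{DN[r]B}\le2\opnorm{\Rr}\,\fnorm{D\Lb[r]}\,\opnorm{\Rr B}$ finishes.
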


\begin{proof}
From $\Rr=(I-c_{p}\Lb)^{-1}$ we have $N=2\Lb\Rr=\tfrac{2}{c_{p}}(\Rr-I)$,
so $DN[r]=\tfrac{2}{c_{p}}D\Rr[r]=2\,\Rr\,D\Lb[r]\,\Rr$. Writing
$\Phi_{i}=u_{i}u_{i}^{\top}/\norm{u_{i}}\in\Sd$,
the derivative used below is, for $u\ne0$,
\[
D(\frac{uu^{\top}}{\norm u})[v]
=\frac{vu^{\top}+uv^{\top}}{\norm u}
-\frac{uu^{\top}}{\norm u^{3}}\langle u,v\rangle .
\]
The Frobenius norms of the three displayed summands are at most
$\norm v$, $\norm v$, and $\norm v$, respectively. Thus the
derivative has norm at most $3\norm v$. Consequently, with
$\Phi_{r}:=D\Phi[r]$, the componentwise derivative bound and
Lemma~\ref{I:lem:horizontal-frame} give
\[
\fnorm{\Phi_{r}}^{2}\le9\sum_{i}\norm{Du_{i}[r]}^{2}
=9\,\fnorm{DU[r]}^{2}\le9\,\tfrac{p^{2}}{4}\,\norm{r}^{2}.
\]

The identity $D\Lb[r]=-(\Phi_{r}\Phi^{*}+\Phi\Phi_{r}^{*})$ gives
$\fnorm{D\Lb[r]}\le2\fnorm{\Phi_{r}}\,\opnorm{\Phi}\le3p\norm{r}$.
After absorbing the fixed polynomial in $p$ into $C_p$, submultiplicativity
and the bounds on $\Rr$, $D\Lb[r]$, and $\Rr B$ give
\[
\fnorm{DN[r]B}
\le2\,\opnorm{\Rr}\,\fnorm{D\Lb[r]}\,\opnorm{\Rr B}
\le2\cdot\frac{p}{2}\cdot3p\norm{r}\cdot\frac{p}{2}
=C_{p}\norm{r} .
\]
\end{proof}

\begin{lemma}[Row-dependent Hessian contraction]
\label{I:lem:hessian-contraction}
If directions $r_{i}\in\R^{d}$ satisfy
$\norm{r_{i}}\le \mu\sqrt{w_{i}}$ for all $i$, then
\begin{equation}
\sum_{i=1}^{n}|D^{2}\ell_{i}[r_{i},r_{i}]|
\ \le\ C_{p}\,\mu^{2}\,d .
\label{I:eq:hessian-contraction}
\end{equation}
\end{lemma}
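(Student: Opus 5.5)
We plug the exact Hessian identity of Lemma~\ref{I:lem:second-derivative}, taken with $r=r_i$, into the $i$-th coordinate, and bound the three resulting sums separately:
\[
\sum_i|D^2\ell_i[r_i,r_i]|\le \tfrac12\sum_i(\ell_{r_i})_i^2+\sum_i w_i^{-1/2}|e_i^\top DN[r_i]Br_i|+\sum_i w_i^{-1/2}|e_i^\top N\Diag(t_{r_i})Br_i|.
\]
The rowwise resolvent bound, Lemma~\ref{I:lem:rowwise}, is the source of the $\sqrt{w_i}$ factors that cancel the $w_i^{-1/2}$ prefactors. The hypothesis $\norm{r_i}\le\mu\sqrt{w_i}$ together with $\sum_i w_i=d$ then turns each sum into $C_p\mu^2 d$.

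\emph{First term.} We have $(\ell_{r_i})_i=-w_i^{-1/2}e_i^\top N\Bh r_i$, using $H=I$ at the normalized point. By Lemma~\ref{I:lem:rowwise}(ii), $\norm{e_i^\top N\Bh}\le(2L_p+p)\sqrt{w_i}$. Hence $|(\ell_{r_i})_i|\le C_p\norm{r_i}\le C_p\mu\sqrt{w_i}$, so this sum is at most $C_p\mu^2\sum_iw_i=C_p\mu^2d$.

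\emph{Third term.} Write $\Diag(t_{r_i})Br_i=\sum_j t_{r_i,j}B_j^\top r_i\,e_j$. Then
\[
w_i^{-1/2}|e_i^\top N\Diag(t_{r_i})Br_i|\le w_i^{-1/2}\norm{e_i^\top N}\,\norm{\Diag(t_{r_i})Br_i}.
\]
We do not have a rowwise bound for $\norm{e_i^\top N}$ alone, so we instead use $\norm{t_{r_i}}_\infty$. Each coordinate $(t_{r_i})_j=\tfrac12(\ell_{r_i})_j-(s_{r_i})_j$ is bounded by $C_p\norm{r_i}$: use Lemma~\ref{I:lem:rowwise} for the $\ell$ part, and $|(s_{r})_j|=w_j^{-1/2}|B_jr|\le w_j^{1/p}\norm r$ for the slack part. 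This gives $\norm{\Diag(t_{r_i})Br_i}\le C_p\norm{r_i}\norm{Br_i}\le C_p\norm{r_i}^2$, but the factor $w_i^{-1/2}\norm{e_i^\top N}$ is then only $\le p\,w_i^{-1/2}$, which is too lossy. Instead we pair the $i$-th row of $N$ with its input through the symmetric factorization $N=2\Lb\Rr$ and the resolvent identity, as in the proof of Lemma~\ref{I:lem:rowwise}: $N=2\Rr-2\Phi\Phi^*\Rr$. The $2\Rr$ part is handled by Lemma~\ref{I:lem:rowwise}(i) with the diagonal absorbed; the part $e_i^\top\Phi\,\Phi^*\Rr$ contributes $\norm{e_i^\top\Phi}=\sqrt{w_i}$ times an operator norm. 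Both yield $C_p\sqrt{w_i}\,\norm{t_{r_i}}_\infty\norm{r_i}$ after writing $\Diag(t)B=B'$ with rows of norm $\le\norm t_\infty\norm{B_j}$, and hence $\opnorm{\Diag(t)\Bh}\le\norm t_\infty$. Summing gives $C_p\sum_i\norm{r_i}^2\le C_p\mu^2d$.

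\emph{Second term (main obstacle).} Here $DN[r_i]$ varies with $i$, so no single rowwise bound applies. The plan is to linearize in $r_i$. Expand $r_i=\sum_a r_{i,a}e_a$, so that $DN[r_i]=\sum_a r_{i,a}DN[e_a]$. Then Cauchy--Schwarz gives
\[
|e_i^\top DN[r_i]Br_i|\le\norm{r_i}\Big(\sum_a\norm{e_i^\top DN[e_a]B}^2\Big)^{1/2}\norm{r_i}.
\]
Multiplying by $w_i^{-1/2}$ and using $\norm{r_i}^2\le\mu^2w_i$, the $i$-th summand is at most $\mu^2\sqrt{w_i}\,\big(\sum_a\norm{e_i^\top DN[e_a]B}^2\big)^{1/2}$. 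A second Cauchy--Schwarz over $i$ bounds the whole sum by $\mu^2\big(\sum_iw_i\big)^{1/2}\big(\sum_a\fnorm{DN[e_a]B}^2\big)^{1/2}$. By Eq.~\eqref{I:eq:DN-aggregate} this is at most $\mu^2\sqrt d\,\sqrt{C_pd}=C_p\mu^2d$.

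This aggregate route is the step I expect to need care. One must check that Lemma~\ref{I:lem:DN-aggregate} really controls the full Frobenius norm $\sum_i\norm{e_i^\top DN[e_a]B}^2$, which it does as stated. Combining the three terms proves Eq.~\eqref{I:eq:hessian-contraction}.
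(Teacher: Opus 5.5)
Your proof is correct. It shares the paper's skeleton: the three-term split from Lemma~\ref{I:lem:second-derivative}, and, for the middle term, exactly the paper's argument (expand $DN[r_i]$ in a basis, apply Cauchy--Schwarz in $a$ and then over $i$ with weights $\sqrt{w_i}$, and finish with Eq.~\eqref{I:eq:DN-aggregate}).

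Where you differ is in the first and third terms, which you bound pointwise in $i$. You show $|(\ell_{r_i})_i|\le C_p\norm{r_i}$ and $|e_i^\top N\Diag(t)Bv|\le C_p\sqrt{w_i}\,\norm{t}_\infty\norm{v}$. Combined with $\norm{t_r}_\infty\le C_p\norm{r}$, this makes each of those summands at most $C_p\norm{r_i}^2$ on its own. The paper instead aggregates. For the first term it uses only $\fnorm{NB}\le p\,\fnorm{B}$ and needs no rowwise lemma. For the third term it writes the summand as $\sqrt{w_i}\,q_i^\top T_iq_i$ with $T_i=\sum_jN_{ij}(L_0)_j^\top B_j$. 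It then bounds $\sum_i\fnorm{T_i}^2\le\opnorm{N}^2\sum_j\norm{(L_0)_j}^2\norm{B_j}^2$ by letting $N$ act on a Hilbert-space-valued vector, and closes with weighted Cauchy--Schwarz. Its only rowwise input is Lemma~\ref{I:lem:rowwise}(ii) for $I+\tfrac12N$, which is exactly your sup-norm bound on $t_r$.

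What each buys: your route shows that only the middle term genuinely needs aggregation. The paper's route keeps all three terms in the same $\ell^2$ format and needs nothing beyond $\opnorm{N}\le p$ in front of $\Diag(t_{q})$.

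One point to tighten. Your bound for the $2\Rr$ part \emph{with the diagonal absorbed} is not literally the statement of Lemma~\ref{I:lem:rowwise}(i), since $\Diag(t)$ does not commute with $\Rr$. It is nevertheless valid: re-run that proof with $\Bh$ replaced by $\Diag(t)\Bh$, which works because the proof uses only the $i$-th row norm and the operator norm of $\Bh$. The cleanest way to state it is through the identity $N=p(I-\Phi\Phi^{*}\Rr)$, which follows from $N=2\Lb\Rr$ and $\Rr-I=c_p\Lb\Rr$. It gives $e_i^\top NM=p\,e_i^\top M-p\,e_i^\top\Phi\Phi^{*}\Rr M$ for every $M$, and hence $|e_i^\top N\Diag(t)Bv|\le p(1+\tfrac p2)\sqrt{w_i}\,\norm{t}_\infty\norm{v}$ directly.
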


\begin{proof}
Write $r_{i}=\sqrt{w_{i}}\,q_{i}$ with $\norm{q_{i}}\le \mu$, and take
the $i$-th component of Eq.~\eqref{I:eq:second-derivative} with direction
$r_{i}$; each of the three terms is bilinear in $r_{i}$, contributing a
factor $w_{i}$ that combines with the prefactor $w_{i}^{-1/2}$.

\emph{First terms.} $(\ell_{r_{i}})_{i}^{2}=(e_{i}^{\top}NBq_{i})^{2}$
after the substitution. Using $\norm{q_i}\le\mu$, Frobenius-norm
submultiplicativity, $\opnorm N\le p$, $w_i\le1$, and $\sum_iw_i=d$ gives
\begin{equation}
\sum_{i}\tfrac12(\ell_{r_{i}})_{i}^{2}
\le\tfrac12 \mu^{2}\,\fnorm{NB}^{2}
\le\tfrac12 \mu^{2}\,\opnorm{N}^{2}\fnorm{B}^{2}
\le\tfrac12 \mu^{2}p^{2}\sum_{i}w_{i}^{1+2/p}
\le C_{p}\mu^{2}d .
\label{I:eq:term1}
\end{equation}

\emph{Middle terms.} The $i$-th component is
$\sqrt{w_{i}}\,e_{i}^{\top}DN[q_{i}]Bq_{i}$. By Cauchy--Schwarz with
weights $\sqrt{w_{i}}$ and $\sum_{i}w_{i}=d$,
\[
\sum_{i}\sqrt{w_{i}}\,|e_{i}^{\top}DN[q_{i}]Bq_{i}|
\le\sqrt{d}\,(\sum_{i}
|e_{i}^{\top}DN[q_{i}]Bq_{i}|^{2})^{1/2}.
\]
Expand
\[
DN[q_{i}]Bq_{i}
=\sum_{a,b}(q_{i})_{a}(q_{i})_{b}\,DN[e_{a}]Be_{b},
\qquad
T^{(i)}_{ab}:=e_{i}^{\top}DN[e_{a}]Be_{b}.
\]
Then Cauchy--Schwarz and $\norm{q_i}\le\mu$ imply
\[
|e_{i}^{\top}DN[q_{i}]Bq_{i}|
\le\norm{q_{i}}^{2}\fnorm{T^{(i)}}
\le \mu^{2}\fnorm{T^{(i)}},
\]
while the definition of $T^{(i)}$ and Eq.~\eqref{I:eq:DN-aggregate} give
\[
\sum_{i}\fnorm{T^{(i)}}^{2}
=\sum_{a}\sum_{i,b}(e_{i}^{\top}DN[e_{a}]Be_{b})^{2}
=\sum_{a}\fnorm{DN[e_{a}]B}^{2}\le C_{p}d.
\]

Combining these bounds and absorbing the square root of $C_p$ into $C_p$,
\begin{equation}
\sum_{i}\sqrt{w_{i}}\,|e_{i}^{\top}DN[q_{i}]Bq_{i}|
\le\sqrt{d}\cdot \mu^{2}\sqrt{C_{p}d}=C_{p}\mu^{2}d .
\label{I:eq:term2}
\end{equation}

\emph{Last terms.} Set
\[
C_{0}:=I+\tfrac12N,\qquad
L_{0}:=-W^{-1/2}C_{0}B,\qquad
T_{i}:=L_{0}^{\top}\Diag(e_{i}^{\top}N)\,B
=\sum_{j}N_{ij}\,(L_{0})_{j}^{\top}(B)_{j},
\]
where $(L_{0})_{j},(B)_{j}$ denote rows and
$(L_{0})_{j}^{\top}(B)_{j}$ their rank-one outer products. Since
$t_{q}=\tfrac12\ell_{q}-s_{q}=-W^{-1/2}(I+\tfrac12N)Bq=L_{0}q$ by
Eq.~\eqref{I:eq:first-derivative-objects}, the $i$-th component of the last
term of Eq.~\eqref{I:eq:second-derivative} with direction $r_{i}$ is
$\sqrt{w_{i}}\;q_{i}^{\top}T_{i}q_{i}$. Viewing
$((L_{0})_{j}^{\top}(B)_{j})_{j\le n}$ as a vector with
entries in the Hilbert space $(\R^{d\times d},\fnorm{\cdot})$, let $N$
act on the index $j$. The Hilbert-space operator-norm bound, the row
identities for $L_0$ and $B$, and Lemma~\ref{I:lem:rowwise}(ii) give
\[
\begin{aligned}
\sum_{i}\fnorm{T_{i}}^{2}
&\le\opnorm{N}^{2}
\sum_{j}\norm{(L_{0})_{j}}^{2}\norm{(B)_{j}}^{2} \\
&=\opnorm{N}^{2}\sum_{j}w_{j}^{2/p}
\norm{e_{j}^{\top}C_{0}B}^{2} \\
&\le p^{2}(L_{p}^{+})^{2}\sum_{j}w_{j}^{1+2/p}
\le C_{p}\,d,
\end{aligned}
\]
where Lemma~\ref{I:lem:rowwise}(ii) is used with $\gamma=\tfrac12$ at the
normalized point $\Bh=B$.

A final weighted Cauchy--Schwarz, together with $\sum_iw_i=d$, gives
\begin{equation}
\sum_{i}\sqrt{w_{i}}\,|q_{i}^{\top}T_{i}q_{i}|
\le \mu^{2}\sum_{i}\sqrt{w_{i}}\,\fnorm{T_{i}}
\le \mu^{2}\sqrt{\textstyle\sum_{i}w_{i}}
\sqrt{\textstyle\sum_{i}\fnorm{T_{i}}^{2}}
\le C_{p}\mu^{2}d .
\label{I:eq:term3}
\end{equation}

Combining Eq.~\eqref{I:eq:term1}, \eqref{I:eq:term2} and \eqref{I:eq:term3} with
Lemma~\ref{I:lem:second-derivative} proves
Eq.~\eqref{I:eq:hessian-contraction}.
\end{proof}

\begin{remark}
The lemma is sharp in its dependence on the row profile: replacing the
hypothesis $\norm{r_{i}}\le \mu\sqrt{w_{i}}$ by a uniform bound
$\norm{r_{i}}\le \mu$ would force an extra factor of
$\max_{i}w_{i}^{-1}$ through Eq.~\eqref{I:eq:term1}. It is precisely the
directions $Qu_{i}$, whose norms are at most $\sqrt{w_{i}}$, that the
Stein identity produces in Section~\ref{I:sec:stein}. Thus the Stein
directions satisfy exactly the row-dependent hypothesis of the lemma.
\end{remark}

\subsection{A Stein bound for the centered fluctuation, and the acceptance
lower tail}\label{I:sec:stein}

This section controls the centered fluctuation
\[
Z(h)=(c-\wb)^{\top}a,\qquad h\sim N(0,Q),
\]
and assembles the uniform lower tail for the complete Metropolis ratio
(Lemma~\ref{I:lem:acceptance}). The mean of $Z$ is treated by Gaussian
integration by parts, reduced through Lemma~\ref{I:lem:hessian-contraction};
its deviations are controlled by a dimension-free Lipschitz bound. Both
steps take place on the stability tube of Section~\ref{I:sec:stability}, and
a cutoff and a global extension reconcile that tube with the full-space
Stein identity.

\begin{lemma}[Twice-integrated Gaussian Stein identity]\label{I:lem:stein}
Let $h\sim N(0,Q)$ and let $F:\R^{d}\to\R$ be globally $C^{1,1}$
(equivalently $W^{2,\infty}$) with compact support. Then for every
$v\in\R^{d}$,
\begin{equation*}
\E[((v^{\top}h)^{2}-v^{\top}Qv)F(h)]
=\E[(Qv)^{\top}D^{2}F(h)\,(Qv)],
\end{equation*}
where $D^{2}F$ is the (a.e.-defined, bounded) weak Hessian.
\end{lemma}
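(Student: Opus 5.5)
This statement is the directional specialization of Lemma~\ref{II:lem:stein}. The plan is either to contract that identity against $v$, or to redo its short integration-by-parts argument along the single direction $v$.

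\textbf{Contraction route.} Multiply Eq.~\eqref{II:eq:stein-identity} by $v_{a}v_{b}$ and sum over $a,b$. On the left, linearity of expectation gives
\[
\sum_{a,b}v_av_b(h_ah_b-Q_{ab})=(v^{\top}h)^{2}-v^{\top}Qv .
\]
On the right, the terms regroup as
\[
\sum_{c,e}(Qv)_{c}(Qv)_{e}\,\E[\partial_c\partial_eF(h)]=\E[(Qv)^{\top}D^{2}F(h)(Qv)] .
\]
The hypotheses of Lemma~\ref{II:lem:stein} are satisfied because $0\preceq Q\preceq I$ holds here. It also asks for $F\in W^{2,\infty}$ with compact support, which is exactly the $C^{1,1}$-with-compact-support hypothesis stated here: globally $C^{1,1}$ is equivalent to $W^{2,\infty}$ with bounded a.e.\ Hessian, by Rademacher applied to $\nabla F$.

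\textbf{Self-contained route.} For completeness I would also present the argument directly. Write $h=Q^{1/2}\xi$ with $\xi\sim N(0,I_d)$, and set $u:=Q^{1/2}v$ and $G:=F\circ Q^{1/2}$. Then $v^{\top}h=u^{\top}\xi$ and $v^{\top}Qv=\norm{u}^{2}$. For smooth $G$ with bounded derivatives, Gaussian integration by parts in the direction $u$, applied twice, gives
\[
\E[((u^{\top}\xi)^{2}-\norm u^{2})G(\xi)]=\E[u^{\top}D^{2}G(\xi)u] .
\]
This uses $\E[\xi_\alpha\Psi(\xi)]=\E[\partial_\alpha\Psi(\xi)]$, first with $\Psi=(u^{\top}\xi)G$ and then with $\Psi=G$; the cross term $\norm u^{2}G$ cancels. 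Since $D^{2}G=Q^{1/2}(D^{2}F\circ Q^{1/2})Q^{1/2}$, we get $u^{\top}D^{2}G\,u=(Qv)^{\top}D^{2}F(h)(Qv)$. This remains valid when $Q$ is singular, because only $G$ is differentiated.

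\textbf{Regularity step.} This is the only nonroutine point. I would mollify, setting $F_\epsilon=F*\varphi_\epsilon$. Each $F_\epsilon$ is smooth and compactly supported, with $\sup|F_\epsilon|$ and $\sup\opnorm{D^{2}F_\epsilon}$ bounded uniformly by the $W^{2,\infty}$ norm of $F$. Moreover $F_\epsilon\to F$ uniformly, and $D^{2}F_\epsilon\to D^{2}F$ almost everywhere in Lebesgue measure.

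The subtle case is a singular $Q$, where the law of $h$ is not absolutely continuous and a.e.\ convergence is not enough. There I would not pass to the limit in the Hessian directly. Instead, since the left side converges by uniform convergence of $F_\epsilon$, I would integrate by parts only once on the limit side. That expresses the right side as
\[
\E[(v^{\top}h)\,(Qv)^{\top}\nabla F(h)]-\E[\ldots] ,
\]
and $\nabla F$ is continuous, so $\nabla F_\epsilon\to\nabla F$ uniformly. The weak Hessian along the directional line $t\mapsto h+tQv$ is then interpreted via Lipschitz $\nabla F$.

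For nondegenerate $Q$, which is the case used here since $Q=(I+\creg)^{-1}\succ0$, dominated convergence applies directly, so I would note that case and cite Lemma~\ref{II:lem:stein} for the general statement.
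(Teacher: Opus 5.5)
Your proposal is correct and is essentially the paper's proof. You write $h=Q^{1/2}\xi$, integrate by parts twice against the standard Gaussian for smooth $F$, apply the chain rule and contract with $v$ (which is exactly Lemma~\ref{II:lem:stein} contracted), then mollify and pass to the limit; one small slip is that the second integration by parts is applied to $\Psi=u^{\top}\nabla G$, not to $\Psi=G$. Your caution about singular $Q$ is justified, since the paper's ``dominated convergence on both sides'' does not literally work there: $h$ lives on a Lebesgue-null subspace where the a.e.-defined $D^{2}F$ is undetermined. That caveat is moot, however, because $Q=(I+\creg)^{-1}\succ0$ in every application.
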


\begin{proof}
Write $h=Q^{1/2}\xi$ with $\xi$ standard; this representation also
covers singular $Q$.  For $F\in C_{c}^{\infty}$, apply
one-dimensional Gaussian integration by parts twice to
$F(Q^{1/2}\xi)$.  This composition and its first two derivatives are
bounded, so the Gaussian boundary terms vanish even when the
composition is not compactly supported.  The chain rule gives
$\E[h_{a}h_{b}F]=Q_{ab}\E[F]+\sum_{c,e}Q_{ac}Q_{be}
\E[\partial_{c}\partial_{e}F]$; contract with $v_{a}v_{b}$. For
$F\in W^{2,\infty}$ with compact support, mollify and pass to the
limit using dominated convergence on both sides.
\end{proof}

The relevant second derivative is exact: differentiating
Eq.~\eqref{I:eq:a-def} twice in $h$ (the slack term is the composition of
$-\log(1+\cdot)$ with a linear map),
\begin{equation}
D^{2}_{h}a_{i}[v,v]
=\eta^{2}(\beta\,D^{2}\ell_{y,i}[v,v]+(A_{y}v)_{i}^{2}),
\qquad y=x+\eta h .
\label{I:eq:a-hessian}
\end{equation}

To apply Lemma~\ref{I:lem:hessian-contraction} at $y$, pass to $y$-normalized
coordinates: let $V=W_{y}^{\beta}A_{y}H(y)^{-1/2}$ with rows $v_{i}$,
so $V^{\top}V=I$ and $\norm{v_{i}}^{2}=w_{y,i}$. From
Eq.~\eqref{I:eq:endpoint-diag},
\begin{equation}
u_{i}=e^{-a_{i}}\,H(y)^{1/2}v_{i},
\qquad\text{hence}\qquad
H(y)^{1/2}Qu_{i}=e^{-a_{i}}\,H(y)^{1/2}QH(y)^{1/2}\,v_{i} .
\label{I:eq:frame-transport}
\end{equation}
On $\Ktubep$, Lemma~\ref{I:lem:stability} and $Q\preceq I$ give
$\opnorm{H(y)^{1/2}QH(y)^{1/2}}\le e^{2\norm{a}_{\infty}}\le e^{1/16}$
and $e^{-a_{i}}\le e^{1/32}$, so
\begin{equation}
\norm{H(y)^{1/2}Qu_{i}}\le C\sqrt{w_{y,i}}
\qquad\text{for all }i .
\label{I:eq:direction-bound}
\end{equation}
The vectors $H(y)^{1/2}Qu_{i}$ are exactly the $y$-normalized versions
of the Stein directions $Qu_{i}$, and Eq.~\eqref{I:eq:direction-bound} is the
row-dependent hypothesis of Lemma~\ref{I:lem:hessian-contraction} with
$\mu=O(1)$. The slack part of Eq.~\eqref{I:eq:a-hessian} has the sharper direct
bound. Set $Q_{y}:=H(y)^{1/2}QH(y)^{1/2}\preceq CI$. Using
$\norm{v_i}^2=w_{y,i}$, $2-2\beta=1+2/p$, $w_{y,i}\le1$, and
$\sum_iw_{y,i}=d$, we obtain
\begin{equation}
(A_{y}Qu_{i})_{i}
=e^{-a_{i}}\,w_{y,i}^{-\beta}\,v_{i}^{\top}Q_{y}v_{i}
\qquad\Longrightarrow\qquad
\sum_{i}(A_{y}Qu_{i})_{i}^{2}
\le C\sum_{i}w_{y,i}^{2-2\beta}
= C\sum_{i}w_{y,i}^{1+2/p}\le Cd .
\label{I:eq:slack-hessian}
\end{equation}

\begin{lemma}[Mean bound on the tube]\label{I:lem:mean}
Let $\chi$ be any $[0,1]$-valued cutoff supported in $\Ktubep$. Then
\begin{equation*}
|\sum_{i}\E[\chi(h)\,(Qu_{i})^{\top}D^{2}_{h}a_{i}(h)\,(Qu_{i})]|
\ \le\ C_{p}\,\eta^{2}d\ =\ C_{p}\,\rho^{2}.
\end{equation*}
\end{lemma}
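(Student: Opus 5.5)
The bound is pointwise: I will control the integrand $\chi(h)\sum_i (Qu_i)^{\top}D^2_h a_i(h)(Qu_i)$ deterministically on $\Ktubep$, then take expectations using $0\le\chi\le1$. Since $\chi$ vanishes off $\Ktubep$, only $h\in\Ktubep$ matters. There Lemma~\ref{I:lem:stability} gives feasibility of $y=x+\eta h$ and smoothness of $a$, and Eq.~\eqref{I:eq:a-hessian} applies. By Eq.~\eqref{I:eq:a-hessian} with $v=Qu_i$,
\[
\sum_i (Qu_i)^{\top}D^2_h a_i(Qu_i)=\eta^2\Big(\beta\sum_i D^2\ell_{y,i}[Qu_i,Qu_i]+\sum_i (A_yQu_i)_i^2\Big),
\]
and I will bound each sum by $C_p d$ uniformly on $\Ktubep$. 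Since $\eta^2 d=\rho^2$, this gives the claim.

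The slack sum is already done: Eq.~\eqref{I:eq:slack-hessian} gives $\sum_i (A_yQu_i)_i^2\le Cd$ on $\Ktubep$.

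For the Lewis Hessian sum I will pass to $y$-normalized coordinates, setting $\tilde z=H(y)^{1/2}z$. In these coordinates $H(y)=I$, the frame $V=W_y^{\beta}A_yH(y)^{-1/2}$ is orthonormal with row norms $\sqrt{w_{y,i}}$, and Lemma~\ref{I:lem:hessian-contraction} applies at the point $y$. The key check is that second derivatives transform correctly. The map $\ell_y$ as a function of position is invariant under linear reparametrization, because Lewis weights are invariant under $A\mapsto AT$. So $D^2\ell_{y,i}[v,v]$ computed in original coordinates equals the Hessian in new coordinates evaluated at the direction $H(y)^{1/2}v$. In new coordinates the Stein directions become $r_i:=H(y)^{1/2}Qu_i$, and Eq.~\eqref{I:eq:direction-bound} gives $\norm{r_i}\le C\sqrt{w_{y,i}}$. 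This is exactly the hypothesis of Lemma~\ref{I:lem:hessian-contraction} with $\mu=C$, so
\[
\sum_i|D^2\ell_{y,i}[Qu_i,Qu_i]|\le C_p d .
\]
With $\beta\le\tfrac12$, the bracket above is at most $C_pd$ in absolute value on $\Ktubep$.

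Multiplying by $\chi\le1$, taking expectations, and applying the triangle inequality inside the expectation gives
\[
\Big|\sum_i\E[\chi\,(Qu_i)^\top D_h^2a_i\,(Qu_i)]\Big|\le C_p\eta^2d=C_p\rho^2 .
\]
Measurability and integrability are immediate, since the integrand is bounded and supported on a compact set. The only step needing care is the coordinate-change bookkeeping: the Hessian of $\ell$ must transform as a bilinear form, with directions mapped by $H(y)^{1/2}$. Eq.~\eqref{I:eq:frame-transport} was set up for exactly this, and both sides are written against the same transformed $\mA$, so I expect no obstacle.
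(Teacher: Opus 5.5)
Your proposal is correct and is essentially the paper's proof. It bounds the integrand pointwise on $\Ktubep$ using Eq.~\eqref{I:eq:a-hessian} and the slack estimate Eq.~\eqref{I:eq:slack-hessian}, passes to $y$-normalized coordinates where Eq.~\eqref{I:eq:direction-bound} supplies the row-dependent hypothesis of Lemma~\ref{I:lem:hessian-contraction} with $\mu=O(1)$, and then multiplies by $\chi\le1$ and takes expectations. Your coordinate-covariance check (directions mapped by $H(y)^{1/2}$, with $\ell$ invariant because Lewis weights are unchanged under $A\mapsto AT$) is exactly the bookkeeping the paper invokes.
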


\begin{proof}
Pointwise on $\Ktubep$, by Eq.~\eqref{I:eq:a-hessian},
Eq.~\eqref{I:eq:frame-transport}, and the coordinate covariance of second
derivatives ($D^{2}\ell_{y,i}$ evaluated at direction $Qu_{i}$ in
base coordinates equals the $y$-normalized Hessian evaluated at
$H(y)^{1/2}Qu_{i}$), Lemma~\ref{I:lem:hessian-contraction} and
Eqs.~\eqref{I:eq:direction-bound} and~\eqref{I:eq:slack-hessian} give
\[
\begin{aligned}
&\sum_{i}|(Qu_{i})^{\top}D^{2}_{h}a_{i}\,(Qu_{i})| \\
&\quad\le\eta^{2}(
\beta\sum_{i}
|D^{2}\ell_{y,i}
[H(y)^{1/2}Qu_{i},H(y)^{1/2}Qu_{i}]|_{y\text{-norm.}}
+\sum_{i}(A_{y}Qu_{i})_{i}^{2}) \\
&\quad\le\eta^{2}(C_{p}d+Cd).
\end{aligned}
\]
Multiply by
$\chi\le1$ and take expectations.
\end{proof}

\begin{lemma}[Dimension-free Lipschitz bound]\label{I:lem:lipschitz}
On the convex set $\Ktube$,
\begin{equation}
\Lip(Z|_{\Ktube})\ \le\
C_{p}\,M_{0}\rho\,(\tau_{1}^{2}+1),
\label{I:eq:lipschitz}
\end{equation}
with no power of $d$.
\end{lemma}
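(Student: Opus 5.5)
Since $\Ktube$ is convex and $Z$ is smooth on a neighborhood of it (Lemma~\ref{I:lem:stability} gives feasibility of every segment from $x$ to $x+\eta g$, $g\in\Ktubep$), it suffices to bound $\norm{\nabla Z(g)}$ uniformly for $g\in\Ktube$ and then integrate along segments. I will differentiate $Z(h)=\sum_i(c_i-\wb_i)a_i$ by the product rule:
\[
 D Z(h)[v]=\sum_i 2(u_i^\top h)(u_i^\top v)\,a_i(h)
 +\sum_i (c_i-\wb_i)\,D_ha_i(h)[v],
\]
and bound each sum by $C_p M_0\rho(\tau_1^2+1)\norm{v}$.

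\emph{First sum.} Write it as $2\,v^\top U^\top\Diag(a)\,Uh$. Using $u_i^\top h=\sqrt{w_i}\,k_i^\top h$ with $|k_i^\top h|\le\tau_1$ on the tube, the vector $(a_i u_i^\top h)_i$ has Euclidean norm at most $\tau_1(\sum_i w_i a_i^2)^{1/2}=\tau_1\norm{a}_W$. Since $\opnorm{U}=1$ (because $U^\top U=I$), the first sum is at most $2\tau_1\norm{a}_W\norm{v}$. Equation~\eqref{I:eq:stability-W} then gives $\norm{a}_W\le e^{5/32}\tfrac p2 M_0\rho$, so this piece is $C_p M_0\rho\,\tau_1\norm v$. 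Note that bounding $\opnorm{U^\top\Diag(a\circ Uh)}$ through the $W$-norm of $a$, rather than through $\norm a_\infty$, is exactly what avoids a $\sqrt d$ loss.

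\emph{Second sum.} Set $\gamma_i:=(c_i-\wb_i)/w_i$. On the tube, $|\gamma_i|\le\tau_1^2+1$. The sum equals $\sum_i\gamma_i w_i\,D_ha_i[v]=\langle W^{1/2}\gamma,\,W^{1/2}D_ha[v]\rangle$. Applying Cauchy--Schwarz with the operator bound $\opnorm{W^{1/2}D_ha}\le e^{5/32}\tfrac p2\eta$ from Eq.~\eqref{I:eq:stability-W}, this is at most
\[
(\tau_1^2+1)\Big(\sum_i w_i\Big)^{1/2}\,C_p\eta\norm v=(\tau_1^2+1)\sqrt d\,C_p\frac{\rho}{\sqrt d}\norm v=C_p\rho(\tau_1^2+1)\norm v.
\]
The $\sqrt d$ coming from $\sum_iw_i=d$ is cancelled exactly by $\eta=\rho/\sqrt d$. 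Since $M_0\ge1$, this is within the claimed bound.

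\emph{Conclusion and main obstacle.} Adding the two pieces gives $\norm{\nabla Z}\le C_pM_0\rho(\tau_1^2+1)$ pointwise on $\Ktube$. The mean value inequality along segments, which stay in $\Ktube$ by convexity, then yields Eq.~\eqref{I:eq:lipschitz}. The main step to get right is the first sum. A naive bound $\sum_i|u_i^\top h||u_i^\top v||a_i|$ using $\norm a_\infty$ and Cauchy--Schwarz produces $\tau_1\norm a_\infty\sqrt d$, which is not dimension-free. Instead, $Uh$ must be controlled entrywise through the tube constraint, $a$ must be kept in the $W$-norm, and the $v$-dependence must be absorbed into $\opnorm U=1$.
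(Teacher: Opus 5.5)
Your proposal is correct and follows the paper's proof essentially line for line. The paper uses the same decomposition
\[
\nabla Z=2U^\top\Diag(a)Uh+(D_ha)^\top(c-\wb),
\]
and the same bound $\tau_1\norm{a}_W$ for the first term, obtained from $|k_i^\top h|\le\tau_1$ and $\opnorm{U}\le1$. For the second term it also splits with $W^{1/2}W^{-1/2}$, applies Eq.~\eqref{I:eq:stability-W} and cancels via $\eta\sqrt d=\rho$, and finally uses convexity of $\Ktube$ to turn the gradient bound into a Lipschitz bound.
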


\begin{proof}
Differentiating Eq.~\eqref{I:eq:Z-def},
\begin{equation*}
\nabla Z=2\,U^{\top}\Diag(a)\,Uh+(D_{h}a)^{\top}(c-\wb) .
\end{equation*}
For the first term, write
$U^{\top}\Diag(a)Uh=\sum_{i}u_{i}\,a_{i}\sqrt{w_{i}}\,(k_{i}^{\top}h)$
and use $\opnorm{U^{\top}}\le1$ and $|k_{i}^{\top}h|\le\tau_{1}$ on the
tube to obtain
\[
\norm{U^{\top}\Diag(a)Uh}
\le(\sum_{i}a_{i}^{2}w_{i}(k_{i}^{\top}h)^{2})^{1/2}
\le\tau_{1}\norm{a}_{W} .
\]

For the second term, $|c_{i}-\wb_{i}|\le(\tau_{1}^{2}+1)w_{i}$ gives
$\norm{W^{-1/2}(c-\wb)}\le(\tau_{1}^{2}+1)\sqrt{d}$. Inserting
$W^{1/2}W^{-1/2}$, applying Eq.~\eqref{I:eq:stability-W}, and using
$\eta\sqrt d=\rho$ yields
\[
\norm{(D_{h}a)^{\top}(c-\wb)}
\le\opnorm{W^{1/2}D_{h}a}\,\norm{W^{-1/2}(c-\wb)}
\le e^{5/32}\tfrac{p}{2}\,\eta\,(\tau_{1}^{2}+1)\sqrt{d}
=C_{p}\,\rho\,(\tau_{1}^{2}+1).
\]

Combining with
$\norm{a}_{W}\le C_{p}M_{0}\rho$ from Eq.~\eqref{I:eq:stability-W} and
$M_{0}\ge1$ yields Eq.~\eqref{I:eq:lipschitz}; convexity of $\Ktube$ converts
the gradient bound into a Lipschitz bound.
\end{proof}

The function $a(h)$ is naturally defined only while $x+\eta h$ remains
in the stable feasible tube, so Lemma~\ref{I:lem:stein} cannot be applied to
$a_{i}$ directly. We close the domain issue as follows.

\begin{lemma}[Cutoff]\label{I:lem:cutoff}
There is a function $\chi:\R^{d}\to[0,1]$, equal to $1$ on $\Ktube$,
supported in $\Ktubep$, with
\begin{equation}
\norm{\nabla\chi}\le C,
\qquad
\opnorm{D^{2}\chi}\le C\ \ \text{a.e.}
\label{I:eq:cutoff-bounds}
\end{equation}
\end{lemma}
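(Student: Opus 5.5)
This is a direct instance of the dimension-free cutoff, Lemma~\ref{II:lem:cutoff}, stated in the Toolbox.
\[
\Ktube=\{g:\norm{g}\le R_{0},\ \max_i|k_i^{\top}g|\le\tau_{0}\}
\]
is an intersection of a closed Euclidean ball with finitely many closed slabs. It is therefore closed and convex, and nonempty since it contains $0$. Lemma~\ref{II:lem:cutoff} applied with $\mathcal K=\Ktube$ produces a $[0,1]$-valued $\chi$ that equals $1$ on $\Ktube$ and is supported in the closed Euclidean $1$-neighborhood, which is exactly $\Ktubep$. It also gives $\norm{\nabla\chi}\le12$ and $\opnorm{D^{2}\chi}\le C_{0}$ a.e., which is Eq.~\eqref{I:eq:cutoff-bounds} with $C=\max\{12,C_{0}\}$.

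For a self-contained argument I would repeat the construction. Set $\phi(h)=\dist(h,\Ktube)^{2}$. By convexity, the projection $\Pi$ onto $\Ktube$ is well defined and $1$-Lipschitz. Hence $\phi$ is $C^{1,1}$ with $\nabla\phi=2(h-\Pi h)$, which is $2$-Lipschitz (the map $I-\Pi$ is firmly nonexpansive). By Rademacher, $0\preceq D^{2}\phi\preceq2I$ a.e.

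Compose with a fixed smooth decreasing profile $\theta$ that equals $1$ on $[0,1/9]$, vanishes on $[4/9,\infty)$, and has bounded first and second derivatives, and set $\chi=\theta\circ\phi$. Then $\chi=1$ where $\dist\le1/3$, and $\chi=0$ where $\dist\ge2/3$, so its support lies in $\Ktubep$. Where $\theta'(\phi)\ne0$ we have $\norm{\nabla\phi}=2\dist\le4/3$. The chain rule (valid a.e. for $C^{1,1}$ functions) then bounds the gradient and Hessian by universal constants.

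There is no real obstacle here. The only points to check are convexity, which gives the dimension-free $C^{1,1}$ regularity of $\dist^{2}$, and closedness of $\Ktube$. Both follow immediately from its description as a ball intersected with slabs. In particular, no constant depends on $d$, $n$, $R_{0}$, or $\tau_{0}$.
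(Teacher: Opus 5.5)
Your proposal is correct and follows essentially the same construction as the paper: $\chi=\theta\circ\dist(\cdot,\Ktube)^{2}$ with the same $[0,\tfrac19]$/$[\tfrac49,\infty)$ profile, convexity giving $C^{1,1}$ regularity with $0\preceq D^{2}\phi\preceq2I$ a.e., and the bound $\norm{\nabla\phi}\le\tfrac43$ on the transition region. Your opening shortcut of citing Lemma~\ref{II:lem:cutoff} with $\mathcal K=\Ktube$ is also valid, since that lemma appears earlier in the preliminaries and $\Ktube$ is closed, convex, and nonempty.
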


\begin{proof}
Let $\phi(h)=\mathrm{dist}(h,\Ktube)^{2}$. Since $\Ktube$ is convex,
$\phi$ is convex and $C^{1,1}$ with
$\nabla\phi(h)=2(h-\Pi_{\Ktube}h)$, which is $2$-Lipschitz; hence
$0\preceq D^{2}\phi\preceq2I$ a.e. For completeness, projection
optimality gives, for the residual map $r=I-\Pi_{\Ktube}$,
$\langle r(x)-r(y),x-y\rangle\ge\norm{r(x)-r(y)}^{2}$; thus $r$ is
$1$-Lipschitz, and differentiating the squared distance gives
$\nabla\phi=2r$. Compose with a fixed scalar
$C^{\infty}$ bump $\theta:[0,\infty)\to[0,1]$, $\theta\equiv1$ on
$[0,\tfrac19]$, $\theta\equiv0$ on $[\tfrac49,\infty)$: then
$\chi=\theta\circ\phi$ is $1$ on $\Ktube$, vanishes when
$\mathrm{dist}(h,\Ktube)\ge\tfrac23$, and satisfies
Eq.~\eqref{I:eq:cutoff-bounds} because $\norm{\nabla\phi}\le\tfrac43$ on the
support of $\theta'\circ\phi$. The construction is dimension-free.
\end{proof}

\begin{lemma}[McShane extension]
\label{I:lem:mcshane}
Let $\mathcal K\subseteq\R^d$ be nonempty and let $F:\mathcal K\to\R$ be
$\Lambda$-Lipschitz. Then
\begin{equation*}
\widetilde F(x):=\inf_{z\in\mathcal K}\{F(z)+\Lambda\norm{x-z}\}
\end{equation*}
is a $\Lambda$-Lipschitz extension of $F$ to $\R^d$. If
$0\in\mathcal K$ and $F(0)=0$, then $|\widetilde F(x)|\le\Lambda\norm{x}$.
\end{lemma}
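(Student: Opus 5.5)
The plan is to verify three properties of $\widetilde F$ in turn: it is finite everywhere, it agrees with $F$ on $\mathcal K$, and it is $\Lambda$-Lipschitz. The last assertion then follows from these.

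\emph{Finiteness.} Fix $z_0\in\mathcal K$. For any $z\in\mathcal K$, the Lipschitz property of $F$ and the triangle inequality give
\[
F(z)+\Lambda\norm{x-z}\ge F(z_0)-\Lambda\norm{z-z_0}+\Lambda\norm{x-z}\ge F(z_0)-\Lambda\norm{x-z_0}.
\]
So the infimum is bounded below. Taking $z=z_0$ bounds it above, hence $\widetilde F(x)$ is a real number.

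\emph{Extension property.} For $x\in\mathcal K$, the choice $z=x$ gives $\widetilde F(x)\le F(x)$. Conversely, $F(z)+\Lambda\norm{x-z}\ge F(x)$ for every $z\in\mathcal K$ by the Lipschitz property, so $\widetilde F(x)\ge F(x)$. Hence $\widetilde F=F$ on $\mathcal K$.

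\emph{Lipschitz bound.} For arbitrary $x,x'$ and each $z\in\mathcal K$,
\[
F(z)+\Lambda\norm{x-z}\le F(z)+\Lambda\norm{x'-z}+\Lambda\norm{x-x'}.
\]
Taking the infimum over $z$ on both sides gives $\widetilde F(x)\le\widetilde F(x')+\Lambda\norm{x-x'}$. Exchanging the roles of $x$ and $x'$ gives $|\widetilde F(x)-\widetilde F(x')|\le\Lambda\norm{x-x'}$. This is the same argument as in Lemma~\ref{II:lem:mcshane}.

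\emph{Final assertion.} Suppose $0\in\mathcal K$ and $F(0)=0$. Then $\widetilde F(0)=F(0)=0$ by the extension property, and the Lipschitz bound with $x'=0$ gives $|\widetilde F(x)|\le\Lambda\norm{x}$.

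The only point requiring any care is finiteness, so that the infimum is not $-\infty$. The lower bound obtained from a fixed anchor $z_0$ settles it, and no other step presents an obstacle.
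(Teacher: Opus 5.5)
Your proof is correct and follows the paper's argument essentially step for step. Like the paper, you get the extension property from the choice $z=x$ together with Lipschitzness, finiteness from a fixed anchor $z_0$ and the triangle inequality, the Lipschitz bound by taking infima and exchanging $x,x'$, and the growth bound from $\widetilde F(0)=F(0)=0$.
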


\begin{proof}
For $x\in\mathcal K$, Lipschitzness gives
$F(z)+\Lambda\norm{x-z}\ge F(x)$ for every $z\in\mathcal K$, while choosing
$z=x$ gives the reverse inequality; hence $\widetilde F=F$ on $\mathcal K$.
The infimum is finite: fixing $z_0\in\mathcal K$, Lipschitzness and the reverse
triangle inequality give
\[
F(z)+\Lambda\norm{x-z}
\ge F(z_0)-\Lambda\norm{x-z_0}
\qquad(z\in\mathcal K),
\]
while the candidate $z=z_0$ supplies a finite upper bound.
The triangle inequality gives
$\widetilde F(x)\le\widetilde F(y)+\Lambda\norm{x-y}$, and exchanging
$x,y$ proves the global Lipschitz bound. The last claim follows from
$\widetilde F(0)=F(0)=0$.
\end{proof}

Each product $a_{i}\chi$ is $C^{1,1}$ on the open tube in which
$\Ktubep$ is compactly contained (by Lemma~\ref{I:lem:stability} the segment
stays strictly feasible there) and extends by zero to a global
$W^{2,\infty}$ function of compact support: the cutoff already vanishes
on the open collar where $\mathrm{dist}(h,\Ktube)>2/3$, before the
boundary of $\Ktubep$ is reached. Thus Lemma~\ref{I:lem:stein}
applies to $F=a_{i}\chi$. Summing over $i$,
\begin{equation}
\E[Z\chi]
=\sum_{i}\E[(Qu_{i})^{\top}D^{2}(a_{i}\chi)(h)\,(Qu_{i})].
\label{I:eq:stein-applied}
\end{equation}
Expanding
$D^{2}(a_{i}\chi)=\chi D^{2}a_{i}
+(\nabla a_{i})(\nabla\chi)^{\top}+(\nabla\chi)(\nabla a_{i})^{\top}
+a_{i}D^{2}\chi$, the leading term is bounded by Lemma~\ref{I:lem:mean}. The
three cutoff terms are supported on $\Ktubep\setminus\Ktube$, an event
of probability at most $\delta_{0}$. Their deterministic sizes on the
collar are as follows. The two gradient terms together are at most
\[
2\sum_i
|(Qu_i)^\top\nabla a_i|
|(Qu_i)^\top\nabla\chi|
\le C_pL_p\eta\sum_iw_i=C_pL_p\rho\sqrt d,
\]
where we used the rowwise bounds
$|(\nabla a_{i})^{\top}Qu_{i}|\le C L_{p}\eta\sqrt{w_{i}}$ (from
Eq.~\eqref{I:eq:jacobian}, \eqref{I:eq:rowwise-metric} and \eqref{I:eq:direction-bound}),
$\norm{Qu_{i}}\le\sqrt{w_{i}}$, $\sum_iw_i=d$, and
$\norm{\nabla\chi}\le C$. The Hessian-of-cutoff term is at most
\[
\sum_i |a_i|\,\opnorm{D^2\chi}\norm{Qu_i}^{2}
\le C\sum_i|a_i|w_i
\le C\sqrt d\,\norm a_W
\le C_pM_0\rho\sqrt d.
\]
Integrating on the collar and using
$\delta_0\le\sqrt{\delta_0}$ gives the common upper bound
\begin{equation}
C_{p}\,M_{0}\rho\,(1+\sqrt{d})\,\sqrt{\delta_{0}} .
\label{I:eq:cutoff-error}
\end{equation}
Finally, apply Lemma~\ref{I:lem:mcshane} to $Z|_{\Ktube}$ and denote the
same-Lipschitz-constant extension by $\widetilde{Z}:\R^{d}\to\R$. The
means of
$\widetilde{Z}$ and $Z\chi$ agree up to an error supported off
$\Ktube$. Here $0\in\Ktube$, $a(0)=0$, and hence $Z(0)=0$.
Because the extension preserves values on $\Ktube$, its Lipschitz
constant $\Lambda$ gives the global growth bound
$|\widetilde Z(h)|\le\Lambda\norm h$. Using this fact and the
pointwise bound
$|Z|\le(\tau_{1}^{2}+1)\sqrt{d}\,\norm{a}_{W}$ on $\Ktubep$, the
second moments satisfy
$\E[\widetilde Z(h)^2]\le\Lambda^2\E[\norm h^2]\le\Lambda^2d$ and
$\E[(Z\chi)^2]\le C_pM_0^2\rho^2(\tau_1^2+1)^2d$. Cauchy--Schwarz on
the probability-$\delta_{0}$ complement now gives
\begin{equation}
|\E[\widetilde{Z}]-\E[Z\chi]|
\le C_{p}\,M_{0}\rho\,(\tau_{1}^{2}+1)\,\sqrt{d}\,\sqrt{\delta_{0}} .
\label{I:eq:extension-error}
\end{equation}
With $\delta_{0}=\eone^{2}d^{-3}$, both Eq.~\eqref{I:eq:cutoff-error} and
Eq.~\eqref{I:eq:extension-error} are at most $C_{p}M_{0}\rho\,(\tau_{1}^{2}+1)\,\eone/d$.

\begin{lemma}[Concentration of the fluctuation]\label{I:lem:Z-bound}
With probability at least $1-2\eone$ over $h\sim N(0,Q)$,
\begin{equation}
|Z|\ \le\
C_{p}[M_{0}^{2}\rho^{2}
+M_{0}\rho\,(\tau_{1}^{2}+1)(\sqrt{\log(2/\eone)}+\eone)].
\label{I:eq:Z-bound}
\end{equation}
\end{lemma}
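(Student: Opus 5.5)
The plan is to bound $Z$ through its McShane extension $\widetilde Z$, which is globally Lipschitz, and to split the deviation into a mean part and a fluctuation part.

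\textbf{Reduction to $\widetilde Z$.} By Lemma~\ref{I:lem:tube-probability}, $h\in\Ktube$ with probability at least $1-\delta_0\ge1-\eone$. On this event $Z=\widetilde Z$. Hence it suffices to show that, with probability at least $1-\eone$,
\[
|\widetilde Z-\E\widetilde Z|\ \text{and}\ |\E\widetilde Z|
\]
are each bounded by the corresponding pieces of Eq.~\eqref{I:eq:Z-bound}.

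\textbf{Mean bound.} Write
\[
\E\widetilde Z=\E[Z\chi]+(\E\widetilde Z-\E[Z\chi]).
\]
The second difference is controlled by Eq.~\eqref{I:eq:extension-error}, which with $\delta_0=\eone^2d^{-3}$ is at most $C_pM_0\rho(\tau_1^2+1)\eone/d$. For $\E[Z\chi]$, use the Stein representation Eq.~\eqref{I:eq:stein-applied}. The leading term $\chi D^2a_i$ is at most $C_p\rho^2\le C_pM_0^2\rho^2$ by Lemma~\ref{I:lem:mean}, using $M_0\ge1$. The three cutoff terms are at most $C_pM_0\rho(\tau_1^2+1)\eone/d$ by Eq.~\eqref{I:eq:cutoff-error}.

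One point needs care in the order of operations. Eq.~\eqref{I:eq:stein-applied} gives $\E[(c-\wb)^\top a\,\chi]$ directly, because $\E[(c_i-\wb_i)F]$ with $F=a_i\chi$ is exactly the Stein identity of Lemma~\ref{I:lem:stein} with $v=u_i$. So no extra centering term appears. Altogether,
\[
|\E\widetilde Z|\le C_p\bigl[M_0^2\rho^2+M_0\rho(\tau_1^2+1)\eone\bigr].
\]

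\textbf{Fluctuation bound.} Write $h=Q^{1/2}\xi$. Since $\opnorm{Q^{1/2}}\le1$, the map $\xi\mapsto\widetilde Z(Q^{1/2}\xi)$ is $\Lambda$-Lipschitz, with $\Lambda\le C_pM_0\rho(\tau_1^2+1)$ by Lemma~\ref{I:lem:lipschitz} and Lemma~\ref{I:lem:mcshane}. Lemma~\ref{I:lem:gaussian-concentration} with $t=\Lambda\sqrt{2\log(2/\eone)}$ then gives
\[
|\widetilde Z-\E\widetilde Z|\le C_pM_0\rho(\tau_1^2+1)\sqrt{\log(2/\eone)}
\]
with probability at least $1-\eone$.

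\textbf{Conclusion.} A union bound over the tube event and the concentration event gives total failure probability at most $\delta_0+\eone\le2\eone$, and on the complement the triangle inequality yields Eq.~\eqref{I:eq:Z-bound}.

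\textbf{Main obstacle.} I expect the only delicate step to be the bookkeeping that the Stein mean and the two cutoff and extension errors genuinely combine to $\E[Z\chi]$ and $\E\widetilde Z$ as stated, with the $\sqrt d$ factors absorbed by $\sqrt{\delta_0}$. That bookkeeping was already carried out in Eqs.~\eqref{I:eq:cutoff-error} and~\eqref{I:eq:extension-error}, so the rest of the argument is assembly.
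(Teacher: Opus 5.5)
Your proposal is correct and matches the paper's proof essentially step for step. Both arguments use the same four steps:
\begin{enumerate}
\item Gaussian Lipschitz concentration of the McShane extension $\widetilde Z(Q^{1/2}\cdot)$, using $\opnorm{Q^{1/2}}\le1$ and the constant from Lemma~\ref{I:lem:lipschitz}.
\item A mean bound assembled from Lemma~\ref{I:lem:mean}, Eq.~\eqref{I:eq:stein-applied}, Eq.~\eqref{I:eq:cutoff-error} and Eq.~\eqref{I:eq:extension-error}.
\item The observation that $Z=\widetilde Z$ on the tube event.
\item A union bound, with total failure $\delta_0+\eone\le2\eone$.
\end{enumerate}
Your only deviation is cosmetic: you relax the paper's $\eone/d$ error term in $|\E\widetilde Z|$ to $\eone$, and this weaker bound still fits the stated inequality.
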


\begin{proof}
Write $h=Q^{1/2}\xi$; since $Q\preceq I$, $\widetilde Z(Q^{1/2}\cdot)$
is Lipschitz in $\xi$ with constant
$\Lip(\widetilde Z)\le C_{p}M_{0}\rho(\tau_{1}^{2}+1)$ from
Lemma~\ref{I:lem:lipschitz}. By Lemma~\ref{I:lem:gaussian-concentration},
$|\widetilde Z-\E[\widetilde Z]|
\le\Lip(\widetilde Z)\sqrt{2\log(2/\eone)}$ with probability
$\ge1-\eone$. By
Lemma~\ref{I:lem:mean}, Eq.~\eqref{I:eq:stein-applied}, Eq.~\eqref{I:eq:cutoff-error} and Eq.~\eqref{I:eq:extension-error},
$|\E[\widetilde Z]|\le C_{p}\rho^{2}+C_{p}M_{0}\rho(\tau_{1}^{2}+1)\eone/d$.
On the event $h\in\Ktube$ (probability $\ge1-\delta_{0}\ge1-\eone$),
$Z=\widetilde Z$. Combining the preceding bounds on the intersection of
these events proves the claim.
\end{proof}

For fixed $\eone$, it suffices to take
\begin{equation}
\rho\ \le\ \frac{c}{\Gamma_{p}\,M_{0}\,(1+\log(nd/\eone))},
\qquad
\kappa=\frac{r^{2}}{\rho^{2}},
\label{I:eq:rho-choice}
\end{equation}
where $\Gamma_{p}$ is a sufficiently large polynomial in $p$; thus
$\kappa$ is polylogarithmic in $n$ and $d$, and
Eq.~\eqref{I:eq:bootstrap-smallness} holds as well. By Eq.~\eqref{I:eq:M0-size},
this choice makes the right-hand sides of
Eq.~\eqref{I:eq:remainder-quad}, \eqref{I:eq:remainder-det} and \eqref{I:eq:Z-bound} smaller than any
desired universal constant.

\begin{lemma}[Uniform lower tail for the complete ratio]
\label{I:lem:acceptance}
There are universal constants $C<\infty$ and $\epsilon<\tfrac1{10}$
such that, uniformly at every deterministic $x\in\intK$, the walk of
Definition~\ref{I:def:walk} satisfies
\begin{equation}
\Pr_{Y\sim q_{x}}[Y\in K\ \text{ and }\
\log R_{x}(Y)\ge-C]\ \ge\ 1-\epsilon .
\label{I:eq:acceptance}
\end{equation}
The constants are independent of $x$, $n$, $d$, $L$, $R$, and the
orientation of the regularizer.
\end{lemma}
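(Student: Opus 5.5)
Everything reduces to the normalized coordinates of Section~\ref{I:sec:cancellation} at the fixed point $x$. We write $Y=x+\eta h$ with $h\sim N(0,Q)$ and split
\[
\log R_x(Y)=-(f(Y)-f(x))+\Lambda_q(x,Y),
\]
then bound each piece on a single good event. Fix $\eone$ to be a small universal constant, say $\eone=1/100$. This makes $M_0$ and $\tau_1$ as in Eq.~\eqref{I:eq:M0-size}, and we choose $\rho$ and $\kappa$ by Eq.~\eqref{I:eq:rho-choice}. The good event is
\[
\mathcal G:=\{h\in\Ktube\}\cap\{\text{the conclusion of Lemma~\ref{I:lem:Z-bound} holds}\}\cap\{\norm{h}\le R_0\}.
\]
Its probability is at least $1-\delta_0-2\eone\ge 1-3\eone$.

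\emph{Feasibility.} On $h\in\Ktube\subseteq\Ktubep$, Lemma~\ref{I:lem:stability} shows the whole segment $[x,Y]$ lies in $\intK$, so $Y\in K$. Equivalently, Eq.~\eqref{I:eq:slack-bound} gives $1+\eta\zeta_i\ge 1-1/64>0$ for every $i$.

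\emph{Proposal ratio.} On $\mathcal G$, Lemma~\ref{I:lem:ledger} writes $\Lambda_q$ as $-\tfrac12(c-\wb)^\top b$ plus the determinant bracket. Write $b=2a+(b-2a)$, so that
\[
-\tfrac12(c-\wb)^\top b=-Z-\tfrac12(c-\wb)^\top(b-2a).
\]
Three bounds now apply: Lemma~\ref{I:lem:Z-bound} controls $|Z|$, Lemma~\ref{I:lem:remainder-quad} controls the quadratic remainder, and Lemma~\ref{I:lem:remainder-det} controls $\tfrac12|\log\det(I+E)-\tr E|$. With $\rho$ chosen by Eq.~\eqref{I:eq:rho-choice}, each right-hand side is at most a universal constant. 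The constant $\Gamma_p$ absorbs $C_p$, and the factor $(1+\log(nd/\eone))$ absorbs $\tau_1^2+1$ and $\sqrt{\log(2/\eone)}$. Hence $\Lambda_q\ge -C_1$ on $\mathcal G$.

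\emph{Potential term.} Since $f$ is $L$-Lipschitz,
\[
f(Y)-f(x)\le L\norm{Y-x}_2 .
\]
We bound this in original coordinates. The proposal is $Y-x=\tfrac{r}{\sqrt d}G(x)^{-1/2}\xi$, and $G(x)\succeq L^2I$ gives $L\norm{G(x)^{-1/2}\xi}\le\norm{\xi}$. Therefore
\[
L\norm{Y-x}\le \tfrac{r}{\sqrt d}\norm{\xi}.
\]
Gaussian concentration (Lemma~\ref{I:lem:gaussian-concentration}) gives $\norm{\xi}\le\sqrt d+\sqrt{2\log(1/\eone)}$ with probability at least $1-\eone$. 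We add this event to $\mathcal G$; it is the same $\xi$ that defines $h$ in normalized coordinates, so it does not depend on the coordinate change. On it, $f(Y)-f(x)\le r(1+\sqrt{2\log(1/\eone)})=:C_2$.

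\emph{Conclusion.} On the enlarged event, which has probability at least $1-4\eone$, we get $Y\in K$ and $\log R_x(Y)\ge -C_1-C_2$. Taking $\epsilon=4\eone<1/10$ and $C=C_1+C_2$ gives Eq.~\eqref{I:eq:acceptance}.

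\emph{Uniformity.} None of the inputs depends on $x$, $n$, $d$, $L$, $R$, or $\creg$ beyond $\creg\succeq0$. The $n,d$-dependence enters only through $\tau_1$, and $\kappa$ cancels it.

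\emph{Main obstacle.} The delicate step is checking that the single polylogarithmic choice Eq.~\eqref{I:eq:rho-choice} simultaneously drives Eqs.~\eqref{I:eq:remainder-quad}, \eqref{I:eq:remainder-det} and \eqref{I:eq:Z-bound} below a universal constant. This requires confirming that:
\begin{itemize}
\item the products $M_0\rho(\tau_1^2+1)$ and $M_0^2\rho^2(\tau_1^2+1)$ involve $\tau_1^2$ only linearly against one factor of $\rho$;
\item the bootstrap condition Eq.~\eqref{I:eq:bootstrap-smallness}, which underlies all three lemmas, is implied by the same choice. This holds because $\eta\tau_1\le\rho\tau_1$.
\end{itemize}
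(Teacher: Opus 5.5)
Your proof is correct and follows the paper's argument. It uses the same three events (tube, the concentration event of Lemma~\ref{I:lem:Z-bound}, Gaussian norm), the same split of the ledger into $-Z$ plus the quadratic and determinant remainders, and the same bound $L\norm{Y-x}_2\le\norm{Y-x}_{G(x)}=\frac{r}{\sqrt d}\norm{\xi}$ for the potential. The only deviation is cosmetic: you certify feasibility from the tube via Lemma~\ref{I:lem:stability}, whereas the paper uses the norm event and Dikin containment of the unit $H(x)$-ellipsoid; both are valid, and your extra event $\{\norm{h}\le R_0\}$ is already contained in $\Ktube$.
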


\begin{proof}
Work in the normalized coordinates; the ratio $R_{x}$ is invariant
under the affine substitution. Intersect three events: the tube event
$\{h\in\Ktube\}$ (probability $\ge1-\delta_{0}$,
Lemma~\ref{I:lem:tube-probability}); the concentration event of
Lemma~\ref{I:lem:Z-bound} (probability $\ge1-2\eone$); and the Gaussian norm
event $\{\norm{\xi}\le\sqrt{d}+\sqrt{2\log(1/\eone)}\}$ (probability
$\ge1-\eone$ by Lemma~\ref{I:lem:gaussian-concentration}, since
$\E[\norm{\xi}]\le\sqrt d$). The failure probabilities sum to at most
$\epsilon:=\eone^{2}+3\eone<\tfrac1{10}$ for a small universal
$\eone$, since $\delta_{0}=\eone^{2}d^{-3}\le\eone^{2}$.

\emph{Feasibility.} On the norm event,
$\norm{Y-x}_{H(x)}=\eta\norm{h}\le\eta\norm{\xi}\le\rho M_{0}<1$. By
the base-point identity
$(A_{x})_{i}H(x)^{-1}(A_{x})_{i}^{\top}=w_{i}^{2/p}\le1$ (see
Eq.~\eqref{I:eq:dikin-containment} below) and Cauchy--Schwarz, the open unit
$H(x)$-ellipsoid lies in $\intK$; hence $Y$ and the whole segment
$[x,Y]$ are interior.

\emph{Proposal ratio.} By Lemma~\ref{I:lem:ledger},
\[
\Lambda_{q}(x,Y)
=-Z-\tfrac12(c-\wb)^{\top}(b-2a)
+\tfrac12\{\log\det(I+E)-\tr[E]\},
\]
and on the intersected event each of the three terms is bounded by a
universal constant by Lemma~\ref{I:lem:Z-bound},
Lemma~\ref{I:lem:remainder-quad}, and Lemma~\ref{I:lem:remainder-det}, using the
choice Eq.~\eqref{I:eq:rho-choice}. Hence $\Lambda_{q}(x,Y)\ge-C/2$.

\emph{Target ratio.} If $L=0$, Lipschitzness says that $f$ is constant
on the convex body $K$, so $f(Y)-f(x)=0$. If $L>0$, then
$G(x)\succeq L^{2}I$ and the proposal has covariance
$\tfrac{r^{2}}{d}G(x)^{-1}$. Work on the norm event.
Lipschitzness, the metric lower bound, and the Gaussian proposal
representation give
\begin{equation*}
f(Y)-f(x)\le L\norm{Y-x}_{2}
\le L\cdot\frac{1}{L}\,\norm{Y-x}_{G(x)}
=\frac{r}{\sqrt{d}}\,\norm{\xi}=O(r) .
\end{equation*}

Combining, $\log R_{x}(Y)=-(f(Y)-f(x))+\Lambda_{q}(x,Y)
\ge-O(r)-C/2\ge-C$ for $r$ a small universal constant.
\end{proof}

\begin{remark}
Lemma~\ref{I:lem:acceptance} holds at \emph{every} interior point, with no
dependence on the distance to $\partial K$. The analysis is affine-invariant,
and the only property of the regularizer used in
Sections~\ref{I:sec:cancellation}, \ref{I:sec:stability}, and
\ref{I:sec:hessian} is $\creg\succeq0$. Consequently no boundary-layer or
$s$-conductance
device is needed in Section~\ref{I:sec:mixing}.
\end{remark}

\subsection{From acceptance to mixing}\label{I:sec:mixing}

We now return to the original coordinates and convert
Lemma~\ref{I:lem:acceptance} into conductance through five explicit steps:
local Gaussian overlap, a kernel-overlap lemma, symmetry, cross-ratio
isoperimetry, and Cheeger's inequality. We execute them for the
\emph{complete} kernel directly, so that no forward/reverse asymmetry
ever reappears.

\begin{lemma}[SSC of the regularized metric]\label{I:lem:ssc-G}
$G=\kappa H+L^{2}I$ with $\kappa\ge2(1+p^{2})$ satisfies, for all
$x\in\intK$ and $v\in\R^{d}$,
\begin{equation*}
\fnorm{G(x)^{-1/2}\,DG(x)[v]\,G(x)^{-1/2}}\le2\norm{v}_{G(x)} .
\end{equation*}
\end{lemma}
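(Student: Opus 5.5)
The plan is to reduce the claim to Lemma~\ref{I:lem:ssc-ls}, using only the fact that the regularizer $L^{2}I$ is constant in $x$. First, $DG(x)[v]=\kappa\,DH(x)[v]$, because the regularizer contributes no derivative. Second, $G(x)\succeq\kappa H(x)$, so $G(x)^{-1}\preceq(\kappa H(x))^{-1}$. Both comparisons will be used in Loewner order; we never need $H$ and $G$ to commute.

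Write $X:=\kappa\,DH(x)[v]$, a symmetric matrix. Then
\[
\fnorm{G^{-1/2}XG^{-1/2}}^{2}=\tr[G^{-1}XG^{-1}X].
\]
We bound this quantity in two steps, using the standard fact that $\tr[AXBX]$ is monotone in each of the PSD arguments $A$ and $B$. To see the monotonicity, note that for $0\preceq A\preceq A'$ and $B\succeq0$,
\[
\tr[(A'-A)XBX]=\tr[(A'-A)^{1/2}XBX(A'-A)^{1/2}]\ge0,
\]
since $XBX\succeq0$. This is in the spirit of Lemma~\ref{II:lem:psd-facts}(i). Applying it first in $A$ and then in $B$ with $G^{-1}\preceq(\kappa H)^{-1}$ gives
\[
\fnorm{G^{-1/2}XG^{-1/2}}
\le\fnorm{(\kappa H)^{-1/2}\,D(\kappa H)[v]\,(\kappa H)^{-1/2}}.
\]

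By Lemma~\ref{I:lem:ssc-ls}, which applies since $\kappa\ge2(1+p^{2})$, the right-hand side is at most $2\norm{v}_{\kappa H(x)}$. Since $\kappa H(x)\preceq G(x)$, we have $\norm{v}_{\kappa H(x)}\le\norm{v}_{G(x)}$, which closes the chain of inequalities.

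The only step needing care is the Frobenius comparison under the Loewner order of the inverses. A naive argument with $\opnorm{G^{-1/2}(\kappa H)^{1/2}}\le1$ works equally well: write
\[
G^{-1/2}XG^{-1/2}=T\,\bigl[(\kappa H)^{-1/2}X(\kappa H)^{-1/2}\bigr]\,T^{\top},
\qquad T:=G^{-1/2}(\kappa H)^{1/2},
\]
and use $\fnorm{TMT^{\top}}\le\opnorm{T}^{2}\fnorm{M}$ together with $TT^{\top}=G^{-1/2}\kappa HG^{-1/2}\preceq I$. I would present this version, since it is the shortest route.
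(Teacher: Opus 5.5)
Your proof is correct and is essentially the paper's argument. The paper writes $G^{-1/2}DG[v]G^{-1/2}=C^{\top}XC$ with $C=(\kappa H)^{1/2}G^{-1/2}$ (your $T^{\top}$) and $\opnorm{C}\le1$ because $\kappa H\preceq G$; it then applies SSC of $\kappa H$ from Lemma~\ref{I:lem:ssc-ls} and $\norm{v}_{\kappa H}\le\norm{v}_{G}$, exactly as in the version you chose to present. Your alternative trace-monotonicity route is also valid but unnecessary.
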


\begin{proof}
Let $H_{\kappa}:=\kappa H$, which is SSC by Lemma~\ref{I:lem:ssc-ls}. Set
$X=H_{\kappa}^{-1/2}DH_{\kappa}[v]H_{\kappa}^{-1/2}$, so
$\fnorm{X}\le2\norm{v}_{H_{\kappa}}$, and
$C=H_{\kappa}^{1/2}G^{-1/2}$, so $\opnorm{C}\le1$ because
$H_{\kappa}\preceq G$. Also, $DG[v]=DH_{\kappa}[v]$.
Thus the definitions of $C$ and $X$, SSC of $H_\kappa$, and
$H_\kappa\preceq G$ imply
\[
\fnorm{G^{-1/2}DG[v]G^{-1/2}}
=\fnorm{C^{\top}XC}\le\fnorm{X}
\le2\norm{v}_{H_{\kappa}}\le2\norm{v}_{G} .
\]
\end{proof}

Since the Frobenius norm dominates the spectral norm, $G$ satisfies
the hypothesis of Lemma~\ref{I:lem:distortion}: if
\begin{equation}
\Delta:=\norm{x-z}_{G(x)}\le\frac{c_{\mathrm{loc}}\,r}{\sqrt{d}}
\qquad(x,z\in\intK),
\label{I:eq:close-pair}
\end{equation}
then (for $\Delta\le\tfrac12$, ensured by $r\le1$)
\begin{equation}
(1-\Delta)^{2}G(x)\preceq G(z)\preceq(1-\Delta)^{-2}G(x) .
\label{I:eq:distortion}
\end{equation}

\begin{lemma}[Pinsker's inequality]\label{I:lem:pinsker}
For probability densities $p,q$,
\begin{equation*}
\dtv(p,q)^2\le\tfrac12D_{\mathrm{KL}}(p\Vert q).
\end{equation*}
\end{lemma}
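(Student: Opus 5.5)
The statement to prove is Pinsker's inequality, $\dtv(p,q)^2\le\tfrac12 D_{\mathrm{KL}}(p\Vert q)$. I will reduce it to a two-point problem and then settle that case by a one-variable calculus inequality. Assume $D_{\mathrm{KL}}(p\Vert q)<\infty$; otherwise there is nothing to prove, and in particular $p\ll q$. Let $A:=\{p>q\}$ and put $a:=\int_A p$ and $b:=\int_A q$. Then $\dtv(p,q)=a-b$, since the total-variation distance is attained on $A$.

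\emph{Reduction to two points.} The data-processing inequality applied to the measurable map $x\mapsto\ind{A}(x)$ gives
\[
D_{\mathrm{KL}}(p\Vert q)\ \ge\ a\log\frac ab+(1-a)\log\frac{1-a}{1-b}=:k(a,b).
\]
If I want to avoid quoting data processing, I will use the log-sum inequality instead. That inequality follows from convexity of $t\mapsto t\log t$ and Jensen's inequality applied under $q$ restricted to $A$ and to $A^c$ separately. It suffices, then, to prove $k(a,b)\ge 2(a-b)^2$ for all $0\le b\le a\le1$, with the usual conventions $0\log0=0$.

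\emph{Binary inequality.} Fix $a$ and set $g(b):=k(a,b)-2(a-b)^2$ for $b\in(0,a]$. At $b=a$ we have $g(a)=0$. Differentiating,
\[
g'(b)=-\frac ab+\frac{1-a}{1-b}+4(a-b)=(a-b)\Bigl(4-\frac{1}{b(1-b)}\Bigr)\le0,
\]
because $b(1-b)\le\tfrac14$. So $g$ is nonincreasing on $(0,a]$ and therefore $g(b)\ge g(a)=0$ there. At the endpoint $b=0$ with $a>0$, $k=+\infty$ and the inequality is trivial; at $a=1$ the expression is handled by continuity. Combining the two steps gives $D_{\mathrm{KL}}(p\Vert q)\ge2\,\dtv(p,q)^2$, which is the claim.

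The only step that needs real care is the reduction. One must check that the coarse-graining inequality holds with possibly infinite values and degenerate masses, where $a$ or $b$ equals $0$ or $1$. The log-sum route handles these cases cleanly once the $0\log0$ conventions are fixed. After that, the derivative computation is routine.
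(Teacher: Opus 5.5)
Your proof is correct and follows essentially the paper's argument: both reduce to the binary divergence $a\log\frac ab+(1-a)\log\frac{1-a}{1-b}$ via Jensen (log-sum) on $A$ and $A^c$, then finish with a one-variable calculus bound. The only difference is cosmetic. The paper fixes $b$ and uses $g_b(b)=g_b'(b)=0$ together with the second derivative in $a$ being $1/[s(1-s)]\ge4$, whereas you fix $a$ and show the first derivative in $b$ factors as $(a-b)\bigl(4-\frac{1}{b(1-b)}\bigr)\le0$.
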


\begin{proof}
The claim is immediate if the relative entropy is infinite. Otherwise
put $A=\{p\ge q\}$,
$a=\int_Ap$, and $b=\int_Aq$. Then
$a-b=\dtv(p,q)$. Jensen's inequality on $A$ and $A^c$ (equivalently,
the two-set log-sum inequality) gives
\[
D_{\mathrm{KL}}(p\Vert q)
\ge a\log\frac ab+(1-a)\log\frac{1-a}{1-b}=:g_b(a).
\]

For fixed $b\in(0,1)$, $g_b(b)=g_b'(b)=0$ and
$g_b''(s)=1/[s(1-s)]\ge4$, so
$g_b(a)\ge2(a-b)^2$. Endpoint cases follow by continuity.
\end{proof}

\begin{lemma}[Proposal overlap]\label{I:lem:proposal-overlap}
There is a universal $c_{\mathrm{loc}}>0$ such that
Eq.~\eqref{I:eq:close-pair} implies
\[
 \dtv(q_x,q_z)\le\delta_q
\]
for a fixed constant $\delta_q<1-2\epsilon$, where $\epsilon$ is from
Lemma~\ref{I:lem:acceptance}.
\end{lemma}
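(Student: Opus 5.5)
We bound $\dtv(q_x,q_z)$ by the triangle inequality through the intermediate Gaussian $N(z,\tfrac{r^2}{d}G(x)^{-1})$. One piece is a mean shift with a common covariance; the other is a covariance change with a common mean. Pinsker (Lemma~\ref{I:lem:pinsker}) controls each piece through its KL divergence.

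\emph{Mean shift.} For two Gaussians sharing the covariance $\Sigma=\tfrac{r^2}{d}G(x)^{-1}$, the KL divergence is
\[
\tfrac12(x-z)^\top\Sigma^{-1}(x-z)=\tfrac{d}{2r^2}\Delta^2\le\tfrac{c_{\mathrm{loc}}^2}{2}
\]
by Eq.~\eqref{I:eq:close-pair}. Pinsker then gives a total-variation contribution of at most $c_{\mathrm{loc}}/2$.

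\emph{Covariance change.} Take means equal to $z$ and covariances $\Sigma_1=\tfrac{r^2}{d}G(z)^{-1}$ and $\Sigma_2=\tfrac{r^2}{d}G(x)^{-1}$. The Gaussian KL formula is
\[
\tfrac12\bigl(\tr[\Sigma_2^{-1}\Sigma_1]-d-\log\det(\Sigma_2^{-1}\Sigma_1)\bigr)=\tfrac12\sum_j(\lambda_j-1-\log\lambda_j),
\]
where $\lambda_j$ are the eigenvalues of $M:=G(x)^{1/2}G(z)^{-1}G(x)^{1/2}$.

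A crude bound fails here. The distortion bound Eq.~\eqref{I:eq:distortion} only gives $|\lambda_j-1|=O(\Delta)$ for each $j$. Summing $O(\Delta^2)$ over $d$ eigenvalues yields $d\Delta^2=O(r^2)$, which is a constant but a fixed one; it becomes small only because $r$ is a small universal constant. A sharper, Frobenius-level bound is available from strong self-concordance.

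\emph{Frobenius control.} Let $g(t)=G(x+t(z-x))$. Lemma~\ref{I:lem:ssc-G} gives
\[
\fnorm{g(t)^{-1/2}g'(t)g(t)^{-1/2}}\le2\norm{z-x}_{g(t)}\le 2\Delta/(1-t\Delta),
\]
using the ODE bound from the proof of Lemma~\ref{I:lem:distortion}. Conjugating by $G(x)^{-1/2}$ changes Frobenius norms by at most a factor $(1-\Delta)^{-2}$, by Eq.~\eqref{I:eq:distortion}. Integrating in $t$ then gives
\[
\fnorm{G(x)^{-1/2}G(z)G(x)^{-1/2}-I}\le C\Delta.
\]
Hence $\sum_j(\lambda_j^{-1}-1)^2\le C\Delta^2$, and since every eigenvalue lies in $[e^{-1},e]$, $\lambda-1-\log\lambda\le C(\lambda-1)^2$ there. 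So the KL divergence is $O(\Delta^2)=O(c_{\mathrm{loc}}^2r^2/d)$, negligible. The crude eigenvalue bound would also suffice, since $r$ and $c_{\mathrm{loc}}$ may both be shrunk.

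\emph{Conclusion.} Combining the two pieces, $\dtv(q_x,q_z)\le c_{\mathrm{loc}}/2+C c_{\mathrm{loc}}r$. Lemma~\ref{I:lem:acceptance} gives a universal $\epsilon<\tfrac1{10}$. Choosing $c_{\mathrm{loc}}$ small enough then makes this bound at most a fixed $\delta_q<1-2\epsilon$, for instance $\delta_q=\tfrac12$.

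The main obstacle is the covariance term. Its bookkeeping, including the non-commuting square roots when conjugating the SSC bound along the segment, has to be done so that no stray factor of $d$ appears. The fallback is the crude eigenvalue bound above combined with taking $r$ small.
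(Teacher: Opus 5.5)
Your proof is correct, but it takes a different route from the paper's in two respects.

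\textbf{The paper's route.} The paper writes the exact KL divergence between $q_x$ and $q_z$ as a single formula. It bounds each of the $d$ eigenvalue terms by $C\Delta^2$ using only the two-sided Loewner distortion Eq.~\eqref{I:eq:distortion}, and bounds the mean term via $\norm{x-z}_{G(z)}^2\le4\Delta^2$. This gives $D_{\mathrm{KL}}\le C(d\Delta^2+d\Delta^2/r^2)\le C'c_{\mathrm{loc}}^2$, followed by one application of Pinsker.

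\textbf{Your route.} You split through $N(z,\tfrac{r^2}{d}G(x)^{-1})$. This measures the mean shift exactly in $G(x)$ and lets you skip the reverse-norm distortion step. For the covariance piece you integrate the Frobenius-norm strong self-concordance of Lemma~\ref{I:lem:ssc-G} along the segment, obtaining $O(\Delta^2)$ with no factor of $d$. That integration is valid: the conjugation factor $(1-t\Delta)^{-2}$ and the ODE bound $2\Delta/(1-t\Delta)$ integrate to $(1-\Delta)^{-2}-1\le C\Delta$.

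\textbf{The crude bound does not fail.} It gives $d\Delta^2\le c_{\mathrm{loc}}^2r^2$, which is small by shrinking $c_{\mathrm{loc}}$ alone, with no smallness of $r$ needed. That is exactly the paper's argument, so your ``fallback'' is the actual proof, and the covariance term is not the obstacle you describe.

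\textbf{What the refinement buys and costs.} It buys nothing for this lemma: at the prescribed radius the mean-shift term already costs $\tfrac{d}{2r^2}\Delta^2=\Theta(c_{\mathrm{loc}}^2)$, which dominates the covariance term under either bound. It does cost generality, because it needs Frobenius-level self-concordance. The paper's route uses only operator-norm distortion, so it transfers verbatim to the spectrahedral Lemma~\ref{II:lem:proposal-overlap}, where the paper proves only operator-norm self-concordance (Lemma~\ref{II:lem:op-sc}).
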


\begin{proof}
For the Gaussians
$q_{x}=N(x,\tfrac{r^{2}}{d}G(x)^{-1})$,
$q_{z}=N(z,\tfrac{r^{2}}{d}G(z)^{-1})$, taking the expectation of the
log density ratio gives the general identity
\[
D_{\mathrm{KL}}(N(m_0,\Sigma_0)\Vert N(m_1,\Sigma_1))
=\frac12[\tr[\Sigma_1^{-1}\Sigma_0]
+\norm{m_1-m_0}_{\Sigma_1^{-1}}^2-d
+\log\frac{\det\Sigma_1}{\det\Sigma_0}].
\]
Substitution gives
\begin{equation*}
D_{\mathrm{KL}}(q_{x}\Vert q_{z})
=\frac12[\sum_{j=1}^{d}(\lambda_{j}-1-\log\lambda_{j})
+\frac{d}{r^{2}}\norm{x-z}_{G(z)}^{2}],
\end{equation*}
where $\lambda_{j}$ are the eigenvalues of
$G(x)^{-1/2}G(z)G(x)^{-1/2}$. By Eq.~\eqref{I:eq:distortion},
$\lambda_{j}\in[(1-\Delta)^{2},(1-\Delta)^{-2}]$. For
$\Delta\le1/2$, this interval gives
$|\lambda_j-1|\le C\Delta$ and $\lambda_j\ge1/4$; moreover
\[
0\le\lambda-1-\log\lambda
=\int_1^\lambda\frac{t-1}{t}\,dt
\le C(\lambda-1)^2.
\]

Consequently
$\sum_{j}(\lambda_{j}-1-\log\lambda_{j})\le Cd\Delta^{2}$ and
$\norm{x-z}_{G(z)}^{2}\le4\Delta^{2}$. With $r\le1$ and
Eq.~\eqref{I:eq:close-pair},
\[
D_{\mathrm{KL}}(q_{x}\Vert q_{z})
\le C(d\Delta^{2}+\frac{d\Delta^{2}}{r^{2}})
\le C'c_{\mathrm{loc}}^{2} .
\]

Choosing $c_{\mathrm{loc}}$ sufficiently small and applying
Lemma~\ref{I:lem:pinsker} gives the claimed bound.
\end{proof}

The next lemma is what lets us use the complete-ratio estimate
Lemma~\ref{I:lem:acceptance} directly, with no separate treatment of
determinant and quadratic terms at the coupling stage either.

\begin{lemma}[Kernel overlap]\label{I:lem:kernel-overlap}
Suppose Eq.~\eqref{I:eq:acceptance} holds at both $x$ and $z$, and
$\dtv(q_{x},q_{z})\le\delta_{q}$. Then the accepted (moving) parts of
the two transition kernels have common mass at least
\begin{equation}
\gamma_{0}:=e^{-C}\,(1-\delta_{q}-2\epsilon)>0,
\label{I:eq:gamma0}
\end{equation}
and consequently, for the $\tfrac12$-lazy kernels $P_{x},P_{z}$,
\begin{equation}
\dtv(P_{x},P_{z})\le1-\frac{\gamma_{0}}{2} .
\label{I:eq:kernel-overlap}
\end{equation}
\end{lemma}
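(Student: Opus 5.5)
The plan is to bound the common mass of the accepted parts from below by a pointwise minimum of sub-densities. The accepted (moving) part of $P_x$ has sub-density $m_x(y):=q_x(y)\,(1\wedge R_x(y))\,\ind{K}(y)$, and likewise $m_z$. Their common mass is $\int\min\{m_x,m_z\}$. Define the good sets
\[
\mathcal G_x:=\{y\in K:\log R_x(y)\ge-C\},\qquad
\mathcal G_z:=\{y\in K:\log R_z(y)\ge-C\},
\]
where $C$ is the constant of Eq.~\eqref{I:eq:acceptance}; we may take $C\ge0$. On $\mathcal G_x\cap\mathcal G_z$ both acceptance probabilities are at least $e^{-C}$, since $1\wedge R\ge\min\{1,e^{-C}\}=e^{-C}$. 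Hence
\[
\int\min\{m_x,m_z\}\ \ge\ e^{-C}\int_{\mathcal G_x\cap\mathcal G_z}\min\{q_x,q_z\}.
\]

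The next step is to lower bound $\int_{\mathcal G_x\cap\mathcal G_z}\min\{q_x,q_z\}$. I would use the identity $\int\min\{q_x,q_z\}=1-\dtv(q_x,q_z)\ge1-\delta_q$. From this I subtract the contribution of the complement. There,
\[
\int_{(\mathcal G_x\cap\mathcal G_z)^c}\min\{q_x,q_z\}
\le q_x(\mathcal G_x^c)+q_z(\mathcal G_z^c)\le2\epsilon,
\]
bounding $\min\le q_x$ on $\mathcal G_x^c$ and $\min\le q_z$ on $\mathcal G_z^c$, and applying Eq.~\eqref{I:eq:acceptance} at $x$ and at $z$. This gives the common mass $\ge e^{-C}(1-\delta_q-2\epsilon)=\gamma_0$. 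Positivity follows from $\delta_q<1-2\epsilon$ in Lemma~\ref{I:lem:proposal-overlap}.

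For the lazy kernels, $P_x=\tfrac12\delta_x+\tfrac12(m_x+\text{rejection mass at }x)$. Every piece of $P_x$ and of $P_z$ is a nonnegative measure, so the common mass satisfies
\[
\int\min\{P_x,P_z\}\ \ge\ \tfrac12\int\min\{m_x,m_z\}\ \ge\ \tfrac{\gamma_0}{2}.
\]
Then $\dtv(P_x,P_z)=1-\int\min\{P_x,P_z\}\le1-\gamma_0/2$, which is Eq.~\eqref{I:eq:kernel-overlap}. The atoms at $x$ and $z$ are simply dropped: this is valid because they are nonnegative, and it avoids any case split on whether $x=z$.

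The only delicate point is measure-theoretic bookkeeping: the accepted parts must be treated as absolutely continuous sub-densities, separate from the atoms. Proposals outside $K$ are excluded because the good sets lie inside $K$. Beyond this, the argument is a direct combination of Lemma~\ref{I:lem:acceptance} and Lemma~\ref{I:lem:proposal-overlap}, with no real obstacle expected.
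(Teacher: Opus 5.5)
Your proof is correct and follows the paper's argument essentially verbatim. You restrict the common proposal mass $\min\{q_x,q_z\}$ (total at least $1-\delta_q$) to the intersection of the two acceptance events, lose at most $\epsilon$ on each complement, bound both acceptance probabilities below by $e^{-C}$ there, and let laziness halve the common moving mass. Your explicit handling of the atoms, and your remark that $C\ge0$ is needed for $1\wedge R\ge e^{-C}$, are small clarifications that the paper leaves implicit.
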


\begin{proof}
Let $E_{x}:=\{u\in K:R_{x}(u)\ge e^{-C}\}$ and likewise $E_{z}$. The
moving part of the non-lazy kernel at $x$ has density
$u\mapsto q_{x}(u)\,(1\wedge R_{x}(u))$ on $K$, so the common
mass of the two moving parts is at least
\[
\int_{E_{x}\cap E_{z}}e^{-C}\,
\min\{q_{x}(u),q_{z}(u)\}\,du .
\]
The measure with density $\min(q_{x},q_{z})$ has total mass at least
$1-\delta_{q}$; its mass on $E_{x}^{c}$ is at most
$q_{x}(E_{x}^{c})\le\epsilon$ by Eq.~\eqref{I:eq:acceptance}, and likewise on
$E_{z}^{c}$. This proves Eq.~\eqref{I:eq:gamma0}; laziness retains half of
the common moving mass, giving Eq.~\eqref{I:eq:kernel-overlap}.
\end{proof}

Combining Lemmas~\ref{I:lem:proposal-overlap} and \ref{I:lem:kernel-overlap}: the lazy
kernels have overlap at least $\gamma_{0}/2$ whenever
\begin{equation}
\norm{x-z}_{G(x)}\le\frac{c_{\mathrm{loc}}\,r}{\sqrt{d}} .
\label{I:eq:coupling-radius}
\end{equation}

\begin{definition}[Symmetry;~\cite{llv20}]
For a metric $M$ write
$\mathcal{D}_{M}(x,t):=\{z:\norm{z-x}_{M(x)}\le t\}$. The metric is
\emph{$\bar\nu$-symmetric} if
\[
\mathcal{D}_{M}(x,1)\subseteq K\cap(2x-K)
\subseteq\mathcal{D}_{M}(x,\sqrt{\bar\nu})
\qquad(x\in\intK).
\]
\end{definition}

\begin{lemma}[Symmetry parameter]\label{I:lem:symmetry}
The scaled LS metric $H_{\kappa}=\kappa H$ is $\nubar$-symmetric with
\begin{equation}
\nubar:=\kappa\,n^{2/p}d^{1-2/p}=\Ot(d),
\label{I:eq:nubar}
\end{equation}
and $G=\kappa H+L^{2}I$ is $\nu_{G}$-symmetric with
\begin{equation*}
\nu_{G}\le\nubar+4L^{2}R^{2}.
\end{equation*}
\end{lemma}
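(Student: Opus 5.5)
We prove the two inclusions for $H_\kappa$ separately and then transfer them to $G$.

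\emph{Inner inclusion.} Fix $x\in\intK$. For each row $i$, the Lewis fixed point Eq.~\eqref{I:eq:lewis-fixedpoint} gives
\[
(A_x)_iH(x)^{-1}(A_x)_i^\top=w_i^{-2\beta}\,u_i^\top u_i=w_i^{1-2\beta}=w_i^{2/p}\le1 .
\]
By Cauchy--Schwarz, $|(A_x(z-x))_i|\le\norm{z-x}_{H(x)}$. Hence $\norm{z-x}_{H(x)}\le1$ forces every slack ratio $1+(A_x(z-x))_i$ into $[0,2]$. This places $z$ in $K$, and $2x-z$ in $K$ as well. Since $\kappa\ge1$, we have $\mathcal D_{H_\kappa}(x,1)\subseteq\mathcal D_H(x,1)$. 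Because $G\succeq H_\kappa$, also $\mathcal D_G(x,1)\subseteq\mathcal D_{H_\kappa}(x,1)\subseteq K\cap(2x-K)$.

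\emph{Outer inclusion for $H_\kappa$.} Let $z\in K\cap(2x-K)$ and put $v=z-x$. Both $x\pm v\in K$, so $|(A_xv)_i|\le1$ for every $i$. Then
\[
\norm{v}_{H(x)}^2=\sum_iw_i^{1-2/p}(A_xv)_i^2\le\sum_iw_i^{1-2/p}\le n^{2/p}\Bigl(\sum_iw_i\Bigr)^{1-2/p}=n^{2/p}d^{1-2/p},
\]
where the middle step is Hölder's inequality with exponents $1/(1-2/p)$ and $p/2$, and the last uses Eq.~\eqref{I:eq:lewis-basic}. Multiplying by $\kappa$ gives $\norm{v}_{H_\kappa(x)}^2\le\nubar$. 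Since $n^{2/p}\le e$ and $\kappa$ is polylogarithmic by Eq.~\eqref{I:eq:rho-choice}, we get $\nubar=\Ot(d)$.

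\emph{Regularized metric.} For the same $v$, both $z$ and $x$ lie in $K\subseteq B_2(0,R)$, so $\norm{v}\le2R$ and $L^2\norm v^2\le4L^2R^2$. Adding this to the previous bound yields
\[
\norm{v}_{G(x)}^2\le\nubar+4L^2R^2,
\]
which is the outer inclusion with $\nu_G\le\nubar+4L^2R^2$.

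The argument is elementary throughout. The only step that needs care is the Hölder step, since it is what produces $\Ot(d)$ rather than $n$. There the exponent bookkeeping must match $1-2/p=c_p$ exactly; I would check it first.
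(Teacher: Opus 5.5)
Your proof is correct and follows the paper's argument essentially step for step. The ingredients are the same: the Lewis fixed-point identity $(A_x)_iH(x)^{-1}(A_x)_i^\top=w_i^{2/p}\le1$ with Cauchy--Schwarz for the inner inclusion, the H\"older bound $\sum_iw_i^{1-2/p}\le n^{2/p}d^{1-2/p}$ for the outer inclusion, and the diameter bound $\norm{z-x}_2\le2R$ to pass from $H_\kappa$ to $G$.
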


\begin{proof}
\emph{Containment of the unit ellipsoid.} The Lewis fixed point
Eq.~\eqref{I:eq:lewis-fixedpoint} gives, for the $i$-th row of $A_{x}$,
\begin{equation}
(A_{x})_{i,\cdot}\,H(x)^{-1}\,(A_{x})_{i,\cdot}^{\top}
=w_{x,i}^{2/p}\le1 ,
\label{I:eq:dikin-containment}
\end{equation}
(in normalized coordinates this is
$\norm{w_{i}^{-\beta}u_{i}}^{2}=w_{i}^{1-2\beta}=w_{i}^{2/p}$).
By Cauchy--Schwarz, $\norm{z-x}_{H(x)}\le1$ implies
$|(A_{x}(z-x))_{i}|\le1$ for all $i$, and then both $z$ and $2x-z$
have nonnegative slacks $s_{x}\circ(\mathbf{1}\pm A_{x}(z-x))$. Hence
$\mathcal{D}_{H_{\kappa}}(x,1)\subseteq\mathcal{D}_{H}(x,1)\subseteq
K\cap(2x-K)$, using $\kappa\ge1$.

\emph{Containment in the outer ellipsoid.} If $z\in K\cap(2x-K)$, then
every normalized slack displacement satisfies
$|(A_{x}(z-x))_{i}|\le1$. H\"older's inequality, in the form
$\sum_{i}1\cdot w_{i}^{1-2/p}\le n^{2/p}(\sum_{i}w_{i})^{1-2/p}$,
then gives
\begin{equation*}
\norm{z-x}_{H(x)}^{2}
=\sum_{i}w_{x,i}^{1-2/p}(A_{x}(z-x))_{i}^{2}
\le\sum_{i}w_{x,i}^{1-2/p}
\le n^{2/p}\,d^{1-2/p} .
\end{equation*}

Scaling by $\kappa$ gives Eq.~\eqref{I:eq:nubar}; note $n^{2/p}\le e$ and
$\kappa=\Ot(1)$, so $\nubar=\Ot(d)$.

\emph{The regularized metric.} $\mathcal{D}_{G}(x,1)\subseteq
\mathcal{D}_{H_{\kappa}}(x,1)\subseteq K\cap(2x-K)$ since
$G\succeq H_{\kappa}$; and for
$z\in K\cap(2x-K)$, Eq.~\eqref{I:eq:nubar} and the diameter bound
$\operatorname{diam}(K)\le2R$ give
\[
\norm{z-x}_{G(x)}^{2}
=\norm{z-x}_{H_{\kappa}(x)}^{2}+L^{2}\norm{z-x}_{2}^{2}
\le\nubar+4L^{2}R^{2}.
\]
This proves the regularized-metric bound.
\end{proof}

This self-contained computation agrees with
\cite[Lemma 4.3]{llv20}.

\begin{definition}[Chord cross ratio]\label{I:def:crossratio}
For distinct $x,z\in\intK$, let their line meet $\partial K$ in the
order $p_{\partial},x,z,q_{\partial}$ and put
$\alpha=\norm{x-p_{\partial}}_{2}$, $\beta_{c}=\norm{z-x}_{2}$,
$\gamma=\norm{q_{\partial}-z}_{2}$. The cross ratio is
\[
\sigma(x,z):=\frac{\beta_{c}\,(\alpha+\beta_{c}+\gamma)}
{\alpha\,\gamma} .
\]
\end{definition}

\begin{lemma}[Cross ratio dominates the local norm]\label{I:lem:crossratio}
For all distinct $x,z\in\intK$,
\begin{equation*}
\sigma(x,z)\ \ge\ \frac{\norm{x-z}_{G(x)}}{\sqrt{\nu_{G}}} .
\end{equation*}
\end{lemma}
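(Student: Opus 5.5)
The proof reduces the cross-ratio bound to the outer containment in the $\nu_G$-symmetry of Lemma~\ref{I:lem:symmetry}, applied along the chord through $x$ and $z$. Fix distinct $x,z\in\intK$. Write the chord as $p_\partial,x,z,q_\partial$ with lengths $\alpha,\beta_c,\gamma$ as in Definition~\ref{I:def:crossratio}. Let $u:=(z-x)/\beta_c$ be the Euclidean unit direction. The whole argument takes place on this line.

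\emph{Step 1 (one-dimensional symmetric set).} Along the line, $K$ is the segment from $x-\alpha u$ to $x+(\beta_c+\gamma)u$, and $2x-K$ is its reflection through $x$. Hence $K\cap(2x-K)$ meets the line exactly in $\{x+tu:|t|\le m\}$, where $m:=\min\{\alpha,\beta_c+\gamma\}$. By the outer containment of Lemma~\ref{I:lem:symmetry}, the point $x+mu$ (or $x-mu$) lies in $\mathcal{D}_G(x,\sqrt{\nu_G})$, so
\[
m\,\norm{u}_{G(x)}\le\sqrt{\nu_G},\qquad\text{hence}\qquad
\norm{x-z}_{G(x)}=\beta_c\norm{u}_{G(x)}\le\frac{\beta_c\sqrt{\nu_G}}{m}.
\]

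\emph{Step 2 (compare with the cross ratio).} It then suffices to show $\sigma(x,z)\ge\beta_c/m$. Split into two cases.
\begin{itemize}
\item If $m=\alpha$, then $\sigma=\frac{\beta_c}{\alpha}\cdot\frac{\alpha+\beta_c+\gamma}{\gamma}\ge\frac{\beta_c}{\alpha}$, since $\alpha+\beta_c+\gamma\ge\gamma$.
\item If $m=\beta_c+\gamma$, then $\sigma=\frac{\beta_c}{\gamma}\cdot\frac{\alpha+\beta_c+\gamma}{\alpha}\ge\frac{\beta_c}{\gamma}\ge\frac{\beta_c}{\beta_c+\gamma}$.
\end{itemize}
In both cases $\sigma(x,z)\ge\beta_c/m\ge\norm{x-z}_{G(x)}/\sqrt{\nu_G}$, which is the claim.

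\emph{Main obstacle.} The only delicate point is identifying the line section of $K\cap(2x-K)$ and confirming that its endpoint is a legitimate test point for the outer ellipsoid containment. This point may lie on $\partial K$ rather than in $\intK$. That is harmless, because the inclusion $K\cap(2x-K)\subseteq\mathcal{D}_G(x,\sqrt{\nu_G})$ is stated for all points of $K\cap(2x-K)$, and the containment proof of Lemma~\ref{I:lem:symmetry} only uses nonnegative slacks. Note that the claimed bound uses only $\nu_G$-symmetry, not self-concordance.
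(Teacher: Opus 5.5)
Your proof is correct and follows the paper's argument essentially step for step: you restrict the symmetric body $K\cap(2x-K)$ to the chord to get the radius $m=\min\{\alpha,\beta_c+\gamma\}$, apply the outer $\nu_G$-symmetry containment at $x+mu$ to obtain $m\norm{u}_{G(x)}\le\sqrt{\nu_G}$, and verify $\sigma(x,z)\ge\beta_c/m$ by the same two-case check. Your point about the endpoint possibly lying on $\partial K$ is also handled correctly: $K$ is closed, and the outer containment applies to every point of $K\cap(2x-K)$, since its proof uses only nonnegative slacks and the diameter bound.
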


\begin{proof}
Let $e=(z-x)/\norm{z-x}_{2}$. The radius of the symmetric body
$K\cap(2x-K)$ at $x$ in direction $\pm e$ is
$t=\min(\alpha,\beta_{c}+\gamma)$, so $x+te$ lies in the closure of
$K\cap(2x-K)$, and $\nu_{G}$-symmetry gives $t\,\norm{e}_{G(x)}\le\sqrt{\nu_{G}}$.
Checking the two cases $t=\alpha$ and $t=\beta_c+\gamma$ gives
\[
\sigma(x,z)
=\frac{\beta_{c}(\alpha+\beta_{c}+\gamma)}{\alpha\gamma}
\ \ge\ \frac{\beta_{c}}{t}.
\]

Hence
$\sigma(x,z)\ge\beta_{c}\norm{e}_{G(x)}/\sqrt{\nu_{G}}
=\norm{z-x}_{G(x)}/\sqrt{\nu_{G}}$.
\end{proof}

\begin{lemma}[Conductance]\label{I:lem:conductance}
The walk of Definition~\ref{I:def:walk} has conductance
\begin{equation}
\Phi:=\inf_{0<\pi(A)\le1/2}
\frac{\mathcal{Q}(A,A^{c})}{\pi(A)}
\ \ge\ \frac{c_{\mathrm{iso}}\gamma_{0}}{128}\,\lsep
\ \ge\ \frac{c\,r}{\sqrt{d\,(\nubar+L^{2}R^{2})}},
\qquad
\lsep:=\frac{c_{\mathrm{loc}}\,r}{\sqrt{d\,\nu_{G}}}\wedge1,
\label{I:eq:conductance}
\end{equation}
where $\mathcal{Q}(A,A^{c}):=\int_{A}P_{x}(A^{c})\,\pi(dx)$ is the
stationary flow.
\end{lemma}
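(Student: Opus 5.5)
This is the standard Lov\'asz--Vempala deep-set conductance argument; the only inputs are the local kernel overlap and cross-ratio isoperimetry. Fix a measurable $A$ with $0<\pi(A)\le1/2$ and write $A^c=K\setminus A$. Define the two deep sets
\[
S_1:=\{x\in A:\ P_x(A^c)<\gamma_0/4\},\qquad
S_2:=\{x\in A^c:\ P_x(A)<\gamma_0/4\},
\]
and let $S_3:=K\setminus(S_1\cup S_2)$. The proof splits into two cases.

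\emph{Case 1: $\pi(S_1)\le\pi(A)/2$ or $\pi(S_2)\le\pi(A^c)/2$.} Then at least half of one side has escape probability at least $\gamma_0/4$. Reversibility gives $\mathcal Q(A,A^c)=\mathcal Q(A^c,A)$, so
\[
\mathcal Q(A,A^c)\ge\frac{\gamma_0}{8}\min\{\pi(A),\pi(A^c)\}=\frac{\gamma_0}{8}\pi(A),
\]
which is stronger than required.

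\emph{Case 2: both deep sets are large.} Here I claim separation. Take $x\in S_1$ and $z\in S_2$. Then
\[
\dtv(P_x,P_z)\ge P_x(A)-P_z(A)>1-\gamma_0/2,
\]
so Eq.~\eqref{I:eq:kernel-overlap}, via Lemmas~\ref{I:lem:proposal-overlap} and~\ref{I:lem:kernel-overlap} and Lemma~\ref{I:lem:acceptance} at both endpoints, forces a violation of Eq.~\eqref{I:eq:coupling-radius}:
\[
\norm{x-z}_{G(x)}>\frac{c_{\mathrm{loc}}r}{\sqrt d}.
\]
Lemma~\ref{I:lem:crossratio} then gives
\[
\sigma(x,z)\ge\frac{c_{\mathrm{loc}}r}{\sqrt{d\nu_G}}\ge\lsep.
\]
Since $e^{-f}\ind K$ is log-concave, Lemma~\ref{I:lem:isoperimetry-import} yields
\[
\pi(S_3)\ge c_{\mathrm{iso}}\,\lsep\,\pi(S_1)\pi(S_2)\ge c_{\mathrm{iso}}\,\lsep\,\frac{\pi(A)}{2}\cdot\frac14,
\]
using $\pi(S_2)\ge\pi(A^c)/2\ge1/4$. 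Points of $S_3\cap A$ carry escape probability at least $\gamma_0/4$ into $A^c$, and points of $S_3\cap A^c$ carry the same into $A$. Using reversibility once more,
\[
\mathcal Q(A,A^c)=\tfrac12\bigl[\mathcal Q(A,A^c)+\mathcal Q(A^c,A)\bigr]\ge\frac{\gamma_0}{8}\pi(S_3)\ge\frac{c_{\mathrm{iso}}\gamma_0}{64}\lsep\,\pi(A).
\]
The constants will need bookkeeping to reach exactly $1/128$, but any universal constant suffices. Since $\lsep\le1$ and $c_{\mathrm{iso}}\le1$, the Case~1 bound also dominates $\frac{c_{\mathrm{iso}}\gamma_0}{128}\lsep$. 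This gives the first inequality in Eq.~\eqref{I:eq:conductance}.

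For the second inequality, insert $\nu_G\le\nubar+4L^2R^2$ from Lemma~\ref{I:lem:symmetry}. Because $r\le1$ and $\nu_G\ge\nubar\ge1$, the minimum with $1$ is inactive up to constants, so
\[
\lsep\ge\frac{c\,r}{\sqrt{d(\nubar+L^2R^2)}}.
\]

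The main obstacle is the separation step in Case~2. The kernel-overlap conclusion is stated under the hypothesis $\norm{x-z}_{G(x)}\le c_{\mathrm{loc}}r/\sqrt d$, which is measured in the metric at $x$ only, whereas the cross-ratio bound uses the same $G(x)$. These match, so no symmetrization of the metric is needed. One must also check that Eq.~\eqref{I:eq:acceptance} holds at every interior point, which Lemma~\ref{I:lem:acceptance} guarantees, and that boundary points (a $\pi$-null set) can be ignored.
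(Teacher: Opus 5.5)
Your proof is correct and is essentially the paper's argument. It uses the same deep-set decomposition, the kernel-overlap contradiction forcing $\norm{x-z}_{G(x)}>c_{\mathrm{loc}}r/\sqrt d$, cross-ratio isoperimetry when both deep sets are large, and reversibility to add the two flow bounds. The one difference is cosmetic: your strict threshold $\gamma_0/4$ (the paper uses $b_0=\gamma_0/8$ with $\le$) still gives $\dtv(P_x,P_z)>1-\gamma_0/2$, and it yields the constant $c_{\mathrm{iso}}\gamma_0\lsep/64$, which already implies the stated $1/128$ bound.
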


\begin{proof}
Fix a measurable $A$ with $0<\pi(A)\le\tfrac12$, put
$b_{0}:=\gamma_{0}/8$, and define the deep sets
\[
A_{1}=\{x\in A:P_{x}(A^{c})\le b_{0}\},
\qquad
A_{2}=\{z\in A^{c}:P_{z}(A)\le b_{0}\}.
\]
Because $K$ is full-dimensional, its boundary has Lebesgue and hence
$\pi$-measure zero.  We may intersect $A_1,A_2$ with $\intK$ and put
$\partial K$ into $A_3$ below; stationary masses and flows are
unchanged.  Thus every pair to which the local and cross-ratio lemmas
are applied lies in $\intK$.
For $x\in A_{1}$, $z\in A_{2}$,
$\dtv(P_{x},P_{z})\ge P_{z}(A^{c})-P_{x}(A^{c})
\ge1-2b_{0}>1-\gamma_{0}/2$, so by
Lemma~\ref{I:lem:kernel-overlap} such a pair cannot satisfy
Eq.~\eqref{I:eq:coupling-radius}:
$\norm{x-z}_{G(x)}>c_{\mathrm{loc}}r/\sqrt{d}$. Then
Lemmas~\ref{I:lem:crossratio} and \ref{I:lem:symmetry} give the uniform separation
\begin{equation}
\sigma(x,z)\ \ge\ \lsep
\qquad\text{for all }x\in A_{1},\,z\in A_{2}.
\label{I:eq:separation}
\end{equation}
Let $A_{3}:=K\setminus(A_{1}\cup A_{2})$,
$B_{1}:=A\setminus A_{1}$, $B_{2}:=A^{c}\setminus A_{2}$, so
$A_{3}=B_{1}\,\dot\cup\,B_{2}$. The definitions of $B_1,B_2$ and
reversibility give
\begin{equation}
\mathcal{Q}(A,A^{c})\ge b_{0}\,\pi(B_{1}),
\qquad
\mathcal{Q}(A,A^{c})=\mathcal{Q}(A^{c},A)\ge b_{0}\,\pi(B_{2}).
\label{I:eq:flow-lb}
\end{equation}

\emph{Case 1: $\pi(A_{1})<\pi(A)/2$ or $\pi(A_{2})<\pi(A^{c})/2$.}
Then $\pi(B_{1})\ge\pi(A)/2$ or $\pi(B_{2})\ge\pi(A^{c})/2\ge\pi(A)/2$,
and Eq.~\eqref{I:eq:flow-lb} gives
$\mathcal{Q}(A,A^{c})/\pi(A)\ge b_{0}/2$.

\emph{Case 2: $\pi(A_{1})\ge\pi(A)/2$ and
$\pi(A_{2})\ge\pi(A^{c})/2\ge\tfrac14$.}
Since $\pi$ is log-concave and Eq.~\eqref{I:eq:separation} certifies
separation $\sigma_{0}=\lsep\le1$,
Lemma~\ref{I:lem:isoperimetry-import} applied to the partition
$(A_{1},A_{2},A_{3})$ gives
\[
\pi(A_{3})\ \ge\ c_{\mathrm{iso}}\,\lsep\,
\pi(A_{1})\,\pi(A_{2})
\ \ge\ \frac{c_{\mathrm{iso}}\,\lsep}{8}\,\pi(A).
\]

Adding the two bounds in Eq.~\eqref{I:eq:flow-lb} and using
$A_{3}=B_{1}\dot\cup B_{2}$,
$2\,\mathcal{Q}(A,A^{c})\ge b_{0}\,\pi(A_{3})$, whence
$\mathcal{Q}(A,A^{c})/\pi(A)\ge b_{0}c_{\mathrm{iso}}\lsep/16$.

In both cases
$\Phi\ge\tfrac{c_{\mathrm{iso}}\gamma_{0}}{128}\lsep$; substituting
$\nu_{G}\le\nubar+4L^{2}R^{2}$ from Lemma~\ref{I:lem:symmetry} and absorbing
constants gives Eq.~\eqref{I:eq:conductance}.
\end{proof}

\renewcommand{\nubar}{\bar{\nu}}
\renewcommand{\Hld}{H}

\section{Proof of Theorem~\ref{thm:intro-spectral}}\label{II:sec:spectral}

\subsection{Main theorem}

\begin{theorem}[Spectrahedra; formal version of Theorem~\ref{thm:intro-spectral}]\label{II:thm:main-formal}
Let $K:=\{x\in\R^{d}:A_{0}+\sum_{i=1}^{d}x_{i}A_{i}\succeq0\}$ be bounded,
strictly feasible after minimal-face reduction, and contained in $B_{2}(0,R)$,
with $n\times n$ matrix blocks, let $f$ be convex and $L$-Lipschitz on $K$, and
let $\pi\propto e^{-f}\ind{K}$. For the log-det metric, let $\psistar$ be the maximum of one and the
supremum, over $x\in\intK$, of the spectral norm of the whitened matrix
leverage operator at $x$. Choose $\kappa$ as in Eq.~\eqref{II:eq:rho-choice}, so
$\kappa=\Ot(\psistar)$. Then, for $w\ge1$ and $0<\delta<1$, the
exact-metric $\tfrac12$-lazy Metropolis walk reaches total-variation distance
$\delta$ from a $w$-warm start in
\[
  \Ot((\psistar nd+dL^{2}R^{2})\log(w/\delta))
\]
steps.
\end{theorem}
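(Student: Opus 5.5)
The proof will mirror the polytope pipeline, replacing Lewis-weight calculus by the noncommutative ledger, and then reuse the mixing machinery of Section~\ref{I:sec:mixing} essentially verbatim. First I reduce the mixing statement to a conductance bound. The warm-start argument of Theorem~\ref{I:thm:main-formal} is already written abstractly: $\norm{f_0-1}_{L^2(\pi)}^2\le w-1$, laziness puts the spectrum in $[0,1]$, Lemma~\ref{II:lem:cheeger} gives $\gap(P)\ge c_{\mathrm{Ch}}\Phi^2$, and Cauchy--Schwarz converts this to total variation. It therefore suffices to prove
\[
\Phi\ge \frac{c\,r}{\sqrt{d(\kappa n+L^2R^2)}},
\]
since $\kappa=\Ot(\psistar)$ then yields $\Ot((\psistar nd+dL^2R^2)\log(w/\delta))$.

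For conductance I will follow Lemmas~\ref{I:lem:proposal-overlap}--\ref{I:lem:conductance}. Operator self-concordance of $-\log\det S$ gives
\[
\opnorm{H^{-1/2}DH[v]H^{-1/2}}\le 2\norm v_H .
\]
Passing to $G=\kappa H+L^2I$ with $\kappa\ge1$ uses the same congruence trick as Lemma~\ref{I:lem:ssc-G}. Lemma~\ref{II:lem:distortion} together with the Gaussian KL identity and Pinsker then gives proposal overlap at local distance $c_{\mathrm{loc}}r/\sqrt d$.

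For symmetry, in normalized coordinates $z\in K\cap(2x-K)$ is equivalent to $\opnorm{\Delta_{z-x}}\le1$, while $\norm{z-x}_{H(x)}=\fnorm{\Delta_{z-x}}$. The unit $H$-ball lies in this set because the Frobenius norm dominates the operator norm. The bound $\fnorm{\cdot}^2\le n\opnorm{\cdot}^2$ then gives $\nu_G\le\kappa n+4L^2R^2$. With these inputs, cross-ratio domination, Lemma~\ref{II:lem:isoperimetry}, and the deep-set argument carry over unchanged once an acceptance lower tail analogous to Lemma~\ref{I:lem:acceptance} is available.

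The acceptance lower tail is the real work. In the normalized frame I write the ledger
\[
\log\frac{q_y(x)}{q_x(y)}=\tfrac12(\log\det(I+E)-\tr E)-\tfrac12 Z(h),
\]
which follows from the same exact algebra as Lemma~\ref{I:lem:ledger}. I then define a spectral tube $\{\opnorm{\sum_a h_a\Ab_a}\le\tau\}\cap\{\norm h\le R_0\}$. Lemma~\ref{II:lem:matrix-gaussian} with variance $\opnorm{\sum_a(Q^{1/2}\Ab)^2}\le\psi(x)$ gives $\tau\approx\sqrt{2\psi\log(4n/\delta_0)}$, so $\opnorm\Delta\le\eta\tau$ is small once $\kappa\gtrsim\psistar\log n$. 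On the collar of this tube the resolvent $\Rres$ is bounded, feasibility holds along the segment, and $\fnorm{E}^2=O(\rho^2)$, which makes the determinant remainder $O(\rho^2)$.

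For the mean of $Z$ I apply Lemma~\ref{II:lem:stein} with the cutoff of Lemma~\ref{II:lem:cutoff}. This reduces the mean to contracting $\sum_{a,b}Q_{ac}Q_{be}\partial_c\partial_e\Eb_{ab}$, which is a sum of four-fold traces of $\Ab$'s and resolvents. The main obstacle is bounding this contraction by $C\eta^2 d\,\psi=C\rho^2\psi$ rather than by $d^2$ or $nd$. My first attempt will pair indices so that each term takes the form $\tr[\Ab_a X\Ab_a Y]$ and then use $\sum_a\Ab_a X\Ab_a\preceq\opnorm X\Psib$-type bounds via Lemma~\ref{II:lem:psd-facts}; $\tr\Psib=d$ supplies the factor $d$.

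The gradient of $Z$ is handled by the same trace contractions, with $\opnorm{\Delta}$ small, and gives a dimension-free Lipschitz constant $O(\rho\,\mathrm{polylog})$. McShane extension (Lemma~\ref{II:lem:mcshane}) followed by Lemma~\ref{II:lem:gaussian-concentration} then bounds the fluctuation. The potential term is bounded by $O(r)$ exactly as in the polytope proof. Choosing $\rho^2\approx c/(\psistar\,\mathrm{polylog})$, i.e.\ $\kappa=r^2/\rho^2=\Ot(\psistar)$, makes every error term a small constant.
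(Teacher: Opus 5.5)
Your outline follows the paper's architecture step for step. Warm-start $L^2$ contraction plus Cheeger reduces the theorem to $\Phi\gtrsim r/\sqrt{d\nu_G}$. Operator self-concordance, the Gaussian KL identity and Pinsker give proposal overlap, and $\fnorm{\cdot}^2\le n\opnorm{\cdot}^2$ gives $\nu_G=\kappa n+4L^2R^2$. Acceptance then comes from the ledger, the Tropp tube with variance $\preceq\Psib$, Stein with a cutoff, and McShane plus Gaussian concentration.

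The step that fails as you describe it is the one you flag as the main obstacle: the Stein integrand bound. Contract $\partial_e\partial_c\Eb_{ab}$ with $Q_{ac}Q_{be}$, write $Q=\sum_i\lambda_iq_iq_i^\top$, and set $\Bh_i:=\sqrt{\lambda_i}\,\Ab(q_i)$, so that $\sum_i\Bh_i^2\preceq\Psib$. Each of the six words becomes $\sum_{i,j}\tr[\Rres w_1\Rres w_2\Rres w_3\Rres w_4]$, with two copies each of $\Bh_i$ and $\Bh_j$.

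\emph{Adjacent words.} Four words ($F_{ecab},F_{caeb},F_{eacb},F_{aceb}$) are cyclically of adjacent type $iijj$, and there your plan works. The sum is $\tr[\Rres Y\Rres Y]$ with $Y=\sum_i\Bh_i\Rres\Bh_i$ and $0\preceq Y\preceq\opnorm{\Rres}\Psib$, hence at most $\opnorm{\Rres}^2\opnorm{Y}\tr[Y]\le\opnorm{\Rres}^4\psi d$.

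\emph{Alternating words.} $F_{ceab}$ and $F_{aecb}$ give $\tr[\Rres\Bh_i\Rres\Bh_j\Rres\Bh_i\Rres\Bh_j]$. Pairing the two copies of $\Bh_i$ leaves $X=\Rres\Bh_j\Rres$ in both arcs, and $X$ is indefinite. So Lemma~\ref{II:lem:psd-facts}(ii) does not apply, and the one-sided bound $\sum_i\Bh_iX\Bh_i\preceq\opnorm{X}\Psib$ does not control $|\tr[(\sum_i\Bh_iX\Bh_i)X]|$. The natural fallback $|\tr[MX]|\le\snorm{M}\opnorm{X}\le\opnorm{X}^2\tr[\Psib]$ sums to $d\sum_j\opnorm{\Bh_j}^2$. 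This is of order $d^2$ rather than $\psi d$, already for $\Ab_a=e_ae_a^\top$, where $\psi=1$.

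The missing idea is a Frobenius Cauchy--Schwarz step. With $\Nt_i:=\Rres^{1/2}\Bh_i\Rres^{1/2}$,
\[
|\tr[\Nt_i\Nt_j\Nt_i\Nt_j]|=|\langle\Nt_j\Nt_i,\Nt_i\Nt_j\rangle_F|\le\fnorm{\Nt_i\Nt_j}^2=\tr[\Nt_i^2\Nt_j^2],
\]
which reduces every arrangement to the adjacent one. Summing gives $\tr[\widehat\Psi^2]$, where $\widehat\Psi:=\sum_i\Nt_i^2\preceq\opnorm{\Rres}\Rres^{1/2}\Psib\Rres^{1/2}$, so the total is at most $\opnorm{\Rres}^4\tr[\Psib^2]\le\opnorm{\Rres}^4\psi d$. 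This is the paper's Lemma~\ref{II:lem:four-trace}, and with it the mean of $Z$ is $O(\rho^2\psi)$ as you intended.

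A smaller bookkeeping point: $\opnorm{\Eb}\lesssim\eta\tau_1$ enters both the determinant remainder and the Lipschitz bound, with $\tau_1^2\asymp\psi\log(nd/\eone)$.
\begin{itemize}
\item The determinant remainder is $\lesssim\fnorm{\Eb}^2\le9\rho^2\tau_1^2$, not $O(\rho^2)$ as you state.
\item The Lipschitz constant of $Z$ is $\lesssim\rho\tau_1M_0$, not $O(\rho\,\mathrm{polylog})$.
\end{itemize}
Your final choice $\rho\approx c/(\sqrt{\psistar}\,\mathrm{polylog})$ absorbs both, so the conclusion stands. But all three error terms force $\kappa\gtrsim\psistar$, not only the Stein mean.
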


\begin{proof}
Let $f_{0}=d\mu_{0}/d\pi\le w$; then
$\norm{f_{0}-1}_{L^{2}(\pi)}^{2}=\int f_{0}^{2}d\pi-1\le w-1$. The
chain is reversible and $\tfrac12$-lazy, so its spectrum is
nonnegative.  The Cheeger inequality in Lemma~\ref{II:lem:cheeger} gives
$\gap(P)\ge c_{\mathrm{Ch}}\Phi^{2}$.

Cauchy--Schwarz and spectral contraction therefore give
\[
\dtv(\mu_{0}P^{t},\pi)
\le\tfrac12\norm{P^{t}(f_{0}-1)}_{L^{2}(\pi)}
\le\tfrac12(1-\gap P)^{t}\sqrt{w-1}
\le\tfrac12\,e^{-c_{\mathrm{Ch}}t\Phi^{2}}\sqrt w .
\]
Here the warm-start bound controls the initial $L^2(\pi)$ norm, and the
last inequality uses $1-u\le e^{-u}$,
$\gap(P)\ge c_{\mathrm{Ch}}\Phi^{2}$, and $\sqrt{w-1}\le\sqrt w$.

Solving for error $\delta$ and inserting
Lemmas~\ref{II:lem:conductance} and~\ref{II:lem:symmetry} gives
\[
t_{\mathrm{mix}}(\delta)\ \le\
C\,\frac{d\,\nu_{G}}{r^{2}}\log(w/\delta)
\ =\ C\,\frac{d(\kappa n+4L^{2}R^{2})}{r^{2}}
\log(w/\delta).
\]
Here the equality uses $\nu_G=\kappa n+4L^2R^2$.

With $r$ universal and $\kappa=\Ot(\psistar)$ from
Eq.~\eqref{II:eq:rho-choice}, this is
$\Ot((\psistar nd+dL^{2}R^{2})\log(w/\delta))$.
\end{proof}

\subsection{Normalization, endpoint formulas, and the ledger}
\label{II:sec:ledger}

This section is exact algebra; no estimates are made.

Fix a deterministic current point $x\in\intK$. Replace each $A_{i}$
($i\ge1$) by its slack congruence
$S(x)^{-1/2}A_{i}S(x)^{-1/2}$. This gives an equivalent representation of
the same constraint set, since
$S(x+u)=S(x)^{1/2}(I+\sum_{i}u_{i}\,
S(x)^{-1/2}A_{i}S(x)^{-1/2})S(x)^{1/2}$. We then apply the linear
change of variables in $\R^{d}$ that orthonormalizes these congruenced
directions in the Frobenius inner product. In the new coordinates,
which we use throughout
Sections~\ref{II:sec:ledger}, \ref{II:sec:tube}, \ref{II:sec:stein} and \ref{II:sec:fluct}:
\begin{equation}
S(x)=I,\qquad
\Ab(u):=\sum_{a=1}^{d}u_{a}\Ab_{a},\qquad
\fnorm{\Ab(u)}=\norm{u},\qquad
\Hld(x)=I_{d},
\label{II:eq:normalized}
\end{equation}
where $\{\Ab_{a}\}$ is the orthonormal frame of
Definition~\ref{II:def:leverage} and the third identity is
frame orthonormality. The Metropolis ratio $R_{x}$ is invariant under
this substitution: the congruence is an identity on $K$, and under the
linear change of variables both Gaussian densities in the ratio pick
up the same Jacobian, while $\Hld$ and $G$ transform as metrics.

After the same change of variables the regularizer becomes a constant
positive semidefinite matrix, and we factor
\begin{equation}
G(x)=\kappa(I+\creg),\qquad \creg\succeq0,\qquad
Q:=(I+\creg)^{-1},\qquad 0\preceq Q\preceq I .
\label{II:eq:Q-def}
\end{equation}
Only $\creg\succeq0$ is used below, so all estimates are uniform in
$L$ and in the orientation of the regularizer. A proposal has the
representation
\begin{equation}
Y=x+\eta h,\qquad h\sim N(0,Q),\qquad
\eta=\frac{r}{\sqrt{\kappa d}}=\frac{\rho}{\sqrt d},\qquad
\rho:=\frac{r}{\sqrt\kappa}.
\label{II:eq:step}
\end{equation}

For a step $g\in\R^{d}$ write
\begin{equation*}
\Delta=\Delta(g):=\eta\,\Ab(g),\qquad
\Rres=\Rres(g):=(I+\Delta)^{-1}\ \ \text{(defined when
$\opnorm{\Delta}<1$)} .
\end{equation*}
$\Rres$ and $\Delta$ commute. Since
$S(x+\eta g)=I+\Delta(g)$ in the normalized coordinates:

\begin{lemma}[Endpoint identities]\label{II:lem:endpoint}
Let $g$ satisfy $\opnorm{\Delta(g)}<1$. Then the entire segment
$[x,\,x+\eta g]$ lies in $\intK$, and with $y=x+\eta g$,
\begin{equation}
\Hld(y)_{ab}=\tr[\Rres\Ab_{a}\Rres\Ab_{b}],
\qquad
\Eb:=\Hld(y)-I,\qquad
\Eb_{ab}
=-\tr[\Delta \Rres\Ab_{a}\Rres\Ab_{b}]
-\tr[\Delta \Rres\Ab_{b}\Ab_{a}].
\label{II:eq:endpoint}
\end{equation}
Moreover, as functions of $h$ (with
$\partial_{c}:=\partial/\partial h_{c}$ and all matrices evaluated at
the same $h$),
\begin{align}
\partial_{c}\Rres&=-\eta\,\Rres\Ab_{c}\Rres,
\label{II:eq:dR}\\
\partial_{c}\Eb_{ab}
&=-\eta[\tr[\Rres\Ab_{c}\Rres\Ab_{a}\Rres\Ab_{b}]
+\tr[\Rres\Ab_{a}\Rres\Ab_{c}\Rres\Ab_{b}]],
\label{II:eq:dE}\\
\partial_{e}\partial_{c}\Eb_{ab}
&=\eta^{2}[
F_{ecab}+F_{ceab}+F_{caeb}
+F_{eacb}+F_{aecb}+F_{aceb}
], \label{II:eq:ddE}\\
F_{wxyz}&:= \tr[\Rres\Ab_{w}\Rres\Ab_{x}\Rres\Ab_{y}\Rres\Ab_{z}],
\nonumber
\end{align}
where each of the four letters $a,c,b,e$ appears exactly once in
every word, and $F$ is invariant under cyclic shifts of its four
letters.
\end{lemma}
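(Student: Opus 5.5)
The proof is direct matrix calculus in the normalized coordinates of Eq.~\eqref{II:eq:normalized}. The plan is to get feasibility first, then the metric formula, then the three derivative formulas.

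\emph{Feasibility.} For $0\le t\le1$, $S(x+t\eta g)=I+t\Delta(g)$. Since $\opnorm{t\Delta}\le\opnorm{\Delta}<1$, this matrix is $\succeq(1-\opnorm{\Delta})I\succ0$. Hence the whole segment lies in $\intK$, and $\Rres$ is well defined, symmetric, and commutes with $\Delta$.

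\emph{Metric and $\Eb$.} Eq.~\eqref{II:eq:metric-def} in normalized coordinates, with $S(y)^{-1}=\Rres$, gives $\Hld(y)_{ab}=\tr[\Rres\Ab_a\Rres\Ab_b]$. Frame orthonormality gives $I_{ab}=\tr[\Ab_a\Ab_b]$. Write $\Rres=I-\Delta\Rres$, which is the resolvent identity $\Rres(I+\Delta)=I$, and use $\Delta\Rres=\Rres\Delta$. Then
\[
\tr[\Rres\Ab_a\Rres\Ab_b]-\tr[\Ab_a\Ab_b]
=\tr[(\Rres-I)\Ab_a\Rres\Ab_b]+\tr[\Ab_a(\Rres-I)\Ab_b].
\]
Substituting $\Rres-I=-\Delta\Rres$ in both terms and cycling the trace in the second term turns it into $-\tr[\Delta\Rres\Ab_b\Ab_a]$. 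This is exactly Eq.~\eqref{II:eq:endpoint}.

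\emph{Derivatives.} Since $\Delta=\eta\sum_c h_c\Ab_c$, we have $\partial_c(I+\Delta)=\eta\Ab_c$. The derivative-of-inverse formula then gives Eq.~\eqref{II:eq:dR}.

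For Eq.~\eqref{II:eq:dE}, differentiate the first representation $\Eb_{ab}=\tr[\Rres\Ab_a\Rres\Ab_b]-\delta_{ab}$ rather than the $\Delta$-form; this avoids product-rule clutter. The product rule yields two terms, $-\eta\tr[\Rres\Ab_c\Rres\Ab_a\Rres\Ab_b]$ and $-\eta\tr[\Rres\Ab_a\Rres\Ab_c\Rres\Ab_b]$.

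For Eq.~\eqref{II:eq:ddE}, apply $\partial_e$ to each three-resolvent trace. Each term contains three copies of $\Rres$, and each copy produces $-\eta\Rres\Ab_e\Rres$. This gives six terms, each with coefficient $(-\eta)(-\eta)=\eta^2$. Each term is a four-resolvent trace $F_{wxyz}$, where $e$ is inserted into one of the three slots of the words $cab$ and $acb$. The first word yields $F_{ecab},F_{ceab},F_{caeb}$ and the second yields $F_{eacb},F_{aecb},F_{aceb}$. Cyclic invariance of $F$ is immediate from trace cyclicity, since each letter carries its own preceding $\Rres$.

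\emph{Main obstacle.} There is no analytic difficulty. The only delicate point is bookkeeping: the signs and the matching of the six insertion words to the listed $F$'s. I would check these by the insertion enumeration above and confirm with the commutative scalar case $n=1$.
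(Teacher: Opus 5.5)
Your proposal is correct and matches the paper's proof essentially step for step. Both arguments get feasibility from $I+t\Delta\succ0$ and obtain $\Eb$ by inserting $\Rres-I=-\Delta\Rres$ twice and applying cyclicity. Both then use the derivative-of-inverse formula and the product rule on $\tr[\Rres\Ab_a\Rres\Ab_b]-\delta_{ab}$, with the same six-word $\Ab_e$-insertion enumeration and cyclic invariance because each letter carries its own preceding $\Rres$.
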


\begin{proof}
Feasibility: $S(x+t\eta g)=I+t\Delta\succ0$ for $t\in[0,1]$ when
$\opnorm{\Delta}<1$. The metric formula is Eq.~\eqref{II:eq:metric-def} after
the congruence by $S(y)^{-1/2}$, i.e.\
$\Hld(y)_{ab}=\tr[S(y)^{-1}\Ab_{a}S(y)^{-1}\Ab_{b}]$ with
$S(y)^{-1}=\Rres$. For the increment, insert
$\Rres-I=-\Delta \Rres$ twice:
$\tr[\Rres\Ab_{a}\Rres\Ab_{b}]-\tr[\Ab_{a}\Ab_{b}]
=\tr[(\Rres-I)\Ab_{a}\Rres\Ab_{b}]+\tr[\Ab_{a}(\Rres-I)\Ab_{b}]$, and use
cyclicity on the second term. Eq.~\eqref{II:eq:dR} is
$\partial_{c}(I+\Delta)^{-1}=-(I+\Delta)^{-1}
(\partial_{c}\Delta)(I+\Delta)^{-1}$ with
$\partial_{c}\Delta=\eta\Ab_{c}$. Differentiating
$\Eb_{ab}=\tr[\Rres\Ab_{a}\Rres\Ab_{b}]-\delta_{ab}$ by the product rule
across the two $\Rres$-factors gives Eq.~\eqref{II:eq:dE}; differentiating each
three-letter trace in Eq.~\eqref{II:eq:dE} across its three $\Rres$-factors
gives Eq.~\eqref{II:eq:ddE}---inserting $-\eta\Ab_{e}$ (with its two flanking
$\Rres$'s) at each of the three $\Rres$-slots of each of the two words, six
four-letter words in total, each carrying the sign
$(-\eta)(-\eta)=\eta^{2}$.  In every resulting word the differentiated
directions $\Ab_c$ and $\Ab_e$ occur once, while the original directions
$\Ab_a$ and $\Ab_b$ also occur once; cyclic invariance lets us start the
trace at any one of these four letters.  This proves the last claim.
\end{proof}

\begin{lemma}[Ledger]\label{II:lem:ledger}
Let $g$ satisfy $\opnorm{\Delta(g)}<1$ and $y=x+\eta g$. With
$E:=Q^{1/2}\Eb Q^{1/2}$, the complete Gaussian proposal log-ratio
satisfies, exactly,
\begin{equation}
\begin{aligned}
\log\frac{q_{y}(x)}{q_{x}(y)}
&=\frac12\{\log\det(I+E)-\tr[E]\}
-\frac12\,Z(g),\\
Z(g)&:=g^{\top}\Eb g-\tr[Q\Eb]\\
&=\sum_{a,b}(g_{a}g_{b}-Q_{ab})\Eb_{ab}.
\end{aligned}
\label{II:eq:ledger}
\end{equation}
\end{lemma}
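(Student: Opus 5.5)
The proof will follow the polytope ledger (Lemma~\ref{I:lem:ledger}) almost verbatim. The only new ingredients are the normalized coordinates of Eq.~\eqref{II:eq:normalized} and the endpoint formula of Lemma~\ref{II:lem:endpoint}. First, since $\opnorm{\Delta(g)}<1$, Lemma~\ref{II:lem:endpoint} puts $y$ in $\intK$, so $G(y)$ is defined and positive definite. In normalized coordinates the regularizer is constant, so $G(z)=\kappa\Hld(z)+\kappa\creg$. Hence
\[
G(y)-G(x)=\kappa(\Hld(y)-I)=\kappa\Eb .
\]

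Next I will write the Gaussian log-ratio as determinant part plus quadratic part:
\[
\log\frac{q_y(x)}{q_x(y)}=\tfrac12\log\frac{\det G(y)}{\det G(x)}-\frac{d}{2r^2}\,(y-x)^\top(G(y)-G(x))(y-x).
\]
The normalizing constants cancel, and any Jacobian of the linear substitution is common to both densities, as already remarked before Eq.~\eqref{II:eq:Q-def}.

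For the determinant, cancel $\kappa$ and apply congruence by $Q^{1/2}=(I+\creg)^{-1/2}$:
\[
\frac{\det(\Hld(y)+\creg)}{\det(I+\creg)}=\det(I+Q^{1/2}\Eb Q^{1/2})=\det(I+E).
\]
This requires no commutation of $\Eb$ with $\creg$.

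For the quadratic term, substitute $y-x=\eta g$ and use $d\eta^2\kappa/r^2=1$ from Eq.~\eqref{II:eq:step}. This gives
\[
\frac{d}{r^2}(y-x)^\top(G(y)-G(x))(y-x)=g^\top\Eb g .
\]

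Finally, add and subtract $\tfrac12\tr[E]$. Trace cyclicity gives $\tr[E]=\tr[Q\Eb]$, so
\[
\log\frac{q_y(x)}{q_x(y)}=\tfrac12\{\log\det(I+E)-\tr[E]\}-\tfrac12\bigl(g^\top\Eb g-\tr[Q\Eb]\bigr).
\]
The coordinate form $Z(g)=\sum_{a,b}(g_ag_b-Q_{ab})\Eb_{ab}$ follows by expanding both terms entrywise, using the symmetry of $Q$ and $\Eb$.

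There is no real obstacle. The only point needing care is that the target ratio is not involved and $G$ is evaluated at the correct points: $q_y$ uses $G(y)$ both in its determinant and in its quadratic form $\norm{x-y}_{G(y)}^2$, while $q_x$ uses $G(x)$. Their difference of quadratic forms is exactly $(y-x)^\top(G(y)-G(x))(y-x)$ because $\norm{x-y}=\norm{y-x}$.
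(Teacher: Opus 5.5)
Your proposal is correct and follows the paper's proof essentially step for step. You use the constant regularizer to get $G(y)-G(x)=\kappa\Eb$, reduce the determinant ratio to $\det(I+E)$ by congruence with $Q^{1/2}$, and collapse the quadratic term to $g^{\top}\Eb g$ via $d\eta^{2}\kappa/r^{2}=1$. You then add and subtract $\tfrac12\tr[E]=\tfrac12\tr[Q\Eb]$, exactly as the paper does.
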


\begin{proof}
In the normalized coordinates $G(z)=\kappa\Hld(z)+\kappa\creg$, so
$G(y)-G(x)=\kappa\Eb$ and, with
$q_{z}(\cdot)\propto\det G(z)^{1/2}
\exp(-\tfrac{d}{2r^{2}}\norm{\cdot-z}_{G(z)}^{2})$,
\[
\log\frac{q_{y}(x)}{q_{x}(y)}
=\frac12\log\frac{\det G(y)}{\det G(x)}
-\frac{d}{2r^{2}}\,(y-x)^{\top}(G(y)-G(x))(y-x).
\]
For the determinant,
$\det G(y)/\det G(x)=\det(I+\creg+\Eb)/\det(I+\creg)
=\det(I+Q^{1/2}\Eb Q^{1/2})=\det(I+E)$ by
Eq.~\eqref{II:eq:Q-def}. For the quadratic part, $y-x=\eta g$ and
$d\eta^{2}\kappa/r^{2}=1$ by Eq.~\eqref{II:eq:step}, so it equals
$g^{\top}\Eb g$. Finally $\tr[E]=\tr[Q\Eb]$, and adding and
subtracting $\tfrac12\tr[E]$ gives Eq.~\eqref{II:eq:ledger}.
\end{proof}

The uncentered coupling term $\tr[Q\Eb]$---simultaneously the linear
part of the determinant and the mean of the random quadratic form---
has cancelled identically. The two braces are controlled in
Section~\ref{II:sec:tube} (determinant remainder) and
Sections~\ref{II:sec:stein} and \ref{II:sec:fluct} (centered fluctuation $Z$).

\subsection{The spectral tube, stability, and the determinant remainder}
\label{II:sec:tube}

Throughout, $x$ is fixed and we work in the normalized coordinates of
Section~\ref{II:sec:ledger}.

Fix a small universal failure parameter $\eone$ and set
\begin{equation*}
\delta_{0}:=\eone^{2}d^{-3},\qquad
R_{0}:=\sqrt d+\sqrt{2\log(2/\delta_{0})},\qquad
\tau_{0}:=\sqrt{2\,\psi(x)\,\log(8n/\delta_{0})} .
\end{equation*}
Define the convex \emph{spectral tube} and its collar
\begin{equation*}
\Ktube:=\{g\in\R^{d}:\norm{g}\le R_{0},\
\opnorm{\Ab(g)}\le\tau_{0}\},
\qquad
\Ktubep:=\Ktube+B_{2}(0,1),
\end{equation*}
and set $R_{1}:=R_{0}+1$, $\tau_{1}:=\tau_{0}+1$,
$M_{0}:=R_{1}/\sqrt d$. Note $M_{0}=O_{\eone}(1)$ and
\begin{equation}
\tau_{1}^{2}\ \le\ 16\,\psi^{+}(x)\,(1+\log(nd/\eone)),
\qquad \psi^{+}(x):=\max(1,\psi(x)) .
\label{II:eq:tau-size}
\end{equation}
Indeed, $(a+1)^2\le2a^2+2$ and, for $n,d\ge1$ and
$0<\eone\le1$,
\[
\log(8n/\delta_0)=\log(8nd^3/\eone^2)
\le3(1+\log(nd/\eone)).
\]
Substituting the definition of $\tau_0$ gives
$\tau_1^2\le12\psi(x)(1+\log(nd/\eone))+2$, which is bounded by
the right side of Eq.~\eqref{II:eq:tau-size} because $\psi^+\ge1$.

\begin{lemma}[Tube probability]\label{II:lem:tube-prob}
For $h\sim N(0,Q)$ with $Q\preceq I$,
$\Pr[h\in\Ktube]\ge1-\delta_{0}$.
\end{lemma}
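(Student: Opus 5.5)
The plan is to bound the two defining constraints of $\Ktube$ separately and combine them with a union bound, giving each a failure budget of $\delta_{0}/2$, in the same way as Lemma~\ref{I:lem:tube-probability}. Write $h=Q^{1/2}\xi$ with $\xi\sim N(0,I_{d})$.

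\emph{Norm constraint.} The map $\xi\mapsto\norm{Q^{1/2}\xi}$ is $1$-Lipschitz because $Q\preceq I$. Moreover
\[
\E\norm{h}\le(\tr Q)^{1/2}\le\sqrt d .
\]
Lemma~\ref{II:lem:gaussian-concentration} with $t=\sqrt{2\log(2/\delta_{0})}$ then gives $\Pr[\norm h>R_{0}]\le\delta_{0}/2$.

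\emph{Spectral constraint.} Write $\Ab(h)$ as a self-adjoint Gaussian series in $\xi$:
\[
\Ab(h)=\sum_{a}h_{a}\Ab_{a}=\sum_{\alpha}\xi_{\alpha}B_{\alpha},
\qquad
B_{\alpha}:=\sum_{a}(Q^{1/2})_{a\alpha}\Ab_{a}\in\Sn .
\]
The key step is to bound the variance operator in Loewner order. Expanding the square gives
\[
\sum_{\alpha}B_{\alpha}^{2}=\sum_{a,b}Q_{ab}\Ab_{a}\Ab_{b}.
\]
Equivalently, let $\mathcal A:\R^{d}\otimes\R^{n}\to\R^{n}$ be the block row operator $(\Ab_{1},\dots,\Ab_{d})$. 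Then this sum equals $\mathcal A(Q\otimes I_{n})\mathcal A^{\top}$. Since $Q\otimes I_{n}\preceq I$, it is dominated by $\mathcal A\mathcal A^{\top}=\sum_{a}\Ab_{a}^{2}=\Psib(x)$. Hence
\[
\sigma^{2}=\opnorm{\textstyle\sum_{\alpha}B_{\alpha}^{2}}\le\psi(x).
\]
This is exactly the congruence argument promised after Lemma~\ref{II:lem:matrix-gaussian}. Lemma~\ref{II:lem:matrix-gaussian} then gives
\[
\Pr[\opnorm{\Ab(h)}\ge\tau_{0}]\le2n\exp\!\bigl(-\tau_{0}^{2}/(2\psi(x))\bigr)=2n\cdot\frac{\delta_{0}}{8n}=\frac{\delta_{0}}{4}\le\frac{\delta_{0}}{2}.
\]
If $\psi(x)=0$ the claim is trivial. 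That case cannot occur anyway, because $\tr\Psib=d>0$.

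Adding the two failure probabilities gives $\Pr[h\notin\Ktube]\le\delta_{0}$. The only nonroutine point is the Loewner domination $\sum_{a,b}Q_{ab}\Ab_{a}\Ab_{b}\preceq\Psib(x)$. It requires $Q\preceq I$ and the block-operator factorization, rather than any entrywise bound on $Q$.
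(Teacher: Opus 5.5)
Your proposal is correct and follows the paper's proof essentially step for step: the same $\delta_0/2$ split between the two constraints, Gaussian Lipschitz concentration for $\norm{h}$, and Tropp's Gaussian-series bound. The variance operator is dominated by $\Psib(x)$ through congruence with $Q\otimes I_n\preceq I$. Your block row operator $\mathcal A$ is just the adjoint of the paper's stacking map $V$, and your evaluation $2n\cdot\delta_0/(8n)=\delta_0/4$ is exact and slightly stronger than the paper's stated $\le\delta_0/2$.
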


\begin{proof}
Write $h=Q^{1/2}\xi$, $\xi$ standard. The map
$\xi\mapsto\norm{Q^{1/2}\xi}$ is $1$-Lipschitz with mean at most
$(\E[\norm{Q^{1/2}\xi}^{2}])^{1/2}=(\tr[Q])^{1/2}\le\sqrt d$.
Thus Lemma~\ref{II:lem:gaussian-concentration}, with
$t=\sqrt{2\log(2/\delta_{0})}$, gives
$\Pr[\norm{h}>R_{0}]\le e^{-t^{2}/2}=\delta_{0}/2$.

For the spectral part, to apply Lemma~\ref{II:lem:matrix-gaussian}, let
$\widehat B_{i}:=\sum_{a}(Q^{1/2})_{ai}\Ab_{a}$. Then
$\Ab(h)=\sum_{i}\xi_{i}\widehat B_{i}$ is a Gaussian matrix series whose
variance statistic is

\[
\sum_{i}\widehat B_{i}^{2}
=\sum_{a,b}Q_{ab}\,\Ab_{a}\Ab_{b}
=V^{*}\,(Q\otimes I_{n})\,V
\ \preceq\ V^{*}V=\Psib(x),
\]
where $V:\R^{n}\to\R^{d}\otimes\R^{n}$ stacks
$u\mapsto(\Ab_{a}u)_{a}$.  The inequality follows from
$Q\otimes I_n\preceq I_{dn}$ by congruence with $V$. Hence
$\sigma^{2}\le\psi(x)$, and Lemma~\ref{II:lem:matrix-gaussian} with
$t=\tau_{0}$ gives
$\Pr[\opnorm{\Ab(h)}>\tau_{0}]
\le2n\,e^{-\tau_{0}^{2}/2\psi}
=2n\cdot\delta_{0}/(8n)\le\delta_{0}/2$. A union bound with the norm
estimate completes the proof.
\end{proof}

\begin{lemma}[Collar]\label{II:lem:collar}
Every $g\in\Ktubep$ satisfies $\norm{g}\le R_{1}$ and
$\opnorm{\Ab(g)}\le\tau_{1}$.
\end{lemma}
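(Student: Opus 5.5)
The plan is a direct decomposition followed by two triangle inequalities, one in the Euclidean norm and one in the operator norm. Fix $g\in\Ktubep=\Ktube+B_2(0,1)$. By the definition of the Minkowski sum we can write $g=g_0+v$ with $g_0\in\Ktube$ and $\norm{v}\le1$. Since $\Ktube$ is closed and convex, the closed Euclidean one-neighborhood used in the cutoff lemma coincides with this sum, so the same decomposition applies in either reading of $\Ktubep$.

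\emph{Euclidean bound.} The triangle inequality gives $\norm{g}\le\norm{g_0}+\norm{v}\le R_0+1=R_1$.

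\emph{Spectral bound.} By linearity of $\Ab(\cdot)$,
\[
\opnorm{\Ab(g)}\le\opnorm{\Ab(g_0)}+\opnorm{\Ab(v)}.
\]
The first term is at most $\tau_0$ by the definition of $\Ktube$. For the second, the operator norm is dominated by the Frobenius norm, and frame orthonormality in Eq.~\eqref{II:eq:normalized} gives
\[
\opnorm{\Ab(v)}\le\fnorm{\Ab(v)}=\norm{v}\le1.
\]
Adding the two terms yields $\opnorm{\Ab(g)}\le\tau_0+1=\tau_1$.

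There is no real obstacle here. The only step that needs care is the bound $\opnorm{\Ab(v)}\le\norm{v}$. It must come from $\opnorm{\cdot}\le\fnorm{\cdot}$ together with orthonormality of $\{\Ab_a\}$. It should not be routed through the leverage parameter $\psi(x)$, which could exceed $1$.
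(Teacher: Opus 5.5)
Your proposal is correct and is essentially the paper's proof: you decompose $g=g_0+v$ with $g_0\in\Ktube$ and $\norm{v}\le1$, apply the triangle inequality in each norm, and bound $\opnorm{\Ab(v)}\le\fnorm{\Ab(v)}=\norm{v}\le1$ by frame orthonormality. Your remark that this step should not go through $\psi(x)$ is apt, and your reconciliation of the two readings of $\Ktubep$ is harmless.
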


\begin{proof}
Write $g=g_{0}+e$ with $g_{0}\in\Ktube$, $\norm{e}\le1$. Then
$\opnorm{\Ab(e)}\le\fnorm{\Ab(e)}=\norm{e}\le1$ by frame
orthonormality Eq.~\eqref{II:eq:normalized}.
\end{proof}

We will choose $\rho$ (equivalently $\kappa=r^{2}/\rho^{2}$) so that,
in particular,
\begin{equation}
\eta\,\tau_{1}\ \le\ \tfrac{1}{64}
\qquad\text{(recall }\eta=\rho/\sqrt d\text{)};
\label{II:eq:smallness}
\end{equation}
this is implied by the final choice Eq.~\eqref{II:eq:rho-choice} with room to
spare. On the collar this makes every endpoint quantity uniformly
comparable to its base value---with no bootstrap: the metric is
closed-form, so stability is immediate.

\begin{lemma}[Stability on the collar]\label{II:lem:stability}
Assume Eq.~\eqref{II:eq:smallness} and let $g\in\Ktubep$,
$t\in[0,1]$, $\Delta=\Delta(tg)$, $\Rres=\Rres(tg)$. Then:
\begin{enumerate}[label=(\roman*),leftmargin=2.4em,itemsep=2pt]
\item $\opnorm{\Delta}\le\eta\tau_{1}\le\tfrac1{64}$; the segment
$[x,x+\eta g]$ lies in $\intK$; $\Rres\succ0$ with
$\tfrac{64}{65}I\preceq \Rres\preceq\tfrac{64}{63}I$, so
$\opnorm{\Rres}\le\tfrac{64}{63}$ and $\opnorm{\Rres}^{4}\le\tfrac{13}{12}$.
\item The eigenvalues of $\Hld(x+t\eta g)$ lie in
$[(\tfrac{64}{65})^{2},(\tfrac{64}{63})^{2}]
\subseteq[e^{-1/16},e^{1/16}]$.
\item $\opnorm{\Eb}\le3\,\eta\tau_{1}$ and
$\fnorm{\Eb}\le3\,\eta\tau_{1}\sqrt d=3\rho\tau_{1}$, where
$\Eb=\Hld(x+\eta g)-I$.
\end{enumerate}
\end{lemma}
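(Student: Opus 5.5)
The plan is direct: because the endpoint metric has the closed form of Lemma~\ref{II:lem:endpoint}, no bootstrap is needed. Each claim follows from the collar bound of Lemma~\ref{II:lem:collar}, elementary resolvent estimates, and the trace inequalities of Lemma~\ref{II:lem:psd-facts}.

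\emph{Part (i).} Since $tg$ differs from $g$ only by the factor $t\le1$, linearity gives $\opnorm{\Delta(tg)}=t\eta\opnorm{\Ab(g)}\le\eta\tau_1$ by Lemma~\ref{II:lem:collar}. Eq.~\eqref{II:eq:smallness} then bounds this by $\tfrac1{64}$. Feasibility of the segment is the first claim of Lemma~\ref{II:lem:endpoint}, because $I+s\Delta\succ0$ for $s\in[0,1]$. The eigenvalues of $I+\Delta$ lie in $[\tfrac{63}{64},\tfrac{65}{64}]$, so those of $\Rres$ lie in $[\tfrac{64}{65},\tfrac{64}{63}]$. The constant $(\tfrac{64}{63})^4\le\tfrac{13}{12}$ is a numerical check.

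\emph{Part (ii).} For $v\in\R^d$, put $B=\Ab(v)$. Then $v^\top\Hld(y)v=\tr[\Rres B\Rres B]=\fnorm{\Rres^{1/2}B\Rres^{1/2}}^2$. Since $\Rres$ is sandwiched between scalar multiples of $I$, the congruence $\Rres^{1/2}\cdot\Rres^{1/2}$ scales the Frobenius norm by a factor between $\tfrac{64}{65}$ and $\tfrac{64}{63}$. A direct way to see this is to write $\Rres^{1/2}B\Rres^{1/2}$ in the eigenbasis of $\Rres$, where each entry is multiplied by $\sqrt{\lambda_i\lambda_j}$. Frame orthonormality gives $\fnorm{B}=\norm v$, so $v^\top\Hld v\in[(\tfrac{64}{65})^2,(\tfrac{64}{63})^2]\norm v^2$. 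The inclusion in $[e^{-1/16},e^{1/16}]$ follows from $2\log\tfrac{64}{63}\le\tfrac1{16}$, a numerical check.

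\emph{Part (iii).} Take $t=1$ and test the quadratic form $v^\top\Eb v$ using Eq.~\eqref{II:eq:endpoint}. This gives
\[
v^\top\Eb v=-\tr[\Delta\Rres B\Rres B]-\tr[\Delta\Rres B^2].
\]
Each term is bounded by $\opnorm{\Delta}\opnorm{\Rres}^{a}\fnorm{B}^2$ with $a\in\{1,2\}$. This uses $|\tr[XYZW]|\le\opnorm{X}\opnorm{Z}\fnorm{Y}\fnorm{W}$, which follows from H\"older with trace norms: $\snorm{YW}\le\fnorm Y\fnorm W$. Hence $|v^\top\Eb v|\le(\tfrac{64}{63}+(\tfrac{64}{63})^2)\eta\tau_1\norm v^2\le3\eta\tau_1\norm v^2$. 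Because $\Eb$ is symmetric, this bounds $\opnorm{\Eb}$. The Frobenius bound follows from $\fnorm{\Eb}\le\sqrt d\,\opnorm{\Eb}$ for a $d\times d$ matrix, together with $\eta\sqrt d=\rho$.

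The only mildly delicate step is the trace-Hölder bound in (iii). Everything else is bookkeeping with the explicit constants $\tfrac{64}{63}$.
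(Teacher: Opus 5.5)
Your proof is correct. For (i), (ii), and the operator-norm bound in (iii) it is essentially the paper's argument. The paper bounds the bilinear form $|u^\top\Eb v|$ where you bound the quadratic form, which makes no difference since $\Eb$ is symmetric.

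The one genuine difference is the Frobenius bound in (iii). The paper applies the frame projection inequality $\sum_b\langle M,\Ab_b\rangle_F^2\le\fnorm{M}^2$ separately to each of the two trace terms of $\Eb_{ab}$, obtaining $\fnorm{\Eb}^2\le2\opnorm{\Delta}^2(\opnorm{\Rres}^4+\opnorm{\Rres}^2)d$. You instead use $\fnorm{\Eb}\le\sqrt d\,\opnorm{\Eb}$ for the $d\times d$ matrix $\Eb$. Your route is shorter and loses nothing. Writing $a=\opnorm{\Rres}$, your constant $(a+a^2)^2$ is even marginally smaller than the paper's $2(a^4+a^2)$, because the difference is $a^2(a-1)^2\ge0$.
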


\begin{proof}
(i) $\opnorm{\Delta}=t\eta\opnorm{\Ab(g)}\le\eta\tau_{1}$ by
Lemma~\ref{II:lem:collar}; feasibility and the bounds on $\Rres=(I+\Delta)^{-1}$
follow from $\opnorm{\Delta}\le\tfrac1{64}$.

(ii) For any $u$,
$u^{\top}\Hld(y_{t})u=\fnorm{\Rres^{1/2}\Ab(u)\Rres^{1/2}}^{2}
\in[\lambda_{\min}(\Rres)^{2},\lambda_{\max}(\Rres)^{2}]\cdot\norm{u}^{2}$
using Eq.~\eqref{II:eq:endpoint} and Eq.~\eqref{II:eq:normalized}.

(iii) By Eq.~\eqref{II:eq:endpoint},
$|u^{\top}\Eb v|\le
|\tr[\Delta \Rres\Ab(u)\Rres\Ab(v)]|+|\tr[\Delta \Rres\Ab(v)\Ab(u)]|$.
For the first term,
$|\tr[\Delta\cdot \Rres\Ab(u)\cdot \Rres\Ab(v)]|
\le\opnorm{\Delta}\,\snorm{\Rres\Ab(u)\Rres\Ab(v)}
\le\opnorm{\Delta}\opnorm{\Rres}^{2}\fnorm{\Ab(u)}\fnorm{\Ab(v)}$; the
second is bounded the same way with one fewer $\Rres$. With
$\fnorm{\Ab(u)}=\norm{u}$ this gives
$\opnorm{\Eb}\le\opnorm{\Delta}(\opnorm{\Rres}^{2}+\opnorm{\Rres})\le
3\opnorm{\Delta}$. For the Frobenius bound, fix $a$ and apply the
frame projection inequality
$\sum_{b}\langle M,\Ab_{b}\rangle_{F}^{2}\le\fnorm{M}^{2}$ (valid for
any $M\in\Sn$ since $\{\Ab_{b}\}$ is orthonormal; replace $M$ by its
symmetric part, which the inner products only see) to each of the two
terms of $\Eb_{ab}$: for the first, cyclicity gives
$\tr[\Delta \Rres\Ab_{a}\Rres\Ab_{b}]
=\langle\Ab_{b},\Delta \Rres\Ab_{a}\Rres\rangle_F$ with
$M_{a}=\Delta \Rres\Ab_{a}\Rres$ satisfying
$\fnorm{M_{a}}\le\opnorm{\Delta}\opnorm{\Rres}^{2}$, so
$\sum_{ab}(\tr[\Delta \Rres\Ab_{a}\Rres\Ab_{b}])^{2}
\le\opnorm{\Delta}^{2}\opnorm{\Rres}^{4}d$; for the second, fixing $b$
and summing over $a$ with $M_{b}'=\Delta \Rres\Ab_{b}$,
$\fnorm{M_{b}'}\le\opnorm{\Delta}\opnorm{\Rres}$, gives
$\sum_{ab}(\tr[\Delta \Rres\Ab_{b}\Ab_{a}])^{2}
\le\opnorm{\Delta}^{2}\opnorm{\Rres}^{2}d$. Hence
$\fnorm{\Eb}^{2}\le2\opnorm{\Delta}^{2}
(\opnorm{\Rres}^{4}+\opnorm{\Rres}^{2})\,d\le9\,\opnorm{\Delta}^{2}d$.
\end{proof}

\begin{lemma}[Determinant remainder]\label{II:lem:det-remainder}
Assume Eq.~\eqref{II:eq:smallness} and $g\in\Ktubep$. With
$E=Q^{1/2}\Eb Q^{1/2}$ as in Lemma~\ref{II:lem:ledger}, every eigenvalue of
$I+E$ lies in $[e^{-1/16},e^{1/16}]$, and
\begin{equation*}
|\log\det(I+E)-\tr[E]|\ \le\ \fnorm{E}^{2}
\ \le\ \fnorm{\Eb}^{2}\ \le\ 9\,\rho^{2}\tau_{1}^{2}.
\end{equation*}
\end{lemma}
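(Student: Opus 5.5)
The plan follows the polytope Lemma~\ref{I:lem:remainder-det}, now with the closed-form spectral stability of Lemma~\ref{II:lem:stability} as input. Write $y=x+\eta g$.

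\emph{Eigenvalue localization.} Lemma~\ref{II:lem:stability}(ii) gives $a_0 I\preceq \Hld(y)\preceq b_0 I$ with $a_0=(64/65)^2\ge e^{-1/16}$ and $b_0=(64/63)^2\le e^{1/16}$. Since $a_0\le 1\le b_0$ and $\creg\succeq0$, adding $a_0\creg\preceq\creg\preceq b_0\creg$ yields
\[
a_0(I+\creg)\preceq \Hld(y)+\creg\preceq b_0(I+\creg).
\]
Congruence by $Q^{1/2}=(I+\creg)^{-1/2}$ turns this into $a_0I\preceq I+Q^{1/2}(\Hld(y)+\creg-I-\creg)Q^{1/2}+I\cdot 0$, i.e.\ $a_0 I\preceq I+E\preceq b_0 I$. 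No commutation between $\Hld(y)$ and $\creg$ is needed. Consequently every eigenvalue $\lambda$ of $E$ satisfies $|\lambda|\le e^{1/16}-1<1/15$.

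\emph{Scalar remainder.} For such $\lambda$,
\[
|\log(1+\lambda)-\lambda|=\Bigl|\int_0^\lambda\frac{-t}{1+t}\,dt\Bigr|\le\frac{\lambda^2}{2(1-|\lambda|)}\le\lambda^2 .
\]
This is where the constant $1$ (rather than a generic $C$) is checked. Summing over the eigenvalues of the symmetric matrix $E$ gives $|\log\det(I+E)-\tr[E]|\le\sum\lambda^2=\fnorm{E}^2$.

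\emph{Frobenius comparison.} By Lemma~\ref{II:lem:psd-facts}(iii) with $T=Q\succeq0$ and $\opnorm{Q}\le1$, we get $\fnorm{E}=\fnorm{Q^{1/2}\Eb Q^{1/2}}\le\fnorm{\Eb}$. Lemma~\ref{II:lem:stability}(iii) then gives $\fnorm{\Eb}\le3\rho\tau_1$, so $\fnorm{\Eb}^2\le9\rho^2\tau_1^2$.

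None of these steps is a serious obstacle. The only point needing care is keeping the explicit constant $1$ in the scalar remainder, which holds because the eigenvalues of $E$ are at most $1/15$ in absolute value. Note also that the collar hypothesis $g\in\Ktubep$ enters solely through Lemma~\ref{II:lem:stability}.
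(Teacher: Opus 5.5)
Your proposal is correct and follows the paper's proof essentially step for step. It uses the same additive Loewner sandwich with $\creg$ followed by congruence by $Q^{1/2}$; you take the explicit endpoints $(64/65)^2,(64/63)^2$, where the paper uses $\min\{1,\lambda_{\min}\}$ and $\max\{1,\lambda_{\max}\}$. The scalar bound $\lambda^2/(2(1-|\lambda|))\le\lambda^2$ and the Frobenius comparison are also the paper's.

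One citation point needs fixing. Lemma~\ref{II:lem:psd-facts} is stated under the hypothesis $M\succeq0$, and $\Eb$ is generally indefinite. Justify $\fnorm{Q^{1/2}\Eb Q^{1/2}}\le\fnorm{\Eb}$ instead by two-sided submultiplicativity, $\fnorm{AMB}\le\opnorm{A}\fnorm{M}\opnorm{B}$, which holds for any $M$ and is what the paper uses.
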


\begin{proof}
Put
\[
a:=\min\{1,\lambda_{\min}(\Hld(y))\},\qquad
b:=\max\{1,\lambda_{\max}(\Hld(y))\}.
\]
The definitions give $aI\preceq\Hld(y)\preceq bI$ and also
$a\creg\preceq\creg\preceq b\creg$, because
$a\le1\le b$ and $\creg\succeq0$. Adding the corresponding inequalities gives
\[
a(I+\creg)\preceq\creg+\Hld(y)\preceq b(I+\creg).
\]
Congruence by $Q^{1/2}=(I+\creg)^{-1/2}$ shows
$aI\preceq I+E\preceq bI$; no commutativity between $\creg$ and
$\Hld(y)$ is used. By Lemma~\ref{II:lem:stability}(ii),
$[a,b]\subseteq[e^{-1/16},e^{1/16}]$. On that interval
$|\log(1+\mu)-\mu|\le\mu^{2}$ for the eigenvalues $\mu$ of $E$
(indeed $|\log(1+\mu)-\mu|\le\tfrac{\mu^{2}}{2(1-|\mu|)}\le\mu^{2}$
for $|\mu|\le\tfrac12$), and summing over eigenvalues gives the first
inequality. Finally, two-sided Frobenius submultiplicativity gives
$\fnorm{E}=\fnorm{Q^{1/2}\Eb Q^{1/2}}
\le\opnorm{Q^{1/2}}^{2}\fnorm{\Eb}\le\fnorm{\Eb}$, and
Lemma~\ref{II:lem:stability}(iii) concludes.
\end{proof}

\subsection{The Stein mean bound}\label{II:sec:stein}

This section bounds the mean of the centered fluctuation
$Z=\sum_{ab}(h_{a}h_{b}-Q_{ab})\Eb_{ab}(h)$ from
Lemma~\ref{II:lem:ledger}. Throughout, Eq.~\eqref{II:eq:smallness} is in force,
$\chi$ is the cutoff of Lemma~\ref{II:lem:cutoff} for the tube $\Ktube$, and
$\psi=\psi(x)$, $\Psib=\Psib(x)$.

The heart of the argument is the following deterministic estimate,
which will absorb every second-derivative term produced by the Stein
identity.

\begin{lemma}[Paired four-trace bound]\label{II:lem:four-trace}
Let $\Rres\in\Sn$ with $0\prec \Rres$ and $\opnorm{\Rres}\le\tfrac{64}{63}$, and
let $\{\Bh_{i}\}_{i=1}^{N}\subset\Sn$ satisfy
$\sum_{i}\Bh_{i}^{2}\preceq\Psib$. For each pair $(i,j)$ let
$w(i,j)=(w_{1},w_{2},w_{3},w_{4})$ be any word consisting of two
copies of $\Bh_{i}$ and two copies of $\Bh_{j}$, and set
$F_{ij}:=\tr[\Rres w_{1}\Rres w_{2}\Rres w_{3}\Rres w_{4}]$. Then
\begin{equation*}
\sum_{i,j=1}^{N}|F_{ij}|
\ \le\ \opnorm{\Rres}^{4}\,\tr[\Psib^{2}]
\ \le\ \tfrac{13}{12}\,\psi\,d .
\end{equation*}
\end{lemma}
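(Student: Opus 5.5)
The plan is to reduce every admissible word to one positive ``adjacent'' pattern, sum that pattern exactly as a trace of a square, and then compare with $\Psib$ in Loewner order using Lemma~\ref{II:lem:psd-facts}.

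\emph{Step 1 (classification of words).} A word of length four made of two copies of $\Bh_{i}$ and two of $\Bh_{j}$ is, up to a cyclic shift, either adjacent, $(\Bh_i,\Bh_i,\Bh_j,\Bh_j)$, or alternating, $(\Bh_i,\Bh_j,\Bh_i,\Bh_j)$. Cyclic shifts of the four letters are accompanied by the same shift of the $\Rres$-slots, so they leave $F_{ij}$ unchanged. The diagonal terms $i=j$ fit either description. Hence it suffices to bound $|F_{ij}|$ pointwise by the adjacent value
\[
A_{ij}:=\tr[\Rres\Bh_i\Rres\Bh_i\Rres\Bh_j\Rres\Bh_j],
\]
and then to bound $\sum_{i,j}A_{ij}$.

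\emph{Step 2 (adjacent words).} Set $P_i:=\Rres^{1/2}\Bh_i\Rres\Bh_i\Rres^{1/2}\succeq0$. Cyclicity gives $A_{ij}=\tr[P_iP_j]\ge0$ by Lemma~\ref{II:lem:psd-facts}(i), and therefore
\[
\sum_{i,j}A_{ij}=\tr[P^2],\qquad P:=\sum_iP_i .
\]
Lemma~\ref{II:lem:psd-facts}(ii) gives $\Bh_i\Rres\Bh_i\preceq\opnorm{\Rres}\Bh_i^2$. Summing over $i$, using $\sum_i\Bh_i^2\preceq\Psib$, and conjugating by $\Rres^{1/2}$ yields
\[
0\preceq P\preceq\opnorm{\Rres}\,\Rres^{1/2}\Psib\Rres^{1/2}.
\]
By Lemma~\ref{II:lem:psd-facts}(i),
\[
\tr[P^2]\le\opnorm{\Rres}^2\tr[(\Rres^{1/2}\Psib\Rres^{1/2})^2]=\opnorm{\Rres}^2\tr[\Rres\Psib\Rres\Psib].
\]
The same trick applies once more: $\Psib^{1/2}\Rres\Psib^{1/2}\preceq\opnorm{\Rres}\Psib$, so
\[
\tr[\Rres\Psib\Rres\Psib]=\tr[(\Psib^{1/2}\Rres\Psib^{1/2})^2]\le\opnorm{\Rres}^2\tr[\Psib^2].
\]

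\emph{Step 3 (alternating words).} Set $M_{ij}:=\Rres^{1/2}\Bh_i\Rres\Bh_j\Rres^{1/2}$. Then the alternating value is $\tr[M_{ij}^2]$, and
\[
|\tr[M_{ij}^2]|\le\fnorm{M_{ij}}^2=\tr[M_{ij}M_{ij}^\top]=\tr[\Rres\Bh_i\Rres\Bh_j\Rres\Bh_j\Rres\Bh_i].
\]
The last word has the cyclic pattern $(i,j,j,i)$, which is a rotation of the adjacent pattern, so it equals $A_{ij}$. Thus $|F_{ij}|\le A_{ij}$ in every case. Summing and applying Step~2 gives
\[
\sum_{i,j}|F_{ij}|\le\opnorm{\Rres}^4\tr[\Psib^2].
\]

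\emph{Step 4 (conclusion).} Since $\Psib\succeq0$, $\tr[\Psib^2]\le\opnorm{\Psib}\tr[\Psib]=\psi d$ by Lemma~\ref{II:lem:psd-facts}(ii) and Lemma~\ref{II:lem:leverage-basic}. Also $\opnorm{\Rres}^4\le(64/63)^4\le\tfrac{13}{12}$.

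The only delicate point is Step~3. Alternating words are not sign-definite, and the naive bound $\opnorm{\Rres}^4\sum_{i,j}\fnorm{\Bh_i}^2\fnorm{\Bh_j}^2$ would lose a factor of the ambient dimension. The Cauchy--Schwarz step that converts the alternating word into the adjacent one is exactly what avoids this loss, so I would check carefully that the transpose of $M_{ij}$ produces the $(i,j,j,i)$ word with the $\Rres$ factors in the correct slots.
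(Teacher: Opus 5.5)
Your proof is correct and follows the paper's argument. Your $P_i$ and $M_{ij}$ are exactly $\Nt_i^2$ and $\Nt_i\Nt_j$ for the paper's $\Nt_i=\Rres^{1/2}\Bh_i\Rres^{1/2}$. The alternating word is reduced to the adjacent one by the same Frobenius Cauchy--Schwarz step, and the sum is controlled by the same Loewner comparison $\sum_i\Nt_i^2\preceq\opnorm{\Rres}\,\Rres^{1/2}\Psib\Rres^{1/2}$. The only cosmetic difference is in the last step: you go from $\tr[(\Rres^{1/2}\Psib\Rres^{1/2})^2]$ to $\opnorm{\Rres}^2\tr[\Psib^2]$ by a second Loewner comparison, whereas the paper uses two-sided Frobenius submultiplicativity (Lemma~\ref{II:lem:psd-facts}(iii)).
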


\begin{proof}
Insert $\Rres=\Rres^{1/2}\Rres^{1/2}$ around each letter and use cyclicity:
$F_{ij}=\tr[\Nt_{w_{1}}\Nt_{w_{2}}\Nt_{w_{3}}\Nt_{w_{4}}]$ with
$\Nt_{i}:=\Rres^{1/2}\Bh_{i}\Rres^{1/2}\in\Sn$. Up to cyclic rotation there
are only two arrangements of the multiset
$\{\Nt_{i},\Nt_{i},\Nt_{j},\Nt_{j}\}$:
\[
\begin{aligned}
\text{adjacent:}\quad
&\tr[\Nt_{i}^{2}\Nt_{j}^{2}]\ge0,\\
\text{alternating:}\quad
&|\tr[\Nt_{i}\Nt_{j}\Nt_{i}\Nt_{j}]|
 \le\fnorm{\Nt_{i}\Nt_{j}}^{2}
 =\tr[\Nt_{i}^{2}\Nt_{j}^{2}],
\end{aligned}
\]
The adjacent identity is
$\tr[\Nt_i^2\Nt_j^2]=\fnorm{\Nt_i\Nt_j}^2$, while the alternating
bound is Frobenius Cauchy--Schwarz applied to $\Nt_i\Nt_j$ and its transpose.

Hence in either case
$|F_{ij}|\le\tr[\Nt_{i}^{2}\Nt_{j}^{2}]$. Set
$\widehat\Psi:=\sum_i\Nt_i^2
=\Rres^{1/2}(\sum_i\Bh_i\Rres\Bh_i)\Rres^{1/2}$.

\[
\sum_{ij}\tr[\Nt_{i}^{2}\Nt_{j}^{2}]
=\tr[(\textstyle\sum_{i}\Nt_{i}^{2})^{2}]
=\tr[\widehat\Psi^{2}].
\]

By Lemma~\ref{II:lem:psd-facts}(ii),
$\sum_{i}\Bh_{i}\Rres\Bh_{i}\preceq\opnorm{\Rres}\sum_{i}\Bh_{i}^{2}
\preceq\opnorm{\Rres}\,\Psib$, so
$0\preceq\widehat\Psi\preceq\opnorm{\Rres}\,\Rres^{1/2}\Psib \Rres^{1/2}$, and by
Lemma~\ref{II:lem:psd-facts}(i) then (iii),
\[
\tr[\widehat\Psi^{2}]
\le\opnorm{\Rres}^{2}\,\tr[(\Rres^{1/2}\Psib \Rres^{1/2})^{2}]
=\opnorm{\Rres}^{2}\,\fnorm{\Rres^{1/2}\Psib \Rres^{1/2}}^{2}
\le\opnorm{\Rres}^{4}\,\fnorm{\Psib}^{2}
=\opnorm{\Rres}^{4}\,\tr[\Psib^{2}].
\]
Here the first inequality is Loewner monotonicity of $\tr[X^2]$ for
positive-semidefinite matrices, and the second is two-sided Frobenius
submultiplicativity; the equalities use $\fnorm{M}^{2}=\tr[M^{2}]$ for
symmetric $M$.

Finally $\tr[\Psib^{2}]\le\opnorm{\Psib}\tr[\Psib]=\psi d$ by
Lemma~\ref{II:lem:psd-facts}(ii) and Lemma~\ref{II:lem:leverage-basic}, and
$(\tfrac{64}{63})^{4}\le\tfrac{13}{12}$.
\end{proof}

\begin{lemma}[Stein integrand bound]\label{II:lem:stein-integrand}
For every $h$ in the collar $\Ktubep$,
\begin{equation}
|\sum_{a,b,c,e}Q_{ac}Q_{be}\,
\partial_{e}\partial_{c}\Eb_{ab}(h)|
\ \le\ 7\,\eta^{2}\,\psi\,d\ =\ 7\,\rho^{2}\,\psi .
\label{II:eq:stein-integrand}
\end{equation}
\end{lemma}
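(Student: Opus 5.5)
Substitute the second-derivative formula Eq.~\eqref{II:eq:ddE} into the left side and absorb the two $Q$-contractions into rotated frame elements. The sum then becomes a sum of six paired four-traces, and each one is controlled by Lemma~\ref{II:lem:four-trace}.

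\emph{Reduction.} By Lemma~\ref{II:lem:stability}(i), $h\in\Ktubep$ gives $\opnorm{\Rres}\le\tfrac{64}{63}$ and $\Rres\succ0$, so all quantities below are well defined. Write $Q=Q^{1/2}Q^{1/2}$ and set
\[
\Bh_{i}:=\sum_{a}(Q^{1/2})_{ai}\Ab_{a}.
\]
As in the proof of Lemma~\ref{II:lem:tube-prob},
\[
\sum_i\Bh_i^2=V^{*}(Q\otimes I_n)V\preceq\Psib.
\]
Multilinearity of $F_{wxyz}$ in its four letters gives, for any word $(w_1,w_2,w_3,w_4)$ that is a permutation of $(a,c,b,e)$,
\[
\sum_{a,b,c,e}Q_{ac}Q_{be}F_{w_1w_2w_3w_4}
=\sum_{i,j}\tr[\Rres X_1\Rres X_2\Rres X_3\Rres X_4].
\]
Here $Q_{ac}=\sum_i(Q^{1/2})_{ai}(Q^{1/2})_{ci}$, so the letters $a$ and $c$ both become $\Bh_i$. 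Likewise $Q_{be}=\sum_j(Q^{1/2})_{bj}(Q^{1/2})_{ej}$, so $b$ and $e$ both become $\Bh_j$. Thus each $X_k$ is either $\Bh_i$ or $\Bh_j$, and each word contains exactly two copies of $\Bh_i$ and two of $\Bh_j$.

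\emph{Bound.} Apply Lemma~\ref{II:lem:four-trace} to each of the six words in Eq.~\eqref{II:eq:ddE}. The triangle inequality and $|\sum_{ij}F_{ij}|\le\sum_{ij}|F_{ij}|$ give
\[
\Bigl|\sum_{a,b,c,e}Q_{ac}Q_{be}\,\partial_e\partial_c\Eb_{ab}\Bigr|
\le6\,\eta^{2}\cdot\tfrac{13}{12}\,\psi d
=\tfrac{13}{2}\,\eta^{2}\psi d
\le7\eta^2\psi d.
\]
Finally, $\eta^2 d=\rho^2$ converts this into $7\rho^2\psi$.

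The only point needing care is to check that the four-trace lemma's hypothesis holds for each of the six word patterns. It does: the lemma allows arbitrary arrangements of two $i$-letters and two $j$-letters, and adjacent versus alternating arrangements are both handled inside its proof. The pairing $\{a,c\}\to i$, $\{b,e\}\to j$ follows exactly the $Q_{ac}Q_{be}$ structure.
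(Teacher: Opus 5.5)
Your proof is correct and follows the paper's argument: you contract Eq.~\eqref{II:eq:ddE} by $Q_{ac}Q_{be}$, replace $\{a,c\}$ by two copies of $\Bh_i$ and $\{b,e\}$ by two copies of $\Bh_j$, and bound each of the six words by Lemma~\ref{II:lem:four-trace}, which gives $6\cdot\tfrac{13}{12}\le7$. The only difference is cosmetic: the paper takes $\Bh_i=\sqrt{\lambda_i}\,\Ab(q_i)$ from an eigendecomposition of $Q$ and gets $\sum_i\Bh_i^2\preceq\Psib$ from frame independence (Lemma~\ref{II:lem:leverage-basic}), whereas you use the columns of $Q^{1/2}$ and the variance-statistic bound from Lemma~\ref{II:lem:tube-prob}; the two families differ by an orthogonal rotation, so the contracted sums are identical.
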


\begin{proof}
Diagonalize $Q=\sum_{i}\lambda_{i}q_{i}q_{i}^{\top}$ with
$\lambda_{i}\in[0,1]$, $\{q_{i}\}$ orthonormal, and set
$\Bh_{i}:=\sqrt{\lambda_{i}}\,\Ab(q_{i})$. Then
\[
\sum_{i}\Bh_{i}^{2}
=\sum_{i}\lambda_{i}\Ab(q_{i})^{2}
\preceq\sum_{i}\Ab(q_{i})^{2}=\Psib
\]
by $0\le\lambda_i\le1$ and frame independence
(Lemma~\ref{II:lem:leverage-basic}) applied to the orthonormal basis
$\{q_i\}$.

Now contract Eq.~\eqref{II:eq:ddE}: in each
of the six words the letters $a,c$ appear once each and $b,e$ once
each, so the contraction by $Q_{ac}Q_{be}$ replaces the pair
$\{a,c\}$ by two copies of $\Bh_{i}$ (summed over $i$) and the pair
$\{b,e\}$ by two copies of $\Bh_{j}$ (summed over $j$), preserving
each letter's slot. Each of the six words therefore contributes
$\sum_{ij}F_{ij}$ for some fixed word arrangement of
$(\Bh_{i},\Bh_{i},\Bh_{j},\Bh_{j})$, and Lemma~\ref{II:lem:four-trace}
(with $\opnorm{\Rres}\le\tfrac{64}{63}$ from
Lemma~\ref{II:lem:stability}(i)) bounds each contribution by
$\tfrac{13}{12}\psi d$. Multiplying by the prefactor $\eta^{2}$ and
by $6\cdot\tfrac{13}{12}\le7$ gives Eq.~\eqref{II:eq:stein-integrand}.
\end{proof}

Apply Lemma~\ref{II:lem:stein} to $F=\Eb_{ab}\chi$ and sum over
$(a,b)$. This is valid because $\Eb_{ab}$ is smooth on an open neighborhood
of the compact $\Ktubep$ (the resolvent exists there by
Lemma~\ref{II:lem:stability}(i)), while $\chi$ is $W^{2,\infty}$ and supported
in $\Ktubep$, so the product extends by zero:
\begin{equation}
\E[Z\chi]
=\underbrace{\E[\chi\cdot\textstyle\sum_{abce}Q_{ac}Q_{be}
\partial_{e}\partial_{c}\Eb_{ab}]}_{\text{main}}
+\underbrace{2\,\E[\textstyle\sum_{be}Q_{be}\partial_{e}\chi
\cdot G_{b}]
+\E[\langle Q\Eb Q,\;D^{2}\chi\rangle_{F}]}_{\text{collar
terms}},
\label{II:eq:stein-applied}
\end{equation}
where $G_{b}:=\sum_{ac}Q_{ac}\partial_{c}\Eb_{ab}$ and we used the
symmetry of $\Eb$ and of the two cross terms.

\begin{lemma}[First-derivative contraction]\label{II:lem:G-vector}
On $\Ktubep$, with $\Bh_{i}$ as above and
$Y:=\sum_{i}\Bh_{i}\Rres\Bh_{i}$,
\begin{equation*}
\begin{aligned}
G_{b}&=-2\eta\,\langle\Ab_{b},\,\Rres Y\Rres\rangle_{F},\\
0&\preceq Y\preceq\opnorm{\Rres}\Psib,
&\tr[Y]&\le\opnorm{\Rres}\,d,\\
(\sum_{b}G_{b}^{2})^{1/2}
&\le3\,\eta\sqrt{\psi d}
=3\rho\sqrt{\psi}.
\end{aligned}
\end{equation*}
\end{lemma}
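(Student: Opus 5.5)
The plan is to compute $G_b$ directly from Eq.~\eqref{II:eq:dE}, then contract with $Q$ via the eigen-decomposition $Q=\sum_i\lambda_iq_iq_i^\top$ used in Lemma~\ref{II:lem:stein-integrand}. By Eq.~\eqref{II:eq:dE},
\[
G_b=\sum_{a,c}Q_{ac}\partial_c\Eb_{ab}
=-\eta\sum_{a,c}Q_{ac}\bigl(\tr[\Rres\Ab_c\Rres\Ab_a\Rres\Ab_b]+\tr[\Rres\Ab_a\Rres\Ab_c\Rres\Ab_b]\bigr).
\]
Bilinearity in $(\Ab_a,\Ab_c)$ together with $\sum_{a,c}Q_{ac}\Ab_c\otimes\Ab_a=\sum_i\Bh_i\otimes\Bh_i$ replaces each pair by $\sum_i$ over two copies of $\Bh_i$. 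Each of the two traces then becomes $\tr[\Rres Y\Rres\Ab_b]=\langle\Ab_b,\Rres Y\Rres\rangle_F$, using symmetry of all factors and cyclicity. This gives the claimed formula with the factor $-2\eta$.

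For the Loewner bounds, $Y=\sum_i\Bh_i\Rres\Bh_i\succeq0$ because $\Rres\succ0$ on the collar (Lemma~\ref{II:lem:stability}(i)). Lemma~\ref{II:lem:psd-facts}(ii) gives $Y\preceq\opnorm{\Rres}\sum_i\Bh_i^2\preceq\opnorm{\Rres}\Psib$, where the last step is the bound $\sum_i\Bh_i^2\preceq\Psib$ established in the proof of Lemma~\ref{II:lem:stein-integrand}. Taking traces and using $\tr[\Psib]=d$ from Lemma~\ref{II:lem:leverage-basic} yields $\tr[Y]\le\opnorm{\Rres}d$.

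For the Euclidean norm of $(G_b)$, apply the frame projection inequality $\sum_b\langle M,\Ab_b\rangle_F^2\le\fnorm{M}^2$ (already used in Lemma~\ref{II:lem:stability}(iii)) with $M=\Rres Y\Rres$, which is symmetric. This gives $\sum_bG_b^2\le4\eta^2\fnorm{\Rres Y\Rres}^2\le4\eta^2\opnorm{\Rres}^4\fnorm{Y}^2$. Since $0\preceq Y\preceq\opnorm{\Rres}\Psib$, Lemma~\ref{II:lem:psd-facts}(i) gives $\fnorm{Y}^2=\tr[Y^2]\le\opnorm{\Rres}^2\tr[\Psib^2]\le\opnorm{\Rres}^2\psi d$. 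Altogether $(\sum_bG_b^2)^{1/2}\le2\eta\opnorm{\Rres}^3\sqrt{\psi d}\le3\eta\sqrt{\psi d}$, because $(64/63)^3<3/2$. Finally, $\eta\sqrt d=\rho$ turns this into $3\rho\sqrt\psi$.

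The only delicate step is the contraction in the first paragraph: one must check that in both words of Eq.~\eqref{II:eq:dE} the indices $a$ and $c$ sit in adjacent letter slots separated by a single $\Rres$. That adjacency is what makes both words collapse to the same $Y$ rather than to an alternating pattern. This holds because $\Ab_b$ is the remaining letter and the word has only three letters, so it can be verified directly by inspecting cyclic order.
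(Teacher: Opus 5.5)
Your proof is correct and matches the paper's argument: you use the same eigen-substitution contraction of Eq.~\eqref{II:eq:dE}, the same Loewner bounds via Lemma~\ref{II:lem:psd-facts}(ii), and the same frame-projection (Bessel) estimate with the factor $\opnorm{\Rres}^{3}\le 3/2$. The one cosmetic difference is how you bound $\fnorm{Y}^{2}$. You use Loewner monotonicity of $\tr[X^{2}]$ to get $\opnorm{\Rres}^{2}\tr[\Psib^{2}]$, while the paper uses $\tr[Y^{2}]\le\opnorm{Y}\tr[Y]$; both give $\opnorm{\Rres}^{2}\psi d$.
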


\begin{proof}
Contract Eq.~\eqref{II:eq:dE} with $Q_{ac}$: in both three-letter words the
letters $a$ and $c$ occupy the two slots preceding $\Ab_{b}$, so after
the eigen-substitution both words equal
$\sum_{i}\tr[\Rres\Bh_{i}\Rres\Bh_{i}\Rres\Ab_{b}]$.  A cyclic shift gives
\[
\tr[\Rres\Bh_{i}\Rres\Bh_{i}\Rres\Ab_{b}]
=\langle\Ab_{b},\Rres\Bh_{i}\Rres\Bh_{i}\Rres\rangle_{F},
\]
and summing over $i$ yields the stated formula. The bounds on $Y$
are Lemma~\ref{II:lem:psd-facts}(ii) with
$\sum_{i}\Bh_{i}^{2}\preceq\Psib$ and $\tr[\Psib]=d$. Finally, the frame
projection inequality gives
\[
\sum_{b}G_{b}^{2}
\le4\eta^{2}\fnorm{\Rres Y\Rres}^{2}
\le4\eta^{2}\opnorm{\Rres}^{4}\fnorm{Y}^{2}.
\]

Since $Y\succeq0$, Lemma~\ref{II:lem:psd-facts} gives
\[
\fnorm{Y}^{2}=\tr[Y^{2}]
\le\opnorm{Y}\tr[Y]
\le\opnorm{\Rres}^{2}\psi d,
\]
where the last inequality uses the preceding bounds on $Y$.

Hence
$(\sum_{b}G_{b}^{2})^{1/2}\le2\eta\opnorm{\Rres}^{3}\sqrt{\psi d}
\le3\eta\sqrt{\psi d}$.
\end{proof}

\begin{lemma}[Mean bound]\label{II:lem:mean}
Assume Eq.~\eqref{II:eq:smallness}. Then
\begin{equation*}
|\E[Z\chi]|
\ \le\ 7\rho^{2}\psi
\ +\ \delta_{0}\,(72\rho\sqrt\psi+3C_{0}\rho\tau_{1}\sqrt d)
\ \le\ 7\rho^{2}\psi+\eone .
\end{equation*}
\end{lemma}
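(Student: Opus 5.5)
We start from the Stein decomposition Eq.~\eqref{II:eq:stein-applied} and bound its three pieces separately. For the main term we use the pointwise bound of Lemma~\ref{II:lem:stein-integrand}. Since $\chi\in[0,1]$ is supported in $\Ktubep$, where that lemma applies, the main term is at most $7\rho^{2}\psi$ in absolute value after taking expectations. No probability factor is needed here.

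Both collar terms involve derivatives of $\chi$. These vanish on $\Ktube$, because $\chi\equiv1$ there, and outside $\Ktubep$. So both terms are supported on $\Ktubep\setminus\Ktube$, which has probability at most $\delta_{0}$ by Lemma~\ref{II:lem:tube-prob}. It therefore suffices to bound each integrand deterministically on the collar and multiply by $\delta_{0}$.

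\emph{Gradient term.} Write $2\sum_{b,e}Q_{be}\partial_{e}\chi\,G_{b}=2\langle Q\nabla\chi,G\rangle$. We then use $\opnorm{Q}\le1$, $\norm{\nabla\chi}\le12$ from Lemma~\ref{II:lem:cutoff}, and $\norm{G}\le3\rho\sqrt\psi$ from Lemma~\ref{II:lem:G-vector}. This gives the bound $2\cdot12\cdot3\rho\sqrt\psi=72\rho\sqrt\psi$.

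\emph{Hessian term.} We bound $|\langle Q\Eb Q,D^{2}\chi\rangle_F|\le\snorm{Q\Eb Q}\,\opnorm{D^{2}\chi}$. Then $\snorm{Q\Eb Q}\le\snorm{\Eb}\le\sqrt d\,\fnorm{\Eb}\le3\rho\tau_{1}\sqrt d$ by Lemma~\ref{II:lem:stability}(iii). Together with $\opnorm{D^{2}\chi}\le C_{0}$, the term is at most $3C_{0}\rho\tau_{1}\sqrt d$. Summing the three pieces gives the first inequality.

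For the second inequality we must show $\delta_{0}(72\rho\sqrt\psi+3C_{0}\rho\tau_{1}\sqrt d)\le\eone$, and this is the step that needs care. We have $\delta_{0}=\eone^{2}d^{-3}$ and $\psi\le d$ by Lemma~\ref{II:lem:leverage-basic}. The smallness condition Eq.~\eqref{II:eq:smallness} gives $\rho\tau_{1}=\eta\tau_{1}\sqrt d\le\sqrt d/64$. Also $\rho\le\rho\tau_{1}$ since $\tau_{1}\ge1$. Hence the bracket is at most a universal constant times $d$. Multiplying by $\eone^{2}d^{-3}$ leaves at most $C\eone^{2}/d^{2}\le\eone$ for small universal $\eone$, which is the regime already fixed.
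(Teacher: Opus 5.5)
Your proof is correct and follows the paper's argument: the main term uses the pointwise Stein-integrand bound, the two cutoff terms are handled through collar support, the uniform bounds from the cutoff, $G$-vector and stability lemmas, and the probability bound $\delta_0$, and the final smallness step uses $\delta_0=\eone^{2}d^{-3}$ together with Eq.~\eqref{II:eq:smallness}. The only difference is cosmetic: you bound $\rho\sqrt\psi$ using $\psi\le d$ and $\tau_1\ge1$, where the paper uses $\sqrt\psi\le\tau_1$, and both give a collar contribution of order $\eone^{2}d^{-2}\le\eone$.
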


\begin{proof}
The main term of Eq.~\eqref{II:eq:stein-applied} is bounded pointwise by
Lemma~\ref{II:lem:stein-integrand} (and $0\le\chi\le1$), giving
$7\rho^{2}\psi$. The collar terms have integrands supported on
$\Ktubep\setminus\Ktube$, an event of probability at most
$\delta_{0}$ by Lemma~\ref{II:lem:tube-prob}, and are bounded uniformly
there: by Lemmas~\ref{II:lem:cutoff} and \ref{II:lem:G-vector},
$|2\sum_{be}Q_{be}\partial_{e}\chi\,G_{b}|
\le2\norm{\nabla\chi}\,(\sum_{b}G_{b}^{2})^{1/2}
\le72\rho\sqrt\psi$; and by
Lemma~\ref{II:lem:stability}(iii),
$|\langle Q\Eb Q,D^{2}\chi\rangle_{F}|
\le\opnorm{D^{2}\chi}\,\snorm{Q\Eb Q}
\le C_{0}\sqrt d\,\fnorm{\Eb}\le3C_{0}\rho\tau_{1}\sqrt d$. Multiplying
these uniform bounds by $\delta_{0}$ and using $\sqrt\psi\le\tau_{1}$,
the smallness assumption
gives both $\rho\sqrt\psi\le\sqrt d/64$ and
$\rho\tau_{1}\sqrt d\le d/64$. With
$\delta_{0}=\eone^{2}d^{-3}$, the two collar contributions are at most
\[
\frac{72}{64}\,\eone^{2}d^{-5/2}
\quad\text{and}\quad
\frac{3C_{0}}{64}\,\eone^{2}d^{-2}.
\]
Their sum is at most $\eone$ by the universal choice of $\eone$ in
Section~\ref{II:sec:fluct}.
\end{proof}

\subsection{Fluctuation bound, parameter choice, and the acceptance
theorem}\label{II:sec:fluct}

\begin{lemma}[Gradient of $Z$]\label{II:lem:gradZ}
Assume Eq.~\eqref{II:eq:smallness}. For $h$ in a neighborhood of $\Ktube$,
\begin{equation}
\nabla Z(h)=2\,\Eb(h)\,h+W(h),
\qquad
W_{c}(h)=-2\eta\,\langle\Ab_{c},\;
\Rres[\Ab(h)\,\Rres\,\Ab(h)-Y]\Rres\rangle_{F},
\label{II:eq:gradZ}
\end{equation}
with $Y=\sum_{i}\Bh_{i}\Rres\Bh_{i}$ as in Lemma~\ref{II:lem:G-vector}, and on
$\Ktube$,
\begin{equation}
\norm{\nabla Z}\ \le\ 9\rho\,\tau_{1}M_{0}+3\rho\sqrt\psi
\ \le\ 12\,\rho\,\tau_{1}M_{0}\ =:\ \Lambda_{Z}.
\label{II:eq:lipschitz}
\end{equation}
Consequently $Z$ is $\Lambda_{Z}$-Lipschitz on the convex set
$\Ktube$.
\end{lemma}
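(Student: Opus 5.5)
The plan is to differentiate $Z(h)=h^{\top}\Eb(h)h-\tr[Q\Eb(h)]$ directly, using the endpoint derivative formula Eq.~\eqref{II:eq:dE}, and then bound each resulting piece on $\Ktube$ with the stability estimates of Lemma~\ref{II:lem:stability} and the contraction of Lemma~\ref{II:lem:G-vector}.

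\emph{Gradient formula.} Since $\Eb$ is symmetric, the product rule gives
\[
\partial_c Z=2(\Eb h)_c+\sum_{a,b}(h_ah_b-Q_{ab})\partial_c\Eb_{ab}.
\]
For the second sum, insert Eq.~\eqref{II:eq:dE}. Contracting the letters $a,b$ against $h_ah_b$ turns both three-letter words into $\tr[\Rres\Ab_c\Rres\Ab(h)\Rres\Ab(h)]$ after a cyclic shift, so that contribution is $-2\eta\langle\Ab_c,\Rres\Ab(h)\Rres\Ab(h)\Rres\rangle_F$. Contracting against $Q_{ab}$ is the computation of Lemma~\ref{II:lem:G-vector} with the roles relabeled: diagonalize $Q$, substitute $\Bh_i$, and use frame independence. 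It yields $-2\eta\langle\Ab_c,\Rres Y\Rres\rangle_F$. Subtracting the two contributions gives exactly $W_c$ in Eq.~\eqref{II:eq:gradZ}. The formula holds on a neighborhood of $\Ktube$, because there $\opnorm{\Delta}<1$ and all quantities are smooth.

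\emph{Norm bounds.} There are three terms to bound on $\Ktube$.
\begin{itemize}
\item For $2\Eb h$, use $\opnorm{\Eb}\le3\eta\tau_1$ from Lemma~\ref{II:lem:stability}(iii) and $\norm h\le R_0\le R_1=M_0\sqrt d$. This gives $\norm{2\Eb h}\le6\eta\tau_1M_0\sqrt d=6\rho\tau_1M_0$.
\item For the $\Ab(h)$-part of $W$, apply the frame projection inequality $\sum_c\langle\Ab_c,M\rangle_F^2\le\fnorm{M}^2$ with $M=\Rres\Ab(h)\Rres\Ab(h)\Rres$. Then $\fnorm{M}\le\opnorm{\Rres}^3\opnorm{\Ab(h)}\fnorm{\Ab(h)}\le\opnorm{\Rres}^3\tau_1\norm h$, using $\fnorm{\Ab(h)}=\norm h$. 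The result is $2\eta\opnorm{\Rres}^3\tau_1R_1\le3\rho\tau_1M_0$.
\item The $Y$-part of $W$ is exactly the vector bounded in Lemma~\ref{II:lem:G-vector}, so its norm is at most $3\rho\sqrt\psi$.
\end{itemize}
Summing gives $9\rho\tau_1M_0+3\rho\sqrt\psi$.

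\emph{Simplification and Lipschitz conclusion.} Since $\tau_0^2=2\psi\log(8n/\delta_0)\ge\psi$, we have $\sqrt\psi\le\tau_1$, and $M_0\ge1$. Hence $3\rho\sqrt\psi\le3\rho\tau_1M_0$, and the total is at most $12\rho\tau_1M_0$. Because $\Ktube$ is convex, integrating the gradient along segments inside it turns this gradient bound into the Lipschitz bound with constant $\Lambda_Z$.

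The only delicate step is the index bookkeeping in the $h$-contraction of Eq.~\eqref{II:eq:dE}. One must check that both words really reduce to the same cyclic pattern with $\Ab_c$ isolated, which gives the factor $2$. I would verify this by writing the two words and rotating each to begin at $\Ab_c$.
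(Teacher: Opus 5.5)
Your proposal is correct and follows the paper's proof essentially step for step. You use the same product-rule split, the same cyclic reduction of both words in Eq.~\eqref{II:eq:dE}, and the same identification of the $Q_{ab}$-contraction with $G_c$ from Lemma~\ref{II:lem:G-vector}. The term-by-term bounds $6\rho\tau_1M_0+3\rho\tau_1M_0+3\rho\sqrt\psi$ and the simplification via $\sqrt\psi\le\tau_0$ and $M_0\ge1$ also match. Your only differences are cosmetic: you bound the two parts of $W$ separately rather than under one frame-projection estimate, and the appeal to frame independence belongs to the bound on $Y$ rather than to the index contraction.
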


\begin{proof}
Differentiating
$Z=\sum_{ab}(h_{a}h_{b}-Q_{ab})\Eb_{ab}$ gives two contributions. By
symmetry of $\Eb$, the product-rule term from $h_{a}h_{b}$ is
$2(\Eb h)_{c}$, while the term from $\Eb_{ab}$ is
$W_{c}=\sum_{ab}(h_{a}h_{b}-Q_{ab})\partial_{c}\Eb_{ab}$. For the
$h$-part, contract Eq.~\eqref{II:eq:dE} with $h_{a}h_{b}$: both three-letter
words become $\tr[\Rres\Ab(h)\Rres\Ab(h)\Rres\Ab_{c}]$ after a cyclic
shift (exactly as in Lemma~\ref{II:lem:G-vector} with
$\Bh_{i}$ replaced by $\Ab(h)$), giving
$-2\eta\langle\Ab_{c},\Rres\Ab(h)\Rres\Ab(h)\Rres\rangle_{F}$; the $Q$-part is
$-G_{c}=+2\eta\langle\Ab_{c},\Rres Y\Rres\rangle_{F}$ by
Lemma~\ref{II:lem:G-vector}. This is Eq.~\eqref{II:eq:gradZ}.

For the norm bound, the frame projection inequality,
Lemma~\ref{II:lem:psd-facts}(iii), the estimates
$\opnorm{\Ab(h)}\le\tau_{0}$ and
$\fnorm{\Ab(h)}=\norm{h}\le R_{0}$, and the bounds on $Y$ from
Lemma~\ref{II:lem:G-vector} give
\[
\begin{aligned}
\norm{W}
&\le2\eta\opnorm{\Rres}^{2}
 (\fnorm{\Ab(h)\Rres\Ab(h)}+\fnorm{Y})\\
&\le2\eta\opnorm{\Rres}^{3}
 (\tau_{0}R_{0}+\sqrt{\psi d})\\
&\le3\eta(\tau_{1}R_{1}+\sqrt{\psi d}).
\end{aligned}
\]
The first inequality also uses the triangle inequality, the second uses
Frobenius submultiplicativity, and the last uses
$\opnorm{\Rres}\le64/63$, $\tau_0\le\tau_1$, and $R_0\le R_1$.

Also, $\norm{2\Eb h}\le2\opnorm{\Eb}\norm{h}\le6\eta\tau_{1}R_{1}$ by
Lemma~\ref{II:lem:stability}(iii). Since $\eta R_{1}=\rho M_{0}$ and
$\eta\sqrt{\psi d}=\rho\sqrt\psi$, the first bound in
Eq.~\eqref{II:eq:lipschitz} follows; the second uses
$\sqrt\psi\le\tau_{0}\le\tau_{1}$ (from
$\tau_{0}^{2}=2\psi\log(8n/\delta_{0})\ge\psi$) and $M_{0}\ge1$. A
gradient bound on a convex set is a Lipschitz bound.
\end{proof}

\begin{lemma}[Uniform size on the collar]\label{II:lem:supZ}
On $\Ktubep$,
$|Z|\le3\rho\tau_{1}\sqrt d\,(M_{0}^{2}+1)\le6\rho\tau_{1}\sqrt d\,
M_{0}^{2}=:S_{Z}$.
\end{lemma}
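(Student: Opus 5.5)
The bound follows from the representation $Z=g^{\top}\Eb g-\tr[Q\Eb]$ of Lemma~\ref{II:lem:ledger}, valid at every $g\in\Ktubep$ since Lemma~\ref{II:lem:stability}(i) gives $\opnorm{\Delta(g)}<1$ there. I would bound the two terms separately and add.

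For the quadratic term, Lemma~\ref{II:lem:stability}(iii) gives $\opnorm{\Eb}\le3\eta\tau_{1}$, and Lemma~\ref{II:lem:collar} gives $\norm{g}\le R_{1}$. Hence
\[
|g^{\top}\Eb g|\le\opnorm{\Eb}\norm{g}^{2}\le3\eta\tau_{1}R_{1}^{2}.
\]
Substituting $R_{1}=M_{0}\sqrt d$ and $\eta=\rho/\sqrt d$ gives $3\rho\tau_{1}\sqrt d\,M_{0}^{2}$.

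For the trace term, I would use $|\tr[Q\Eb]|\le\fnorm{Q}\fnorm{\Eb}$. Since $0\preceq Q\preceq I$, we have $\fnorm{Q}\le\sqrt d$. Lemma~\ref{II:lem:stability}(iii) gives $\fnorm{\Eb}\le3\rho\tau_{1}$. So $|\tr[Q\Eb]|\le3\rho\tau_{1}\sqrt d$.

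Adding the two estimates gives $|Z|\le3\rho\tau_{1}\sqrt d\,(M_{0}^{2}+1)$. The second inequality, with $S_Z=6\rho\tau_{1}\sqrt d\,M_{0}^{2}$, follows from $M_{0}\ge1$, which holds because $R_{1}\ge\sqrt d$. Both steps are direct applications of earlier lemmas; the only thing to check is the bookkeeping $\eta R_{1}^{2}=\rho\sqrt d\,M_{0}^{2}$.
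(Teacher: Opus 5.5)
Your proof is correct and follows the paper's argument: the same operator-norm bound on the quadratic term via Lemma~\ref{II:lem:stability}(iii) and Lemma~\ref{II:lem:collar}, and the same use of $M_0\ge1$ for the second inequality. The one cosmetic difference is the trace term, where you use $\fnorm{Q}\fnorm{\Eb}$ with $\fnorm{Q}\le\sqrt d$ and the paper uses $\opnorm{Q}\snorm{\Eb}\le\sqrt d\,\fnorm{\Eb}$; both give the identical bound $3\rho\tau_1\sqrt d$.
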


\begin{proof}
Lemma~\ref{II:lem:stability}(iii), together with the operator-norm and
trace-duality bounds, gives
\[
\begin{aligned}
|h^{\top}\Eb h|
&\le\opnorm{\Eb}\norm{h}^{2}
 \le3\eta\tau_{1}R_{1}^{2}
 =3\rho\tau_{1}\sqrt d\,M_{0}^{2},\\
|\tr[Q\Eb]|
&\le\opnorm{Q}\snorm{\Eb}
 \le\sqrt d\,\fnorm{\Eb}
 \le3\rho\tau_{1}\sqrt d.
\end{aligned}
\]
In the first line we used Lemma~\ref{II:lem:stability}(iii) and
$\eta R_1^2=\rho\sqrt d\,M_0^2$; in the second we used
$\snorm{\Eb}\le\sqrt d\,\fnorm{\Eb}$, $\opnorm Q\le1$, and again
Lemma~\ref{II:lem:stability}(iii).
\end{proof}

\begin{lemma}[Fluctuation bound]\label{II:lem:Z-bound}
Assume Eq.~\eqref{II:eq:smallness}. With probability at least
$1-\delta_{0}-\eone$ over $h\sim N(0,Q)$,
\begin{equation}
|Z|\ \le\ 7\rho^{2}\psi
+\Lambda_{Z}\sqrt{2\log(2/\eone)}
+\eone+40\,\rho\tau_{1}M_{0}^{2}\sqrt d\,\sqrt{\delta_{0}} .
\label{II:eq:Z-bound}
\end{equation}
\end{lemma}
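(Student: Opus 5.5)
We follow the polytope template of Lemma~\ref{I:lem:Z-bound}, replacing $Z$ by a globally Lipschitz surrogate. Apply Lemma~\ref{II:lem:mcshane} to $Z|_{\Ktube}$, which is $\Lambda_Z$-Lipschitz by Lemma~\ref{II:lem:gradZ}. This gives $\widetilde Z:\R^d\to\R$ with $\widetilde Z=Z$ on $\Ktube$ and $\Lip(\widetilde Z)\le\Lambda_Z$. Since $0\in\Ktube$ and $\Eb(0)=0$, we have $Z(0)=0$, hence $|\widetilde Z(h)|\le\Lambda_Z\norm h$ everywhere. Write $h=Q^{1/2}\xi$. Because $Q\preceq I$, the map $\xi\mapsto\widetilde Z(Q^{1/2}\xi)$ is still $\Lambda_Z$-Lipschitz. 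Lemma~\ref{II:lem:gaussian-concentration} then gives
\[
|\widetilde Z-\E[\widetilde Z]|\le\Lambda_Z\sqrt{2\log(2/\eone)}
\]
with probability at least $1-\eone$. Intersecting with the tube event of Lemma~\ref{II:lem:tube-prob}, on which $Z=\widetilde Z$, yields the claimed probability $1-\delta_0-\eone$. It then remains to bound $|\E[\widetilde Z]|$ by $7\rho^2\psi+\eone+40\rho\tau_1M_0^2\sqrt d\sqrt{\delta_0}$.

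Split $\E[\widetilde Z]=\E[Z\chi]+\E[\widetilde Z-Z\chi]$. The first term is at most $7\rho^2\psi+\eone$ by Lemma~\ref{II:lem:mean}. The difference vanishes on $\Ktube$, because there $\chi=1$ and $\widetilde Z=Z$. Cauchy--Schwarz on the complement, which has probability at most $\delta_0$, gives
\[
|\E[\widetilde Z-Z\chi]|\le\sqrt{\delta_0}\,\bigl(\E[\widetilde Z^2]^{1/2}+\E[(Z\chi)^2]^{1/2}\bigr).
\]
The first second moment satisfies $\E[\widetilde Z^2]\le\Lambda_Z^2\E\norm h^2\le\Lambda_Z^2 d$, so $\E[\widetilde Z^2]^{1/2}\le12\rho\tau_1M_0\sqrt d\le12\rho\tau_1M_0^2\sqrt d$, using $M_0\ge1$. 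For the second, $Z\chi$ is supported in $\Ktubep$, where $|Z|\le S_Z=6\rho\tau_1\sqrt d M_0^2$ by Lemma~\ref{II:lem:supZ}. The two moments together give at most $18\rho\tau_1M_0^2\sqrt d$, which fits comfortably under the constant $40$; any slack in the constants is absorbed there.

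The main point of care is the measurability and domain bookkeeping. $Z$ is only defined where $\opnorm{\Delta}<1$, but $Z\chi$ extends by zero and is bounded, so both moments are finite. Lemma~\ref{II:lem:gradZ} gives the gradient bound only on $\Ktube$ itself, which is exactly the set used for McShane, so that step is consistent. The remaining work is routine assembly via the triangle inequality:
\[
|Z|\le|\widetilde Z-\E\widetilde Z|+|\E[Z\chi]|+|\E[\widetilde Z-Z\chi]|
\]
on the intersected event.
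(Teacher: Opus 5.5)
Your proof is correct and follows the paper's argument essentially step for step. You take the McShane extension of $Z|_{\Ktube}$, apply two-sided Gaussian concentration to $\widetilde Z(Q^{1/2}\xi)$ and intersect with the tube event, and bound $|\E[\widetilde Z]|$ via Lemma~\ref{II:lem:mean} plus Cauchy--Schwarz on the probability-$\delta_0$ complement. The only difference is that you use $Z(0)=0$ to get $|\widetilde Z(h)|\le\Lambda_Z\norm h$, whereas the paper uses the cruder bound $S_Z+\Lambda_Z(\norm h+R_0)$; this gives you the constant $18$ instead of the paper's $36$, and both fit under $40$.
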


\begin{proof}
Let $\widetilde Z$ be the McShane extension
(Lemma~\ref{II:lem:mcshane}) of $Z|_{\Ktube}$ with constant $\Lambda_{Z}$.
Since $\widetilde Z=Z\chi=Z$ on $\Ktube$ and
$\Pr[h\notin\Ktube]\le\delta_{0}$, the triangle inequality and
Cauchy--Schwarz give
\[
|\E[\widetilde Z]-\E[Z\chi]|
\le\E[(|\widetilde Z|+|Z\chi|)\ind{h\notin\Ktube}]
\le(\E[(|\widetilde Z|+|Z\chi|)^{2}])^{1/2}
\sqrt{\delta_{0}} .
\]

On all of $\R^{d}$,
$|\widetilde Z(h)|\le S_{Z}+\Lambda_{Z}(\norm h+R_{0})$ (Lipschitz
from any point of $\Ktube$) and $|Z\chi|\le S_{Z}$ (support in
$\Ktubep$ plus Lemma~\ref{II:lem:supZ}). Minkowski's inequality and
$\E[\norm h^{2}]\le d$ therefore give
$|\E[\widetilde Z]-\E[Z\chi]|
\le(2S_{Z}+\Lambda_{Z}R_{0}+\Lambda_{Z}\sqrt d)\sqrt{\delta_{0}}
\le36\rho\tau_{1}M_{0}^{2}\sqrt d\sqrt{\delta_{0}}$, where
$R_0\le R_1=M_0\sqrt d$ and $M_0\ge1$. Relaxing $36$ to $40$ and
combining with
Lemma~\ref{II:lem:mean},
$|\E[\widetilde Z]|\le7\rho^{2}\psi+\eone
+40\rho\tau_{1}M_{0}^{2}\sqrt d\sqrt{\delta_{0}}$.

The map $\xi\mapsto\widetilde Z(Q^{1/2}\xi)$ is
$\Lambda_{Z}$-Lipschitz because $\opnorm{Q^{1/2}}\le1$.  Therefore the
two-sided form of Lemma~\ref{II:lem:gaussian-concentration} gives
$\Pr[|\widetilde Z-\E[\widetilde Z]|
>\Lambda_{Z}\sqrt{2\log(2/\eone)}]\le\eone$. On the intersection
of the Gaussian-deviation event
\[
\mathcal C_Z:=\{|\widetilde Z-\E[\widetilde Z]|
\le\Lambda_{Z}\sqrt{2\log(2/\eone)}\}
\]
with the tube event $\mathcal T:=\{h\in\Ktube\}$, we have
$Z=\widetilde Z$ and Eq.~\eqref{II:eq:Z-bound}.  Here
$\Pr[\mathcal C_Z^c]\le\eone$ and
$\Pr[\mathcal T^c]\le\delta_0$. Both events are defined under the same
Gaussian law, so a direct union bound gives the stated probability.
\end{proof}

Fix once and for all a sufficiently small universal $r\in(0,1]$ and
the universal failure level
\[
\eone:=\min\{\frac1{40},\frac{32}{72+3C_{0}}\},
\]
where $C_{0}$ is the cutoff constant in Lemma~\ref{II:lem:cutoff}, and set
\begin{equation}
\rho\ :=\ \frac{1}{\Gamma_{0}\,M_{0}\,\sqrt{\psistar}\,
(1+\log(nd/\eone))},
\qquad
\kappa:=\frac{r^{2}}{\rho^{2}}
=\Gamma_{0}^{2}\,r^{2}M_{0}^{2}\,\psistar\,
(1+\log(nd/\eone))^{2},
\label{II:eq:rho-choice}
\end{equation}
for a sufficiently large universal constant $\Gamma_{0}$. Enlarge
$\Gamma_0$ if necessary so that $\kappa\ge1$. Then
$\kappa=\Ot(\psistar)$, and using
Eq.~\eqref{II:eq:tau-size} together with $\psi\le\psistar$:
\begin{itemize}[leftmargin=2em,itemsep=1pt]
\item $\eta\tau_{1}=\rho\tau_{1}/\sqrt d
\le4/(\Gamma_{0}\sqrt d)\le\tfrac1{64}$, so
Eq.~\eqref{II:eq:smallness} holds;
\item $7\rho^{2}\psi\le7/\Gamma_{0}^{2}$ and
$9\rho^{2}\tau_{1}^{2}\le405/\Gamma_{0}^{2}$
(Lemma~\ref{II:lem:det-remainder});
\item $\Lambda_{Z}\sqrt{2\log(2/\eone)}
=12\rho\tau_{1}M_{0}\sqrt{2\log(2/\eone)}
\le C/\Gamma_{0}$, and
$40\rho\tau_{1}M_{0}^{2}\sqrt d\sqrt{\delta_{0}}
\le C\eone/\Gamma_{0}$.
\end{itemize}
Taking $\Gamma_{0}$ large enough (universally, since $\eone$ is
fixed) makes each displayed quantity at most $\tfrac{1}{10}$; we
assume this henceforth.

\begin{lemma}[Uniform lower tail for the complete ratio]
\label{II:lem:acceptance}
With the choice Eq.~\eqref{II:eq:rho-choice} there are universal constants
$C_{\mathrm{acc}}<\infty$ and $\epsilon\le\tfrac1{10}$ such that,
uniformly at every deterministic $x\in\intK$,
\begin{equation}
\Pr_{Y\sim q_{x}}[Y\in\intK\ \text{ and }\
\log R_{x}(Y)\ge-C_{\mathrm{acc}}]\ \ge\ 1-\epsilon .
\label{II:eq:acceptance}
\end{equation}
The constants are independent of $x,n,d,L,R$ and of the orientation
of the regularizer.
\end{lemma}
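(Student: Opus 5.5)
I will follow the proof of Lemma~\ref{I:lem:acceptance} almost verbatim, now in the normalized coordinates of Section~\ref{II:sec:ledger}. The Metropolis ratio $R_x$ is invariant under the slack congruence and the linear change of variables, so it suffices to work with $Y=x+\eta h$, $h=Q^{1/2}\xi$. I intersect three events. The first is the tube event $\mathcal T=\{h\in\Ktube\}$, which has probability at least $1-\delta_0$ by Lemma~\ref{II:lem:tube-prob}. The second is the fluctuation event of Lemma~\ref{II:lem:Z-bound}, which fails with probability at most $\delta_0+\eone$. The third is the Gaussian norm event $\{\norm{\xi}\le\sqrt d+\sqrt{2\log(1/\eone)}\}$, which has probability at least $1-\eone$ by Lemma~\ref{II:lem:gaussian-concentration}. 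Since $\delta_0\le\eone^2$ and $\eone\le1/40$, the total failure probability is at most $\epsilon:=2\eone^2+2\eone\le\tfrac1{10}$.

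\emph{Feasibility.} On $\mathcal T$, Lemma~\ref{II:lem:stability}(i) together with the smallness condition Eq.~\eqref{II:eq:smallness} (which holds by the choice Eq.~\eqref{II:eq:rho-choice}) gives $\opnorm{\Delta(h)}\le\tfrac1{64}$. Hence $S(Y)=I+\Delta\succ0$, so $Y\in\intK$ and the whole segment is interior. No separate Dikin-containment argument is needed, since the spectral tube already gives strict feasibility.

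\emph{Proposal ratio.} On $\mathcal T$ the ledger Lemma~\ref{II:lem:ledger} applies, giving
\[
\log\tfrac{q_Y(x)}{q_x(Y)}=\tfrac12\{\log\det(I+E)-\tr[E]\}-\tfrac12 Z.
\]
Lemma~\ref{II:lem:det-remainder} bounds the bracket by $9\rho^2\tau_1^2\le\tfrac1{10}$. On the fluctuation event, Eq.~\eqref{II:eq:Z-bound} bounds $|Z|$ by the sum of four terms, each made at most $\tfrac1{10}$ (with $\eone\le\tfrac1{40}$) by the parameter choice following Eq.~\eqref{II:eq:rho-choice}. Hence the proposal log-ratio is at least $-1$. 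The one point to check carefully is that Eq.~\eqref{II:eq:tau-size} uses $\psi(x)\le\psistar$, so the bounds are uniform in $x$. This is the only place where leverage enters, and it is why $\kappa$ scales with $\psistar$.

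\emph{Target ratio.} If $L=0$, then $f$ is constant on $K$. Otherwise $G(x)\succeq L^2I$ gives
\[
f(Y)-f(x)\le L\norm{Y-x}\le\norm{Y-x}_{G(x)}=\tfrac{r}{\sqrt d}\norm{\xi}\le r\bigl(1+\sqrt{2\log(1/\eone)/d}\bigr)=O(1)
\]
on the norm event. Combining the two parts yields $\log R_x(Y)\ge-C_{\mathrm{acc}}$ with a universal $C_{\mathrm{acc}}$.

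The main obstacle is bookkeeping rather than new estimates. One must verify that all constants are independent of $x,n,d,L,R$ and of the orientation of $\creg$. This holds because only $\creg\succeq0$ (that is, $Q\preceq I$) is used, and because the logarithmic factors are absorbed into $\kappa$ through $\rho$.
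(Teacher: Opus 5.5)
Your proposal is correct and follows the paper's proof essentially step for step: the same three events (spectral tube, fluctuation, Gaussian norm), feasibility from $\opnorm{\Delta(h)}\le\eta\tau_1<1$ on the tube, the ledger combined with Lemma~\ref{II:lem:det-remainder} and Lemma~\ref{II:lem:Z-bound} under the parameter choice Eq.~\eqref{II:eq:rho-choice}, and the target bound via $G(x)\succeq L^{2}I$ on the norm event. The only difference is cosmetic. The paper intersects with the bare Gaussian-deviation event $\mathcal C_Z$ (failure at most $\eone$) and gets $\epsilon=\eone^{2}+2\eone$, whereas you use the full event of Lemma~\ref{II:lem:Z-bound}, count $\delta_0$ twice, and get $2\eone^{2}+2\eone$; this is still at most $\tfrac1{10}$, so nothing changes.
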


\begin{proof}
Work in normalized coordinates (the ratio is invariant,
Section~\ref{II:sec:ledger}) and write the proposal as
$Y=x+\eta h$, $h=Q^{1/2}\xi$. Intersect three events: the tube event
$\mathcal T=\{h\in\Ktube\}$ (Lemma~\ref{II:lem:tube-prob}; failure at most
$\delta_{0}$), the Gaussian-deviation event $\mathcal C_Z$ defined in
the proof of Lemma~\ref{II:lem:Z-bound} (failure at most $\eone$), and the
norm event
$\{\norm\xi\le\sqrt d+\sqrt{2\log(1/\eone)}\}$ (probability
$\ge1-\eone$ by the one-sided form of
Lemma~\ref{II:lem:gaussian-concentration}, since $\xi\mapsto\norm\xi$ is
$1$-Lipschitz and $\E[\norm\xi]\le\sqrt d$). The total failure probability is at most
$\epsilon:=\eone^{2}+2\eone\le\tfrac1{10}$, since
$\delta_{0}=\eone^{2}d^{-3}\le\eone^{2}$.

\emph{Feasibility.} On the tube event
$\opnorm{\Delta(h)}\le\eta\tau_{0}<1$, so $Y\in\intK$ and the whole
segment is interior (Lemma~\ref{II:lem:endpoint}).

\emph{Proposal ratio.} By Lemma~\ref{II:lem:ledger},
$\log\tfrac{q_{Y}(x)}{q_{x}(Y)}
=\tfrac12\{\log\det(I+E)-\tr[E]\}-\tfrac12 Z$, and on the intersected
event Lemma~\ref{II:lem:det-remainder} and Lemma~\ref{II:lem:Z-bound} with the
parameter choice give
$|\log\det(I+E)-\tr[E]|\le\tfrac1{10}$ and
$|Z|\le\tfrac{4}{10}$, so the proposal log-ratio is at least
$-\tfrac12$.

\emph{Target ratio.} If $L=0$, then $f$ is constant on the convex
body $K$, and the target log-ratio is zero. If $L>0$, then in the
original coordinates $G(x)\succeq L^{2}I$, so
$\norm{Y-x}_{2}\le L^{-1}\norm{Y-x}_{G(x)}
=L^{-1}\tfrac{r}{\sqrt d}\norm\xi$. On the norm event, Lipschitzness
and the proposal representation give
\[
f(Y)-f(x)\le L\norm{Y-x}_{2}
\le\frac{r}{\sqrt d}(\sqrt d+\sqrt{2\log(1/\eone)})
\le C_{\eone}r,
\qquad C_{\eone}:=1+\sqrt{2\log(1/\eone)} .
\]

Hence
$\log R_{x}(Y)\ge-C_{\eone}r-\tfrac12\ge-C_{\mathrm{acc}}$ with
$C_{\mathrm{acc}}:=C_{\eone}r+1$, a universal constant because
$\eone$ is fixed.
\end{proof}

\begin{remark}
Eq.~\eqref{II:eq:acceptance} holds at every interior point with no dependence
on the distance to $\partial K$: the analysis is affine-invariant and
uses only $\creg\succeq0$. No boundary-layer or $s$-conductance
device will be needed.
\end{remark}

\subsection{From acceptance to mixing}\label{II:sec:mixing}

We now return to the original coordinates and follow the geometric half of
the polytope argument, using the cross-ratio and Cheeger lemmas of
Section~\ref{II:sec:toolbox}.

\begin{lemma}[Operator self-concordance]\label{II:lem:op-sc}
For all $z\in\intK$ and $u\in\R^{d}$,
\begin{equation}
\opnorm{\,\Hld(z)^{-1/2}\,D\Hld(z)[u]\,\Hld(z)^{-1/2}}
\ \le\ 2\,\norm{u}_{\Hld(z)},
\label{II:eq:op-sc}
\end{equation}
and the same holds for $G=\kappa\Hld+L^{2}I$ in the $G$-norms.
Consequently Lemma~\ref{II:lem:distortion} applies to $G$: if
$\Delta:=\norm{x-z}_{G(x)}<1$ then
$(1-\Delta)^{2}G(x)\preceq G(z)\preceq(1-\Delta)^{-2}G(x)$.
\end{lemma}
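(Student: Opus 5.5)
Since $\Hld$ is the Hessian of $\phi(x)=-\log\det S(x)$, the plan is the classical self-concordance computation, organized in the normalized coordinates of Eq.~\eqref{II:eq:normalized}. Both sides of Eq.~\eqref{II:eq:op-sc} transform covariantly under the slack congruence and the linear change of variables, so it suffices to fix $z$, normalize so that $S(z)=I$ and $\Hld(z)=I_d$, and bound $\opnorm{D\Hld(z)[u]}\le2\norm{u}$. Lemma~\ref{II:lem:endpoint} supplies the derivative: by Eq.~\eqref{II:eq:dE} with $\Rres=I$ (the base point) and $\eta$ absorbed into the direction,
\[
D\Hld(z)[u]_{ab}=-\tr[\Ab(u)\Ab_a\Ab_b]-\tr[\Ab_a\Ab(u)\Ab_b].
\]
So for test vectors $v$,
\[
v^{\top}D\Hld[u]v=-2\tr[\Ab(u)\Ab(v)^{2}].
\]

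Next, $\Ab(v)^2\succeq0$, so Lemma~\ref{II:lem:psd-facts}(ii) gives
\[
|\tr[\Ab(u)\Ab(v)^2]|\le\opnorm{\Ab(u)}\tr[\Ab(v)^2]=\opnorm{\Ab(u)}\fnorm{\Ab(v)}^2.
\]
Together with $\opnorm{\Ab(u)}\le\fnorm{\Ab(u)}=\norm{u}$ and $\fnorm{\Ab(v)}=\norm{v}$, this yields $|v^\top D\Hld[u]v|\le2\norm{u}\norm{v}^2$. Since $D\Hld[u]$ is symmetric, this quadratic-form bound is exactly the operator-norm claim. The argument gives the sharper intermediate bound $2\opnorm{\Ab(u)}$, but only the Frobenius bound is needed here.

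For $G=\kappa\Hld+L^2I$, I would repeat the argument of Lemma~\ref{I:lem:ssc-G} in operator norm, using $\kappa\ge1$ from Eq.~\eqref{II:eq:rho-choice}. First, for $\kappa\Hld$,
\[
\opnorm{(\kappa\Hld)^{-1/2}D(\kappa\Hld)[u](\kappa\Hld)^{-1/2}}=\opnorm{\Hld^{-1/2}D\Hld[u]\Hld^{-1/2}}\le2\norm{u}_{\Hld}\le2\norm{u}_{\kappa\Hld}.
\]
Then set $C=(\kappa\Hld)^{1/2}G^{-1/2}$, which has $\opnorm{C}\le1$ because $\kappa\Hld\preceq G$. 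Since $DG=D(\kappa\Hld)$, the $G$-normalized derivative equals $C^\top XC$, where $X$ is the $\kappa\Hld$-normalized derivative. Hence
\[
\opnorm{G^{-1/2}DG[u]G^{-1/2}}\le\opnorm{X}\le2\norm{u}_{\kappa\Hld}\le2\norm{u}_{G}.
\]
Lemma~\ref{II:lem:distortion} then applies to $M=G$ verbatim, since $G$ is smooth on $\intK$, which gives the final distortion statement.

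I expect the main obstacle to be bookkeeping rather than any estimate: checking that the derivative formula from Lemma~\ref{II:lem:endpoint}, stated in $h$-coordinates with the factor $\eta$, specializes correctly to a spatial derivative at the base point (the factor is $\partial_c\Delta=\eta\Ab_c$ with $y=x+\eta h$). I would also confirm the sign and factor of two directly, by differentiating $\tr[S^{-1}A_aS^{-1}A_b]$.
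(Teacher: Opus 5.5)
Your proposal is correct and follows the paper's proof essentially step for step. You normalize at $z$, identify $v^{\top}D\Hld(z)[u]\,v=-2\tr[\Ab(u)\Ab(v)^{2}]$, bound this by Lemma~\ref{II:lem:psd-facts}(ii) together with $\opnorm{\Ab(u)}\le\fnorm{\Ab(u)}=\norm{u}$, and pass to $G$ through $C=(\kappa\Hld)^{1/2}G^{-1/2}$ with $\opnorm{C}\le1$; the only cosmetic difference is that you read the derivative off Eq.~\eqref{II:eq:dE} at $\Rres=I$ after dividing out $\eta$, whereas the paper writes it directly as the barrier's third derivative $D^{3}\phi(z)[u,v,v]$, and both give the same identity with the correct sign and factor of two.
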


\begin{proof}
Normalize at $z$ as in Section~\ref{II:sec:ledger}, so
$\Hld(z)=I$ and directions $v$ correspond to
$\Ab(v)$ with $\fnorm{\Ab(v)}=\norm v$. Since
$\Hld=\nabla^{2}\phi$ with $\phi=-\log\det S$,
$v^{\top}D\Hld(z)[u]\,v=D^{3}\phi(z)[u,v,v]
=-2\tr[\Ab(u)\Ab(v)^{2}]$ in these coordinates. The PSD trace bound and
the normalized-coordinate isometry give
\[
|\tr[\Ab(u)\Ab(v)^{2}]|
\le\opnorm{\Ab(u)}\,\tr[\Ab(v)^{2}]
\le\fnorm{\Ab(u)}\,\fnorm{\Ab(v)}^{2}
=\norm{u}\,\norm{v}^{2},
\]
Indeed, Lemma~\ref{II:lem:psd-facts}(ii) applies with
$M=\Ab(v)^{2}$, while $\opnorm{\Ab(u)}\le\fnorm{\Ab(u)}$ and
$\tr[\Ab(v)^2]=\fnorm{\Ab(v)}^2$.

Taking the supremum over unit $v$ gives Eq.~\eqref{II:eq:op-sc}. For $G$,
scaling by
$\kappa\ge1$ preserves the inequality (the left side is
scale-invariant, the right side grows), and with
$H_{\kappa}:=\kappa\Hld$, $C:=H_{\kappa}^{1/2}G^{-1/2}$ ($\opnorm C\le1$
since $H_{\kappa}\preceq G$),
$X:=H_{\kappa}^{-1/2}DH_{\kappa}[u]H_{\kappa}^{-1/2}$, we have
\[
\opnorm{G^{-1/2}DG[u]G^{-1/2}}=\opnorm{C^{\top}XC}
\le\opnorm X\le2\norm{u}_{H_{\kappa}}\le2\norm{u}_{G},
\]
Here the equality uses $DG=DH_{\kappa}$ and the definitions of $C$ and $X$;
the inequalities use $\opnorm C\le1$, self-concordance of $H_\kappa$, and
$H_\kappa\preceq G$, respectively.

The distortion conclusion is the elementary ODE lemma
Lemma~\ref{II:lem:distortion}; its domain hypothesis holds because the
segment $[x,z]$ lies in the convex $\intK$.
\end{proof}

\begin{lemma}[Pinsker's inequality]\label{II:lem:pinsker}
For probability densities $p,q$,
\[
\dtv(p,q)\le\sqrt{D_{\mathrm{KL}}(p\Vert q)/2}.
\]
\end{lemma}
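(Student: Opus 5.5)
The statement to prove is Pinsker's inequality, $\dtv(p,q)\le\sqrt{D_{\mathrm{KL}}(p\Vert q)/2}$. This is equivalent to Lemma~\ref{I:lem:pinsker}, stated and proved earlier in the polytope section. I would either cite that lemma directly or repeat its two-set reduction; both are self-contained.

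\emph{Step 1 (trivial case).} If $D_{\mathrm{KL}}(p\Vert q)=\infty$ there is nothing to prove. Otherwise set $A=\{p\ge q\}$, $a=\int_A p$ and $b=\int_A q$. Then $\dtv(p,q)=\int_A(p-q)=a-b\ge0$.

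\emph{Step 2 (data processing).} Apply the log-sum inequality, i.e.\ Jensen's inequality for the convex function $t\log t$, separately on $A$ and on $A^c$. This gives
\[
D_{\mathrm{KL}}(p\Vert q)\ge g_b(a):=a\log\frac ab+(1-a)\log\frac{1-a}{1-b}.
\]
The inequality says that coarsening to the two-point partition $\{A,A^c\}$ can only decrease relative entropy.

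\emph{Step 3 (binary inequality).} Fix $b\in(0,1)$ and view $g_b$ as a function of $a$. Then $g_b(b)=0$ and $g_b'(a)=\log\frac ab-\log\frac{1-a}{1-b}$, which vanishes at $a=b$. Moreover $g_b''(s)=\frac1{s(1-s)}\ge4$. Taylor's theorem with remainder therefore gives $g_b(a)\ge2(a-b)^2$.

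\emph{Step 4 (endpoints).} The cases $b\in\{0,1\}$ force $a=b$, or else the divergence is infinite. They can also be handled by continuity.

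Combining the steps gives $D_{\mathrm{KL}}\ge2\,\dtv^2$, and taking square roots yields the claim. The only mildly delicate point is justifying the log-sum step when $q$ vanishes on parts of $A$. Finiteness of the divergence forces $p\ll q$, so the inequality holds with the usual conventions $0\log0=0$.
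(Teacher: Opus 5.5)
Your proof is correct and follows the paper's argument essentially line for line. Both use the two-set reduction through $A=\{p\ge q\}$ with Jensen (log-sum) on $A$ and $A^c$, then the bound $g_b''(s)=1/[s(1-s)]\ge 4$ with $g_b(b)=g_b'(b)=0$ to get $g_b(a)\ge 2(a-b)^2$, and handle the endpoint cases by limits. Citing Lemma~\ref{I:lem:pinsker} directly would also be legitimate, since the paper proves the identical statement there by the same method.
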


\begin{proof}
Let $A=\{p\ge q\}$, $a=\int_Ap$, and $b=\int_Aq$. Then
$a-b=\dtv(p,q)$. Jensen's inequality on $A$ and $A^c$ (equivalently,
the two-term log-sum inequality) gives
\[
D_{\mathrm{KL}}(p\Vert q)
\ge a\log\frac ab+(1-a)\log\frac{1-a}{1-b}
=:D_{\mathrm{Ber}}(a\Vert b).
\]

For fixed $b\in(0,1)$, the second derivative in $a$ of
$D_{\mathrm{Ber}}(a\Vert b)-2(a-b)^2$ is
$1/[a(1-a)]-4\ge0$, and the function and its first derivative vanish
at $a=b$. Hence $D_{\mathrm{Ber}}(a\Vert b)\ge2(a-b)^2$; the endpoint
cases follow by limits. Thus
$D_{\mathrm{KL}}(p\Vert q)\ge2\dtv(p,q)^2$.
\end{proof}

\begin{lemma}[Proposal overlap]\label{II:lem:proposal-overlap}
There is a universal $c_{\mathrm{loc}}>0$ such that, with $r\le1$,
\begin{equation}
\norm{x-z}_{G(x)}\le\frac{c_{\mathrm{loc}}\,r}{\sqrt d}
\quad\Longrightarrow\quad
\dtv(q_{x},q_{z})\le\delta_{q}:=\tfrac{1}{2}.
\label{II:eq:proposal-overlap}
\end{equation}
\end{lemma}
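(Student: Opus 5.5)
The plan is to repeat the polytope argument of Lemma~\ref{I:lem:proposal-overlap} with the spectrahedral distortion lemma in place of the Lewis-weight one. Put $\Delta:=\norm{x-z}_{G(x)}\le c_{\mathrm{loc}}r/\sqrt d\le\tfrac12$ (this uses $r\le1$ and $c_{\mathrm{loc}}\le\tfrac12$). Both $x,z\in\intK$, so Lemma~\ref{II:lem:op-sc} applies. It gives
\[
(1-\Delta)^{2}G(x)\preceq G(z)\preceq(1-\Delta)^{-2}G(x).
\]

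Next, write the Gaussian KL divergence in closed form for $q_x=N(x,\tfrac{r^2}{d}G(x)^{-1})$ and $q_z=N(z,\tfrac{r^2}{d}G(z)^{-1})$:
\[
D_{\mathrm{KL}}(q_x\Vert q_z)=\frac12\Big[\sum_{j=1}^d(\lambda_j-1-\log\lambda_j)+\frac{d}{r^2}\norm{x-z}_{G(z)}^2\Big].
\]
Here the $\lambda_j$ are the eigenvalues of $G(x)^{-1/2}G(z)G(x)^{-1/2}$, and they lie in $[(1-\Delta)^2,(1-\Delta)^{-2}]\subseteq[\tfrac14,4]$. On that interval two facts hold: $|\lambda-1|\le 6\Delta$, and $0\le\lambda-1-\log\lambda\le C(\lambda-1)^2$ (integrate $(t-1)/t$). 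Together they bound the trace/log-det part by $Cd\Delta^2$. For the mean part, the upper distortion bound gives $\norm{x-z}_{G(z)}^2\le(1-\Delta)^{-2}\Delta^2\le4\Delta^2$, so that term is at most $4d\Delta^2/r^2$.

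Combining the two pieces with $r\le1$ and $\Delta^2\le c_{\mathrm{loc}}^2r^2/d$ gives
\[
D_{\mathrm{KL}}\le C(d\Delta^2+d\Delta^2/r^2)\le C'c_{\mathrm{loc}}^2.
\]
Pinsker (Lemma~\ref{II:lem:pinsker}) then yields $\dtv(q_x,q_z)\le\sqrt{C'/2}\,c_{\mathrm{loc}}$. Choosing $c_{\mathrm{loc}}$ small and universal makes this at most $\tfrac12$.

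I do not expect a real obstacle. The only point needing care is that every constant is dimension-free. This holds because the eigenvalue sum has $d$ terms each of size $O(\Delta^2)=O(r^2/d)$, and $\Delta$ is taken at scale $r/\sqrt d$, which exactly cancels the $d/r^2$ prefactor.
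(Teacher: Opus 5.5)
Your proposal is correct and follows the paper's own proof essentially step for step. Both use the distortion bound from Lemma~\ref{II:lem:op-sc}, the exact Gaussian KL formula with a per-eigenvalue $O(\Delta^2)$ bound, the mean-term bound $\norm{x-z}_{G(z)}^2\le4\Delta^2$, and $r\le1$ to obtain $D_{\mathrm{KL}}\le C'c_{\mathrm{loc}}^2$, then close with Pinsker (Lemma~\ref{II:lem:pinsker}).
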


\begin{proof}
Write $\Delta=\norm{x-z}_{G(x)}\le\tfrac12$. The exact KL divergence
between the Gaussians
$q_{x}=N(x,\tfrac{r^{2}}{d}G(x)^{-1})$ and
$q_{z}=N(z,\tfrac{r^{2}}{d}G(z)^{-1})$ is
\[
D_{\mathrm{KL}}(q_{x}\Vert q_{z})
=\frac12[\sum_{j=1}^{d}(\lambda_{j}-1-\log\lambda_{j})
+\frac{d}{r^{2}}\norm{x-z}_{G(z)}^{2}],
\]
with $\lambda_{j}$ the eigenvalues of
$G(x)^{-1/2}G(z)G(x)^{-1/2}$. This formula follows by integrating the
log-density ratio: the centered quadratic term gives the trace of the
relative covariance, the mean displacement gives
$d\norm{x-z}_{G(z)}^2/r^2$, and the normalizers give the log
determinant. By Lemma~\ref{II:lem:op-sc},
$\lambda_{j}\in[(1-\Delta)^{2},(1-\Delta)^{-2}]$, so
$\lambda_{j}-1-\log\lambda_{j}\le C\Delta^{2}$ per eigenvalue and
$\norm{x-z}_{G(z)}^{2}\le(1-\Delta)^{-2}\Delta^{2}\le4\Delta^{2}$.
With $\Delta\le c_{\mathrm{loc}}r/\sqrt d$ and $r\le1$,
$D_{\mathrm{KL}}\le C(d\Delta^{2}+d\Delta^{2}/r^{2})
\le C'c_{\mathrm{loc}}^{2}$, and the locally proved Pinsker inequality
Lemma~\ref{II:lem:pinsker} gives $\dtv\le\sqrt{D_{\mathrm{KL}}/2}$ and hence
Eq.~\eqref{II:eq:proposal-overlap} for small $c_{\mathrm{loc}}$.
\end{proof}

\begin{lemma}[Kernel overlap]\label{II:lem:kernel-overlap}
Suppose Eq.~\eqref{II:eq:acceptance} holds at $x$ and $z$ and
$\dtv(q_{x},q_{z})\le\delta_{q}=\tfrac12$. Then the
$\tfrac12$-lazy kernels satisfy
\begin{equation*}
\dtv(P_{x},P_{z})\ \le\ 1-\frac{\gamma_{0}}{2},
\qquad
\gamma_{0}:=e^{-C_{\mathrm{acc}}}
(1-\delta_{q}-2\epsilon)
\ \ge\ e^{-C_{\mathrm{acc}}}\cdot\tfrac{3}{10}\ >\ 0 .
\end{equation*}
\end{lemma}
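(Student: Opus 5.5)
This is the spectrahedral copy of Lemma~\ref{I:lem:kernel-overlap}, and I will run the same common-mass argument. Set $E_x:=\{u\in\intK:\log R_x(u)\ge -C_{\mathrm{acc}}\}$, and define $E_z$ in the same way. The non-lazy Metropolis kernel at $x$ has the moving part $u\mapsto q_x(u)(1\wedge R_x(u))$ on $\intK$, and similarly at $z$. Both acceptance factors are at least $e^{-C_{\mathrm{acc}}}$ on $E_x\cap E_z$, so the common mass of the two moving parts is at least
\[
\int_{E_x\cap E_z} e^{-C_{\mathrm{acc}}}\min\{q_x(u),q_z(u)\}\,du .
\]

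Next I bound the mass of the measure with density $m:=\min(q_x,q_z)$ on this set. Its total mass equals $1-\dtv(q_x,q_z)\ge 1-\delta_q$. Since $m\le q_x$, Eq.~\eqref{II:eq:acceptance} at $x$ gives $\int_{E_x^c}m\le q_x(E_x^c)\le\epsilon$, where the complement is taken in $\R^d$ so that infeasible proposals are included. Likewise $\int_{E_z^c}m\le\epsilon$. A union bound then gives $\int_{E_x\cap E_z}m\ge 1-\delta_q-2\epsilon$. Multiplying by $e^{-C_{\mathrm{acc}}}$ shows the common moving mass is at least $\gamma_0$.

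For the lazy kernels, $P_x=\tfrac12\delta_x+\tfrac12 P^0_x$. The moving part of $P_x$ is half of the non-lazy moving part, so $P_x$ and $P_z$ share a sub-measure of mass at least $\gamma_0/2$. By $\dtv(\mu,\nu)=1-\int\min(\mu,\nu)$ this gives $\dtv(P_x,P_z)\le 1-\gamma_0/2$. If $x=z$ the claim is trivial, and atoms at $x$ or $z$ only add further overlap. Finally, $\delta_q=\tfrac12$ and $\epsilon\le\tfrac1{10}$ give $1-\delta_q-2\epsilon\ge\tfrac3{10}$, which is the displayed numerical bound.

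There is no real obstacle. The one point needing care is that the minimum of the two acceptance factors must be lower-bounded simultaneously; that is why the integral is restricted to $E_x\cap E_z$ rather than handling the two acceptance events separately.
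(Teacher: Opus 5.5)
Your proposal is correct and is essentially the paper's own proof. Both arguments take $\min\{q_x,q_z\}$ (total mass at least $1-\delta_q$), restrict it to $E_x\cap E_z$ at a cost of at most $\epsilon$ on each complement by Eq.~\eqref{II:eq:acceptance}, multiply by the common acceptance floor $e^{-C_{\mathrm{acc}}}$, and halve for laziness. Your extra remarks, that the complements are taken in $\R^d$ and that atoms only add overlap, are harmless refinements of the same argument.
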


\begin{proof}
Let $E_{x}:=\{u\in\intK:R_{x}(u)\ge e^{-C_{\mathrm{acc}}}\}$ and
likewise $E_{z}$; by Lemma~\ref{II:lem:acceptance},
$q_{x}(E_{x}^{c})\le\epsilon$ and $q_{z}(E_{z}^{c})\le\epsilon$. The
moving part of the non-lazy kernel at $x$ has density
$u\mapsto q_{x}(u)(1\wedge R_{x}(u))$ on $K$. On
$E_x\cap E_z$, both acceptance ratios are at least
$e^{-C_{\mathrm{acc}}}$, while $\min\{q_x,q_z\}$ has total mass at least
$1-\delta_q$ and loses at most $\epsilon$ on each of $E_x^c$ and $E_z^c$.
Hence the two moving parts have common mass at least
\[
e^{-C_{\mathrm{acc}}}\int_{E_{x}\cap E_{z}}
\min\{q_{x},q_{z}\}\,du
\ \ge\ e^{-C_{\mathrm{acc}}}
(1-\delta_{q}-\epsilon-\epsilon)=\gamma_{0},
\]

Laziness retains half the common moving
mass, and $\dtv\le1-(\text{overlap})$. Finally
$1-\delta_{q}-2\epsilon\ge1-\tfrac12-\tfrac{2}{10}=\tfrac{3}{10}$.
\end{proof}

\begin{definition}[Chord cross ratio]\label{II:def:crossratio}
For distinct $x,z\in\intK$, let their line meet $\partial K$ in the
order $p_{\partial},x,z,q_{\partial}$ and put
$\alpha=\norm{x-p_{\partial}}$, $\beta=\norm{z-x}$,
$\gamma=\norm{q_{\partial}-z}$ (Euclidean). Then
$\sigma(x,z):=\beta(\alpha+\beta+\gamma)/(\alpha\gamma)$.
\end{definition}

\begin{lemma}[Symmetry]\label{II:lem:symmetry}
Write $\mathcal D_{M}(x,t)=\{z:\norm{z-x}_{M(x)}\le t\}$. Then, for
every $x\in\intK$,
\[
\mathcal D_{G}(x,1)\ \subseteq\ K\cap(2x-K)\ \subseteq\
\mathcal D_{G}(x,\sqrt{\nu_{G}}),
\qquad
\nu_{G}:=\kappa\,n+4L^{2}R^{2} .
\]
\end{lemma}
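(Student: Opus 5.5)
The plan is to work at a fixed $x\in\intK$ in the normalized coordinates of Section~\ref{II:sec:ledger}, where $S(x)=I$, $\Hld(x)=I_d$, and $\fnorm{\Ab(u)}=\norm{u}$ by Eq.~\eqref{II:eq:normalized}. These normalizations are harmless: both containments in the statement are affine-invariant, because $K\cap(2x-K)$ transforms as a set and the ellipsoids $\mathcal D_G(x,t)$ transform with the metric. We write $z=x+u$. By the congruence identity $S(x+u)=S(x)^{1/2}(I+\Ab(u))S(x)^{1/2}$, which in normalized form reads $S(x+u)=I+\Ab(u)$, we get $z\in K$ iff $I+\Ab(u)\succeq0$. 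Likewise $2x-z\in K$ iff $I-\Ab(u)\succeq0$. Hence
\[
z\in K\cap(2x-K)\iff \opnorm{\Ab(u)}\le1 .
\]
This turns both containments into comparisons between the operator and Frobenius norms of $\Ab(u)$.

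\emph{Inner containment.} Suppose $\norm{u}_{G(x)}\le1$. Since $G\succeq\kappa\Hld\succeq\Hld$ (using $\kappa\ge1$), we get
\[
\opnorm{\Ab(u)}\le\fnorm{\Ab(u)}=\norm{u}_{\Hld(x)}\le\norm{u}_{G(x)}\le1 .
\]
By the equivalence above, $z\in K\cap(2x-K)$.

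\emph{Outer containment.} Suppose $\opnorm{\Ab(u)}\le1$. Since $\Ab(u)$ is an $n\times n$ symmetric matrix, it has at most $n$ eigenvalues, so
\[
\norm{u}_{\Hld(x)}^2=\fnorm{\Ab(u)}^2\le n\opnorm{\Ab(u)}^2\le n .
\]
Return to original coordinates and decompose $\norm{u}_{G(x)}^2=\kappa\norm{u}_{\Hld(x)}^2+L^2\norm{u}_2^2$. The first term is at most $\kappa n$. For the second, $z$ and $x$ both lie in $K\subseteq B_2(0,R)$, so $\norm{u}_2\le2R$ and the term is at most $4L^2R^2$. Together these give $\norm{u}_{G(x)}^2\le\kappa n+4L^2R^2=\nu_G$.

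\emph{Main obstacle.} There is no real obstacle; the argument is short. The only points that need care are two. First, the slack congruence must be an exact re-representation of $K$, which holds because $S(x)\succ0$ on $\intK$ under the Slater and minimal-face assumption. Second, the regularizer term $L^2\norm{u}_2^2$ must be measured in the original Euclidean coordinates, which is exactly why we return to them before bounding it by the diameter.
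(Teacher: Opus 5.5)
Your proposal is correct and follows the paper's proof essentially line for line: you normalize at $x$ so that membership in $K\cap(2x-K)$ becomes $\opnorm{\Ab(u)}\le1$ and the $\Hld(x)$-norm becomes $\fnorm{\Ab(u)}$. You then use $\opnorm{\cdot}\le\fnorm{\cdot}$ together with $G\succeq\kappa\Hld\succeq\Hld$ for the inner inclusion, and $\fnorm{\cdot}^2\le n\opnorm{\cdot}^2$ plus the diameter bound $2R$ for the outer one, exactly as the paper does.
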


\begin{proof}
Work at $x$ in normalized coordinates, where
$\norm{z-x}_{\Hld(x)}=\fnorm{\Delta}$ for
$\Delta:=$ the normalized block displacement of $z$, and
$z\in K\cap(2x-K)$ iff $I\pm\Delta\succeq0$ iff
$\opnorm\Delta\le1$. \emph{Inner:} if
$\norm{z-x}_{G(x)}\le1$ then, since
$G\succeq\kappa\Hld$ and $\kappa\ge1$,
$\fnorm\Delta\le\kappa^{-1/2}\le1$, hence
$\opnorm\Delta\le1$. \emph{Outer:} if $\opnorm\Delta\le1$ then
$\norm{z-x}_{\kappa\Hld(x)}^{2}=\kappa\fnorm\Delta^{2}
\le\kappa\,n\,\opnorm\Delta^{2}\le\kappa n$, and
$L^{2}\norm{z-x}_{2}^{2}\le4L^{2}R^{2}$ because
$K\subseteq B_{2}(0,R)$ has diameter $2R$. Adding these two bounds proves
the outer inclusion.
\end{proof}

\begin{lemma}[Cross ratio dominates the local norm]\label{II:lem:crossratio}
For all distinct $x,z\in\intK$,
$\sigma(x,z)\ge\norm{x-z}_{G(x)}/\sqrt{\nu_{G}}$.
\end{lemma}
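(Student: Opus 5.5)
This is the spectrahedral copy of Lemma~\ref{I:lem:crossratio}, and I would prove it the same way: use the symmetry lemma just established, Lemma~\ref{II:lem:symmetry}, to turn the inclusion $K\cap(2x-K)\subseteq\mathcal D_G(x,\sqrt{\nu_G})$ into a lower bound on the local norm of a chord direction. The only geometric input is that this inclusion holds at every $x\in\intK$.

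Fix distinct $x,z\in\intK$ and let $e=(z-x)/\norm{z-x}$ be the unit direction. With $p_\partial,q_\partial$ and $\alpha,\beta,\gamma$ as in Definition~\ref{II:def:crossratio}, the body $K$ contains $x+se$ for $s\in[-\alpha,\beta+\gamma]$. The symmetrized body $K\cap(2x-K)$ is symmetric about $x$, so along $\pm e$ it extends exactly to radius $t:=\min(\alpha,\beta+\gamma)$. Thus $x+te$ lies in $K\cap(2x-K)$; this set is closed because $K$ is closed, so no closure subtlety arises. The outer inclusion of Lemma~\ref{II:lem:symmetry} then gives
\[
t\,\norm{e}_{G(x)}\le\sqrt{\nu_G}.
\]

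It remains to show $\sigma(x,z)\ge\beta/t$, which I would check in two cases.
\begin{itemize}
\item If $t=\alpha$, then $\sigma=\beta(\alpha+\beta+\gamma)/(\alpha\gamma)\ge\beta/\alpha$, because $\alpha+\beta+\gamma\ge\gamma$.
\item If $t=\beta+\gamma$, we need $(\alpha+\beta+\gamma)(\beta+\gamma)\ge\alpha\gamma$. This holds since the left side is at least $\alpha(\beta+\gamma)\ge\alpha\gamma$.
\end{itemize}
Combining the two displays gives
\[
\sigma(x,z)\ \ge\ \frac{\beta}{t}\ \ge\ \frac{\beta\,\norm{e}_{G(x)}}{\sqrt{\nu_G}}\ =\ \frac{\norm{z-x}_{G(x)}}{\sqrt{\nu_G}}.
\]

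None of these steps is hard. The one point needing care is that $\alpha,\gamma>0$ and finite, so that $\sigma$ is well defined and $t>0$. This follows because $x,z$ are interior points and $K$ is bounded.
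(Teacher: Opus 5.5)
Your proof is correct and follows the paper's argument essentially line for line. It uses the radius $t=\min(\alpha,\beta+\gamma)$ of $K\cap(2x-K)$ along $\pm e$, the outer inclusion of Lemma~\ref{II:lem:symmetry} to get $t\norm{e}_{G(x)}\le\sqrt{\nu_G}$, and the two-case check $\sigma(x,z)\ge\beta/t$. Your observation that $K$ is closed, so that $x+te$ lies in $K\cap(2x-K)$ itself rather than only in its closure, is a harmless refinement of the paper's wording.
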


\begin{proof}
Let $e=(z-x)/\beta$. The radius of the symmetric body
$K\cap(2x-K)$ at $x$ along $\pm e$ is
$t=\min(\alpha,\beta+\gamma)$, so $x+te$ lies in its closure and
Lemma~\ref{II:lem:symmetry} gives $t\,\norm{e}_{G(x)}\le\sqrt{\nu_{G}}$.
Also $\sigma(x,z)\ge\beta/t$: if $t=\alpha$ this is
$(\alpha+\beta+\gamma)/\gamma\ge1$ times $\beta/\alpha$; if
$t=\beta+\gamma$ use
$(\alpha+\beta+\gamma)(\beta+\gamma)\ge\alpha\gamma$. Hence
$\sigma\ge\beta\norm{e}_{G(x)}/\sqrt{\nu_{G}}
=\norm{z-x}_{G(x)}/\sqrt{\nu_{G}}$.
\end{proof}

\begin{lemma}[Conductance]\label{II:lem:conductance}
The walk of Definition~\ref{II:def:walk} has conductance
\begin{equation}
\Phi:=\inf_{0<\pi(A)\le1/2}
\frac{\mathcal Q(A,A^c)}{\pi(A)}
\ \ge\ \frac{c_{\mathrm{iso}}\,\gamma_{0}}{128}\,\lsep,
\qquad
\lsep:=\min(1,\ \frac{c_{\mathrm{loc}}\,r}
{\sqrt{d\,\nu_{G}}}).
\label{II:eq:conductance}
\end{equation}
Here
$\mathcal Q(A,A^c):=\int_A P_x(A^c)\,\pi(dx)$ is stationary flow.
\end{lemma}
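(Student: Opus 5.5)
The proof will repeat the deep-set argument of Lemma~\ref{I:lem:conductance}, using the spectrahedral ingredients in place of the polytope ones. Fix a measurable $A$ with $0<\pi(A)\le\tfrac12$ and set $b_0:=\gamma_0/8$, with $\gamma_0$ from Lemma~\ref{II:lem:kernel-overlap}. The deep sets are
\[
A_1=\{x\in A\cap\intK:P_x(A^c)\le b_0\},\qquad
A_2=\{z\in A^c\cap\intK:P_z(A)\le b_0\},
\]
and $A_3=K\setminus(A_1\cup A_2)$. The boundary $\partial K$ is placed in $A_3$; it has Lebesgue measure zero because $K$ is a full-dimensional convex body, so it carries no $\pi$-mass. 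With this choice every pair treated below consists of interior points, and the uniform acceptance bound of Lemma~\ref{II:lem:acceptance} holds at each of them.

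\emph{Separation.} Take $x\in A_1$ and $z\in A_2$. Then
\[
\dtv(P_x,P_z)\ge P_z(A^c)-P_x(A^c)\ge1-2b_0>1-\gamma_0/2 .
\]
Suppose, for contradiction, that $\norm{x-z}_{G(x)}\le c_{\mathrm{loc}}r/\sqrt d$. Then Lemma~\ref{II:lem:proposal-overlap} gives $\dtv(q_x,q_z)\le\tfrac12$. Lemma~\ref{II:lem:kernel-overlap}, which uses Lemma~\ref{II:lem:acceptance} at both points, then gives $\dtv(P_x,P_z)\le1-\gamma_0/2$, a contradiction. Hence $\norm{x-z}_{G(x)}>c_{\mathrm{loc}}r/\sqrt d$. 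By Lemma~\ref{II:lem:crossratio} this yields $\sigma(x,z)\ge c_{\mathrm{loc}}r/\sqrt{d\nu_G}\ge\lsep$, uniformly over all such pairs.

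\emph{Case analysis.} Put $B_1=A\setminus A_1$ and $B_2=A^c\setminus A_2$. By definition of the deep sets and by reversibility, $\mathcal Q(A,A^c)\ge b_0\pi(B_1)$ and $\mathcal Q(A,A^c)=\mathcal Q(A^c,A)\ge b_0\pi(B_2)$.

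If $\pi(A_1)<\pi(A)/2$ or $\pi(A_2)<\pi(A^c)/2$, one of these bounds gives $\mathcal Q(A,A^c)\ge b_0\pi(A)/2$. Note $b_0/2\ge c_{\mathrm{iso}}\gamma_0\lsep/128$, since $c_{\mathrm{iso}}\le1$ and $\lsep\le1$.

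Otherwise $\pi(A_1)\ge\pi(A)/2$ and $\pi(A_2)\ge\tfrac14$. The density $e^{-f}\ind K$ is log-concave because $f$ is convex, so Lemma~\ref{II:lem:isoperimetry} applies with $\sigma_0=\lsep\le1$ and gives $\pi(A_3)\ge c_{\mathrm{iso}}\lsep\,\pi(A)/8$. Adding the two flow bounds and using $A_3=B_1\dot\cup B_2$ gives $2\mathcal Q\ge b_0\pi(A_3)$. Therefore $\mathcal Q/\pi(A)\ge c_{\mathrm{iso}}\gamma_0\lsep/128$, which is the claimed bound~\eqref{II:eq:conductance}.

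\emph{Main obstacle.} The delicate point is the separation step. The contradiction needs kernel overlap at every close interior pair, with no exceptional boundary layer. This works because Lemma~\ref{II:lem:acceptance} holds uniformly at every interior point and the distortion bound of Lemma~\ref{II:lem:op-sc} is global. It also requires $c_{\mathrm{loc}}r/\sqrt d\le\tfrac12$, which holds since $r\le1$. The remaining steps are bookkeeping of constants.
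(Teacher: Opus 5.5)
Your proposal is correct and matches the paper's proof essentially step for step. It uses the same deep sets with $b_0=\gamma_0/8$ and $\partial K$ absorbed into $A_3$, the same contradiction via Lemmas~\ref{II:lem:proposal-overlap} and~\ref{II:lem:kernel-overlap} to force $\sigma(x,z)\ge\lsep$ through Lemma~\ref{II:lem:crossratio}, and the same two-case split closed by cross-ratio isoperimetry; your explicit check that the Case~1 bound $b_0/2$ dominates $c_{\mathrm{iso}}\gamma_0\lsep/128$ is a detail the paper leaves implicit.
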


\begin{proof}
Fix measurable $A$ with $0<\pi(A)\le\tfrac12$ and put
$b_{0}:=\gamma_{0}/8$,
\[
A_{1}=\{x\in A:P_{x}(A^{c})\le b_{0}\},\qquad
A_{2}=\{z\in A^{c}:P_{z}(A)\le b_{0}\},\qquad
A_{3}=K\setminus(A_{1}\cup A_{2}).
\]
Since $\pi(\partial K)=0$, from now on intersect $A_1,A_2$ with
$\intK$ and absorb $\partial K$ into $A_3$; this changes neither mass
nor stationary flow and makes every pair below an interior pair.
For $x\in A_{1},z\in A_{2}$:
$\dtv(P_{x},P_{z})\ge P_{z}(A^{c})-P_{x}(A^{c})\ge1-2b_{0}
>1-\gamma_{0}/2$, so by
Lemmas~\ref{II:lem:kernel-overlap} and \ref{II:lem:proposal-overlap} the pair must violate
the closeness condition:
$\norm{x-z}_{G(x)}>c_{\mathrm{loc}}r/\sqrt d$, whence
$\sigma(x,z)\ge\lsep$ by Lemma~\ref{II:lem:crossratio}. Let
$B_{1}=A\setminus A_{1}$, $B_{2}=A^{c}\setminus A_{2}$, so
$A_{3}=B_{1}\dot\cup B_{2}$; reversibility gives
$\mathcal Q(A,A^{c})\ge b_{0}\pi(B_{1})$ and
$\mathcal Q(A,A^{c})=\mathcal Q(A^{c},A)\ge b_{0}\pi(B_{2})$ for the
stationary flow $\mathcal Q$.

If $\pi(A_{1})<\pi(A)/2$ or $\pi(A_{2})<\pi(A^{c})/2$, then
$\pi(B_{1})\ge\pi(A)/2$ or $\pi(B_{2})\ge\pi(A^{c})/2\ge\pi(A)/2$,
and $\mathcal Q(A,A^{c})/\pi(A)\ge b_{0}/2$. Otherwise
$\pi(A_{1})\ge\pi(A)/2$ and $\pi(A_{2})\ge\tfrac14$; $\pi$ is
log-concave ($f$ convex) on the convex body $K$, so all hypotheses of
the cross-ratio isoperimetry result in
Lemma~\ref{II:lem:isoperimetry} hold. Applying it to
$(A_{1},A_{2},A_{3})$ with certified separation $\lsep\le1$ gives
$\pi(A_{3})\ge c_{\mathrm{iso}}\lsep\,\pi(A_{1})\pi(A_{2})
\ge\tfrac{c_{\mathrm{iso}}\lsep}{8}\pi(A)$, and
$2\mathcal Q(A,A^{c})\ge b_{0}\pi(A_{3})$ yields
$\mathcal Q(A,A^{c})/\pi(A)\ge
b_{0}c_{\mathrm{iso}}\lsep/16$. Both cases are covered by
Eq.~\eqref{II:eq:conductance}.
\end{proof}

\subsection{A worst-case acceptance obstruction: the arrow spectrahedron}
\label{II:sec:tight}

For a fixed universal proposal radius, uniform constant acceptance can
require $\kappa=\Omega(\psi(x))$.  The \emph{arrow} instance below has
$\psi(0)=\Theta(\psistar)=\Theta(d)$, so it matches the theorem's
$\kappa=\Ot(\psistar)$ choice up to polylogarithmic factors:
\begin{equation*}
n=d+1,\qquad
A_{0}=I_{n},\qquad
A_{a}=\tfrac1{\sqrt2}(e_{1}e_{a+1}^{\top}
+e_{a+1}e_{1}^{\top})\quad(a=1,\dots,d).
\end{equation*}
Indeed,
\[
S(x)=\begin{pmatrix}1&x^{\top}/\sqrt2\\x/\sqrt2&I_d\end{pmatrix}.
\]
For $x\ne0$, its restriction to the span of $e_1$ and
$(0,x/\norm x)$ is the $2\times2$ matrix with diagonal entries $1$ and
off-diagonal entries $\norm x/\sqrt2$; on the orthogonal complement it
is the identity. Thus its eigenvalues are
$1\pm\norm x/\sqrt2$ and $1$ (with multiplicity $d-1$). Consequently
$K=\{x:\norm x\le\sqrt2\}$,
$\det S(x)=1-\norm x^{2}/2$, and $S(x)\succeq0$ iff
$\norm x\le\sqrt2$. Moreover
\[
A_a^2=\tfrac12(e_1e_1^{\top}+e_{a+1}e_{a+1}^{\top}),
\qquad \langle A_a,A_b\rangle_F=\delta_{ab}.
\]
Hence the frame is already Frobenius-orthonormal at $x=0$ and
$\Hld(0)=I$, while summing the displayed squares gives
\begin{equation*}
\Psib(0)=\sum_{a}A_{a}^{2}
=\tfrac{d}{2}\,e_{1}e_{1}^{\top}+\tfrac12\sum_{a=1}^{d}
e_{a+1}e_{a+1}^{\top},
\qquad
\psi(0)=\frac d2 .
\end{equation*}

\begin{lemma}[Acceptance collapse for the unscaled walk]
\label{II:lem:arrow}
Consider the walk of Definition~\ref{II:def:walk} on the arrow instance with
$f\equiv0$, $L=0$, and \emph{unit} scale $\kappa=1$, started at
$x=0$. There are universal constants $r_{0}>0$, $c>0$, and $d_{0}$
such that for all $r\le r_{0}$ and $d\ge d_{0}$, the probability that
the proposed step is accepted is at most
\[
e^{-dr^{2}/10}+2e^{-cd}.
\]
Consequently, constant acceptance requires
$r\lesssim d^{-1/2}\asymp\psi(0)^{-1/2}$: the
$\psi$-scaling of Eq.~\eqref{II:eq:rho-choice} is necessary up to
logarithmic factors.
\end{lemma}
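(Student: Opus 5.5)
The plan is to compute the Metropolis ratio at $x=0$ in closed form and show it is exponentially small on a high-probability event. With $f\equiv0$, $L=0$, $\kappa=1$ and $\Hld(0)=I$, the proposal is $Y=\tfrac{r}{\sqrt d}\xi$ with $\xi\sim N(0,I_d)$. The acceptance probability is then $\E[1\wedge R_0(Y);\,Y\in\intK]$, where by the same computation as in Lemma~\ref{II:lem:ledger} (with $\creg=0$, $Q=I$, $G=\Hld$)
\[
\log R_0(Y)=\tfrac12\log\det\Hld(Y)-\tfrac{d}{2r^{2}}\,Y^{\top}(\Hld(Y)-I)Y .
\]
I will split on the event $\mathcal N:=\{|\norm\xi^2-d|\le d/10\}$. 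Its complement has probability at most $2e^{-cd}$ by Lemma~\ref{II:lem:gaussian-concentration} applied to $\xi\mapsto\norm\xi$. On $\mathcal N$ I aim to show $\log R_0(Y)\le -dr^2/10$. Writing $t:=\norm Y/\sqrt2$, on $\mathcal N$ we have $t^2\in[0.45,0.55]\,r^2$, and $t<1$ (so $Y\in\intK$) once $r\le r_0$.

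\emph{Quadratic term.} Since $Y^\top\Hld(Y)Y=\tr[(S(Y)^{-1}\Ab(Y))^2]$ and $\Ab(Y)=S(Y)-I$ has eigenvalues $\pm t$ on a two-dimensional invariant subspace (and $0$ elsewhere), the eigenvalues of $S^{-1}\Ab(Y)$ are $\pm t/(1\pm t)$. This gives the exact formula
\[
Y^\top\Hld(Y)Y=\frac{2t^{2}(1+t^{2})}{(1-t^{2})^{2}},
\qquad
Y^\top(\Hld(Y)-I)Y=\frac{2t^{4}(3-t^{2})}{(1-t^{2})^{2}}\ \ge\ 6t^{4}(1-O(t^2)).
\]
Hence the quadratic contribution is at most $-\tfrac{3d}{r^2}t^4(1-O(r^2))$.

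\emph{Determinant term.} I will use $\log\det\Hld(Y)\le\tr[\Hld(Y)]-d$ (concavity of $\log$). The inverse $S(Y)^{-1}$ is explicit by the Schur complement:
\[
(S^{-1})_{11}=\frac{1}{1-t^2},\qquad
(S^{-1})_{1,a+1}=-\frac{Y_a/\sqrt2}{1-t^2},\qquad
(S^{-1})_{\mathrm{lower}}=I+\frac{YY^\top/2}{1-t^2}.
\]
Expanding $\tr[S^{-1}A_aS^{-1}A_a]$ with $A_a=\tfrac1{\sqrt2}(e_1e_{a+1}^\top+e_{a+1}e_1^\top)$ gives $(S^{-1})_{11}(S^{-1})_{a+1,a+1}+(S^{-1})_{1,a+1}^2$. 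Summing over $a$ yields
\[
\tr\Hld(Y)=\frac{d}{1-t^2}+O\!\Big(\frac{t^2}{(1-t^2)^2}\Big),
\qquad\text{so}\qquad
\tr\Hld(Y)-d\le dt^2(1+O(t^2))+O(t^2).
\]

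\emph{Combination.} On $\mathcal N$,
\[
\log R_0(Y)\le \tfrac12 dt^2-\tfrac{3d}{r^2}t^4+O(dr^4+r^2).
\]
With $t^2\approx r^2/2$, the leading terms are $\tfrac{dr^2}{4}-\tfrac{3dr^2}{4}=-\tfrac{dr^2}{2}$. Checking the worst endpoints $t^2\in[0.45,0.55]r^2$ keeps this below $-0.3\,dr^2$. The error $O(dr^4)$ is absorbed for $r\le r_0$, and the additive $O(r^2)$ is absorbed for $d\ge d_0$. This gives $1\wedge R_0\le e^{-dr^2/10}$ on $\mathcal N$, and adding the bound on $\Pr[\mathcal N^c]$ proves the displayed estimate.

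The main obstacle I expect is the bookkeeping in the trace of $\Hld(Y)$. The determinant gain $\approx dt^2/2$ is of the same order as the quadratic loss, so the constant in $\tr\Hld(Y)-d$ must be tracked exactly rather than only up to $O(\cdot)$; the sharp quadratic formula is what makes the net coefficient negative. For the final remark, $e^{-dr^2/10}$ is only non-negligible when $r\lesssim d^{-1/2}$, and $\psi(0)=d/2$, so constant acceptance forces $r\lesssim\psi(0)^{-1/2}$.
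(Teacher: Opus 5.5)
Your proposal is correct and follows essentially the paper's route: you evaluate $\tfrac12\log\det H(Y)-\tfrac{d}{2r^{2}}Y^{\top}(H(Y)-I)Y$ at the center on a window where $\norm{\xi}^{2}/d$ is close to one, obtain the same leading balance $dr^{2}/4-3dr^{2}/4$, and add a chi-square tail. The differences are only technical. The paper reads $H(y)$ off the radial barrier $-\log(1-\norm{y}^{2}/2)$ and computes the log-determinant exactly by the matrix determinant lemma, whereas you use the trace and Schur-complement formulas with $\log\det H\le\tr H-d$, and Lipschitz rather than Chernoff concentration; your exact quadratic-form and trace formulas and your endpoint constants all check out.
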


\begin{proof}
At $x=0$ the metric is $\Hld(0)=I$ and the proposal is
$Y\sim N(0,\tfrac{r^{2}}{d}I)$; with $f\equiv0$ the acceptance
probability of a proposal $Y\in\intK$ is
$1\wedge e^{\Lambda(Y)}$ with
$\Lambda=\log\tfrac{q_{Y}(0)}{q_{0}(Y)}$. Since
$\phi(x)=-\log\det S(x)=-\log(1-\norm x^{2}/2)$,
writing $t=\norm{y}^{2}$ and differentiating first in $t$ gives
$\nabla\phi(y)=y/(1-t/2)$ and hence
\[
\Hld(y)=\frac{I}{1-t/2}+\frac{yy^{\top}}{(1-t/2)^{2}},
\qquad t:=\norm y^{2}.
\]
The matrix determinant lemma
$\det(aI+b\,yy^{\top})=a^d(1+(b/a)t)$ and direct multiplication by $y$
show that both ledger ingredients are exact functions of $t$:
\begin{align*}
\log\frac{\det \Hld(y)}{\det \Hld(0)}
&=-d\log(1-\frac t2)
+\log(1+\frac{t}{1-t/2}),\\
\frac{d}{r^{2}}\,y^{\top}(\Hld(y)-I)y
&=\frac{d}{r^{2}}[\frac{t}{1-t/2}
+\frac{t^{2}}{(1-t/2)^{2}}-t]
=\frac{d}{r^{2}}[\frac{t^{2}/2}{1-t/2}
+\frac{t^{2}}{(1-t/2)^{2}}].
\end{align*}

Thus
$\Lambda=\tfrac12\log\tfrac{\det \Hld(y)}{\det \Hld(0)}
-\tfrac{d}{2r^{2}}[\cdot]$. For $0\le t\le\tfrac12$, the inequalities
$-\log(1-u)\le u+u^{2}$ on $[0,\tfrac14]$ and
$\log(1+v)\le v$, together with $\tfrac{1}{1-t/2}\le\tfrac43$, give
\[
\Lambda\ \le\ \frac d2(\frac t2+\frac{t^{2}}{4})
+\frac{t}{1-t/2}
-\frac{d}{2r^{2}}\cdot\frac{3t^{2}}{2}
\ \le\ \frac{dt}{4}+\frac{dt^{2}}{8}+2t
-\frac{3d\,t^{2}}{4r^{2}} .
\]

Let $\mathcal A:=\{\tfrac34r^{2}\le t\le\tfrac54r^{2}\}$. On
$\mathcal A$ (and with $r\le r_{0}$ small enough that
$t\le\tfrac54r_{0}^{2}\le\tfrac12$), the upper endpoints in the positive
terms and the lower endpoint in the negative term give
\[
\Lambda\ \le\ dr^{2}(\frac{5}{16}
+\frac{25r^{2}}{128}
-\frac{3}{4}\cdot\frac{9}{16})+\frac{5r^{2}}{2}
\ =\ dr^{2}(\frac{5}{16}-\frac{27}{64}
+O(r^{2}))+\frac{5r^{2}}{2}
\ \le\ -\frac{dr^{2}}{10}.
\]
Here $5/16-27/64=-7/64$, so the last inequality holds for universal
$r\le r_0$ and $d\ge d_0$, after absorbing the additive $\tfrac52r^2$.

Since $t=\tfrac{r^{2}}{d}\norm\xi^{2}$ for
standard $\xi$, the needed chi-square estimate follows directly from
$\E[e^{\lambda\norm{\xi}^{2}}]=(1-2\lambda)^{-d/2}$: Chernoff's bound with
$\lambda=1/10$ for the upper tail and $\lambda=-1/10$ for the lower
tail gives
\[
\Pr[\norm{\xi}^{2}\ge5d/4]
\le e^{-d(1/8+\frac12\log(4/5))},\qquad
\Pr[\norm{\xi}^{2}\le3d/4]
\le e^{-d(\frac12\log(6/5)-3/40)}.
\]
Both constants in parentheses are positive, so for a universal $c>0$,
$\Pr[\mathcal A^{c}]=\Pr[|\norm\xi^{2}/d-1|>\tfrac14]\le2e^{-cd}$.

Splitting the expectation over $\mathcal A$ and $\mathcal A^c$ now gives
\[
\E[1\wedge e^{\Lambda}]
\ \le\ \Pr[\mathcal A^{c}]
+\sup_{\mathcal A}e^{\Lambda}
\ \le\ 2e^{-cd}+e^{-dr^{2}/10}.
\]
\end{proof}

\begin{remark}[Reading the bound]
Lemma~\ref{II:lem:arrow} proves a family-specific necessity statement:
at the center of the arrow instance, constant acceptance of the
unscaled walk requires $r\lesssim\psi(0)^{-1/2}$. Together with
Lemmas~\ref{II:lem:stein-integrand} and \ref{II:lem:Z-bound}, this matches the
sufficient leverage scale up to logarithmic factors on that family.
It does not prove a pointwise ``if and only if'' for all
spectrahedra, nor does it prove a mixing-time lower bound. For
completeness, scaling the center metric from $I$ to $\kappa I$ replaces
the proposal radius in the calculation by
$r_{\mathrm{eff}}=r/\sqrt\kappa$. Thus, at a fixed universal $r$, the
same lemma forces $dr^2/\kappa=O(1)$, i.e.\
$\kappa=\Omega(d)=\Omega(\psi(0))$ on this family. At the other
extreme, for polytopes ($\psi\le1$ by Lemma~\ref{II:lem:leverage-basic})
and for incoherent instances ($\psi=\Ot(1)$) alike,
Theorem~\ref{II:thm:main-formal} reads $\Ot(nd)$---dimension times
symmetry; in the diagonal case this recovers the Kannan--Narayanan
bound \cite{kn12}. Since
$\psi\le d$ always, the theorem never exceeds the generic
barrier-scaling bound $\Ot(d^{2}n)$
(cf.~\cite{n16,kv24}).
\end{remark}

\bibliographystyle{alpha}
\bibliography{ref}

\appendix

%%% This file is the structure for appendix content
%%% TeX files for body contents should be named as:
%%% 50_xxxx.tex
%%% 51_xxxx.tex
%%% ...

%\input{50_roadmap}

\section*{Acknowledgments and AI disclosure}
\label{sec:ai-disclosure}

The proof ideas in this paper were generated by two large language
models, GPT-5.6 Sol (OpenAI) and Claude Fable 5 (Anthropic), and the
resulting proofs were verified jointly by the same two models,
including independent adversarial line-by-line reviews of every
argument. The authors reviewed the manuscript and take full
responsibility for its content.

\section{Approximate metrics and exact-arithmetic complexity}
\label{I:sec:approx}

This appendix proves the implementation guarantees in
Theorem~\ref{thm:intro_approximate_metrics}. The preceding analysis uses exact
metrics for a reason: the derivative of the exact Lewis map produces the
cancellation in Section~\ref{I:sec:cancellation}, while the exact log-det
Hessian gives the endpoint identities in Section~\ref{II:sec:ledger}. An
arithmetic implementation need not reproduce either metric exactly. We retain
the exact walks as reference chains and compare their Gaussian proposals with
approximate-metric proposals, leaving both ledger arguments unchanged.

The comparison has the same form in both settings.  A one-sided padding
makes the approximate metric dominate its reference, while an
$O(1/d)$ relative error keeps the two Gaussian densities within a constant
factor on the acceptance event already proved above.  Proposal overlap and
conductance then survive with different universal constants. For
polytopes we obtain the approximation from high-precision Lewis weights.
For spectrahedra we use the TensorSRHT approximation of the log-det Hessian
from Gu et al.~\cite[Lemma~G.12]{gkmsz24} and a two-seed Metropolis filter.
We first treat deterministic Lewis-weight approximations for polytopes; we
then isolate the two-seed perturbation argument for randomized metrics and
apply it to a sketched log-det metric for spectrahedra.

We begin with the notation of the polytope proof:
\[
 H(x)=A_x^\top W_x^{c_p}A_x,\qquad
 G(x)=\kappa H(x)+L^2I,\qquad
 \bar\nu_{\mathrm{LS}}:=\kappa n^{2/p}d^{1-2/p}.
\]
Here $q_x,R_x$, and $P$ denote the exact proposal, ratio, and lazy kernel.
For the weights we use the iteration of
Gribling--Sidford--Zhang~\cite{gsz26}, whose high-precision guarantee we now
state in this notation.

\subsection{High-precision Lewis weights}

Fix $0<\varepsilon_w<1/2$.  At each $x\in\intK$ we compute a positive vector
$\widetilde w_x$ by a deterministic measurable map
$\mathsf{ApproxLewis}_{\varepsilon_w}$, with
\begin{equation}
 (1-\varepsilon_w)w_{x,i}\le \widetilde w_{x,i}
 \le(1+\varepsilon_w)w_{x,i}
 \qquad(i=1,\ldots,n).
 \label{I:eq:weight-estimate}
\end{equation}

To make the proposal kernel a deterministic measurable function of the
current state, we use a fixed initialization and iteration count.  Set
\[
 \alpha_{\mathrm{LW}}:=\frac{2}{p-2},\qquad
 b_{\mathrm{LW}}:=\alpha_{\mathrm{LW}}^{-1},\qquad
 \bar b_{\mathrm{LW}}:=\max\{1,b_{\mathrm{LW}}\},\qquad
 L_{\mathrm{LW}}:=32p\bar b_{\mathrm{LW}},
\]
and
\[
 \bar\varepsilon_w:=\frac{\varepsilon_w}{2(1+\alpha_{\mathrm{LW}})},
 \qquad
 T_{\mathrm{LW}}:=
 \lceil4L_{\mathrm{LW}}
 \log\frac{2n}{\bar\varepsilon_w}\rceil .
\]
Starting from $v_{\mathrm{LW}}^{(0)}=\mathbf{1}$, write
$D_{\mathrm{LW}}^{(t)}=\Diag(v_{\mathrm{LW}}^{(t)})$ and, for
$t=0,\ldots,T_{\mathrm{LW}}-1$, compute the row-quadratic vector
(obtainable from one full leverage-score-vector computation)
\[
 u_{\mathrm{LW},i}^{(t)}:=(A_x)_{i,\cdot}
     (A_x^\top D_{\mathrm{LW}}^{(t)}A_x)^{-1}(A_x)_{i,\cdot}^\top,
 \qquad
 \varphi_{\mathrm{LW},i}^{(t)}
 :=\frac{(u_{\mathrm{LW},i}^{(t)})^{b_{\mathrm{LW}}}}
          {v_{\mathrm{LW},i}^{(t)}},
\]
and update
\begin{equation}
 v_{\mathrm{LW},i}^{(t+1)}:=
 (1+\frac{\varphi_{\mathrm{LW},i}^{(t)}-1}
 {L_{\mathrm{LW}}})v_{\mathrm{LW},i}^{(t)}
 \qquad(i=1,\ldots,n).
 \label{I:eq:gsz-update}
\end{equation}
After one final computation of $u_{\mathrm{LW}}^{(T_{\mathrm{LW}})}$, return
\begin{equation}
 \widetilde w_{x,i}:=
 (u_{\mathrm{LW},i}^{(T_{\mathrm{LW}})})^{1+b_{\mathrm{LW}}}.
 \label{I:eq:gsz-output}
\end{equation}
The high-precision guarantee of Gribling--Sidford--Zhang proves
Eq.~\eqref{I:eq:weight-estimate}.  The prescribed iteration count and
all-ones initialization make the output a deterministic Borel function of
$A_x$, hence of $x$.

The estimate in Eq.~\eqref{I:eq:weight-estimate} is two-sided.  We turn it
into the one-sided comparison needed below by inflating the approximate
metric by its worst possible downward error.

\begin{definition}[Approximate-metric LS walk]
\label{I:def:approx-walk}
For $\widetilde W_x:=\Diag(\widetilde w_x)$, define
\begin{equation}
 \widetilde H(x):=A_x^\top\widetilde W_x^{c_p}A_x,
 \qquad
 \bar H(x):=(1-\varepsilon_w)^{-c_p}\widetilde H(x),
 \qquad
 \bar G(x):=\kappa\bar H(x)+L^2I .
 \label{I:eq:padded-metric}
\end{equation}
At $x$, propose
\[
 Y\sim\bar q_x:=N(x,\frac{r^2}{d}\bar G(x)^{-1}).
\]
When $Y\in\intK$, use the exact conditional Metropolis ratio
\begin{equation*}
 \bar R_x(Y):=
 \frac{e^{-f(Y)}\bar q_Y(x)}{e^{-f(x)}\bar q_x(Y)}.
\end{equation*}
Proposals outside $K$ are rejected, and the chain is $\tfrac12$-lazy.
Write $\bar P$ for the resulting lazy kernel.
\end{definition}

\begin{algorithm}[H]
\caption{One step of the approximate-metric LS walk}
\label{I:alg:approx-walk}
\begin{algorithmic}[1]
\Require Current cache $(x,s_x,f(x),\bar G(x),C_x,
         \log\det\bar G(x))$ with $C_xC_x^\top=\bar G(x)$;
         accuracy $\varepsilon_w$.
\State With probability $1/2$, \Return the current cache. \Comment{laziness}
\State Draw $\xi\sim N(0,I_d)$, set
       $h\gets\frac{r}{\sqrt d}C_x^{-\top}\xi$, and set
       $y\gets x+h$ and $s_y\gets s_x+\mA h$.
\If{some coordinate of $s_y$ is nonpositive}
  \State \Return the current cache.
\EndIf
\State $\widetilde w_y\gets
       \Call{ApproxLewis}{A_y,p,\varepsilon_w}$.
\State Form $\bar G(y)$ from Eq.~\eqref{I:eq:padded-metric}, factor it,
       and evaluate $f(y)$.
\State Compute
\Statex \hspace{1.5em}$\displaystyle
 \ell\gets f(x)-f(y)+\frac12\log\frac{\det\bar G(y)}{\det\bar G(x)}
 -\frac{d}{2r^2}h^\top(\bar G(y)-\bar G(x))h.$
\State Draw $U_{\mathrm{acc}}\sim\mathrm{Unif}(0,1)$; if
       $\log U_{\mathrm{acc}}\le\min\{0,\ell\}$, \Return the cache at $y$;
       otherwise \Return the current cache.
\end{algorithmic}
\end{algorithm}

The perturbation argument does not differentiate the approximate weights.
It first compares $\bar G$ with $G$ pointwise and then transfers the
acceptance and overlap bounds of the exact walk.

\begin{lemma}[Weight error implies metric error]
\label{I:lem:weight-to-metric}
Let
\begin{equation*}
 \gamma_w:=(\frac{1+\varepsilon_w}
                         {1-\varepsilon_w})^{c_p}
 =e^{\theta_w},
 \qquad
 \theta_w:=c_p\log\frac{1+\varepsilon_w}{1-\varepsilon_w}.
\end{equation*}
Then, uniformly on $\intK$,
\begin{equation}
 H(x)\preceq\bar H(x)\preceq\gamma_wH(x),
 \qquad
 G(x)\preceq\bar G(x)\preceq\gamma_wG(x).
 \label{I:eq:metric-sandwich}
\end{equation}
For $\varepsilon_w\le1/2$, $0\le\theta_w\le4\varepsilon_w$.
\end{lemma}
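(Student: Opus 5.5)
The argument is entrywise and monotone. Everything follows from Eq.~\eqref{I:eq:weight-estimate} once we raise it to the power $c_p$ and use that $A_x^\top D A_x$ is Loewner-monotone in the nonnegative diagonal $D$.

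First, since $c_p=1-2/p\in(0,1)$, the map $t\mapsto t^{c_p}$ is increasing on $(0,\infty)$. Raising Eq.~\eqref{I:eq:weight-estimate} to this power gives, coordinatewise,
\[
(1-\varepsilon_w)^{c_p}w_{x,i}^{c_p}\le\widetilde w_{x,i}^{c_p}\le(1+\varepsilon_w)^{c_p}w_{x,i}^{c_p}.
\]
Multiplying through by the padding factor $(1-\varepsilon_w)^{-c_p}$ yields
\[
w_{x,i}^{c_p}\le(1-\varepsilon_w)^{-c_p}\widetilde w_{x,i}^{c_p}\le\gamma_w\,w_{x,i}^{c_p}.
\]

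Second, for diagonal matrices $D\preceq D'$ with nonnegative entries, congruence gives $A_x^\top DA_x\preceq A_x^\top D'A_x$, because $v^\top A_x^\top(D'-D)A_xv=\sum_i(D'-D)_{ii}(A_xv)_i^2\ge0$. Applied to the three diagonal matrices above, this gives $H(x)\preceq\bar H(x)\preceq\gamma_wH(x)$.

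Third, we pass to $G$. Multiply the $H$-sandwich by $\kappa>0$ and add $L^2I$ to all three members. The lower bound $G\preceq\bar G$ is then immediate. For the upper bound, $\gamma_w\ge1$ gives $L^2I\preceq\gamma_wL^2I$, so $\bar G\preceq\gamma_w\kappa H+\gamma_wL^2I=\gamma_wG$.

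Finally, the estimate on $\theta_w$ is scalar. Nonnegativity holds because $(1+\varepsilon_w)/(1-\varepsilon_w)\ge1$. For the upper bound, $c_p\le1$ and
\[
\log\frac{1+\varepsilon}{1-\varepsilon}=2\operatorname{artanh}\varepsilon=\int_0^\varepsilon\frac{2\,ds}{1-s^2}\le\frac{2\varepsilon}{1-\varepsilon^2}\le\frac{8}{3}\varepsilon\le4\varepsilon
\]
for $\varepsilon\le1/2$.

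There is no real obstacle here. The only points to check are that the power $c_p$ is positive, so monotonicity is preserved, and that the regularizer needs $\gamma_w\ge1$ to be absorbed into the upper bound.
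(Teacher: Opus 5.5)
Your proposal is correct and follows the paper's proof essentially step for step: the weight estimate raised to the positive power $c_p$, Loewner monotonicity of $A_x^\top D A_x$ in the diagonal $D$ (the paper phrases this as positive semidefiniteness of the row outer products), the padding factor, absorbing $L^2 I$ using $\gamma_w \ge 1$, and the scalar bound using $c_p \le 1$. Your explicit integral bound $\frac{2\varepsilon}{1-\varepsilon^2} \le \frac{8}{3}\varepsilon$ also verifies the inequality $\log\frac{1+t}{1-t} \le 4t$ on $[0,\tfrac12]$, which the paper only asserts.
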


\begin{proof}
Because $c_p\ge0$, the coordinatewise estimate
Eq.~\eqref{I:eq:weight-estimate} and the positive semidefiniteness of each row
outer product give
\[
 (1-\varepsilon_w)^{c_p}H(x)\preceq\widetilde H(x)
 \preceq(1+\varepsilon_w)^{c_p}H(x),
\]
so multiplication by $(1-\varepsilon_w)^{-c_p}$ proves the first sandwich.
Adding $L^2I$, and using $\gamma_w\ge1$, gives
\[
 \kappa\bar H+L^2I\preceq
 \gamma_w\kappa H+L^2I\preceq\gamma_wG,
\]
which proves the second sandwich.  Finally, $c_p\le1$ and
$\log((1+t)/(1-t))\le4t$ for $0\le t\le1/2$.
\end{proof}

\subsection{From metric approximation to mixing}

The sandwich in Eq.~\eqref{I:eq:metric-sandwich} has two consequences.  It
makes the centered proposals close in total variation, and it controls the
change of each Gaussian density at the two endpoints of an accepted step.
The second consequence is why the accuracy scale is $1/d$: a Gaussian log
density contains a determinant contribution from all $d$ directions.

\begin{lemma}[Gaussian proposal perturbation]
\label{I:lem:proposal-perturbation}
There is a universal $C_q$ such that, whenever $\theta_w\le1$,
\begin{equation*}
 \sup_{x\in\intK}\dtv(q_x,\bar q_x)
 \le C_q\sqrt d\,\theta_w=:\tau_w .
\end{equation*}
\end{lemma}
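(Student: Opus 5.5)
The plan is to bound the KL divergence between two centered-shifted Gaussians with the same mean $x$ and then apply Pinsker's inequality (Lemma~\ref{I:lem:pinsker}). Both $q_x$ and $\bar q_x$ have mean $x$. Their covariances are $\tfrac{r^2}{d}G(x)^{-1}$ and $\tfrac{r^2}{d}\bar G(x)^{-1}$.

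The Gaussian KL identity recorded in the proof of Lemma~\ref{I:lem:proposal-overlap}, specialized to equal means, gives
\[
 D_{\mathrm{KL}}(q_x\Vert\bar q_x)=\frac12\sum_{j=1}^{d}(\lambda_j-1-\log\lambda_j).
\]
Here $\lambda_j$ are the eigenvalues of $G(x)^{-1/2}\bar G(x)G(x)^{-1/2}$. The scalar $r^2/d$ cancels because it is common to both covariances, and no mean term appears.

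By the sandwich in Eq.~\eqref{I:eq:metric-sandwich} of Lemma~\ref{I:lem:weight-to-metric}, congruence by $G(x)^{-1/2}$ gives $1\le\lambda_j\le\gamma_w=e^{\theta_w}$. For $\lambda\in[1,e]$ we have $0\le\lambda-1-\log\lambda\le\tfrac12(\lambda-1)^2$. This follows from the same integral representation $\int_1^\lambda\frac{t-1}{t}\,dt$ used earlier, now with $t\ge1$. Since $\theta_w\le1$, $\lambda_j-1\le e^{\theta_w}-1\le(e-1)\theta_w\le2\theta_w$. Hence each summand is at most $2\theta_w^2$, and summing over $j$,
\[
 D_{\mathrm{KL}}(q_x\Vert\bar q_x)\le d\theta_w^2.
\]
Pinsker then yields $\dtv(q_x,\bar q_x)\le\sqrt{d\theta_w^2/2}\le\sqrt d\,\theta_w$. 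So $C_q=1$ works, and any larger universal constant works as well.

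The bound is uniform in $x$ because the sandwich holds uniformly on $\intK$, so taking the supremum costs nothing. I don't expect a real obstacle. The only points needing care are that the lower eigenvalue bound $\lambda_j\ge1$ comes from the one-sided padding, which lets us use the clean Taylor bound on $[1,e]$, and that the hypothesis $\theta_w\le1$ keeps the eigenvalues in a bounded range so the quadratic bound on $\lambda-1-\log\lambda$ holds with a universal constant.
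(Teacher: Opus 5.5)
Your proof is correct and follows the paper's argument: the equal-mean Gaussian KL formula, the eigenvalue bound $1\le\lambda_j\le e^{\theta_w}$ from the one-sided sandwich in Eq.~\eqref{I:eq:metric-sandwich}, a quadratic bound on $\lambda-1-\log\lambda$, and Pinsker. The only differences are cosmetic: you bound by $\tfrac12(\lambda-1)^2$ and track constants to get $C_q=1$, whereas the paper bounds by $C(\log\lambda)^2$ with an unspecified universal constant.
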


\begin{proof}
Let $\lambda_1,\ldots,\lambda_d\in[1,e^{\theta_w}]$ be the eigenvalues
of $G(x)^{-1/2}\bar G(x)G(x)^{-1/2}$.  The centered Gaussian KL formula gives
\[
 D_{\mathrm{KL}}(q_x\Vert\bar q_x)
 =\frac12\sum_{j=1}^d(\lambda_j-1-\log\lambda_j)
 \le C d\theta_w^2,
\]
where the inequality uses
$0\le\lambda-1-\log\lambda\le C(\log\lambda)^2$ on $[1,e]$.
Lemma~\ref{I:lem:pinsker} now proves the claim.
\end{proof}

The density comparison is needed at both endpoints.  The exact acceptance
proof already provides the required event.

\begin{lemma}[Two-endpoint reference event]
\label{I:lem:reference-good-event}
There are universal constants $C_A,C_N<\infty$ and the same
$\epsilon<1/10$ as in Lemma~\ref{I:lem:acceptance} such that, for every
$x\in\intK$, the set
\[
 \mathcal A_x:=\{y\in\intK:
 \log R_x(y)\ge-C_A,\quad
 \norm{y-x}_{G(x)}^2\le C_Nr^2,\quad
 \norm{y-x}_{G(y)}^2\le C_Nr^2\}
\]
satisfies $q_x(\mathcal A_x)\ge1-\epsilon$.
\end{lemma}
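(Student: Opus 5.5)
The plan is to reuse the three-event intersection from the proof of Lemma~\ref{I:lem:acceptance} and check that, on that same event, both local-norm conditions defining $\mathcal A_x$ hold. Work in the normalized coordinates of Section~\ref{I:sec:cancellation}, with $Y=x+\eta h$, $h=Q^{1/2}\xi$. Let $\mathcal E$ be the intersection of three events: the tube event $\{h\in\Ktube\}$, the concentration event of Lemma~\ref{I:lem:Z-bound}, and the norm event $\{\norm{\xi}\le\sqrt d+\sqrt{2\log(1/\eone)}\}$. That proof already shows that $\mathcal E$ has probability at least $1-\epsilon$, and that on $\mathcal E$ we have $Y\in\intK$ and $\log R_x(Y)\ge-C$. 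So we take $C_A:=C$, and it remains only to bound the two local norms on $\mathcal E$.

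\emph{Forward norm.} In the original coordinates $\norm{Y-x}_{G(x)}=\frac{r}{\sqrt d}\norm{\xi}$, since $Y-x=\frac{r}{\sqrt d}G(x)^{-1/2}\xi$. On the norm event this gives
\[
\norm{Y-x}_{G(x)}^{2}\le\frac{r^{2}}{d}\bigl(\sqrt d+\sqrt{2\log(1/\eone)}\bigr)^{2}\le C'r^{2},
\]
with $C'$ universal because $\eone$ is a fixed universal constant and $d\ge1$.

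\emph{Reverse norm.} This is the only step not already in Lemma~\ref{I:lem:acceptance}, but it needs only the stability lemma. Since $Y$ lies on the tube and $h\in\Ktube\subseteq\Ktubep$, Eq.~\eqref{I:eq:weight-metric-comparison} with $\norm{a}_\infty\le1/32$ gives $H(Y)\preceq e^{1/16}I=e^{1/16}H(x)$ in normalized coordinates. Because the regularizer $L^{2}I$ is the same at both points, it follows that
\[
G(Y)=\kappa H(Y)+L^{2}I\preceq e^{1/16}\bigl(\kappa H(x)+L^{2}I\bigr)=e^{1/16}G(x),
\]
using $e^{1/16}\ge1$. This Loewner inequality is coordinate-free, so it holds in the original coordinates as well. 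Hence $\norm{Y-x}_{G(Y)}^{2}\le e^{1/16}\norm{Y-x}_{G(x)}^{2}\le e^{1/16}C'r^{2}$.

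Alternatively, one can get the reverse bound without the tube: the forward bound together with $r$ small gives $\Delta=\norm{Y-x}_{G(x)}\le1/2$, and then Lemma~\ref{I:lem:distortion} (applicable to $G$ by Lemma~\ref{I:lem:ssc-G}) yields $G(Y)\preceq4G(x)$. Either route works. Setting $C_N:=4C'$ then covers both norm conditions.

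There is no real obstacle here. The one point that needs care is that all three conditions are certified on a single event whose failure probability is the same $\epsilon$ as in Lemma~\ref{I:lem:acceptance}, namely $\eone^{2}+3\eone$, so no further union bound is required and $\epsilon$ is unchanged.
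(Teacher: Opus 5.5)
Your proof is correct and follows the paper's argument: you use the same three-event intersection from Lemma~\ref{I:lem:acceptance}, take the forward bound from the Gaussian norm event, and get the reverse norm from a Loewner comparison $G(Y)\preceq CG(x)$. Your ``alternative'' route via Lemmas~\ref{I:lem:ssc-G} and~\ref{I:lem:distortion} is exactly what the paper does. Your primary route through the tube comparison Eq.~\eqref{I:eq:weight-metric-comparison} is an equally valid shortcut on the same event; it gives the constant $e^{1/16}$ without invoking the smallness of $r$. One cosmetic point: in normalized coordinates the constant regularizer is $\kappa\creg$ rather than $L^{2}I$, which does not affect the argument.
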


\begin{proof}
Intersect the tube, concentration, and Gaussian norm events from the proof of
Lemma~\ref{I:lem:acceptance}.  On this event, the proposal is feasible,
$\log R_x(Y)\ge-C_A$, and
\[
 \norm{Y-x}_{G(x)}^2=\frac{r^2}{d}\norm{\xi}^2\le Cr^2,
\]
by the Gaussian proposal representation and the norm event.  For the
universal choice of $r$, this local norm is below $1/2$.
Lemmas~\ref{I:lem:ssc-G} and~\ref{I:lem:distortion}, applied along the
feasible segment $[x,Y]$, give $G(Y)\preceq CG(x)$ and hence the reverse norm
bound after enlarging $C_N$.
\end{proof}

\begin{lemma}[Acceptance survives high-precision weights]
\label{I:lem:approx-acceptance}
There is a universal $c_{\mathrm{app}}>0$ such that, if
$\varepsilon_w\le c_{\mathrm{app}}/d$, then uniformly in $x\in\intK$,
\begin{equation*}
 \Pr_{Y\sim\bar q_x}[Y\in K\ \text{and}\
          \log\bar R_x(Y)\ge-\bar C_A]
 \ge1-\bar\epsilon,
\end{equation*}
where $\bar C_A<\infty$ is a universal constant and
$\bar\epsilon:=\epsilon+\tau_w\le\epsilon+4C_qc_{\mathrm{app}}$ is
bounded by a universal constant.
\end{lemma}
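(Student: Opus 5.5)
The plan is to transfer Lemma~\ref{I:lem:acceptance} to the padded walk through the two-endpoint reference event of Lemma~\ref{I:lem:reference-good-event}. There are two steps. First, move from sampling under $q_x$ to sampling under $\bar q_x$ using total variation. Second, on that event, compare $\log\bar R_x(y)$ with $\log R_x(y)$ pointwise.

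\emph{Probability transfer.} By Lemma~\ref{I:lem:weight-to-metric}, $\varepsilon_w\le c_{\mathrm{app}}/d$ gives $\theta_w\le4c_{\mathrm{app}}/d\le1$. Lemma~\ref{I:lem:proposal-perturbation} then gives $\dtv(q_x,\bar q_x)\le\tau_w\le 4C_qc_{\mathrm{app}}/\sqrt d\le4C_qc_{\mathrm{app}}$. Combined with Lemma~\ref{I:lem:reference-good-event}, this yields
\[
\bar q_x(\mathcal A_x)\ge q_x(\mathcal A_x)-\tau_w\ge1-\epsilon-\tau_w=1-\bar\epsilon .
\]
Every point of $\mathcal A_x$ lies in $\intK$, so feasibility holds on this event.

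\emph{Pointwise ratio comparison.} For $y\in\mathcal A_x$, the $f$-terms cancel, so
\[
\log\bar R_x(y)-\log R_x(y)=\big[\log\bar q_y(x)-\log q_y(x)\big]-\big[\log\bar q_x(y)-\log q_x(y)\big].
\]
Fix $z\in\{x,y\}$ and let $u$ be the displacement between the two endpoints. Each bracket is
\[
\tfrac12\log\frac{\det\bar G(z)}{\det G(z)}-\frac{d}{2r^2}\,u^\top(\bar G(z)-G(z))u .
\]
Apply the sandwich $G\preceq\bar G\preceq e^{\theta_w}G$ of Eq.~\eqref{I:eq:metric-sandwich} at both $z=x$ and $z=y$. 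The determinant term lies in $[0,d\theta_w/2]$. The quadratic term lies in $[0,(e^{\theta_w}-1)\norm{u}_{G(z)}^2]$, and the defining bounds of $\mathcal A_x$ give $\norm{u}_{G(z)}^2\le C_Nr^2$. The quadratic term is therefore at most
\[
\frac{d}{2r^2}(e^{\theta_w}-1)C_Nr^2\le C\,d\theta_w\le C'c_{\mathrm{app}} .
\]
Each bracket thus has absolute value $O(d\theta_w)=O(c_{\mathrm{app}})=O(1)$, and hence
\[
\log\bar R_x(y)\ge -C_A-C''c_{\mathrm{app}}=:-\bar C_A .
\]

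The main obstacle is this pointwise comparison. The determinant term carries the factor $d$, which is exactly why the accuracy must be $\varepsilon_w=O(1/d)$. The quadratic term requires the reverse local norm $\norm{x-y}_{G(y)}$, which is why the reference event has to control both endpoints, as Lemma~\ref{I:lem:reference-good-event} does. With those two ingredients the remaining bookkeeping is routine, and choosing $c_{\mathrm{app}}$ small keeps $\bar\epsilon$ bounded by a universal constant.
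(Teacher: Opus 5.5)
Your proposal is correct and follows the paper's proof essentially step for step. It uses the total-variation transfer $\bar q_x(\mathcal A_x)\ge q_x(\mathcal A_x)-\tau_w\ge1-\epsilon-\tau_w$ through Lemma~\ref{I:lem:proposal-perturbation}, and it bounds the two density-change brackets at $z=x$ and $z=y$ by $O(d\theta_w)$ using the sandwich~\eqref{I:eq:metric-sandwich} together with the two-endpoint local-norm bounds built into $\mathcal A_x$. The paper makes explicit one inequality you use silently, $e^{\theta_w}-1\le(e-1)\theta_w$, which holds because $c_{\mathrm{app}}$ is reduced so that $4c_{\mathrm{app}}\le1$ and hence $\theta_w\le1$.
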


\begin{proof}
Reduce $c_{\mathrm{app}}$, if necessary, so that
$4c_{\mathrm{app}}\le1$.  Then $\theta_w\le4\varepsilon_w\le1$, and
$e^{\theta_w}-1\le(e-1)\theta_w$.  For $z,u\in\intK$, put
\[
 \mathcal E_z(u):=\log\frac{\bar q_z(u)}{q_z(u)}
 =\frac12\log\frac{\det\bar G(z)}{\det G(z)}
 -\frac{d}{2r^2}
   (\norm{u-z}_{\bar G(z)}^2-\norm{u-z}_{G(z)}^2).
\]
Eq.~\eqref{I:eq:metric-sandwich} yields
\begin{equation*}
 |\mathcal E_z(u)|\le\frac{d\theta_w}{2}
 +\frac{d(e^{\theta_w}-1)}{2r^2}\norm{u-z}_{G(z)}^2.
\end{equation*}
Consequently, for $y\in\mathcal A_x$,
$|\mathcal E_x(y)|+|\mathcal E_y(x)|\le Cd\theta_w$.  Since
\[
 \log\bar R_x(y)=\log R_x(y)+\mathcal E_y(x)-\mathcal E_x(y).
\]
Lemma~\ref{I:lem:reference-good-event} gives
$\log\bar R_x(y)\ge-C_A-Cd\theta_w$.  The condition
$\varepsilon_w\le c_{\mathrm{app}}/d$ makes this a universal lower bound.
Finally, Lemma~\ref{I:lem:proposal-perturbation} gives
\[
 \bar q_x(\mathcal A_x)\ge q_x(\mathcal A_x)-\tau_w
 \ge1-\epsilon-\tau_w.
\]
The boundary of the full-dimensional convex body $K$ has Lebesgue measure
zero, so writing $Y\in K$ rather than $Y\in\intK$ in the statement does not
change this probability.
\end{proof}

\begin{lemma}[Approximate proposal overlap]
\label{I:lem:approx-proposal-overlap}
The constant $c_{\mathrm{app}}$ can be chosen so that
Eq.~\eqref{I:eq:close-pair} implies
\[
 \dtv(\bar q_x,\bar q_z)\le\bar\delta_q
 :=\delta_q+2\tau_w<1-2\bar\epsilon .
\]
\end{lemma}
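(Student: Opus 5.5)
The plan is a triangle inequality in total variation through the exact proposals. For $x,z$ satisfying Eq.~\eqref{I:eq:close-pair},
\[
 \dtv(\bar q_x,\bar q_z)\le\dtv(\bar q_x,q_x)+\dtv(q_x,q_z)+\dtv(q_z,\bar q_z).
\]
The outer two terms are each at most $\tau_w$ by Lemma~\ref{I:lem:proposal-perturbation}. This applies because $\varepsilon_w\le c_{\mathrm{app}}/d$ with $4c_{\mathrm{app}}\le1$ gives $\theta_w\le4\varepsilon_w\le1$. The middle term is at most $\delta_q$ by Lemma~\ref{I:lem:proposal-overlap}, whose hypothesis is exactly Eq.~\eqref{I:eq:close-pair}. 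Summing gives $\dtv(\bar q_x,\bar q_z)\le\delta_q+2\tau_w=\bar\delta_q$.

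It remains to arrange the strict inequality $\bar\delta_q<1-2\bar\epsilon$ by choosing $c_{\mathrm{app}}$. The bound $\tau_w=C_q\sqrt d\,\theta_w\le4C_q\sqrt d\,\varepsilon_w\le4C_qc_{\mathrm{app}}/\sqrt d\le4C_qc_{\mathrm{app}}$ holds uniformly in $d\ge1$. Combining this with $\bar\epsilon=\epsilon+\tau_w$ from Lemma~\ref{I:lem:approx-acceptance}, the target inequality becomes
\[
 \delta_q+2\epsilon+4\tau_w<1 .
\]
Lemma~\ref{I:lem:proposal-overlap} already provides the positive universal gap $g:=1-2\epsilon-\delta_q>0$. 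So it suffices to take $16C_qc_{\mathrm{app}}<g$, together with the earlier requirement $4c_{\mathrm{app}}\le1$. Because $\epsilon,\delta_q$ and $C_q$ are fixed universal constants, this choice of $c_{\mathrm{app}}$ is universal.

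The argument is essentially bookkeeping, so I expect no real obstacle. The one point needing care is consistency of constants. The reduction of $c_{\mathrm{app}}$ made here must be compatible with the choice made in Lemma~\ref{I:lem:approx-acceptance}, and it is, since shrinking $c_{\mathrm{app}}$ only shrinks $\tau_w$ and $\bar\epsilon$. Also, $c_{\mathrm{loc}}$ and $\delta_q$ remain those of the exact walk, so Eq.~\eqref{I:eq:close-pair} keeps its exact-metric $G$-norm form, as the conductance argument later needs.
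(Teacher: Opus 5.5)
Your proposal is correct and is essentially the paper's proof. You apply the triangle inequality through the exact proposals $q_x,q_z$, using Lemmas~\ref{I:lem:proposal-overlap} and~\ref{I:lem:proposal-perturbation}, to get $\delta_q+2\tau_w$, and then shrink $c_{\mathrm{app}}$ until $4\tau_w$ lies below the fixed margin $1-\delta_q-2\epsilon$. Your explicit bound $\tau_w\le 4C_qc_{\mathrm{app}}$ simply makes the paper's ``reduce $c_{\mathrm{app}}$'' step quantitative.
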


\begin{proof}
The triangle inequality and Lemmas~\ref{I:lem:proposal-overlap}
and~\ref{I:lem:proposal-perturbation} give the displayed upper bound.
The exact proof chooses $\delta_q<1-2\epsilon$, leaving a fixed positive
margin $m_0:=1-\delta_q-2\epsilon$.  Because
$\bar\delta_q+2\bar\epsilon=\delta_q+2\epsilon+4\tau_w$, reducing
$c_{\mathrm{app}}$ so that $4\tau_w<m_0$ proves the strict inequality.
\end{proof}

\begin{theorem}[Mixing with approximate Lewis weights]
\label{I:thm:approx-main}
Under the hypotheses of Theorem~\ref{I:thm:main-formal}, suppose the
deterministic measurable estimate Eq.~\eqref{I:eq:weight-estimate}
satisfies $\varepsilon_w\le c_{\mathrm{app}}/d$.  Then the padded
approximate-metric LS walk of Definition~\ref{I:def:approx-walk} is exactly
reversible with stationary distribution $\pi$.  For every $w$-warm initial
law $\mu_0$ and every $0<\delta<1$,
\begin{equation}
 \dtv(\mu_0\bar P^t,\pi)\le\delta
 \qquad\text{for all}\qquad
 t\ge\Ot((d^2+dL^2R^2)\log(w/\delta)).
 \label{I:eq:approx-mixing}
\end{equation}
\end{theorem}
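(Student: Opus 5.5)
The proof will re-run the exact-metric mixing pipeline of Section~\ref{I:sec:mixing} with $\bar P$ in place of $P$, keeping the exact metric $G$ as the reference geometry throughout.

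\emph{Reversibility.} By the construction in Eqs.~\eqref{I:eq:gsz-update}--\eqref{I:eq:gsz-output}, $x\mapsto\widetilde w_x$ is a deterministic Borel map, so $\bar G$ is a measurable positive-definite field on $\intK$. The accepted-move density on $\intK\times\intK$ is $\min\{e^{-f(x)}\bar q_x(y),e^{-f(y)}\bar q_y(x)\}$, which is symmetric in $(x,y)$. Rejection and laziness only add diagonal mass. Hence $\bar P$ is $\pi$-reversible and $\tfrac12$-lazy, and its spectrum lies in $[0,1]$, exactly as in the exact walk.

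\emph{Local overlap of the lazy kernels.} Take $x,z\in\intK$ satisfying Eq.~\eqref{I:eq:close-pair}. Lemma~\ref{I:lem:approx-acceptance} supplies, at both $x$ and $z$, an acceptance event of $\bar q$-mass at least $1-\bar\epsilon$ on which $\bar R\ge e^{-\bar C_A}$. Lemma~\ref{I:lem:approx-proposal-overlap} gives $\dtv(\bar q_x,\bar q_z)\le\bar\delta_q<1-2\bar\epsilon$. The argument of Lemma~\ref{I:lem:kernel-overlap} uses only these two facts, so it applies verbatim and yields $\dtv(\bar P_x,\bar P_z)\le1-\bar\gamma_0/2$ with $\bar\gamma_0=e^{-\bar C_A}(1-\bar\delta_q-2\bar\epsilon)>0$, a universal constant.

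\emph{Conductance and mixing.} Closeness is still measured in $\norm{\cdot}_{G(x)}$, so the exact geometry carries over unchanged: symmetry (Lemma~\ref{I:lem:symmetry}, $\nu_G\le\nubar+4L^2R^2$) and cross-ratio domination (Lemma~\ref{I:lem:crossratio}) are reused as is. Rerunning the deep-set argument of Lemma~\ref{I:lem:conductance} with $b_0=\bar\gamma_0/8$ gives $\bar\Phi\ge c\,r/\sqrt{d(\nubar+L^2R^2)}$. Cheeger's inequality (Lemma~\ref{I:lem:cheeger-import}) then gives a spectral gap of order $\bar\Phi^2$. The $L^2(\pi)$ warm-start computation from the proof of Theorem~\ref{I:thm:main-formal} converts this into Eq.~\eqref{I:eq:approx-mixing}, using $\nubar=\Ot(d)$.

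\emph{Main obstacle.} The genuinely new content is the two-endpoint density comparison in Lemma~\ref{I:lem:approx-acceptance}. There the $O(d\theta_w)$ determinant shift must be $O(1)$, which is exactly why the accuracy must be $\varepsilon_w\le c_{\mathrm{app}}/d$. That lemma is available as proved. What remains to check carefully is that every universal constant introduced (the margin $m_0$ and $\bar\gamma_0$) is independent of $x$, $n$, $d$, $L$, $R$. The point is that conductance needs only a uniform common moving mass below the local radius, not any differentiability of $\bar G$.
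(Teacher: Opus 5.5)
Your proposal is correct and follows essentially the same route as the paper. You get exact detailed balance from the deterministic measurable padded metric, and local kernel overlap from Lemmas~\ref{I:lem:approx-acceptance} and~\ref{I:lem:approx-proposal-overlap} via the metric-independent argument of Lemma~\ref{I:lem:kernel-overlap}; then you rerun the conductance proof of Lemma~\ref{I:lem:conductance}, with closeness still measured in the reference metric $G$, followed by the same lazy Cheeger and warm-start $L^2$ argument.
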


\begin{proof}
\emph{Detailed balance.}
Because $x\mapsto\bar G(x)$ is deterministic and measurable, the accepted
move density satisfies
\[
 e^{-f(x)}\bar q_x(y)(1\wedge\bar R_x(y))
 =\min\{e^{-f(x)}\bar q_x(y),e^{-f(y)}\bar q_y(x)\}.
\]
Thus rejection and laziness give an exactly $\pi$-reversible kernel.

\emph{Local kernel overlap.}
Lemmas~\ref{I:lem:approx-acceptance}
and~\ref{I:lem:approx-proposal-overlap} meet the hypotheses of the
metric-independent kernel-overlap argument in
Lemma~\ref{I:lem:kernel-overlap}, with barred constants.  Hence the lazy
kernels $\bar P_x,\bar P_z$ have a fixed positive common mass whenever
Eq.~\eqref{I:eq:coupling-radius} holds, where closeness is still measured in
the exact reference metric $G$.  The proof of
Lemma~\ref{I:lem:conductance} uses only this local kernel overlap,
reversibility, and the $\nu_G$-symmetry of the reference metric.  It therefore
applies with the barred common-mass constant and gives
\[
 \Phi(\bar P)\ge
 \frac{cr}{\sqrt{d(\bar\nu_{\mathrm{LS}}+L^2R^2)}}.
\]

\emph{Conductance and mixing.}
The same lazy Cheeger and warm-start $L^2$ argument as in the proof of
Theorem~\ref{I:thm:main-formal} proves Eq.~\eqref{I:eq:approx-mixing}.
\end{proof}

The spectrahedral approximation is randomized, so we also need a
seed-averaged version of the preceding perturbation argument.

\begin{lemma}[Two-seed perturbation for random metrics]
\label{I:lem:randomized-metric}
Let $\omega_{\mathrm{s}}$ have a state-independent base law, let
$\bar G(x;\omega_{\mathrm{s}})$ be a measurable positive-definite output,
and let $\bar q_{x,\omega_{\mathrm{s}}}$ be its conditional Gaussian
proposal density.  Draw $\omega_{\mathrm{s}}$ at $x$, propose $y$, draw an
independent seed $\omega_{\mathrm{s}}'$ at $y$ when $y\in\intK$, and accept
with probability
\[
 1\wedge\frac{e^{-f(y)}\bar q_{y,\omega_{\mathrm{s}}'}(x)}
                  {e^{-f(x)}\bar q_{x,\omega_{\mathrm{s}}}(y)}.
\]
Proposals outside $\intK$ are rejected.
The resulting marginal kernel is exactly $\pi$-reversible.

Suppose that, uniformly in $x$, a good-seed event
$\mathcal G_x(\omega_{\mathrm{s}})$ has probability at least $1-\beta_0$
and
\[
 \dtv(q_x,\bar q_{x,\omega_{\mathrm{s}}})\le\tau
 \qquad\text{on }\mathcal G_x(\omega_{\mathrm{s}}).
\]
Suppose also that the reference proposals have overlap at least
$1-\delta_q$ for the close pairs under consideration.  For every $x$, let
$\mathcal B_x\subseteq\intK$ be measurable with
$q_x(\mathcal B_x^c)\le\epsilon$.  Assume that for $y\in\mathcal B_x$ the
conditional Metropolis ratio is at least $e^{-C}$ whenever the seeds at
$x$ and $y$ are both good, where $C\ge0$.  Then, before the fixed laziness
factor, the moving kernels have common mass at least
\begin{equation}
 e^{-C}(1-\beta_0)
 (1-\delta_q-2\epsilon-4\tau-2\beta_0).
 \label{I:eq:two-seed-common-mass}
\end{equation}
\end{lemma}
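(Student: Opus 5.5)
The proof has two independent parts: exact reversibility, then a lower bound on the common mass.

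\emph{Reversibility.} Write the marginal moving density as an average over the two independent seeds. For $x\ne y$ in $\intK$,
\[
 e^{-f(x)}\,\E_{\omega_{\mathrm{s}},\omega_{\mathrm{s}}'}\bigl[\bar q_{x,\omega_{\mathrm{s}}}(y)\,(1\wedge\bar R)\bigr]
 =\E_{\omega_{\mathrm{s}},\omega_{\mathrm{s}}'}\bigl[\min\{e^{-f(x)}\bar q_{x,\omega_{\mathrm{s}}}(y),\,e^{-f(y)}\bar q_{y,\omega_{\mathrm{s}}'}(x)\}\bigr].
\]
The two seeds are i.i.d.\ with a state-independent law. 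Relabelling $(\omega_{\mathrm{s}},\omega_{\mathrm{s}}')\mapsto(\omega_{\mathrm{s}}',\omega_{\mathrm{s}})$ therefore shows the right side is symmetric in $(x,y)$. Rejections and laziness add only diagonal mass, so the marginal kernel is $\pi$-reversible. Measurability follows from the measurability of $\bar G$ and Fubini.

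\emph{Common mass.} Bound the moving density at $x$ from below by restricting to the event that both seeds are good:
\[
 m_x(u)\ge \E\bigl[\ind{\mathcal G_x(\omega_{\mathrm{s}})}\ind{\mathcal G_u(\omega_{\mathrm{s}}')}\,\bar q_{x,\omega_{\mathrm{s}}}(u)\bigr]\,e^{-C}\qquad(u\in\mathcal B_x).
\]
The two indicators are independent. Write $\mu_x(u):=\E[\ind{\mathcal G_x}\bar q_{x,\omega_{\mathrm{s}}}(u)]$, a subprobability density. Then
\[
 m_x(u)\ge e^{-C}\Pr[\mathcal G_u]\,\mu_x(u)\ge e^{-C}(1-\beta_0)\,\mu_x(u)\qquad\text{on }\mathcal B_x.
\]
This uses that the second seed's law is the same at every state.

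Next compare $\mu_x$ with $q_x$. Since $\dtv(q_x,\bar q_{x,\omega_{\mathrm{s}}})\le\tau$ on good seeds, convexity of total variation gives
\[
 \int|\mu_x-\Pr[\mathcal G_x]\,q_x|\le 2\tau,\qquad \int|\Pr[\mathcal G_x]\,q_x-q_x|\le\beta_0,
\]
so $\int|\mu_x-q_x|\le 2\tau+\beta_0$.

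The common mass is at least $e^{-C}(1-\beta_0)\int_{\mathcal B_x\cap\mathcal B_z}\min\{\mu_x,\mu_z\}$. Use $\min\{a,b\}\ge\min\{a',b'\}-|a-a'|-|b-b'|$ to obtain
\[
 \int_{\mathcal B_x\cap\mathcal B_z}\min\{\mu_x,\mu_z\}\ge\int_{\mathcal B_x\cap\mathcal B_z}\min\{q_x,q_z\}-(2\tau+\beta_0)-(2\tau+\beta_0).
\]
As in Lemma~\ref{I:lem:kernel-overlap}, $\int\min\{q_x,q_z\}\ge1-\delta_q$, and removing $\mathcal B_x^c$ and $\mathcal B_z^c$ costs at most $\epsilon$ each. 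Together these give exactly Eq.~\eqref{I:eq:two-seed-common-mass}.

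The main obstacle is the $L^1$ bookkeeping. The factor $2\tau$ arises because the $L^1$ distance is twice the total-variation distance, and the failed-seed mass is a one-sided loss, so it contributes $\beta_0$ rather than $2\beta_0$. These constants must land exactly on $4\tau+2\beta_0$. The other delicate point is that the reverse-seed factor $(1-\beta_0)$ must be pulled out pointwise in $u$ before integrating; that step is where seed independence and the state-independent base law are used.
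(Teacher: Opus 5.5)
Your proof is correct and follows the paper's argument. It uses the same seed-interchange symmetry for reversibility, the same good-seed subdensity $\mu_x$ with $\int|\mu_x-q_x|\le 2\tau+\beta_0$, and the same pointwise extraction of $e^{-C}(1-\beta_0)$ via the independent reverse seed. The only cosmetic difference is in the bookkeeping: you perturb $\min$ pointwise and remove $\mathcal B_x^c,\mathcal B_z^c$ under the reference densities, whereas the paper uses $\int\min\{a,b\}=\frac12(\norm{a}_1+\norm{b}_1-\norm{a-b}_1)$ and removes those sets under the good-seed subdensities at cost $\epsilon+\tau$ each. Both routes land on exactly $4\tau+2\beta_0$.
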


\begin{proof}
\emph{Reversibility.}
The marginal accepted flow is
\[
 \E_{\omega_{\mathrm{s}},\omega_{\mathrm{s}}'}[\min\{e^{-f(x)}\bar q_{x,\omega_{\mathrm{s}}}(y),
                 e^{-f(y)}\bar q_{y,\omega_{\mathrm{s}}'}(x)\}],
\]
which is symmetric after interchanging the two independent seeds.  This is the
two-seed device of Gu et al.~\cite[Lemma~D.12]{gkmsz24}.

\emph{Common mass on good seeds.}
Define the good-seed proposal subdensity
\[
 q_x^{\mathrm{good}}(y):=\E_{\omega_{\mathrm{s}}}[\ind{\mathcal G_x(\omega_{\mathrm{s}})}
       \bar q_{x,\omega_{\mathrm{s}}}(y)].
\]
The total-variation hypothesis and the identity between total variation and
half the $L^1$ distance give
\[
 \norm{q_x^{\mathrm{good}}-q_x}_{L^1}\le 2\tau+\beta_0.
\]
Thus, for each close pair,
\begin{equation*}
 \int\min\{q_x^{\mathrm{good}}(y),q_z^{\mathrm{good}}(y)\}\,dy
 \ge1-\delta_q-2\tau-2\beta_0.
\end{equation*}
Indeed,
\begin{align*}
 \int\min\{q_x^{\mathrm{good}},q_z^{\mathrm{good}}\}
 &=\frac12(\norm{q_x^{\mathrm{good}}}_1
                 +\norm{q_z^{\mathrm{good}}}_1\\
 &\hspace{4em}
   -\norm{q_x^{\mathrm{good}}-q_z^{\mathrm{good}}}_1).
\end{align*}
Here
$\norm{q_x^{\mathrm{good}}}_1,\norm{q_z^{\mathrm{good}}}_1
\ge1-\beta_0$; the claim follows by the triangle inequality through
$q_x,q_z$.
The same total-variation hypothesis gives
\[
 q_x^{\mathrm{good}}(\mathcal B_x^c)\le\epsilon+\tau,
 \qquad
 q_z^{\mathrm{good}}(\mathcal B_z^c)\le\epsilon+\tau.
\]
Restricting the common proposal mass to
$\mathcal B_x\cap\mathcal B_z$ therefore removes at most
$2\epsilon+2\tau$.  At a remaining proposal $y$, requiring the independent
reverse seed to be good contributes the factor $1-\beta_0$.  Applying the
conditional lower bound $e^{-C}$ proves
Eq.~\eqref{I:eq:two-seed-common-mass}.
\end{proof}

Because the good-seed guarantee is pointwise in the current state, fresh
independent endpoint seeds, exact conditional densities, and a
positive-definite metric on every seed suffice; no event need hold
simultaneously over the mixing horizon.

\subsection{Arithmetic complexity for polytopes}

Because the current metric is cached, each feasible proposal requires at most
one new Lewis-weight computation.  Combining this per-step cost with
Theorem~\ref{I:thm:approx-main} gives the total arithmetic bound below.
Throughout the two arithmetic corollaries, we use
$\log(e w/\delta)=1+\log(w/\delta)$.  Since $w\ge1$, $0<\delta<1$, and the
displayed geometric prefactors are at least one, this absorbs the one-time
metric build and evaluation of $f$.

\begin{corollary}[Arithmetic complexity for polytopes]
\label{I:cor:arithmetic-runtime}
In the exact-arithmetic model described below,
run Algorithm~\ref{I:alg:approx-walk} with
$\varepsilon_w=c_{\mathrm{app}}/d$.  It can
be implemented with
\begin{equation}
 \Ot(nd^{\omega-1}+d^\omega)
 \label{I:eq:per-step-runtime}
\end{equation}
arithmetic operations and at most one new evaluation of $f$ per Markov step.
Here $\omega$ denotes the exponent of square-matrix multiplication.
Consequently, for every $0<\delta<1$, starting from an exact draw
$X_0\sim\mu_0$, where $\mu_0$ is $w$-warm, the walk reaches total-variation
error $\delta$ using
\begin{equation}
 \Ot(
 (d^2+dL^2R^2)(nd^{\omega-1}+d^\omega)
 \log\frac{e w}{\delta})
 \label{I:eq:total-runtime}
\end{equation}
arithmetic operations and
$\Ot((d^2+dL^2R^2)\log(e w/\delta))$ evaluations of $f$.
\end{corollary}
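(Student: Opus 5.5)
The plan has two parts: a per-step cost accounting for Algorithm~\ref{I:alg:approx-walk}, followed by multiplication by the step count from Theorem~\ref{I:thm:approx-main}.

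\textbf{Per-step cost.} Since the cache carries $\bar G(x)$, a factor $C_x$, and $\log\det\bar G(x)$, a step only needs work at the new point. The items are as follows.
\begin{itemize}
\item Draw $\xi$ and solve $C_x^{\top}h=\frac{r}{\sqrt d}\xi$ by triangular back-substitution: $O(d^2)$.
\item Form the new slacks $s_y=s_x+\mA h$: $O(nd)$.
\item Call \textsc{ApproxLewis} with $\varepsilon_w=c_{\mathrm{app}}/d$. This performs $T_{\mathrm{LW}}+1=O(L_{\mathrm{LW}}\log(2n/\bar\varepsilon_w))=\Ot(p^{O(1)}\log(nd))=\Ot(1)$ computations of row quadratics $u^{(t)}_{\mathrm{LW}}$. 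Each needs $A_y^{\top}D A_y$, costing $O(nd^{\omega-1})$ by blocking $A_y$ into $n/d$ square blocks; then a Cholesky/inverse of that matrix, costing $O(d^\omega)$; then the diagonal of $A_y M^{-1}A_y^{\top}$, obtained as the row norms of $A_y L^{-\top}$ in $O(nd^{\omega-1})$. The scalar updates in Eq.~\eqref{I:eq:gsz-update} and Eq.~\eqref{I:eq:gsz-output} are $O(n)$.
\item Form $\bar G(y)$ from Eq.~\eqref{I:eq:padded-metric} (another $O(nd^{\omega-1})$), factor it, and read off $\log\det\bar G(y)$ from the Cholesky diagonal in $O(d^\omega)$.
\item Evaluate $\ell$, whose quadratic form costs $O(d^2)$.
\end{itemize}
On acceptance, the new cache is exactly what was just computed. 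There is at most one $f$-evaluation per step. This gives Eq.~\eqref{I:eq:per-step-runtime}. I am assuming exact real arithmetic including fractional powers $u^{b_{\mathrm{LW}}}$, $\widetilde w^{c_p}$, and logarithms as unit-cost operations; the model will be stated so that this holds.

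\textbf{Total bound.} Theorem~\ref{I:thm:approx-main} applies since $\varepsilon_w=c_{\mathrm{app}}/d$. It gives $t=\Ot((d^2+dL^2R^2)\log(w/\delta))$ steps to reach TV error $\delta$ from the $w$-warm $\mu_0$. The initial cache is built once at $X_0$, costing one Lewis computation and one $f$-evaluation, i.e. the per-step cost once. Writing $\log(w/\delta)\le\log(ew/\delta)$ and noting that the prefactors $d^2+dL^2R^2$ and $\log(ew/\delta)\ge1$ absorb this additive one-time term, the total is $t\cdot\Ot(nd^{\omega-1}+d^\omega)$, i.e. Eq.~\eqref{I:eq:total-runtime}, together with $t+1$ evaluations of $f$.

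\textbf{Main obstacle.} The main point to check is that the fixed-schedule iteration really meets Eq.~\eqref{I:eq:weight-estimate} at accuracy $\varepsilon_w=c_{\mathrm{app}}/d$ with only polylogarithmically many leverage computations. The $\log(1/\bar\varepsilon_w)=O(\log(nd))$ dependence of $T_{\mathrm{LW}}$, together with $L_{\mathrm{LW}}=O(p\max\{1,(p-2)/2\})=\Ot(1)$ since $p=O(\log n)$, handles this. I would cite the Gribling--Sidford--Zhang guarantee for correctness and only count operations. A secondary point is that the leverage-type computations need exact inversion rather than sketching; the $O(d^\omega)$ factorization and the rectangular product bound cover this.
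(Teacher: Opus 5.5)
Your proposal is correct and follows essentially the same route as the paper. Each of the polylogarithmically many fixed-schedule Lewis iterations costs one exact leverage-score computation in $O(nd^{\omega-1})$ plus $O(d^\omega)$ factorization work, the cached per-step cost is then multiplied by the step count of Theorem~\ref{I:thm:approx-main}, and the one-time initial build is absorbed using $\log(ew/\delta)\ge1$, exactly as you describe; your finer itemization (triangular solves, Cholesky-based log-determinant, squared row norms of $A_yL^{-\top}$) is simply a more explicit version of the paper's lumped $O(d^\omega)$ and $O(nd^{\omega-1})$ terms.
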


\begin{proof}
\emph{Lewis weights.}
The slack-scaled matrix $A_x$ has $n$ rows and $d$ columns.  The fixed-schedule
estimator in Eqs.~\eqref{I:eq:gsz-update}--\eqref{I:eq:gsz-output} returns the
coordinatewise estimate
Eq.~\eqref{I:eq:weight-estimate} in
\[
 O(p^2\log\frac{np\alpha_{\mathrm{LW}}}{\varepsilon_w})
\]
iterations, each computing the full leverage-score vector exactly.  Here
$p=\max\{4,2(1+\log n)\}$ and
$\varepsilon_w=c_{\mathrm{app}}/d$, so the number of iterations is
polylogarithmic in $n,d$.  One full leverage-score vector for an
$n\times d$ matrix costs $O(nd^{\omega-1})$ arithmetic operations by
forming and factoring its weighted Gram matrix and evaluating all row
quadratic forms.  Hence one call to
$\mathsf{ApproxLewis}_{\varepsilon_w}$ costs
$\Ot(nd^{\omega-1})$ arithmetic operations.

\emph{One transition.}
Forming slacks and checking membership costs $O(\nnz(\mA))$; assembling the
weighted Gram matrix is covered by $O(nd^{\omega-1})$; and factoring
$\bar G$, drawing the Gaussian from that factor, computing its determinant,
and evaluating the two quadratic forms cost $O(d^\omega)$.  The current
metric, factorization, determinant, slacks, and target value are cached.
Laziness is tested before any proposal work.  Thus each nonlazy iteration
computes at most one new feasible-endpoint metric and one new target value;
on acceptance those become the next cache.  The initial cache is built once,
so there are at most $t+1$ builds in $t$ steps.  This proves
Eq.~\eqref{I:eq:per-step-runtime}.

\emph{Total cost.}
Combining the per-step bound with Theorem~\ref{I:thm:approx-main} and including
the initial build and evaluation of $f$ gives Eq.~\eqref{I:eq:total-runtime}
and the stated number of evaluations of $f$.
\end{proof}

\begin{remark}[Exact-arithmetic model]
Field operations, comparisons, scalar powers and roots, logarithms and
exponentials, and exact Gaussian and uniform variates have unit cost.  The
count includes the Lewis iterations, slack and membership computations,
metric factorization, the two Gaussian quadratic forms, and one new target
evaluation at a feasible proposal.

The initial warm draw and the parameters $L,R,w$ are supplied.  We do not
count finite-bit conditioning, approximate Gaussian generation, or an
inexact Metropolis decision. Thus the corollary is an arithmetic-operation
bound, not a bit-complexity result. A bit-complexity result would additionally
require controlling the precision of ill-conditioned diagonal scalings.
\end{remark}

\subsection{A sketched log-det metric}

For spectrahedra, write
$\kappa_{\mathrm{ld}}$ for the scale denoted there by $\kappa$, and put
\[
 H_{\mathrm{ld}}(x):=\nabla^2(-\log\det S(x)),
 \qquad
 G_{\mathrm{ld}}(x):=\kappa_{\mathrm{ld}}H_{\mathrm{ld}}(x)+L^2I.
\]
Thus $G_{\mathrm{ld}}$ is the exact reference metric, and $n$ denotes the
pencil block size.  Let
$\mathsf A\in\R^{d\times n^2}$ have $i$-th row
$\operatorname{vec}(A_i)^\top$, where $\operatorname{vec}$ denotes column
stacking, and put
\begin{equation}
 B_x:=\mathsf A(S(x)^{-1/2}\otimes S(x)^{-1/2}).
 \label{II:eq:sketch-factor}
\end{equation}
Then $H_{\mathrm{ld}}(x)=B_xB_x^\top$.
If $S(x)=L_xL_x^\top$, the symmetric inverse square root in
Eq.~\eqref{II:eq:sketch-factor} may be replaced by
$T_x=L_x^{-\top}$, since $T_xT_x^\top=S(x)^{-1}$; the corresponding rows
are $\operatorname{vec}(T_x^\top A_iT_x)^\top$.

A TensorSRHT compresses this Gram factor before the $d\times d$ metric is
formed.  Fix $0<\varepsilon_s<1/10$ and $0<\beta_s<1/10$.  For a fresh
TensorSRHT seed $\omega_{\mathrm{s}}$, let
$\Pi_{\omega_{\mathrm{s}}}\in\R^{s\times n^2}$ be the transform
of~\cite[Definition~G.10]{gkmsz24}, where, for a sufficiently large
universal $C_{\mathrm{TS}}$,
\begin{equation}
 s:=\lceil C_{\mathrm{TS}}\varepsilon_s^{-2}d
 \log^3\frac{nd}{\varepsilon_s\beta_s}\rceil,
 \qquad
 Z_{x,\omega_{\mathrm{s}}}:=
 \Pi_{\omega_{\mathrm{s}}}B_x^\top\in\R^{s\times d}.
 \label{II:eq:sketch-size}
\end{equation}
If $n$ is not a Hadamard size, let $n'$ be the next power of two and replace
the pencil by $S(x)\oplus I_{n'-n}$ and every direction by
$A_i\oplus0_{n'-n}$. The feasible body and log-det Hessian are unchanged.
Moreover, the whitened directions become $\Ab_i(x)\oplus0$, so the leverage
operator becomes $\Psib(x)\oplus0$ and $\psistar$ is unchanged. Since
$n'<2n$, this padding changes the mixing and arithmetic bounds by at most a
universal constant. We henceforth relabel $n'$ as $n$.
Define a positive-definite output on every seed by
\begin{equation*}
 \widehat H(x;\omega_{\mathrm{s}}):=
 \begin{cases}
 Z_{x,\omega_{\mathrm{s}}}^\top Z_{x,\omega_{\mathrm{s}}},
   &Z_{x,\omega_{\mathrm{s}}}^\top
    Z_{x,\omega_{\mathrm{s}}}\succ0,\\
 I_d,&\text{otherwise},
 \end{cases}
\end{equation*}
and apply one-sided padding:
\begin{equation*}
 \bar H_{\mathrm{ld}}(x;\omega_{\mathrm{s}})
 :=(1-\varepsilon_s)^{-2}\widehat H(x;\omega_{\mathrm{s}}),
 \qquad
 \bar G_{\mathrm{ld}}(x;\omega_{\mathrm{s}})
 :=\kappa_{\mathrm{ld}}\bar H_{\mathrm{ld}}(x;\omega_{\mathrm{s}})+L^2I.
\end{equation*}
The fallback makes the conditional proposal well defined for every seed; the
approximation statement below is restricted to the sketch-success event.

\begin{lemma}[Tensor sketch implies metric approximation]
\label{II:lem:sketch-sandwich}
For each fixed $x\in\intK$, with probability at least $1-\beta_s$ over
$\omega_{\mathrm{s}}$,
\begin{equation*}
 H_{\mathrm{ld}}(x)\preceq
 \bar H_{\mathrm{ld}}(x;\omega_{\mathrm{s}})
 \preceq e^{\theta_s}H_{\mathrm{ld}}(x),
 \qquad
 G_{\mathrm{ld}}(x)\preceq
 \bar G_{\mathrm{ld}}(x;\omega_{\mathrm{s}})
 \preceq e^{\theta_s}G_{\mathrm{ld}}(x),
\end{equation*}
where
\[
 \theta_s:=2\log\frac{1+\varepsilon_s}{1-\varepsilon_s}
 \le8\varepsilon_s.
\]
\end{lemma}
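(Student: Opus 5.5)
The plan is to import the TensorSRHT subspace-embedding guarantee of Gu et al.~\cite[Lemma~G.12]{gkmsz24} and then turn its two-sided bound into the one-sided sandwich by the padding factor, exactly as in Lemma~\ref{I:lem:weight-to-metric}. Fix $x\in\intK$. The column space of $B_x^\top\in\R^{n^2\times d}$ has dimension $d$, because $H_{\mathrm{ld}}(x)=B_xB_x^\top\succ0$. With the sketch size $s$ of Eq.~\eqref{II:eq:sketch-size} and $C_{\mathrm{TS}}$ large, the cited lemma gives, with probability at least $1-\beta_s$ over $\omega_{\mathrm{s}}$,
\[
 (1-\varepsilon_s)\norm{B_x^\top v}\le\norm{\Pi_{\omega_{\mathrm{s}}}B_x^\top v}\le(1+\varepsilon_s)\norm{B_x^\top v}
\]
simultaneously for all $v\in\R^d$. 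The zero-padding to a Hadamard size $n'$ does not change $B_x^\top v$ or its norm, so the embedding statement applies verbatim.

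Squaring this bound gives
\[
 (1-\varepsilon_s)^2H_{\mathrm{ld}}(x)\preceq Z_{x,\omega_{\mathrm{s}}}^\top Z_{x,\omega_{\mathrm{s}}}\preceq(1+\varepsilon_s)^2H_{\mathrm{ld}}(x).
\]
In particular $Z^\top Z\succ0$, because $\varepsilon_s<1$ and $H_{\mathrm{ld}}(x)\succ0$. So on the success event the fallback branch is never taken, and $\widehat H=Z^\top Z$. Multiplying by $(1-\varepsilon_s)^{-2}$ yields
\[
 H_{\mathrm{ld}}\preceq\bar H_{\mathrm{ld}}\preceq\Bigl(\frac{1+\varepsilon_s}{1-\varepsilon_s}\Bigr)^2H_{\mathrm{ld}}=e^{\theta_s}H_{\mathrm{ld}}.
\]
For the regularized metric, scale by $\kappa_{\mathrm{ld}}\ge0$ and add $L^2I$. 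Since $e^{\theta_s}\ge1$, we have $L^2I\preceq e^{\theta_s}L^2I$, which gives the second sandwich. Finally, $\log\frac{1+t}{1-t}\le4t$ for $0\le t\le1/2$, so $\theta_s\le8\varepsilon_s$.

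The main obstacle is checking that the imported TensorSRHT lemma really applies in this form. It must be an oblivious subspace embedding for arbitrary $d$-dimensional subspaces of $\R^{n^2}$, stated with the dimension, accuracy $\varepsilon_s$, and failure probability $\beta_s$ in the $\log^3$ form of Eq.~\eqref{II:eq:sketch-size}. The seed law must also be independent of $x$, which is what makes the statement hold for each fixed $x$. I would first match the parameters of~\cite[Lemma~G.12]{gkmsz24} to Eq.~\eqref{II:eq:sketch-size}, absorbing any constants into $C_{\mathrm{TS}}$. If that lemma is stated as an approximate matrix product rather than a subspace embedding, I would derive the embedding by applying it to an orthonormal basis of $\operatorname{range}(B_x^\top)$, bounding the spectral-norm error of $U^\top\Pi^\top\Pi U-I$ by $\varepsilon_s$.
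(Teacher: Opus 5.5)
Your proposal is correct and takes the same route as the paper. You import the subspace-embedding bound of \cite[Lemma~G.12]{gkmsz24}, square it to get $(1-\varepsilon_s)^2H_{\mathrm{ld}}(x)\preceq Z^\top Z\preceq(1+\varepsilon_s)^2H_{\mathrm{ld}}(x)$ (which also rules out the fallback), apply the $(1-\varepsilon_s)^{-2}$ padding, add $L^2I$ as in Lemma~\ref{I:lem:weight-to-metric}, and bound $\theta_s$ via $\log\frac{1+t}{1-t}\le4t$, exactly as the paper does; your remarks on the Hadamard zero-padding and on recovering the embedding from an approximate-matrix-product form are harmless details the paper leaves implicit.
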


\begin{proof}
Lemma~G.12 of~\cite{gkmsz24} gives, simultaneously for all
$v\in\R^d$,
\[
 (1-\varepsilon_s)\norm{B_x^\top v}
 \le\norm{\Pi_{\omega_{\mathrm{s}}}B_x^\top v}
 \le(1+\varepsilon_s)\norm{B_x^\top v}
\]
with probability at least $1-\beta_s$.  Squaring gives
\[
 (1-\varepsilon_s)^2H_{\mathrm{ld}}(x)
 \preceq\widehat H(x;\omega_{\mathrm{s}})
 \preceq(1+\varepsilon_s)^2H_{\mathrm{ld}}(x).
\]
On this event the middle matrix is positive definite, so the fallback is not
used.  Padding proves the first sandwich.  Adding $L^2I$ proves the second
exactly as in Lemma~\ref{I:lem:weight-to-metric}.  Finally,
$2\log((1+t)/(1-t))\le8t$ for $0\le t\le1/2$.
\end{proof}

\begin{definition}[Two-seed sketched log-det walk]
\label{II:def:sketched-logdet-walk}
At every nonlazy step from $x$, draw a fresh seed $\omega_{\mathrm{s}}$
and propose
\[
 Y\sim\bar q^{\mathrm{ld}}_{x,\omega_{\mathrm{s}}}
 :=N(x,\frac{r^2}{d}
 \bar G_{\mathrm{ld}}(x;\omega_{\mathrm{s}})^{-1}).
\]
Reject if $Y\notin\intK$.  Otherwise draw an independent fresh seed
$\omega_{\mathrm{s}}'$ at $Y$ and accept with probability
\begin{equation}
 1\wedge
 \frac{e^{-f(Y)}\bar q^{\mathrm{ld}}_{Y,\omega_{\mathrm{s}}'}(x)}
      {e^{-f(x)}\bar q^{\mathrm{ld}}_{x,\omega_{\mathrm{s}}}(Y)}.
 \label{II:eq:two-seed-logdet-ratio}
\end{equation}
Both conditional densities in Eq.~\eqref{II:eq:two-seed-logdet-ratio} are
evaluated exactly, and the chain is $\tfrac12$-lazy.
Write $\bar P_{\mathrm{ld}}$ for the resulting lazy kernel.
\end{definition}

\begin{remark}[Fresh seeds]
The sketches at the two endpoints are deliberately independent.  Reusing the
reverse seed after an accepted step would make the next forward seed depend
on the chain history and would require a different, lifted-state analysis.
\end{remark}

For the implementation, use the direct exact metric when $n<s$ and the
sketched metric when $n\ge s$.  Algorithm~\ref{II:alg:direct-or-sketched-walk}
spells out both branches.  In the sketched branch, neither random metric is
stored in the returned cache: the next nonlazy transition draws a fresh
forward seed.

\begin{algorithm}[ht]
\caption{One step of the direct-or-sketched log-det walk}
\label{II:alg:direct-or-sketched-walk}
\begin{algorithmic}[1]
\Require Current cache $(x,S_x,L_x,f(x))$ with
         $S_x=S(x)=L_xL_x^\top$; fixed
         $\varepsilon_s,\beta_s$, and $s$ from
         Eq.~\eqref{II:eq:sketch-size}.
\State With probability $1/2$, \Return the current cache. \Comment{laziness}
\If{$n<s$}
  \State Form $\mathsf G_x\gets G_{\mathrm{ld}}(x)$.
\Else
  \State Draw a fresh TensorSRHT seed $\omega_{\mathrm{s}}$ and form
         $\mathsf G_x\gets
         \bar G_{\mathrm{ld}}(x;\omega_{\mathrm{s}})$.
\EndIf
\State Factor $\mathsf G_x=C_xC_x^\top$ and compute
       $\log\det\mathsf G_x$.
\State Draw $\xi\sim N(0,I_d)$, set
       $h\gets\frac{r}{\sqrt d}C_x^{-\top}\xi$, and set
       $y\gets x+h$ and
       $S_y\gets S_x+\sum_{i=1}^d h_iA_i$.
\If{$S_y\not\succ0$}
  \State \Return the current cache.
\EndIf
\State Factor $S_y=L_yL_y^\top$.
\If{$n<s$}
  \State Form $\mathsf G_y\gets G_{\mathrm{ld}}(y)$.
\Else
  \State Draw an independent fresh TensorSRHT seed
         $\omega_{\mathrm{s}}'$ and form
         $\mathsf G_y\gets
         \bar G_{\mathrm{ld}}(y;\omega_{\mathrm{s}}')$.
\EndIf
\State Factor $\mathsf G_y=C_yC_y^\top$, compute
       $\log\det\mathsf G_y$, and evaluate $f(y)$.
\State Compute the exact conditional log Metropolis ratio
\Statex \hspace{1.5em}$\displaystyle
 \ell\gets f(x)-f(y)
 +\frac12\log\frac{\det\mathsf G_y}{\det\mathsf G_x}
 -\frac{d}{2r^2}h^\top(\mathsf G_y-\mathsf G_x)h.$
\State Draw $U_{\mathrm{acc}}\sim\mathrm{Unif}(0,1)$; if
       $\log U_{\mathrm{acc}}\le\min\{0,\ell\}$, \Return
       $(y,S_y,L_y,f(y))$; otherwise \Return the current cache.
\end{algorithmic}
\end{algorithm}

\begin{theorem}[Mixing with a sketched log-det metric]
\label{II:thm:sketched-logdet-mixing}
Under the hypotheses of Theorem~\ref{II:thm:main-formal}, there are universal
constants $c_s,\bar\beta_s>0$ with the following property.  Let
$\widehat\psi\ge\psistar$ be a supplied upper bound and choose
$\kappa_{\mathrm{ld}}$ by replacing $\psistar$ with $\widehat\psi$ in
Eq.~\eqref{II:eq:rho-choice}.  If
$\varepsilon_s\le c_s/d$ and $\beta_s\le\bar\beta_s$, then the walk of
Definition~\ref{II:def:sketched-logdet-walk} is exactly reversible with
stationary distribution $\pi$.  For every $w$-warm initial law $\mu_0$ and
every $0<\delta<1$, it satisfies
\begin{equation}
 \dtv(\mu_0\bar P_{\mathrm{ld}}^t,\pi)\le\delta
 \qquad\text{for all}\qquad
 t\ge\Ot(
 (\widehat\psi nd+dL^2R^2)\log(w/\delta)).
 \label{II:eq:sketched-logdet-mixing}
\end{equation}
\end{theorem}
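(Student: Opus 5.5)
The plan mirrors the proof of Theorem~\ref{I:thm:approx-main}, with the two-seed Lemma~\ref{I:lem:randomized-metric} replacing the deterministic kernel-overlap argument.

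\emph{Reversibility.} This is the first assertion of Lemma~\ref{I:lem:randomized-metric}. The output $\bar G_{\mathrm{ld}}(x;\omega_{\mathrm{s}})$ is positive definite on every seed because of the fallback, and it is a measurable function of $(x,\omega_{\mathrm{s}})$. The seeds at the two endpoints are independent with a state-independent law. So the seed-averaged accepted flow is symmetric, and rejection and laziness only add diagonal mass.

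\emph{Reference analysis with $\widehat\psi$.} The exact-metric analysis of Section~\ref{II:sec:spectral} used only $\psi(x)\le\psistar$ through the bound $\rho\sqrt{\psi}\lesssim 1/(\Gamma_0 M_0\log)$. Replacing $\psistar$ by $\widehat\psi\ge\psistar$ in Eq.~\eqref{II:eq:rho-choice} only shrinks $\rho$. Hence the following results continue to hold with the same universal constants:
\begin{itemize}
\item Lemma~\ref{II:lem:acceptance} (acceptance);
\item Lemma~\ref{II:lem:proposal-overlap} (proposal overlap);
\item Lemma~\ref{II:lem:symmetry} (symmetry), with $\nu_G=\kappa_{\mathrm{ld}}n+4L^2R^2$.
\end{itemize}

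\emph{Two-endpoint good event.} Next I build the analogue of Lemma~\ref{I:lem:reference-good-event}. Intersecting the tube, concentration and norm events gives a set $\mathcal B_x$ with $q_x(\mathcal B_x^c)\le\epsilon$. On $\mathcal B_x$ three things hold: $\log R_x(y)\ge -C_{\mathrm{acc}}$, $\norm{y-x}^2_{G_{\mathrm{ld}}(x)}\le Cr^2$, and $\norm{y-x}^2_{G_{\mathrm{ld}}(y)}\le Cr^2$. The last follows from Lemma~\ref{II:lem:op-sc}, since the local distance is below $1/2$.

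\emph{Perturbation on good seeds.} Take the good-seed event $\mathcal G_x$ to be the event of Lemma~\ref{II:lem:sketch-sandwich}, which has probability at least $1-\beta_s$. On it, $G\preceq\bar G\preceq e^{\theta_s}G$ with $\theta_s\le 8\varepsilon_s\le 8c_s/d$. The centered Gaussian KL computation of Lemma~\ref{I:lem:proposal-perturbation}, combined with Pinsker (Lemma~\ref{II:lem:pinsker}), gives $\tau:=\dtv(q_x,\bar q_{x,\omega_{\mathrm{s}}})\le C\sqrt d\,\theta_s\le C'c_s/\sqrt d\le C'c_s$.

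Now suppose both seeds are good and $y\in\mathcal B_x$. The log-density corrections $\mathcal E_x(y)$ and $\mathcal E_y(x)$ are bounded exactly as in Lemma~\ref{I:lem:approx-acceptance}: each is at most $d\theta_s/2+d(e^{\theta_s}-1)Cr^2/(2r^2)=O(d\theta_s)=O(c_s)$. Here the reverse-norm bound from the previous step is what controls $\mathcal E_y(x)$. Hence the conditional ratio is at least $e^{-C}$ with $C=C_{\mathrm{acc}}+O(c_s)$.

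Lemma~\ref{I:lem:randomized-metric} then gives, for close pairs with $\norm{x-z}_{G(x)}\le c_{\mathrm{loc}}r/\sqrt d$, a common moving mass of at least $e^{-C}(1-\beta_s)(1-\tfrac12-2\epsilon-4\tau-2\beta_s)$. This is a positive universal constant once $c_s$ and $\bar\beta_s$ are small, since $1-\tfrac12-2\epsilon\ge 3/10$.

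\emph{Mixing.} The proof of Lemma~\ref{II:lem:conductance} uses only three ingredients: this local overlap (halved by laziness), reversibility, and the $\nu_G$-symmetry of the reference metric $G_{\mathrm{ld}}$. It therefore yields $\Phi(\bar P_{\mathrm{ld}})\ge cr/\sqrt{d\nu_G}$. The lazy Cheeger inequality together with the warm-start $L^2$ argument of Theorem~\ref{II:thm:main-formal} then gives $t=O(d(\kappa_{\mathrm{ld}}n+L^2R^2)\log(w/\delta))$. Since $\kappa_{\mathrm{ld}}=\Ot(\widehat\psi)$, this is Eq.~\eqref{II:eq:sketched-logdet-mixing}.

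\emph{Main obstacle.} The delicate point is the reverse-endpoint density comparison. The correction at $y$ must be measured in $G_{\mathrm{ld}}(y)$, and the seed at $y$ is drawn after $y$ is known. The argument relies on two facts: the sketch guarantee is pointwise for each fixed state and the seed is independent of $y$, and the deterministic distortion bound holds on $\mathcal B_x$. I would verify carefully that conditioning on $y$ leaves the probability of $\mathcal G_y(\omega_{\mathrm{s}}')$ at least $1-\beta_s$, by Fubini over the independent seed.
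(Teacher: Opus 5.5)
Your proposal is correct and takes essentially the same route as the paper's proof. Both rerun the exact acceptance lemma with $\widehat\psi$ (which only shrinks $\rho$), obtain the two-endpoint reference event from operator self-concordance, bound the good-seed total-variation and log-density corrections by $O(\sqrt d\,\theta_s)$ and $O(d\theta_s)$, apply the two-seed common-mass Lemma~\ref{I:lem:randomized-metric}, and then repeat the conductance and warm-start argument in the exact reference geometry. The Fubini point you flag about the independent reverse seed is exactly what the factor $1-\beta_0$ in that lemma already accounts for.
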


\begin{remark}[Reading the bound]
The sketch guarantee is pointwise at the current state, but the failed seeds
are part of the transition kernel itself.  Hence no union bound over the
mixing horizon is needed.  Since $\psistar\le d$, the choice
$\widehat\psi=d$ is always valid and gives the unconditional order
$\Ot((nd^2+dL^2R^2)\log(w/\delta))$.
\end{remark}

\begin{proof}
\emph{Reference event.}
Because $\widehat\psi\ge\psistar$, the proof of
Lemma~\ref{II:lem:acceptance} applies with $\psistar$ replaced by
$\widehat\psi$.
Write $q_x^{\mathrm{ld}}$ and $R_x^{\mathrm{ld}}$ for the exact spectral
reference proposal and ratio.  The good event used in the proof of
Lemma~\ref{II:lem:acceptance} already contains the Gaussian norm event and
hence
\[
 \norm{Y-x}_{G_{\mathrm{ld}}(x)}^2
 =\frac{r^2}{d}\norm{\xi}^2\le Cr^2.
\]
Lemma~\ref{II:lem:op-sc} and its distortion conclusion then give
$\norm{Y-x}_{G_{\mathrm{ld}}(Y)}^2\le Cr^2$ on the same feasible event.
Thus the exact reference proposal has, with failure at most the same
universal $\epsilon\le1/10$, a two-endpoint event on which
$\log R_x^{\mathrm{ld}}(Y)\ge-C_{\mathrm{acc}}$ and both displayed local
norms are $O(r^2)$.

\emph{Conditional proposal comparison.}
On a good sketch seed, Lemma~\ref{II:lem:sketch-sandwich} and the Gaussian
KL calculation of Lemma~\ref{I:lem:proposal-perturbation} give
\begin{equation}
 \dtv(q_x^{\mathrm{ld}},
       \bar q^{\mathrm{ld}}_{x,\omega_{\mathrm{s}}})
 \le C\sqrt d\,\theta_s=: \tau_s.
 \label{II:eq:sketch-proposal-tv}
\end{equation}
Moreover, for $z,u\in\intK$ and a good seed at $z$,
\begin{equation}
 |\log\frac{\bar q^{\mathrm{ld}}_{z,\omega_{\mathrm{s}}}(u)}
                        {q_z^{\mathrm{ld}}(u)}|
 \le\frac{d\theta_s}{2}
 +\frac{d(e^{\theta_s}-1)}{2r^2}
       \norm{u-z}_{G_{\mathrm{ld}}(z)}^2.
 \label{II:eq:sketch-density-error}
\end{equation}
For $\varepsilon_s\le c_s/d$, this is $O(1)$ at both endpoints of the
reference event, while $\tau_s=O(c_s/\sqrt d)$.  Hence, on good endpoint
seeds, the conditional log Metropolis ratio is at least
$-\bar C_{\mathrm{acc}}$ for a universal $\bar C_{\mathrm{acc}}<\infty$.

\emph{Reversibility and common mass.}
The good-seed subdensity argument in
Lemma~\ref{I:lem:randomized-metric} is barrier-independent.  Apply it with
Eqs.~\eqref{II:eq:sketch-proposal-tv} and
\eqref{II:eq:sketch-density-error}, the exact proposal overlap
Lemma~\ref{II:lem:proposal-overlap}, and the independently good reverse
seed.  Before the fixed laziness factor, the resulting moving-kernel common
mass is at least
\[
 e^{-\bar C_{\mathrm{acc}}}(1-\beta_s)
 (1-\frac12-2\epsilon-4\tau_s-2\beta_s).
\]
Choose the universal $c_s,\bar\beta_s$ so that this is positive.  The
marginal accepted flow is exactly symmetric by the seed-interchange argument
of Lemma~\ref{I:lem:randomized-metric}, so the chain is exactly reversible.

\emph{Conductance and mixing.}
The proofs of Lemmas~\ref{II:lem:kernel-overlap},
\ref{II:lem:symmetry}, and~\ref{II:lem:conductance} use only a positive
universal common-mass constant and the exact reference geometry.  They
therefore apply to the sketched kernel, and the warm-start argument in
Theorem~\ref{II:thm:main-formal} gives
Eq.~\eqref{II:eq:sketched-logdet-mixing}.
\end{proof}

\subsection{Arithmetic complexity for spectrahedra}

At each state, we use the direct construction when $n<s$ and the sketched
construction when $n\ge s$, so every transition is charged to the cheaper
metric build in its dimension regime.

For positive integers $a,b,c$, let
$\mathcal T_{\mathrm{mat}}(a,b,c)$ denote the arithmetic cost of multiplying
an $a\times b$ matrix by a $b\times c$ matrix.  At
$\varepsilon_s=\Theta(1/d)$ and constant $\beta_s$,
Eq.~\eqref{II:eq:sketch-size} gives $s=\Ot(d^3)$.  A complete fresh
sketched-metric build has cost
\begin{equation}
 \mathsf C_{\mathrm{sk}}(n,d,s):=
 O(dn^2+n^\omega
 +d\mathcal T_{\mathrm{mat}}(n,n,s)
 +\mathcal T_{\mathrm{mat}}(d,s,d)+d^\omega).
 \label{II:eq:sketch-build-cost}
\end{equation}
Here the first two terms form and factor the slack, the third is the
TensorSRHT construction of $Z$ from
\cite[Lemma~G.12]{gkmsz24}, the fourth forms $Z^\top Z$, and the last
tests and factors the $d\times d$ metric.  The $\Ot(n^2)$ Hadamard work is
absorbed.  The fourth term is listed separately because
\cite[Lemma~G.12]{gkmsz24} constructs $Z$, not its Gram matrix.

There is also the direct exact build
\begin{equation}
 \mathsf C_{\mathrm{ex}}(n,d):=
 O(dn^2+dn^\omega+n^2d^{\omega-1}+d^\omega),
 \label{II:eq:exact-logdet-build-cost}
\end{equation}
obtained by forming the $d$ slack-congruenced directions, their Frobenius
Gram matrix $H_{\mathrm{ld}}$, and the factorization of
$G_{\mathrm{ld}}$.  When $n<s$, the implementation runs the exact-metric
walk of Theorem~\ref{II:thm:main-formal} and uses this direct build.  When
$n\ge s$, it runs the sketched walk of
Definition~\ref{II:def:sketched-logdet-walk} and uses
Eq.~\eqref{II:eq:sketch-build-cost}.
In this rule, fix $\varepsilon_s=c_s/d$ and
$\beta_s=\bar\beta_s/2$, and take $s$ from
Eq.~\eqref{II:eq:sketch-size}.

\begin{corollary}[Arithmetic complexity for spectrahedra]
\label{II:thm:sketched-arithmetic}
Under the hypotheses and parameter choices of
Theorem~\ref{II:thm:sketched-logdet-mixing}, consider the implementation that
uses the direct build when $n<s$ and the sketched build when $n\ge s$.  In the
exact-arithmetic model described below, each Markov step uses
\begin{equation}
 \Ot(n^\omega+n^2d^{3\omega-5})
 \label{II:eq:per-step-runtime}
\end{equation}
arithmetic operations and at most one new evaluation of $f$.
Here $\omega$ denotes the exponent of square-matrix multiplication.  For every
$0<\delta<1$, starting from an exact draw $X_0\sim\mu_0$, where $\mu_0$ is
$w$-warm, it reaches total-variation error $\delta$ using
\begin{equation}
 \Ot(
 (\widehat\psi nd+dL^2R^2)(n^\omega+n^2d^{3\omega-5})
 \log\frac{e w}{\delta})
 \label{II:eq:total-runtime}
\end{equation}
arithmetic operations and
$\Ot((\widehat\psi nd+dL^2R^2)\log(e w/\delta))$ evaluations of $f$.
With the unconditional choice $\widehat\psi=d$, replace
$\widehat\psi nd$ by $nd^2$ in both total bounds.
\end{corollary}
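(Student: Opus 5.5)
The proof splits into a per-step cost bound and a step-count bound, which are then multiplied. The step count comes directly from the two mixing theorems. In the regime $n\ge s$, Theorem~\ref{II:thm:sketched-logdet-mixing} gives $\Ot((\widehat\psi nd+dL^2R^2)\log(w/\delta))$ steps for the sketched walk. In the regime $n<s$, the implementation runs the exact walk of Theorem~\ref{II:thm:main-formal}. Its bound is stated with $\psistar$, but the same bound holds with $\kappa$ chosen from $\widehat\psi\ge\psistar$, because the acceptance proof (Lemma~\ref{II:lem:acceptance}) only needs $\psi\le\widehat\psi$ in Eq.~\eqref{II:eq:rho-choice}. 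The symmetry parameter then becomes $\kappa n+4L^2R^2$ with $\kappa=\Ot(\widehat\psi)$.

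As in Corollary~\ref{I:cor:arithmetic-runtime}, I will use $\log(ew/\delta)=1+\log(w/\delta)$ to absorb the one-time initial build of $(S_x,L_x,f(x))$. Each non-lazy step then costs at most two metric builds (forward and reverse), one factorization of $S_y$, one evaluation of $f$, and $O(d^\omega)$ work for the Gaussian draw, determinants and quadratic forms. The feasibility test $S_y\succ0$ is done through the attempted Cholesky factorization, which costs $O(dn^2+n^\omega)$.

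The per-step bound then reduces to showing
\[
\min\{\mathsf C_{\mathrm{ex}}(n,d)\ \text{if } n<s,\ \mathsf C_{\mathrm{sk}}(n,d,s)\ \text{if } n\ge s\}=\Ot(n^\omega+n^2d^{3\omega-5}),
\]
using $s=\Ot(d^3)$.

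\emph{Case $n\ge s$.} Block the $n\times n\times s$ product into $s/n\le1$ pieces, giving $\mathcal T_{\mathrm{mat}}(n,n,s)\le O(n^\omega)$, so $d\,\mathcal T_{\mathrm{mat}}(n,n,s)=O(dn^\omega)$. That extra factor of $d$ is the crux. I would first try to argue that the TensorSRHT construction of \cite[Lemma~G.12]{gkmsz24} batches the $d$ directions: apply the fast transform to $T_x^\top A_iT_x$ row by row, at Hadamard cost $\Ot(n^2)$ per direction, and then subsample $s$ entries. This would give total $\Ot(dn^2+n^\omega)$ plus the sampling cost. The remaining terms are $\mathcal T_{\mathrm{mat}}(d,s,d)\le (s/d)\,d^\omega=\Ot(d^{\omega+2})$. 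Since $n\ge s=\Ot(d^3)$ gives $d\le \Ot(n^{1/3})$, both $d^{\omega+2}$ and $dn^2$ are $\Ot(n^\omega)$ for $\omega\ge2$, as needed.

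\emph{Case $n<s$.} Here $n=\Ot(d^3)$. The direct build costs $dn^\omega+n^2d^{\omega-1}$. To rewrite $dn^\omega$, use $n^{\omega-2}\le\Ot(d^{3(\omega-2)})$, which gives $dn^\omega=n^2\cdot dn^{\omega-2}\le\Ot(n^2d^{3\omega-5})$. The term $n^2d^{\omega-1}$ is dominated by the same expression because $3\omega-5\ge\omega-1$. The term $d^\omega$ is also dominated since $3\omega-5\ge\omega$ when $\omega\ge 5/2$; for smaller $\omega$ I would instead absorb $d^\omega$ as a lower-order term, since $d\le n^2d^{3\omega-5}$ handles it up to the polylogarithmic slack. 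I should double-check this exponent bookkeeping. The dominant bound is the $dn^\omega$ term in this regime, which is exactly where the exponent $3\omega-5$ comes from.

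Finally, I multiply the per-step bound by the step count and add the initial build, which gives Eq.~\eqref{II:eq:total-runtime}. The $f$-evaluation count is one per step plus one, and substituting $\widehat\psi=d$ gives the unconditional form. The main obstacle is the sketched-branch accounting: showing that the factor $d$ in $d\,\mathcal T_{\mathrm{mat}}(n,n,s)$ is absorbed into $\Ot(n^\omega)$ when $n\ge s$. If batching fails, my fallback would be to compare against the direct build's $n^2d^{3\omega-5}$ term, or to move the switching threshold so that each branch is used only where its cost fits the stated bound.
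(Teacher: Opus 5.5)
Your overall structure matches the paper's proof. You multiply the per-step cost by the step count, replace $\psistar$ by $\widehat\psi$ in the exact branch, absorb the initial build into $\log(ew/\delta)$, and bound the direct branch by $dn^\omega=n^2\cdot dn^{\omega-2}\le\Ot(n^2d^{3\omega-5})$. The sketched branch, however, has a genuine gap.

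You bound $\mathcal T_{\mathrm{mat}}(n,n,s)\le O(n^\omega)$ by padding the $s$ columns up to $n$. This discards exactly what the regime $n\ge s$ gives you, and leaves you with $dn^\omega$, which is not $\Ot(n^\omega+n^2d^{3\omega-5})$. Indeed $dn^\omega/n^\omega=d$ and $dn^\omega/(n^2d^{3\omega-5})=(n/d^3)^{\omega-2}$, so $dn^\omega$ exceeds both terms once $n\gg d^3$ and $\omega>2$. Your batching rescue does not close this. It presupposes the whitened directions $T_x^\top A_iT_x$, and forming them is an $n\times n$ product per direction, i.e.\ $dn^\omega$ again. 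An $\Ot(dn^2+n^\omega)$ TensorSRHT build is not what \cite[Lemma~G.12]{gkmsz24} provides. Moving the switching threshold cannot help either, because the direct build contains the same $dn^\omega$ term.

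The missing idea is the rectangular blocking bound. For $s\le n$, split the $n\times n$ factor into $\lceil n/s\rceil^2$ blocks and the $n\times s$ factor into $\lceil n/s\rceil$ blocks, all of size $s\times s$. This gives $\mathcal T_{\mathrm{mat}}(n,n,s)=O((n/s)^2s^{\omega})=O(n^2s^{\omega-2})$. Hence $d\,\mathcal T_{\mathrm{mat}}(n,n,s)=O(dn^2s^{\omega-2})=\Ot(n^2d^{3\omega-5})$, using $s=\Ot(d^3)$. So the $n^2d^{3\omega-5}$ term in the per-step bound is precisely what pays for the sketch; this is the paper's step, following \cite[Section~2.4]{gkmsz24}. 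The task is not to absorb $d\,\mathcal T_{\mathrm{mat}}(n,n,s)$ into $n^\omega$.

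Two smaller slips:
\begin{enumerate}
\item $dn^2$ is not $\Ot(n^\omega)$ when $\omega$ is near $2$. It is, however, at most $n^2d^{3\omega-5}$, because $3\omega-5\ge1$.
\item In the direct branch, $d^\omega$ cannot be absorbed by ``polylogarithmic slack.'' Use $d\le n(n+1)/2$ instead, which gives $d^\omega=O(n^2d^{\omega-1})\le O(n^2d^{3\omega-5})$.
\end{enumerate}
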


\begin{proof}
\emph{Direct branch.}
Full-dimensional boundedness implies $d\le n(n+1)/2=O(n^2)$.
If $n<s=\Ot(d^3)$, use the direct build
Eq.~\eqref{II:eq:exact-logdet-build-cost}.  Since $\omega\ge2$ and
$n\le\Ot(d^3)$,
\[
 dn^\omega\le\Ot(n^2d^{3\omega-5}),
 \qquad
 n^2d^{\omega-1}\le n^2d^{3\omega-5},
\]
and the remaining terms are smaller, so
$\mathsf C_{\mathrm{ex}}=\Ot(n^\omega+n^2d^{3\omega-5})$.

\emph{Sketched branch.}
If $n\ge s$, use the sketch.  Standard square blocking gives
\[
 d\mathcal T_{\mathrm{mat}}(n,n,s)
 =\Ot(n^2d^{3\omega-5}),
\]
which is the rectangular-multiplication bound used
in~\cite[Section~2.4]{gkmsz24}.  Also
$\mathcal T_{\mathrm{mat}}(d,s,d)=O(sd^{\omega-1})$; because
$s=\Ot(d^3)$ and $n\ge s$, this and every other term in
Eq.~\eqref{II:eq:sketch-build-cost} are absorbed by
$\Ot(n^\omega+n^2d^{3\omega-5})$.

\emph{Transitions and total cost.}
A nonlazy feasible transition uses one fresh forward build and one fresh
reverse build.  This factor of two is constant; slack factors and the current
target value may be cached, but random metrics are not reused.  Hence
Eq.~\eqref{II:eq:per-step-runtime} holds.  Multiplying it by
Eq.~\eqref{II:eq:sketched-logdet-mixing} in the sketched branch, and by
the proof of Theorem~\ref{II:thm:main-formal} with $\psistar$ replaced by
$\widehat\psi$ in the exact branch, gives the total transition cost.  Including
the initial metric build and evaluation of $f$ proves
Eq.~\eqref{II:eq:total-runtime}.
\end{proof}

\begin{remark}[Arithmetic model and scope]
For the spectrahedral bound, field operations, comparisons, scalar elementary
functions, and exact Gaussian, uniform, and Rademacher variates have unit
cost.  The count includes forming and factoring the slack, applying the
TensorSRHT, forming its Gram matrix, factoring the proposal metric, evaluating
both conditional Gaussian densities, and making the two-seed Metropolis
decision.  The input supplies an exact warm draw and the parameters
$L,R,w$ and $\widehat\psi$; the choice $\widehat\psi=d$ avoids certifying the
leverage supremum.

The polytope implementation uses a deterministic metric, whereas the
spectrahedral implementation uses fresh independent endpoint seeds. In both
cases, the Metropolis ratio uses the actual conditional proposal densities,
and the comparison above transfers acceptance, overlap, and conductance from
the exact reference walk to the implemented kernel.

The bounds exclude finite-bit conditioning and rounded Gaussian sampling.
They also exclude an inexact Metropolis decision and construction of the
warm start.  Thus they count arithmetic operations, not bit complexity.
\end{remark}

\end{document}